\newcommand{\nc}{\newcommand}
\newcommand{\ol}{\overline}
\newcommand{\ul}{\underline}
\newcommand{\es}{\emptyset}
\newcommand{\sm}{\setminus}
\newcommand{\ve}{\varepsilon}
\newcommand{\vp}{\varphi}
\newcommand{\bc}{\bigcup}
\newcommand{\bca}{\bigcap}
\newcommand{\Lra}{\Leftrightarrow}
\newcommand{\Lora}{\Longrightarrow}
\newcommand{\Ra}{\Rightarrow}
\newcommand{\ra}{\rightarrow}
\newcommand{\sse}{\subseteq}
\newcommand{\fa}{\forall}
\newcommand{\ex}{\exists}
\newcommand{\mr}{\mathrm}
\newcommand{\mc}{\mathcal}
\newcommand{\mf}{\mathfrak}
\newcommand{\DMO}{\DeclareMathOperator}
\newcommand{\DST}{\displaystyle}
\newcommand{\ZZ}{\mathbb{Z}}
\newcommand{\NN}{\mathbb{N}}
\newcommand{\NNZ}{\NN_0}
\newcommand{\RR}{\mathbb{R}}
\newcommand{\PP}{\mathbb{P}}
\newcommand{\EE}{\mathbb{E}}
\newcommand{\FF}{\mathbb{F}}
\newcommand{\Ende}{\ \rule{0.4em}{1.7ex}}
\newcommand{\NAA}{\setlength{\itemsep}{0pt}} 
\newcommand{\aru}{\ar @{-}} 
\newcommand{\inl}[1]{\lstinline$#1$}
\newcommand{\und}{{\:\wedge\:}} 
\newcommand{\oder}{{\:\vee\:}} 
\newcommand{\mb}{{\:|\:}} 
\newcommand{\set}[1]{\{ #1 \}}
\newcommand{\setb}[1]{\big \{ \, #1 \, \big \}}
\DMO{\dom}{dom}
\DMO{\id}{id}
\DMO{\cod}{cod} 
\DMO{\rg}{rg} 
\DMO{\tcomp}{\trans{\circ}} 
\DMO{\simrv}{\,\sim\hspace{-0.05em}}
\DMO{\simlv}{\!\sim\,}
\nc{\simlvi}[1]{\!\sim_{#1}}
\DMO{\card}{card}
\DMO{\proj}{pr}
\DMO{\inj}{in}
\DMO{\symdif}{\vartriangle} 
\DMO{\addcup}{{\stackrel{\text{\raisebox{-2.2ex}[-0ex][-0ex]{\large$\cdot$}}}{\cup}}} 
\DMO{\addbcup}{{\stackrel{\text{\raisebox{-4.2ex}[-0ex][-0ex]{\Large$\cdot$}}}{\bigcup}}} 
\nc{\apprel}[3]{{#1}(#2)_{(#3)}} 
\DMO{\Rel}{\mf{REL}} 
\DMO{\Abb}{\mf{MAP}} 
\DMO{\Tra}{\mf{T}} 
\DMO{\Per}{\mf{S}} 
\DMO{\Pert}{\Per_t} 
\DMO{\Ptr}{\mf{PT}} 
\DMO{\fix}{fix} 
\DMO{\Peri}{\Per_i} 
\DMO{\Pers}{\Per_s} 
\DMO{\Rrel}{\Rel_r} 
\DMO{\Srel}{\Rel_s} 
\DMO{\Trel}{\Rel_t} 
\DMO{\konkat}{\sqcup} 
\DMO{\cmpl}{\complement^1} 
\nc{\cmpli}[1]{\complement^1_{#1}} 
\DMO{\cmplz}{\complement^0} 
\nc{\cmplzi}[1]{\complement^0_{#1}} 
\DMO{\cmplzo}{\complement^*} 
\nc{\cmplzoi}[1]{\complement^*_{#1}} 
\DMO{\fsigma}{{\mf{F}}_{\sigma}} 
\DMO{\gdeltao}{\mf{G}_{\sigma}}
\DMO{\fs}{{\mf{F}}_{s}} 
\DMO{\fss}{{\mf{F}}_{s}^*} 
\nc{\zf}{\mr{ZF}}
\nc{\zfmf}{\zf^0} 
\nc{\zfc}{\mr{ZFC}}
\nc{\zfcmf}{\zfc^0} 
\nc{\bst}{\mr{BST}} 
\newcommand{\tb}[2]{\set{#1, \dots, #2}} 
\DMO{\re}{Re}
\DMO{\im}{Im}
\DMO{\sgn}{sgn} 
\providecommand{\abs}[1]{\lvert #1 \rvert} 
\DMO{\ld}{ld} 
\DMO{\nachbarn}{\Gamma}
\DMO{\enachbarn}{N}
\DMO{\nachbarnr}{\Gamma_{\!\mr{r}}}
\DMO{\nachbarnz}{\widetilde{\Gamma}}
\DMO{\nachbarnzr}{\widetilde{\Gamma}_{\!\mr{r}}}
\DMO{\inzEK}{\mc{I}^{\mr{V}}}
\DMO{\inzEKe}{\mc{I}^{\mr{V}}_1}
\DMO{\inzEKz}{\mc{I}^{\mr{V}}_2}
\nc{\inzEKi}[1]{\mc{I}^{\mr{V}}_{#1}}
\DMO{\inzKE}{\mc{I}^{\mr{E}}}
\DMO{\inzKEe}{\mc{I}^{\mr{E}}_1}
\DMO{\inzKEz}{\mc{I}^{\mr{E}}_2}
\nc{\inzKEi}[1]{\mc{I}^{\mr{E}}_{#1}}
\DMO{\inz}{I}
\DMO{\tinz}{\trans{\inz}} 
\DMO{\adjE}{\mc{A}^{\mr{V}}}
\DMO{\adjEe}{\mc{A}^{\mr{V}}_1}
\DMO{\adjEz}{\mc{A}^{\mr{V}}_2}
\nc{\adjEi}[1]{\mc{A}^{\mr{V}}_{#1}}
\DMO{\adjor}{\mc{A}_{\mr{S}}} 
\DMO{\adjK}{\mc{A}^{\mr{E}}}
\DMO{\adj}{A}
\DMO{\degmin}{\mu\!\deg}
\DMO{\degmax}{\nu\!\deg}
\DMO{\degdur}{\widetilde{\deg}} 
\DMO{\ideg}{idg} 
\DMO{\odeg}{odg} 
\DMO{\degmaxl}{\nu\!\deg_{<}}
\DMO{\degl}{\deg_{<}}
\DMO{\rankmin}{\mu\!\rank}
\DMO{\rankmax}{\nu\!\rank}
\DMO{\rankdur}{\widetilde{\rank}}
\DMO{\rankmaxl}{\nu\!\rank_{<}}
\DMO{\rankl}{\rank_{<}}
\DMO{\vertexcon}{\kappa} 
\DMO{\edgecon}{\lambda} 
\DMO{\treewidth}{tw} 
\DMO{\girth}{g} 
\DMO{\circumference}{cf} 
\DMO{\length}{lgth} 
\DMO{\npm}{\Phi} 
\DMO{\concomp}{cc} 
\DMO{\nconcomp}{ncc} 
\DMO{\indprim}{ip} 
\DMO{\indimprim}{iip} 
\DMO{\bouquet}{B}
\DMO{\dipol}{D}
\DMO{\jkg}{J} 
\DMO{\vjkg}{VK} 
\DMO{\Tr}{Tr} 
\DMO{\Ind}{Ind} 
\DMO{\Zuo}{Mat} 
\DMO{\Pzuo}{PMat} 
\DMO{\St}{St} 
\DMO{\Ints}{Ints} 
\DMO{\Cov}{Cov} 
\DMO{\closse}{clo_{\sse}} 
\DMO{\clospe}{clo_{\supseteq}} 
\DMO{\edgemg}{ML} 
\DMO{\kneserg}{K} 
\DMO{\knesern}{\tau_0} 
\DMO{\nis}{nis} 
\DMO{\PBD}{PBD}
\nc{\BD}[1]{{#1}\text{-}\mr{BD}}
\DMO{\BIBD}{BIBD}
\DMO{\Steiner}{S}
\DMO{\SteinerTriple}{STS}
\DMO{\SteinerQuadruple}{SQS}
\DMO{\progeo}{PG} 
\DMO{\affgeo}{AG} 
\DMO{\astriv}{A_t} 
\DMO{\KochenSpecker}{KS}
\DMO{\KochenSpeckerErw}{KS'}
\DMO{\rankd}{rd}
\DMO{\mnconcomp}{mncc}
\DMO{\gpk}{\Box} 
\DMO{\gpw}{\times} 
\DMO{\gps}{\boxtimes} 
\DMO{\gjoin}{\boxdot} 
\DMO{\gjoinplus}{\boxplus} 
\DMO{\Ketten}{\mc{L}}
\DMO{\Antiketten}{\mc{A}}
\DMO{\comparable}{\Bumpeq}
\DMO{\incomparable}{\parallel}
\DMO{\pot}{\PP} 
\DMO{\pote}{\PP_f} 
\DMO{\potfv}{\overrightarrow{\PP}} 
\DMO{\potfvn}{\overrightarrow{\PP}^{\!*}} 
\DMO{\potfr}{\overleftarrow{\PP}} 
\DMO{\fak}{fac}
\newcommand{\ueber}[2]{\genfrac{}{}{0pt}{}{#1}{#2}}
\DMO{\partitiont}{p}
\DMO{\teilt}{\mid} 
\DMO{\nteilt}{\nmid} 
\nc{\Prim}{\mc{PR}} 
\DMO{\ord}{ord}
\DMO{\ggt}{ggt}
\DMO{\kgv}{kgv}
\DMO{\opmod}{mod}
\DMO{\opdiv}{div}
\DMO{\eulphi}{\vp}
\DMO{\Li}{Li}
\DMO{\Ei}{Ei}
\newcommand{\Va}{\mc{V\hspace{-0.1em}A}}
\newcommand{\Dom}{\mc{DO\hspace{-0.08em}M}}
\newcommand{\Lit}{\mc{LIT}}
\newcommand{\Cl}{\mc{CL}}
\newcommand{\Cls}{\mc{CLS}}
\newcommand{\Pass}{\mc{P\hspace{-0.32em}ASS}}
\newcommand{\epa}{\pab{}} 
\newcommand{\Sat}{\mc{SAT}}
\newcommand{\Usat}{\mc{USAT}}
\newcommand{\Musat}{\mc{M\hspace{0.8pt}U}} 
\newcommand{\Musati}[1]{\Musat_{\!#1}} 
\newcommand{\Smusat}{\mc{S}\Musat} 
\newcommand{\Smusati}[1]{\Smusat_{\!#1}}
\newcommand{\Mmusat}{\mc{M}\Musat}
\newcommand{\Mmusati}[1]{\Mmusat_{\!#1}}
\nc{\Clsoo}{\Cls^{1,1}} 
\newcommand{\Clash}{\mc{HIT}} 
\newcommand{\Clashi}[1]{\Clash_{\!\!#1}}
\newcommand{\Sclash}{\mc{R}\Clash} 
\newcommand{\Sclashk}[1]{\Sclash_{\! #1}}
\newcommand{\Mclash}{\mc{M}\Clash} 
\DMO{\munpuclash}{\mu{}NH}
\DMO{\hdef}{\delta_{\mr{h}}} 
\DMO{\rdef}{\delta_{\mr{r}}} 
\newcommand{\Lean}{\mc{LEAN}}
\newcommand{\Mlean}{\mc{M}\Lean}
\newcommand{\Mleani}[1]{\Mlean_{\!#1}}
\newcommand{\Msat}{\mc{M}\Sat}
\newcommand{\Mcls}{\mc{M}\Cls} 
\DMO{\nulli}{null} 
\DMO{\lit}{lit}
\DMO{\var}{var}
\DMO{\val}{val}
\DMO{\res}{\diamond} 
\DMO{\resop}{Res} 
\DMO{\mresop}{mRes} 
\DMO{\dpl}{DP} 
\newcommand{\dpi}[1]{\dpl_{\!#1}}
\DMO{\comp}{Comp} 
\DMO{\compex}{\comp_{ER}} 
\DMO{\compr}{\comp_R} 
\DMO{\comptr}{\comp_{tR}} 
\newcommand{\Us}{\mc{U}} 
\DMO{\comptru}{\comp_{tR(\Us)}} 
\DMO{\compru}{\comp_{R(\Us)}}
\DMO{\hardness}{hd}
\DMO{\pebf}{PF} 
\DMO{\rt}{rt} 
\DMO{\nds}{nds} 
\DMO{\lvs}{lvs} 
\DMO{\nlvs}{\#lvs} 
\DMO{\nnds}{\#nds} 
\DMO{\height}{ht}
\DMO{\depth}{d}
\DMO{\cls}{cls}
\DMO{\newcommandls}{\#cls}
\DMO{\ds}{ds}
\DMO{\dst}{ds_T}
\DMO{\dsg}{ds_G}
\DMO{\dpr}{dp}
\DMO{\dprt}{dp_T}
\DMO{\dprg}{dp_G}
\DMO{\ind}{in}
\DMO{\indg}{in_G}
\DMO{\outd}{out}
\DMO{\outdg}{out_G}
\DMO{\peb}{peb} 
\newcommand{\pab}[1]{\langle #1 \rangle}
\newcommand{\pao}[2]{\langle #1 \ra #2 \rangle}
\DMO{\taum}{\max \tau}
\DMO{\tauprob}{\tau^p} 
\DMO{\mtau}{\mf{T}} 
\DMO{\concatbt}{;} 
\DMO{\compobt}{\merge} 
\nc{\bth}[1]{\langle{#1}\rangle} 
\DMO{\pc}{pc}
\DMO{\aut}{Auk} 
\DMO{\laut}{LAuk} 
\DMO{\lautz}{LAuk_0} 
\DMO{\maut}{MAuk}
\newcommand{\A}{\mc{A}} 
\DMO{\nv}{N} 
\DMO{\na}{\nv_a} 
\DMO{\nA}{\nv_{\A}} 
\DMO{\nla}{\nv_{la}}
\DMO{\nbla}{\nv_{bla}}
\DMO{\nma}{\nv_{ma}}
\DMO{\npa}{\nv_{pa}}
\DMO{\baut}{BAuk} 
\DMO{\blaut}{BLAuk} 
\DMO{\blautz}{BLAuk_0} 
\DMO{\paut}{PAut} 
\DMO{\pautz}{PAut_0} 
\newcommand{\SatA}{\Sat\!_{\A}}
\newcommand{\LeanA}{\Lean\!_{\A}}
\DMO{\resouz}{\overset{\Us, 0}{\vdash}}
\DMO{\resouo}{\overset{\Us, 1}{\vdash}}
\DMO{\resouk}{\overset{\Us,\, k}{\vdash}}
\DMO{\resou}{\,\overset{\Us}{\vdash}\,}
\DMO{\resour}{\,\overset{\Us_0}{\vdash}\,}
\DMO{\resourz}{\,\overset{\Us_0, 0}{\vdash}\,}
\DMO{\uresouk}{\resouk\hspace{-0.6em}\mbox{\raisebox{0.8ex}{\tiny u}}}
\DMO{\bresouk}{\resouk\hspace{-0.6em}\mbox{\raisebox{0.8ex}{\tiny b}}}
\DMO{\iresouk}{\resouk\hspace{-0.6em}\mbox{\raisebox{0.8ex}{\tiny i}}}
\DMO{\resok}{\overset{k}{\vdash}} 
\DMO{\wid}{wid} 
\DMO{\widl}{\hspace*{-1.5pt}wid}
\DMO{\widb}{\sideset{^{\mr{b}}}{}\widl}
\DMO{\widi}{\sideset{^{\mr{i}}}{}\widl}
\DMO{\cwid}{\mc{W}} 
\DMO{\cwidl}{\hspace*{-1pt}\mc{W}} 
\DMO{\cwidb}{\sideset{^{\mr{b}}}{}\cwidl}
\DMO{\cwidi}{\sideset{^{\mr{i}}}{}\cwidl}
\DMO{\modp}{mod_p} 
\DMO{\modt}{mod_t} 
\DMO{\moda}{\mf{S}} 
\DMO{\modf}{fal} 
\DMO{\mods}{mod} 
\DMO{\mus}{MU}
\DMO{\mss}{MS}
\DMO{\cmus}{CMU}
\DMO{\cmss}{CMS}
\DMO{\eqs}{EQ} 
\DMO{\neqs}{NEQ} 
\DMO{\scf}{CM} 
\DMO{\acf}{DCM} 
\DMO{\cmg}{cmg} 
\DMO{\cmdg}{cmdg} 
\DMO{\cg}{cg} 
\DMO{\gcg}{cgg} 
\DMO{\gcdg}{cgdg} 
\DMO{\rsg}{rg} 
\DMO{\srsg}{srg} 
\DMO{\vhg}{vhg} 
\DMO{\cvg}{cvg} 
\DMO{\cvmg}{cvmg} 
\DMO{\vig}{vig} 
\DMO{\vcg}{vcg} 
\DMO{\nscf}{bcp} 
\DMO{\nacf}{bcp_d} 
\DMO{\bcp}{bcp} 
\DMO{\tbcp}{tbcp} 
\DMO{\nsat}{\#sat}
\DMO{\nusat}{\#usat}
\DMO{\maxsat}{maxsat}
\DMO{\pmin}{\rankmin}
\DMO{\pmax}{\rankmax}
\DMO{\pav}{\rankdur}
\DMO{\ldeg}{ld} 
\DMO{\minldeg}{\mu\!\ldeg} 
\DMO{\maxldeg}{\nu\ldeg} 
\DMO{\vdeg}{vd} 
\DMO{\minvdeg}{\mu\!\vdeg} 
\DMO{\maxvdeg}{\nu\!\vdeg} 
\DMO{\avvdeg}{\widetilde{\vdeg}} 
\DMO{\cldeg}{cldg} 
\DMO{\mvardu}{\mu\!\vdeg}
\DMO{\Inj}{Inj}
\newcommand{\OKlibrary}{\texttt{OKlibrary}}
\DMO{\Ex}{Ex} 
\DMO{\surp}{\sigma} 
\DMO{\nonmer}{nM} 
\DMO{\primec}{prc} 
\DMO{\timem}{time}
\DMO{\spacem}{space}
\DMO{\dtime}{DTime}
\DMO{\dspace}{DSpace}
\DMO{\ndtime}{NTime}
\DMO{\ndspace}{NSpace}
\DMO{\condtime}{coNTime}
\DMO{\condspace}{coNSpace}
\nc{\Con}{\mr{Con}}
\nc{\Log}{\mr{Log}}
\nc{\Lin}{\mr{Lin}}
\nc{\Pol}{\mr{Pol}}
\nc{\ExL}{\mr{ExL}}
\nc{\ExP}{\mr{ExP}}
\nc{\CTime}{\mr{CTime}}
\nc{\CSpace}{\mr{CSpace}}
\nc{\LTime}{\mr{LTime}}
\nc{\LSpace}{\mr{L}}
\nc{\NLSpace}{\mr{NL}}
\nc{\LinTime}{\mr{LinTime}}
\nc{\LinSpace}{\mr{LinSpace}}
\nc{\PTime}{\mr{P}}
\nc{\PSpace}{\mr{PSpace}}
\nc{\Np}{\mr{NP}}
\nc{\Conp}{\text{coNP}}
\nc{\NPSpace}{\mr{NPSpace}}
\nc{\CoNPSpace}{\mr{coNPSpace}}
\nc{\ELTime}{\mr{ELTime}}
\nc{\ELSpace}{\mr{ELSpace}}
\nc{\EPTime}{\mr{EPTime}}
\nc{\EPSpace}{\mr{EPSpace}}
\nc{\NEPTime}{\mr{NEPTime}}
\nc{\polydelta}[1]{\Delta_{#1}^{\mr P}}
\nc{\polypi}[1]{\Pi_{#1}^{\mr P}}
\nc{\polysigma}[1]{\Sigma_{#1}^{\mr P}}
\nc{\Ph}{\mr{PH}}
\DMO{\exP}{\ex^{\mr P}}
\DMO{\faP}{\fa^{\mr P}}
\nc{\Dp}{D^P}
\nc{\PllC}[2]{{\text{$\mr{PT}$/$\mr{WK}$}(#1, #2)}} 
\nc{\Nc}{\mr{NC}}
\nc{\Nci}[1]{\Nc^{#1}}
\nc{\Ac}{\mr{AC}}
\nc{\Aci}[1]{\Ac^{#1}}
\nc{\pmodpoly}{P / \mathrm{poly}}
\nc{\Wh}[1]{\mr{W}[#1]} 
\nc{\Rl}{\mr{RL}}
\nc{\coRl}{\mr{coRL}}
\nc{\Rp}{\mr{RP}}
\nc{\coRp}{\mr{coRP}}
\nc{\Zpp}{\mr{ZPP}}
\nc{\Bpp}{\mr{BPP}}
\nc{\Pp}{\mr{PP}}
\nc{\ramz}[3]{\mr{ram}_{#1}^{#2}(#3)} 
\DMO{\ramzg}{ram} 
\nc{\waez}[2]{\mr{vdw}_{#1}(#2)} 
\DMO{\waezg}{vdw} 
\nc{\gtz}[2]{\mr{grt}_{#1}(#2)} 
\DMO{\gtzg}{grt} 
\DMO{\FvdW}{F_{W}} 
\DMO{\FRam}{F_{R}} 
\DMO{\arithp}{ap} 
\DMO{\arithpp}{ap_{pr}} 
\DMO{\crarithp}{cr_{ap}} 
\DMO{\crarithpp}{cr_{ap}^{pr}} 
\newtheorem{defi}{Definition}[section]
\newtheorem{lem}[defi]{Lemma}
\newtheorem{thm}[defi]{Theorem}
\newtheorem{corol}[defi]{Corollary}
\newtheorem{conj}[defi]{Conjecture}
\nc{\bm}{\boldmath}
\nc{\bmm}[1]{\mbox{\bm$\DST #1$}}
\nc{\mi}[1]{\bmm{\mathrm{(#1):}} \quad}
\newenvironment{proof}{\noindent\textbf{Proof} \hspace*{0.2em}}{\Ende}
\DMO{\rd}{wn} 
\DMO{\tmc}{\hat{t}} 
\DMO{\tcm}{\check{t}} 
\DMO{\expan}{expn} 
\DMO{\autind}{\chi_a} 
\nc{\Bva}{\Va_{\set{0,1}}}
\nc{\Bcls}{\Cls(\Bva)}
\DMO{\alo}{ALO}
\DMO{\amo}{AMO}
\DMO{\transl}{\tau} 
\DMO{\ftrans}{\Theta}
\DMO{\ntrans}{\Theta_n}
\DMO{\sat}{sat}
\DMO{\hitdeg}{hd}
\DMO{\multhit}{mh}
\DMO{\stabpar}{sir}
\DMO{\mvd}{mvd}
\DMO{\mmvd}{mmvd}
\begin{document}

\title{Constraint satisfaction problems in clausal form }

\author{Oliver Kullmann\thanks{Partially supported by EPSRC Grant GR/S58393/01}\\[1ex]
  Computer Science Department\\
  Swansea University\\
  Swansea, SA2 8PP, United Kingdom\\
  O.Kullmann@Swansea.ac.uk \\
  {\small
    \url{http://cs.swan.ac.uk/~csoliver}}}

\maketitle

\begin{abstract}
  This is the report-version of a mini-series of two articles \cite{Kullmann2007ClausalFormZI,Kullmann2007ClausalFormZII} on the foundations of conjunctive normal forms with non-boolean variables. These two parts are here bundled in one report, each part yielding a chapter.

  \textbf{Part I}
  We consider the problem of generalising boolean formulas in conjunctive normal form by allowing \emph{non-boolean variables}, with the goal of maintaining \emph{combinatorial} properties. Requiring that a literal involves only a single variable, the most general form of literals are the well-known ``signed literals'', corresponding to unary constraints in CSP. However we argue that only the restricted form of ``negative monosigned literals'' and the resulting generalised clause-sets, corresponding to ``sets of no-goods'' in the AI literature, maintain the essential properties of boolean conjunctive normal forms. In this first part of a mini-series of two articles, we build up a solid foundation for (generalised) clause-sets, including the notion of autarky systems, the interplay between autarkies and resolution, and basic notions of (DP-)reductions. As a basic combinatorial parameter of generalised clause-sets we introduce the (generalised) notion of \emph{deficiency}, which in the boolean case is the difference between the number of clauses and the number of variables. \emph{Autarky theory} plays a fundamental role here, and we concentrate especially on \emph{matching autarkies} (based on matching theory). A natural task is to determine the structure of \emph{(matching) lean clause-sets}, which do not admit non-trivial (matching) autarkies. A central result is the computation of the lean kernel (the largest lean subset) of a (generalised) clause-set in polynomial time for bounded maximal deficiency.

  \textbf{Part II}
  Concluding this mini-series of 2 articles on the foundations of generalised clause-sets, we study the combinatorial properties of non-boolean conjunctive normal forms (clause-sets), allowing arbitrary (but finite) sets of values for variables, while literals express that some variable shall not get some (given) value. First we study the properties of the \emph{direct translation} (or ``encoding'') of generalised clause-sets into boolean clause-sets. Many combinatorial properties are preserved, and as a result we can lift fixed-parameter tractability of satisfiability in the maximal deficiency from the boolean case to the general case. Then we turn to \emph{irredundant clause-sets}, which generalise minimally unsatisfiable clause-sets, and we prove basic properties. The simplest irredundant clause-sets are hitting clause-sets, and we provide characterisations and generalisations. Unsatisfiable irredundant clause-sets are the \emph{minimally unsatisfiable clause-sets}, and we provide basic tools. These tools allow us to characterise the minimally unsatisfiable clause-sets of minimal deficiency. Finally we provide a new translation of generalised boolean clause-sets into boolean clause-sets, the \emph{nested translation}, which preserves the conflict structure. As an application, we can generalise results for boolean clause-sets regarding the hermitian rank/defect, especially the characterisation of unsatisfiable hitting clause-sets where between every two clauses we have exactly one conflict. We conclude with a list of open problems, and a discussion of the ``generic translation scheme''.

  \textbf{Keywords:} generalised clause-sets, signed formulas, non-boolean variables, satisfiability problem, constraint satisfaction problem, autarkies, deficiency, polynomial time, matching autarkies, lean clause-sets, boolean translations, direct encoding, irredundant clause-sets, minimally unsatisfiable clause-sets, hitting clause-sets, disjoint DNF, hermitian defect, nested translation
\end{abstract}

\tableofcontents

\chapter{Autarkies and deficiency}
\label{cha:I}

\section{Introduction}
\label{sec:intro1}

Satisfiability problems with variables having more than two values occur naturally in many places, for example in colouring problems. Translations into boolean satisfiability problems are interesting and useful (see \cite{FrischPeugniez2001NBLocalSearch,Prestwich2003Encodings,AnsoteguiManya2004FiniteDomainBooleanDomain,Pre09HBSAT,Kullmann2010GreenTao} for various techniques), and may even improve performance of SAT solving, however they hide to a certain degree the structure of the original problem, which causes these translations typically to be not very well suited for theoretical studies on the structure of the original problem. In this series of two articles\footnote{based on the report \cite{Kullmann2007ClausalFormECCC2}} we study non-boolean satisfiability problems closest to boolean conjunctive normal form, namely satisfiability of what is called \emph{generalised clause-sets} (or sets of ``no-goods''). Combining suitable generalisations of boolean techniques with suitable translations into boolean clause-sets (preserving certain structures) we obtain non-trivial generalisations of fundamental theorems on autarkies and minimally unsatisfiable formulas.

Three aspects of clauses (as combinations of literals) make processing of boolean clause-sets especially efficient:
\begin{enumerate}[(i)]
\item When the underlying variable of a literal gets a value, then the literal becomes either true or false (this enables efficient handling of literals).
\item Only by assigning a value to all the variables in a clause can we falsify the clause, and for each variable the value here is uniquely determined (this makes a tight connection between partial assignments and clauses, and enables ``conflict learning'' by clauses).
\item By giving just one variable a right value we are always able to satisfy a clause (this enables simple satisfaction-based heuristics).
\end{enumerate}
Taking these properties as axiomatic, a ``generalised clause'' should be a disjunction of generalised literals, and a ``generalised literal'' should have exactly one possibility to become false, while otherwise it should evaluate to true. We arrive naturally at the following concept for generalised literals (the earliest systematic use seems to be in \cite{Ba95}): A variable $v$ has a domain $D_v$ of values, and a literal is a pair $(v,\ve)$ of the variable and a value $\ve \in D_v$ such that the literal becomes true under an assignment $\vp$ iff $\vp$ sets $v$ to a value different than $\ve$ (i.e., $\vp(v) \in D_v \sm \set{\ve}$); to express this interpretation, often when displaying formulas we will write ``$v \not= \ve$'' for the literal $(v, \ve)$. In case of $D_v = \set{0,1}$, variable $v$ becomes an ordinary boolean variable with the literal $(v,0)$ representing the positive literal. We remark here that a fourth property of boolean clauses, namely that if all literals except of one are falsified, that then the value for the variable in the remaining literal is uniquely determined, which is the basis for the ubiquitous unit-clause propagation, is necessarily lost here.

In this first part we investigate the basic combinatorial properties of generalised clause-sets and the basics of the \emph{theory of autarkies}, while in the second part structural properties of \emph{minimally unsatisfiable generalised clause-sets} (and, more generally, \emph{irredundant generalised clause-sets}) will be investigated. In the following Subsections \ref{sec:introdef} - \ref{sec:introsignedformulas} we give an introduction into basic motivations and ideas, while in Subsection \ref{sec:introover1} we give an overview on the structure of this article and its main results. We remark that the most comprehensive and up-to-date overview on minimal unsatisfiability and autarkies in the boolean setting one finds in Chapter 11 in the Handbook of Satisfiability (see \cite{Kullmann2007HandbuchMU}).

\subsection{Generalising the notion of deficiency}
\label{sec:introdef}

Using $c(F)$ for the number of clauses in a clause-set, and $n(F)$ for the number of variables, in \cite{FrGe98} the \emph{deficiency} $\bmm{\delta(F)} := c(F) - n(F)$ has been introduced and made fruitful, for the study of minimally unsatisfiable boolean clause-sets as well as for the introduction of a new polynomial time decidable class of ``matched'' satisfiable (boolean) clause-sets:
\begin{itemize}
\item Let $\Musat$ denote the class of minimally unsatisfiable clause-sets (unsatisfiable clause-sets, where each strict sub-clause-set is satisfiable). For boolean $F \in \Musat$ the property $\fa\, F' \subset F : \delta(F') < \delta(F)$ has been shown; using $\delta^*(F) := \max_{F' \sse F} \delta(F')$ for the \emph{maximal deficiency} we get $\delta^*(F) = \delta(F)$ as well as ``Tarsi's lemma'' $\delta(F) \ge 1$ (since for the empty clause-set $\bmm{\top} \subset F$ we have $\delta(\top) = 0$).
\item The class $\Msat$ of ``matching satisfiable'' clause-sets $F$ is defined by the condition $\delta^*(F) = 0$. All matching satisfiable clause-sets are in fact satisfiable, since by Hall's theorem the bipartite graph $B(F)$ contains a matching covering all variables, where the vertices of $B(F)$ are the clauses of $F$ on the one side and the variables of $F$ on the other side, while an edge joins a variable and a clause if that variable appears in the clause (positively or negatively). Or, using Tarsi's lemma, one argues that if $F \in \Msat$ would be unsatisfiable, then $F$ would contain some minimally unsatisfiable $F' \sse F$, for which $\delta(F') \ge 1$ would hold, contradicting $\delta^*(F) = 0$.
\end{itemize}

The study of the levels $\Musat(k)$ of minimally unsatisfiable boolean clause-sets $F$ with $\delta(F) \le k$ has attracted some attention. In \cite{AhLi86} (where also Tarsi's lemma has been proven) the class $\Smusat$ of ``strongly minimally unsatisfiable clause-sets'' has been introduced, which are minimally unsatisfiable clause-sets such that adding any literal to any clause renders them satisfiable, and a nice characterisation of $\Smusat(1) = \set{ F \in \Smusat : \delta(F) = 1}$ has been given (yielding polynomial time decision of $\Smusat(1)$). Then in \cite{DDK98} a characterisation of $\Musat(1)$ has been obtained, followed by a (partial) characterisation of $\Musat(2)$ in \cite{KleineBuening2000SubclassesMU} (achieving poly-time decision), while in \cite{XD99} some subclasses of $\Musat(3)$ and $\Musat(4)$ have been shown to be poly-time decidable. For arbitrary (constant) $k \in \NN$ it has been shown in \cite{Kl98} that for $F \in \Musat(k)$ there is a tree resolution refutation using at most $2^{k-1} \cdot n(F)^2$ steps, and thus the classes $\Musat(k)$ are in NP (for fixed $k$; by just guessing the refutation). In \cite{Kl98} it has been conjectured that in fact all classes $\Musat(k)$ are in P (for fixed $k$).

This conjecture has been proven true in \cite{Ku99dKo}, using tools from matroid theory. Actually the classes $\Sat(k)$, consisting of all satisfiable clause-sets $F$ with $\delta^*(F) \le k$, have been shown poly-time decidable, from which immediately poly-time decision of the classes $\Musat(k)$ and $\Smusat(k)$ follows (all for arbitrary but fixed $k$). Regarding the method used, more precisely the classes $\Usat(k)$ of \emph{unsatisfiable} clause-sets $F$ with $\delta^*(F) \le k$ have been shown poly-time decidable by improving the ``splitting theorem'' from \cite{DDK98}. Tree resolution refutations for $F$ using at most $2^{k-1} \cdot n(F)$ steps and of a simple recursive structure have been obtained, so that these refutations can be found in polynomial time by means of enumeration of the circuits of the transversal matroid $T(F)$ associated to the bipartite graph $B(F)$, where the independent subsets of $T(F)$ are the matching satisfiable sub-clause-sets of $F$. Independently also in \cite{FS00a} poly-time decision of the classes $\Musat(k)$ has been derived by extending techniques from bipartite matching theory to \emph{directed} bipartite graphs. Improving the proofs from \cite{FS00a}, the present author joined the team in \cite{FKS00}. Actually refining the techniques from \cite{Ku99dKo}, in \cite{Szei2002FixedParam} fixed-parameter tractability of $\Sat(k)$ is shown (all this for the boolean case).

After setting syntax and semantics for generalised clause-sets, the first main task tackled in the present paper is to transfer these results regarding the deficiency to generalised clause-sets. After suitably generalising the notion of \emph{deficiency} and \emph{matching satisfiability} (which is not completely straight-forward; in Subsection \ref{sec:comparisonearlier} an earlier version is discussed, which doesn't seem to have the right properties), in Corollary \ref{cor:poly1} the ``satisfiability-based'' approach from \cite{FKS00} yields polynomial time satisfiability decision for generalised clause-sets with bounded maximal deficiency. Generalising fixed-parameter tractability turns out not to be straight-forward (again), and only by combining the generalised approach with a suitable translation into the boolean case we arrive in Theorem \ref{thm:MaximalerDefektFPT} (Part II) at fixed parameter tractability also for generalised clause-sets. The general framework for our considerations is autarky theory as started in \cite{Ku98e}, with emphasise on \emph{matching autarkies} as introduced in \cite{Ku00f}. We remark that autarky theory provides a natural connection to hypergraph colouring, as investigated in connection with ``Polya's Problem'' in \cite{Kullmann2007Balanciert} (for the case of 2-colourings); see the subsequent subsection for the general notions.

A key point for structural investigations in (generalised) clause-sets is to understand the effects of applying partial assignments; see for example \cite{KleineBueningZhao2003StructureSomeClassesMU,KleineBueningZhao2007ComplexitySomeSubclassesMU}, where splitting of minimally unsatisfiable boolean clause-sets is studied in some depth. In Part II we will consider the basic questions regarding \emph{irredundant} and \emph{minimally unsatisfiable generalised clause-sets}, which leads in a natural way to the study of \emph{hitting clause-sets} and generalisations. The well-known classifications of the simplest case of minimally unsatisfiable clause-sets, namely boolean clause-sets of deficiency $1$, finds a natural generalisation in Theorem \ref{thm:CharakMUSATd1} (Part II, where again the proof is not straight-forward, caused by the breakdown of the ``saturation method''). While minimally unsatisfiable clause-sets have a deficiency at least $1$, in Corollary \ref{cor:GPvKlm} (Part II) we show (generalising \cite{Ku2003e}) that \emph{regular hitting clause-sets}, generalised clause-sets which have a constant number of clashes between any two different clauses, have deficiency at most $1$. The intersection of both classes, the unsatisfiable regular hitting clause-sets, is characterised in Corollary \ref{cor:CharakUnsatreghit} (Part II), and found for the special case of boolean clause-sets a recent application in \cite{SloanSzoerenyiTuran2005Primimplikanten_1}, while we have generalised these results in \cite{HendersonKullmann2007Multicliquen}.

\subsection{Examples: Colourings and homomorphisms}
\label{intro:hypergraphcolouring}

Generalised clause-sets allow a natural representation of many combinatorial problems, and in this section we discuss basic examples.

Given a hypergraph $G$ and a set $C$ of ``colours'', a $C$-colouring of $G$ is a map $f: V(G) \ra C$ such that no hyperedge $H \in E(G)$ is ``monochromatic'' (that is, there must be vertices $v, w \in H$ with $f(v) \not= f(w)$). Translating this colouring problem into a generalised satisfiability problem $F_C(G)$ is straightforward, using the vertices of $G$ as variables with (uniform) domain $C$:\footnote{This translation generalises the well-known translation of graph 2-colouring problems into boolean CNF. Appending the translation from generalised clause-sets into boolean clause-sets via the direct translation (see Section \ref{sec:translating} in Part II), this combined translation also generalises the well-known direct translation of (arbitrary) graph colouring problems into boolean CNF.}
\begin{itemize}
\item For hyperedge $H \in E(G)$ and colour $\ve \in C$ form the clause $\set{v \not= \ve : v \in H}$.
\item $F_C(G)$ is the set of all these clauses; thus $n(F_C(G)) = \abs{V(G)}$ and $c(F_C(G)) = \abs{C} \cdot \abs{E(G)}$.
\end{itemize}
Obviously the $C$-colourings of $G$ correspond 1-1 to the (total) satisfying assignments for $F_C(G)$. Interesting examples of hypergraph colouring problems are given by the diagonal van der Waerden problems and the diagonal Ramsey problems. Computing van der Waerden numbers has been considered in \cite{DransfieldLiuMarekTruszcynski2004VanderWaerden,KourilFranco2005Resolutiontunnel,HerwigHeuleLambalgenMaaren2005VanderWaerden,KourilPaulW26,Ahmed2009vdW,Kullmann2009RamseySATExtern2,Ahmed2010vdW,AhmedKullmannSnevily2011VdW3k}, and SAT solvers are performing quite well (likely the best method available), actually helping to compute new van der Waerden numbers\footnote{Ramsey problems seem harder for SAT solvers in the sense that these problems carry some structure unknown to the SAT solver (but known to the researcher), so that the problems related to unknown Ramsey numbers are rather big; see \cite{Radziszowski2006RamseySurvey} for the known numbers. Nevertheless improving the currently known bounds $43 \le \ramz 2{}5 \le 49$ might be in reach for specialised SAT solvers (where $\ramz 2{}5$ is the minimal number of vertices in a complete graph such that every labelling of the edges with two ``colours'' must have a ``monochromatic'' $5$-clique). Though one needs to say here that there is strong evidence for $\ramz 2{}5 = 43$ (see \cite{McKayRadziszowski1997RamseyNumbers}), and so an apparently very hard unsatisfiable problem is to be tackled.}, so here is the problem:

Consider natural numbers $k, n \in \NN$, and let the hypergraph $\arithp(k,n)$ have vertex set $\tb 1n$, while the hyperedges of $\arithp(k,n)$ are the subsets $H \sse \tb 1n$ of size $k$ which form an arithmetic progression, that is, for every $H$ there exist $a, d \in \tb 1n$ with $H = \set{a + i \cdot d : i \in \tb{0}{k-1}}$. Now for $m \in \NN$ the van der Waerden number $\waez{m}{k}$ is the minimal $n$ such that $\arithp(k,n)$ is not $m$-colourable. The corresponding generalised clause-sets are $\FvdW(m,k,n) := F_{\tb 1m}(\arithp(k,n))$, and if $\FvdW(m,k,n)$ is satisfiable, then $\waez{m}{k} > n$, while if $\FvdW(m,k,n)$ is unsatisfiable, then $\waez mk \le n$; for $m = 2$ we obtain boolean clause-sets (I would like to point out how natural the translation is --- no auxiliary variables are involved\footnote{The translation is the core of the two translations discussed in \cite{DransfieldLiuMarekTruszcynski2004VanderWaerden} --- the additional constraints used in \cite{DransfieldLiuMarekTruszcynski2004VanderWaerden} just express the structural property of a generalised clause-set that every variable gets exactly one value of its domain.}). The only known precise van der Waerden numbers (besides the trivial values for $m = 1$ or $k \le 2$) are $\waez 23 = 9$, $\waez 24 = 35$, $\waez 25 = 178$, $\waez 33 = 27$ and $\waez 43 = 76$, and all these numbers can be easily calculated using most current SAT solvers; furthermore recently in \cite{KourilPaulW26} by (extensive) SAT-computations $\waez 26 = 1132$ has been confirmed (as conjectured by \cite{KourilFranco2005Resolutiontunnel}). Directly expressing the problem instance as a generalised clause-set (and skipping the hypergraph colouring problem), in this way also the non-diagonal versions of van der Waerden and Ramsey problems (making different requirements on the different colours) can be immediately translated into generalised clause-sets; see \cite{LandmanRobertsonCulver2005vanderWaerden} for some direct computations, and \cite{Ahmed2009vdW,Ahmed2010vdW,AhmedKullmannSnevily2011VdW3k} for more advanced computations using SAT solving (both regarding van der Waerden problems). In \cite{Kullmann2009RamseySATExtern2} we give an overview on the known van der Waerden numbers, and ``Green-Tao numbers'' are introduced, based on the celebrated strengthening of van der Waerden's theorem by the Green-Tao theorem (\cite{GreenTao2005Primes}). For further applications of the mapping $G \mapsto F_C(G)$ from hypergraphs to clause-sets see \cite{Kullmann2005b2,Kullmann2007Balanciert}.

For the more general \emph{list-hypergraph colouring problem}, for each vertex $v$ a list $L(v)$ of allowed colours is given; this can be translated into a generalised clause-set $F_C(G, L)$ by just restricting the domain of $v$ to $L(v)$. At this point it is worth noticing that also the still more general \emph{list-hypergraph-homomorphism problem} has a direct (structure-preserving) translation into a satisfiability problem for generalised clause-sets. Given two hypergraphs $G_1, G_2$ and for each vertex $v \in V(G_1)$ a non-empty set $L(v) \sse V(G_2)$ of allowed image vertices, the problem is to find a map $f: V(G_1) \ra V(G_2)$ with $f(v) \in L(v)$ for all $v \in V(G_1)$ such that for each hyperedge $H \in E(G_1)$ we have $f(H) \in E(G_2)$. Note that if we take for $G_2$ the hypergraph $G_C$ with vertex set $C$ and hyperedges all subsets of $C$ with at least two elements, then the homomorphisms from $G_1$ to $G_2$ are exactly the $C$-colourings for $G_1$. For the translation of the list-hypergraph-homomorphism problem we use the set $V(G_1)$ of vertices as the set of variables, while the domain of $v$ is $D_v = L(v)$, and for each hyperedge $H \in E(G_1)$ and for each map $f: H \ra V(G_2)$ such that for each $v \in H$ we have $f(v) \in L(v)$ and such that $f(H) \notin E(G_2)$ holds, we have a clause $C_{H,f} := \set{v \not= f(v) : v \in H}$. Now satisfying assignments of the generalised clause-set $F(G_1,G_2,L)$ consisting of all clauses $C_{H,f}$ are exactly the hypergraph homomorphisms from $G_1$ to $G_2$ respecting the restrictions given by $L$. Note that the translation of hypergraph colouring problems is a special case via $F_C(G, L) = F(G, G_C, L)$.

The colourings considered above are also called ``weak hypergraph colourings'', to distinguish them from \emph{strong hypergraph colourings} of hypergraph $G$ by a set $C$ of colours, which is a map $f: V(G) \ra C$ such that for all hyperedges $H \in E(G)$ and all $v, w \in H$, $v \not= w$, we have $f(v) \not= f(w)$. Collecting all such binary clauses $\set{f(v) \not= \ve, f(w) \not= \ve}$ for $\ve \in C$ we obtain a generalised clause-sets whose satisfying (total) assignments correspond 1-1 to the strong colourings of $G$. To conclude our list of translations for colouring problems we mention that \emph{mixed hypergraph colouring} as studied in \cite{Voloshin2002GemischteHypergraphenfaerbung} takes a pair $G_1, G_2$ of hypergraphs with $V(G_1) = V(G_2)$, and the ``mixed colourings'' using colour-set $C$ are (weak) colourings of $G_1$ which are \emph{not} strong colourings of $G_2$; the most natural translation of this problem seems to consist of a pair of a generalised CNF (the translation of the (weak) colouring problem) and a generalised DNF (the negation of the translation of the strong colouring problem), using ``monosigned literals'' (see the next subsection), that is, allowing for inequalities ``$v \not= \ve$'' as well as equalities ``$v = \ve$''.

In the same vein as for hypergraph homomorphisms we can also translate \emph{homomorphism problems for relational structures}: Let $\mc{A} = (A, (R_i)_{i \in I})$ and $\mc{B} = (B, (R'_i)_{i \in I})$ be two compatible finite relational structures, that is, $A, B$ as well as $I$ are finite sets, the $R_i$ are relations (of arbitrary arity) on $A$ and the $R_i'$ are relations on $B$, while $R_i$ has the same arity as $R_i'$. We want to express the set of homomorphisms $f : A \ra B$, defined by the property that for $i \in I$ and all $\vec{x} \in R_i$ we have $f(\vec{x}) \in R_i'$, where $f$ is applied componentwise to $\vec{x}$. For this we choose $A$ as the set of variables, which all have the same domain $B$, and for each $i \in I$, each $\vec{x} \in R_i$ and each $\vec{y} \in B^m \sm R_i'$, where $m$ is the arity of $R_i$, we have the clause $C_{i,\vec{x},\vec{y}} := \set{\vec{x}_i \not= \vec{y}_i : i \in \tb 1m}$. We obtain the generalised clause-set $F(\mc{A}, \mc{B})$ by collecting all these clauses. The size of $F(\mc{A}, \mc{B})$ is polynomial in the sizes of $A, B$ together with the number of tuples in $R_i$ and the number of tuples \emph{not} in $R_i'$. The requirement that the homomorphism $f$ is injective can be encoded by the binary clauses $\set{a \not= b, a' \not= b}$ for $a, a' \in A$ with $a \not= a'$ and for $b \in B$.\footnote{On the other hand, to formulate surjectivity of $f$ requires an exponential number of clauses; one sees that for the good combinatorial properties, which can be exploited for problems expressed in the language of (generalised) clause-sets (which is quite restricted from the constraint programming point of view), we have to pay the price that some natural problems do not have succinct representations (without using additional variables).} We note that the translations $F(G_1, G_2, L)$ as well as $F(\mc{A}, \mc{B})$ are ``direct'' (homomorphisms are directly encoded as assignments) and ``negative'' (we use ``forbidden'' value combinations).

If we wish to have $F(\mc{A}, \mc{B})$ polynomial in the number of tuples in $R_i'$, then we can use an ``indirect'' and ``positive'' translation as follows: Variables are pairs $(i, \vec{x})$ for $i \in I$ and $\vec{x} \in R_i$, where the domain of variable $(i, \vec{x})$ is $R_i'$; so instead of mapping elements of $A$ to elements of $B$, where constraints forbid that allowed tuples are mapped to disallowed tuples, here now we directly map tuples from relations in $\mc{A}$ to tuples in the corresponding relation in $\mc{B}$, and the constraints will ensure that this mapping actually is induced by some mapping from $A$ to $B$. The constraints are the unit clauses $\set{(i, \vec{x}) \not= \vec{y}}$ for variables $(i, \vec{x})$ and values $\vec{y}$ such that indices $k, k'$ exist with $\vec{x}_k = \vec{x}_{k'}$ but $\vec{y}_k \not= \vec{y}_{k'}$, and the binary clauses $\set{(i, \vec{x}) \not= \vec{y}, (i', \vec{x}') \not= \vec{y}'}$ for variables $(i, \vec{x}), (i', \vec{x}')$ and values $\vec{y}, \vec{y}'$, such that an index $k$ exists with $\vec{x}_k = \vec{x}'_k$ but with $\vec{y}_k \not= \vec{y}'_k$.

\subsection{Signed formulas and resolution}
\label{sec:introsignedformulas}

Are there still more general versions of ``generalised conjunctive normal forms'' suitable in our context? The most general form of variable-based literals allows literals of the form ``$v \in S$'' for some $S \sse D_v$, and then (our) generalised literals $(v, \ve)$ are represented using $S = D_v \sm \set{\ve}$. See \cite{BeckertHaehnleManya2000SignedCNF,AnsoteguiBejarCabiscolLiManya2002ManyValued,AnsoteguiManya2004FiniteDomainBooleanDomain,AnsoteguiBonetLevyManya2007CSPSAT} for entry points into the literature, where $S$ is called a ``sign'', while literals of the form ``$v \in S$'' are called ``signed literals'', and clause-sets made of signed literals ``signed CNF formulas''. More precisely literals ``$v \in S$'' are ``positive'', while a ``negative'' literal is of the form ``$v \notin S$'', and so generalised literals $(v,\ve)$ correspond best to ``$v \notin \set{\ve}$'' instead of (as above) ``$v \in D_v \sm \set{\ve}$''; obviously when considering all possible signs then the positive and the negative formulations are equivalent, but this changes when considering only special types of signs, as we will see below. See also \cite{FrischPeugniez2001NBLocalSearch}, which uses the same class of formulas, but calling them ``nb-formulas''. Our generalised clause-sets, based on ``$v \not= \ve$'', are ``negative monosigned CNF formula'' (the signs are singleton sets), while ``monosigned CNF formula'' allow also positive signed literals (with singleton signs; alternatively, negative signs with just one elements missing can be used). We remark that in general negative literals naturally belong to CNF, while positive literals naturally belong to DNF.

So the closest (further) generalisation of our ``clause-sets'' are ``monosigned CNF formulas'', which allow besides ``$v \not= \ve$ also ``$v = \ve$''. Considering this extension is also motived by the fact that these formulas correspond exactly to their boolean counterpart via the natural translation (which uses a boolean variable to express that a variable is equal (or not) to a specific value). However, monosigned formulas seem to lack the good combinatorial properties which ``negative monosigned formulas'' have. This can be seen for example by the fact that the boolean translation of monosigned CNF formulas need the ``AMO'' clauses (expressing that every (original) variable gets at most one value), making the translated formula unwieldy (from a combinatorial point of view), while the AMO clauses are not needed for negative monosigned CNF --- here, without destroying the satisfaction relation we can just select some value in case an original variable gets several values, which is not possible in the presence of literals demanding that a variable gets some fixed value, since here those several values make more clauses true than any single of them. The point is that negative monosigned formulas need ALO, while AMO is optional, and positive monosigned formulas need AMO, while ALO is optional, and finally monosigned formulas need ALO as well as AMO.

An important point has been raised in \cite{HwangMitchell2005MehrereZweige}, where it has been shown that splitting on the boolean translation of generalised clause-sets can have an \emph{exponential speed-up} over the (wide) splitting only available when splitting on the original (``negative'') literals, where one considers $\abs{D_v}$ many branches for splitting on a variable $v$, each branch fixing a value of $v$.\footnote{The corresponding form of resolution has been studied in some depth in \cite{Ku00g} (generalised there through the use of oracles), and is considered in this paper in Section \ref{sec:Resolution}.} This seems to be an inherent weakness of using generalised clause-sets for SAT solving --- they are not expressive enough to allow certain crucial inferences to be formulated succinctly. But actually our model of generalised clause-sets can allow the form of binary splitting corresponding to splits on the boolean translation as follows: Our literals can express only ``$v \not= \ve$'', but since we allow arbitrary variable domains, we can have a binary splitting with a domain collapse $D_v \mapsto \set{\ve}$ in one branch (splitting on the positive literal ``$v = \ve$'') and a domain restriction $D_v \mapsto D_v \sm \set{\ve}$ in the other branch (i.e., splitting on the negative literal ``$v \not= \ve$''): In the first branch all literals with variable $v$ would become true or false, while in the second branch possibly the literal stays, and only the domain of $v$ is restricted (globally).\footnote{That is, since in the second branch we do not assign a value to variable $v$, we do not get rid off $v$ in the second branch. As a consequence, we need a global domain management.} More generally, if $(D_i)_{i \in I}$ is a partition of $D_v$ then we can split into $\abs{I}$ branches where in branch $i$ variable $v$ gets the new domain $D_i$; if for a literal $(v,\ve)$ we have $\ve \notin D_i$, then the literal (and thus the clause) becomes true, while if $D_i = \set{\ve}$, then the literal becomes false, and otherwise just the domain of $v$ is restricted (globally). The splitting trees for (generalised) clause-sets with domain-splittings ``$(\set{\ve}, D_v \sm \set{\ve})$'' correspond exactly to the splitting trees for the natural boolean translations. The price we have to pay however for this more powerful branching is, that if we stick with (generalised) clause-sets, then we can not have (full) clause learning --- if we want to use clause learning, in this way reflecting the search process within the ``clause-database'', then at least for recording the learned clauses we need monosigned clauses to record these binary splittings (and signed literals for more general domain splittings).\footnote{This discussion shows in my opinion a major reason, why generalising boolean reasoning proved to be difficult in the past, and (boolean) SAT solvers have an edge over constraint solvers: Either we restrict ourselves to wide branching, which has inherent inefficiencies, or we use more powerful branching, and then we have to use a more complicated domain management than in the boolean case (where none is needed), and finding out whether a literal actually became true or false becomes considerably more complicated (while it is trivial in the boolean case). Furthermore, if we want to use learning, which seems of importance for many ``real-world'' problems, then we have to use more complicated literal structures, and domain and literal (occurrence) management gets further complicated.} This distinction between the ``logic'' representing the input and the ``logic'' representing branching will be further developed in the \OKlibrary\footnote{See \url{http://www.ok-sat-library.org}.}, a generative library for generalised SAT solving (\cite{Kullmann2009OKlibrary} gives an introduction into the general ideas of this ``research platform''): Problem instances can be represented by arbitrary structures (based on the axiomatic approach in \cite{Ku00g}), while the branching logic (inference and learning) uses various types of ``literals'' appropriate for the algorithms used.

\subsection{Overview and main results}
\label{sec:introover1}

In \textbf{Section \ref{sec:prelim}} we present some preliminaries for our study of generalised clause-sets: Partial assignments for non-boolean variables, and fundamental notions and notations for graphs. Then in \textbf{Section \ref{sec:prelimgeneralcls}} generalised clause-sets and the main associated operations are introduced. Autarkies and autarky systems for generalised clause-sets are reviewed in \textbf{Section \ref{sec:prelimAutarkies}} (a useful result here is Lemma \ref{lem:vonschlanknachred}, showing how to actually find a non-trivial autarky when just given an oracle deciding whether a non-trivial autarky exists or not). Resolution for generalised clause-sets is the subject of \textbf{Section \ref{sec:Resolution}} (in Theorem \ref{thm:ResAut} it is proven that a clause can be used in some resolution refutation iff it can not be satisfied by some autarky; computation of the lean kernel via ``intelligent backtracking solvers'' follows). And the most basic polynomial time reductions for generalised clause-sets are presented in \textbf{Section \ref{sec:Reduction}}.

After these preliminaries and foundations, we turn to the main subject of this article, the study of matching autarkies and their uses. First in \textbf{Section \ref{sec:satviamatch}} the notion of matching satisfiable clause-sets (introduced in \cite{FrGe98} under the name of ``matched clause-sets'') is generalised in a natural way to generalised clause-sets, based on the generalised notion of deficiency. Matching satisfiable clause-sets are satisfiable by special satisfying assignments, called ``matching-satisfying assignments'', and these assignments are studied in \textbf{Section \ref{sec:matchsatass}}. Theorem \ref{thm:Reparatur} is the first main result, guaranteeing the existence of satisfying assignments ``close enough'' to matching-satisfying assignments. In Corollary \ref{cor:poly1} we derive poly-time satisfiability decision for generalised clause-sets with bounded maximal deficiency, generalising and strengthening the approach from \cite{FKS00}. Proving fixed parameter tractability with respect to the maximal deficiency for generalised clause-sets will be established later, in Theorem \ref{thm:MaximalerDefektFPT} (Part II), and that result will not use Theorem \ref{thm:Reparatur}. The main application of Theorem \ref{thm:Reparatur} is Theorem \ref{thm:leanKernelpoly}, where the harder problem of finding a non-trivial autarky (not just a satisfying assignment) is shown to be solvable in polynomial time in the maximal deficiency.

\textbf{Section \ref{sec:autviamatch}} now studies matching autarkies for generalised clause-sets. In Subsection \ref{sec:matchautsub} matching autarkies for generalised clause-sets are introduced, and the main properties are proven. A typical result here is the generalisation of ``Tarsi's Lemma'' in Corollary \ref{cor:musatdelta} (every generalised minimally unsatisfiable clause-set has deficiency at least one), exploiting the notion of matching leanness. In \textbf{Section \ref{sec:leanKernelpoly}}, as already mentioned, we strengthen polynomial time satisfiability decision w.r.t.\ bounded maximal deficiency $\delta^*(F)$ to the ability of computing the lean kernel (in polynomial time). Finally in \textbf{Section \ref{sec:expansion}} the notions of ``expansion'' and ``surplus'' are transferred from matching theory, yielding a simplified proof of FPT for SAT decision w.r.t.\ the parameter $\delta^*(F)$ in the boolean case (however we do not get a proof for the general case). Many interesting open problems arose in the journey so far, and we discuss some of them in \textbf{Section \ref{sec:open1}}.

\section{Preliminaries}
\label{sec:prelim}

We use $\NN = \ZZ_{\ge 1}$ and $\NNZ = \ZZ_{\ge 0}$. For a set $X$ by $S_X$ the group of all bijections from $X$ to $X$ is denoted, while for $n \in \NNZ$ we set $S_n := S_{\tb 1n}$.

\subsection{Variables and partial assignments}
\label{sec:prelimpass}

Fundamental for our considerations is the \textbf{monoid $(\Pass, \circ, \epa)$ of partial assignments}. Here we just recall the basic definitions, while a full account (for the boolean case) can be found in \cite{Kullmann2007HandbuchMU}.

The universe of variables is denoted by the infinite set \bmm{\Va}, while the universe of domain elements is the infinite set \bmm{\Dom}; a \textbf{(value-)domain} is a finite non-empty subset of $\Dom$, and for each variable $v \in \Va$ we denote the associated (value-)domain by \bmm{D_v}; thus variables have fixed (value-)domains, and change of domain (for example removal of values) must be performed by renaming. For a domain $D$ by \bmm{\Va_D} the set of all variables with domain $D$ is denoted; to avoid running out of variables and to ease renaming, we make the assumption that for all domains $D$ the set \bmm{\Va_D} has the same cardinality as $\Va$ itself. A variable $v \in \Va$ is called \textbf{boolean} if $D_v = \set{0,1}$ (and thus $\Bva$ is the set of all boolean variables).

A \textbf{partial assignment} is a map $\vp$ with finite domain $\bmm{\var(\vp)} := \dom(\vp) \sse \Va$, such that for all $v \in \var(\vp)$ we have $\vp(v) \in D_v$. The domain size of a partial assignment $\vp$ is denoted by $\bmm{n(\vp)} := \abs{\var(\vp)} \in \NNZ$. A special partial assignment is the empty partial assignment $\epa$. The set of all partial assignments is denoted by \bmm{\Pass}, while for some set $\Va' \sse \Va$ of variables we denote by $\bmm{\Pass(\Va')} := \set{\vp \in \Pass : \var(\vp) \sse \Va'}$ the set of partial assignments with variables from $\Va'$ (thus $\Pass(\Bva)$ is the set of partial assignments for boolean variables). We use the notation $\pab{v_1 \ra \ve_1, \dots, v_m \ra \ve_m}$ to denote the partial assignment $\vp$ with $n(\vp) = m$ and $\vp(v_i) = \ve_i$ (for distinct variables $v_i$, or for literals with distinct underlying variables).

For two partial assignments $\vp, \psi \in \Pass$ their \textbf{composition} \bmm{\vp \circ \psi} is defined as the partial assignment $\vp \circ \psi$ with domain $\var(\vp \circ \psi) = \var(\vp) \cup \var(\psi)$ such that first $\psi$ is evaluated and then $\vp$, i.e., $(\vp \circ \psi)(v) = \psi(v)$ if $v \in \var(\psi)$ while otherwise $(\vp \circ \psi)(v) = \vp(v)$. An important basic observation is that $(\Pass, \circ, \epa)$ is a monoid. An alternative representation of this structure is obtained as follows: Make each $D_v$ a (``right zero'') semigroup $(D_v, \cdot)$ by defining $\ve_1 \cdot \ve_2 := \ve_2$ for $\ve_1, \ve_2 \in D_v$. Adjoin an identity element ``$*$'' to each $D_v$, obtaining monoids $D_v^*$. Now $\Pass$ is isomorphic to the direct sum $\sum_{v \in \Va} D_v^*$ of the monoids (the sub-monoid of the direct product $\prod_{v \in \Va} D_v^*$ given by those elements where only finitely many components are different from $*$), where $\vp \in \Pass$ corresponds to the map $\vp^* \in \prod_{v \in \Va} D_v^*$ with $\vp(v) = \vp^*(v)$ for $v \in \var(\vp)$ and $\vp^*(v) = *$ for $v \in \Va \sm \var(\vp)$. This representation of partial assignments as total maps with distinguished ``undefined'' value $*$ actually has certain advantages over representation using partial maps, since working with total maps is often easier than working with partial maps, and we get a somewhat richer algebraic structure; however in this article we stick to the above representation of partial assignments.

\subsection{Graphs and matchings}
\label{sec:Graphsmatching}

A (finite) \emph{graph} $G$ here is a pair $G = (V, E)$ with finite vertex set $V(G) = V$ and edge set $E(G) = E \sse \binom V2$, where for a set $M$ and $k \in \NNZ$ by $\binom Mk$ we denote the set of all subsets $T \sse M$ with $\abs{T} = k$. So graphs here have no parallel edges and no loops. A graph $G'$ is a \emph{subgraph} of a graph $G$ if $V(G') \sse V(G)$ and $E(G') \sse E(G)$; $G'$ is called a \emph{partial subgraph} of $G$ if $G'$ is a subgraph of $G$ and $V(G') = V(G)$. A graph $G$ is called a \emph{forest} if $G$ contains no cycle. A graph $G$ is \emph{complete} if all distinct vertices $v, w \in V(G)$ are adjacent. $G$ is \emph{bipartite}, if the chromatic number of $G$ is at most $2$, while $G$ is \emph{complete bipartite} if $G$ is bipartite and addition of any edge to $G$ either destroys the graph property (i.e., creates a loop or a parallel edge) or destroys the bipartiteness property. More generally, $G$ is called \emph{complete $k$-partite} for $k \in \NNZ$ if the chromatic number of $G$ is at most $k$, and addition of any edge to $G$ either destroys the graph property or increases the chromatic number. $G$ is complete $k$-partite iff $G$ is the union of at most $k$ independent sets, such that each pair of vertices from different independent sets is adjacent (equivalently, iff the complement of $G$ is the disjoint union of at most $k$ cliques).

A function $f: S \ra \RR$, where $S$ is some set system stable under union and intersection, is called \emph{submodular} resp.\ \emph{supermodular} if for all $A, B \in S$ we have $f(A \cup B) + f(A \cap B) \le f(A) + f(B)$ resp.\ $f(A \cup B) + f(A \cap B) \ge f(A) + f(B)$, while $f$ is called \textbf{modular} if $f$ is submodular and supermodular. A prototypical example for a modular function is $A \sse X \mapsto f(A) := \abs{A}$, where $X$ is some finite set. For a graph $G$ and a vertex set $A \sse V(G)$ the \emph{(closed) neighbourset} $\Gamma_G(A)$ is defined as the set of vertices adjacent to at least one element of $A$. The function $A \sse V(G) \mapsto \abs{\Gamma(A)}$ is a prototypical example for a submodular function, while the \emph{deficiency} $\delta(A) := \abs{A} - \abs{\Gamma(A)} \in \ZZ$ is a supermodular function (as the difference of a modular function and a submodular function).

A \emph{matching} $M$ in a graph $G$ is a set $M \sse E(G)$ of edges such that two distinct elements of $M$ are non-adjacent. A matching in $G$ which is of maximal size is called a \emph{maximum matching}, and the size of a maximum matching $M$ of $G$ is denoted by $\nu(G) := \abs{M}$. If $G$ is a bipartite graph with bipartition $(A, B)$ (also called ``colour classes''), then $\nu(G) = \abs{A} - \delta^*(A) = \abs{B} - \delta^*(B)$, where for any $S \sse V(G)$ we set $\delta^*(S) := \max_{S' \sse S} \delta(S')$.\footnote{See for example Theorem 22.2 in \cite{Schrijver2003CombOptA}, where the notion of ``transversals'' or ``systems of distinct representatives'' of a set system is used (not to be mixed up with ``transversals'' in hypergraphs), and where the set system is $(\Gamma_G(\set{a}))_{a \in A}$ resp.~$(\Gamma_G(\set{b}))_{b \in B}$.} A maximum matching in a bipartite graph can be computed in time $O(\sqrt{\nu(G)} \cdot \abs{E(G)}) \le O(\sqrt{\abs{V(G)}} \cdot \abs{E(G)})$ (see Theorem 16.4 in \cite{Schrijver2003CombOptA}).\footnote{We won't dwell here on the details of graph representations; however when stating complexity results for (multi-)clause-sets we will be more precise.}

A \emph{maximal matching} $M$ in a graph $G$ is one which can not be extended (that is, there is no matching $M'$ in $G$ with $M \subset M'$), while the vertices \emph{covered} by a matching $M$ are the vertices incident to one of the edges in $M$. An \emph{$M$-augmenting path} for a matching $M$ in $G$ is a path $P$ of odd length with endpoints not covered by $M$ and whose edges are alternately out of $M$ and in $M$ (so necessarily start edge and end edge (which might coincide) are out of $M$). A new matching $M^+$ is obtained by adding the edges from $P$ to $M$, which were not in $M$, while removing the other edges of $P$ from $M$; we then have $\abs{M^+} = \abs{M} + 1$. A matching $M$ in a graph $G$ has an augmenting path if and only if $M$ is not maximum (see for example Theorem 16.1 in \cite{Schrijver2003CombOptA}). If $G$ is bipartite, then deciding whether $M$ has an augmenting path, and finding one if existent, can by done by breadth-first search in the directed bipartite graph naturally associated with the notion of augmenting paths (see Section 16.3 in \cite{Schrijver2003CombOptA}), and so this process takes time $O(\abs{E(G)}$. Using this process to construct a maximum matching, starting with the empty matching, takes time $O(\nu(G) \cdot \abs{E(G)})$ which is worse than the bound given above, however if a matching $M$ of ``reasonable size'' is already given, then the time $O((\nu(G) - \abs{M}) \cdot \abs{E(G)})$ it takes to construct a maximum matching, starting with $M$, might be better.

A \emph{vertex cover} of a graph $G$ is a set $T \sse V(G)$ of vertices such that every edge of $G$ is incident with (at least) one of the vertices in $T$; a vertex cover of minimal size is called a \emph{minimum vertex cover}, the size of a minimum vertex cover of $G$ is denoted by $\tau(G)$. For bipartite graphs $G$ we have $\tau(G) = \nu(G)$, and given a maximum matching of $G$, in time $O(\abs{E(G)})$ a minimum vertex cover can be computed (see Theorem 16.6 in \cite{Schrijver2003CombOptA}). A witness for $\nu(G) \ge k$ can always be given by a matching $M$ in $G$ with $\abs{M} \ge k$, while for bipartite graphs $G$ a witness for $\nu(G) \le k$ can always be given by a vertex cover $T$ with $\abs{T} \le k$. For our applications, witnesses using the notion of deficiency are more useful (they will yield autarkies in Lemmas \ref{lem:surpone} and \ref{lem:etabsurptwo}): Given a bipartite graph $G$ with bipartition $(A,B)$ and vertex cover $T$ of $G$, the set $A' := A \sm T$ has deficiency $\delta(A') \ge \abs{A} - \abs{T}$ (while given $A' \sse A$, the vertex cover $\Gamma(A') \cup (A \sm A')$ has size $\abs{A} - \delta(A')$).

\section{Generalised (multi-)clause-sets}
\label{sec:prelimgeneralcls}

In this section we review the notion of generalised multi-clause-sets. In Subsection \ref{sec:notmulticlausesets} we introduce ``generalised multi-clause-sets'' and ``generalised clause-sets'', while in Subsection \ref{sec:semantics} (partial) assignments and their operation on (multi-)clause-sets are discussed. Besides this substitution of truth values into variables we only need another special case of general substitution, namely the renaming of variables, which is treated in Subsection \ref{sec:renamingvariables}. Two different clauses by definition have a ``conflict'' if they do not have a common falsifying assignment, and some basic notions regarding the ``conflict graph'' of a (multi-)clause-set are introduced in Subsection \ref{sec:conflictstructure}. This introduction into ``syntax and semantics of generalised clause-sets'' is completed in Subsection \ref{sec:opsetsvar} with the discussion of various operations on (multi-)clause-sets $F$ regarding their variable structure (disregarding the different ``polarities'' of variables, i.e., when considering literals, then here only the underlying variables play a role).

For more background information, see \cite{Ku00g,Ku01a} for a general, axiomatic framework for ``generalised satisfiability problems'', while in Subsection 2.3 of \cite{Ku00g} generalised clause-sets are discussed, and in Section 2 of \cite{Ku2003e} boolean multi-clause-sets are considered (see also \cite{Ku2003c} for more information). In this article, when we speak of ``clause-sets'' then we always mean ``generalised clause-sets'', while clause-sets in the ``traditional'' sense are always qualified as ``boolean clause-sets''; however in lemmas, corollaries and theorems we speak of ``generalised clause-sets'' to ease independent access.

\subsection{Syntax: The notion of ``multi-clause-sets''}
\label{sec:notmulticlausesets}

A \textbf{literal} is a pair $(v,\ve)$ of a variable $v \in \Va$ and a value $\ve \in D_v$; we write $\bmm{\var(v,\ve)} := v$ and $\bmm{\val(v,\ve)} := \ve$. The set of all literals is denoted by \bmm{\Lit}, and for any $\Va' \sse \Va$ we write $\bmm{\Lit(\Va')} := \set{x \in \Lit : \var(x) \in \Va'}$ for the set of literals with variables from $\Va'$ (thus $\Lit(\Bva)$ is the set of boolean literals). For a partial assignment $\vp \in \Pass$ and a literal $(v, \ve)$ with $v \in \var(\vp)$ we set $\vp((v,\ve)) = 1$ if $\vp(v) \not= \ve$, while we set $\vp((v,\ve)) = 0$ if $\vp(v) = \ve$; thus a literal $(v,\ve)$ has the meaning ``$v$ shall \emph{not} get value $\ve$''. Accordingly a literal $(v, \ve)$ is often denoted by ``$v \not= \ve$''.

A \textbf{clause} $C$ is a finite set of literals not containing ``clashing literals'', that is for literals $x, y \in C$ with $x \not= y$ we have $\var(x) \not= \var(y)$. The set of all clauses is denoted by \bmm{\Cl}. For a clause $C$ we set $\var(C) := \set{\var(x) : x \in C}$, and for a set $\Va' \sse \Va$ we write $\bmm{\Cl(\Va')} := \set{C \in \Cl : \var(C) \sse \Va'}$ for the set of clauses with variables from $\Va'$ (thus $\Cl(\Bva)$ is the set of boolean clauses). The empty clause is denoted by $\bmm{\bot} \in \Cl$.

Given a clause $C$, we obtain the corresponding partial assignment $\bmm{\vp_C} \in \Pass$ as the partial assignment $\vp$ with $\var(\vp) = \var(C)$ and $\vp(v) = \ve$ for $(v, \ve) \in C$; on the other hand, given a partial assignment $\vp$, we obtain the corresponding clause $\bmm{C_{\vp}} \in \Cl$ as the clause $C$ with $\var(C) = \var(\vp)$ such that for $\vp(v) = \ve$ we have $(v, \ve) \in C$.  Using the representation of maps as ordered pairs of arguments and values, actually $\vp_C = C$ and $C_{\vp} = \vp$ (and thus $\Cl = \Pass$); explicitely said, a clause corresponds to the partial assignment which sets exactly the literals in the clause to false.\footnote{The motivation is, that with a partial assignment $\vp$ we restrict the search space, and in case the partial assignment $\vp$ is inconsistent with the clause-set $F$, then the clause $C_{\vp}$ can be ``learned'' (i.e., follows from $F$).}

A (finite) \textbf{multi-clause-set} is a map $F: \Cl \ra \NNZ$, assigning to each clause its number of occurrences, such that only for finitely many $C \in \Cl$ we have $F(C) \not= 0$, while a (finite) \textbf{clause-set} is a finite subset of $\Cl$. Clause-sets $F$ can be implicitly converted to multi-clause-sets by setting $F(C) := 1$ for $C \in F$ and $F(C) := 0$ otherwise. For a multi-clause-set $F$ the \textbf{underlying clause-set} \bmm{\tmc(F)} is defined as $\tmc(F) = \set{C \in \Cl : F(C) \not= 0}$. This conversion is only performed if necessary to apply an operation only defined for clause-sets. We have \bmm{C \in F} for a multi-clause-set $F$ iff $F(C) > 0$, and we set $\var(F) := \bc \set{\var(C) : C \in F}$. For a variable $v \in \Va$ we define $\bmm{\val_v(F)} := \set{\ve \in D_v \mb \ex\, C \in F : (v,\ve) \in C}$ as the set of occurring values. We have $\var(F) = \set{v \in \Va : \val_v(F) \not= \es}$. Finally the empty clause-set as well as the empty multi-clause-set is denoted by \bmm{\top}.

A clause-set $F$ has a \textbf{uniform domain} if $\fa\, v, w \in \var(F) : D_v = D_w$ holds. Boolean clause-sets have uniform domain, and every finite (generalised) clause-set has a ``domain-uniformisation'' by using the union of all relevant domains and adding unit-clauses to forbid unwanted domain elements; see Subsection \ref{sec:renamingvariables} for details.

We use the following complexity measures for multi-clause-sets $F$:
\begin{enumerate}\NAA
\item $\bmm{\#_{(v,\ve)}(F)} := \sum_{C \in F, (v,\ve) \in C} F(C) \in \NNZ$ measures the number of occurrences of a literal;
\item $\bmm{\#_v(F)} := \sum_{\ve \in D_v} \#_{(v,\ve)}(F) = \sum_{C \in F, v \in \var(C)} F(C) \in \NNZ$ measures the number of occurrences of a variable;
\item $\bmm{s_{(v,\ve)}(F)} := \sum_{\ve' \in D_v \sm \set{\ve}} \#_{(v,\ve)}(F) = \#_v(F) - \#_{(v,\ve)}(F) \in \NNZ$ measures the number of occurrences of literals with variable $v$ and value different from $\ve$ (this is the number of satisfied clauses when assigning value $\ve$ to $v$; see Subsection \ref{sec:semantics});
\item $\bmm{n(F)} := \abs{\var(F)} \in \NNZ$ measures the number of variables;
\item $\bmm{c(F)} := \sum_{C \in F} F(C) \in \NNZ$ measures the number of clauses;
\item $\bmm{\ell(F)} := \sum_{C \in F} F(C) \cdot \abs{C} = \sum_{v \in \var(F)} \#_v(F) \in \NNZ$ measures the number of literal occurrences.
\end{enumerate}
And for multi-clause-sets $F_1, F_2$ we use the following operations and relations:
\begin{enumerate}\NAA
\item the multi-clause-set \bmm{F_1 + F_2} is defined by
  \begin{displaymath}
    (F_1 + F_2)(C) := F_1(C) + F_2(C)
  \end{displaymath}
  for clauses $C$;
\item the multi-clause-set \bmm{F_1 \cup F_2} resp.\ \bmm{F_1 \cap F_2} is given by
  \begin{eqnarray*}
    (F_1 \cup F_2)(C) & := & \max(F_1(C), F_2(C))\\
    (F_1 \cap F_2)(C) & := & \min(F_1(C), F_2(C))
  \end{eqnarray*}
  for clauses $C$; if $F_1, F_2$ are clause-sets, then these operations coincide with the ordinary set operations;
\item if $F_2$ is a clause-set, then the multi-clause-set \bmm{F_1 \sm F_2} is defined by setting $(F_1 \sm F_2)(C) := 0$ for $C \in F_2$, while otherwise $(F_1 \sm F_2)(C) := F_1(C)$; if also $F_1$ is a clause-set, then $F_1 \sm F_2$ is the ordinary set operation;
\item the relation \bmm{F_1 \le F_2} holds if for all clauses $C$ we have $F_1(C) \le F_2(C)$; we use \bmm{F' \lneq F} for $F' \le F \und F' \not= F$; if $F_1, F_2$ are clause-sets, then $F_1 \le F_2 \Lra F_1 \sse F_2$;
\item $F_1$ is called a \textbf{sub-multi-clause-set} of $F_2$ if $F_1 \le F_2$ holds, while $F_1$ is called an \textbf{induced sub-multi-clause-set} of $F_2$ if $F_1 \le F_2$ and $\fa\, C \in F_1 : F_1(C) = F_2(C)$ holds; every sub-clause-set of a clause-set is induced;
\item if $F_2$ is a sub-multi-clause-set of $F_1$, then the multi-clause-set \bmm{F_1 - F_2} is defined via $(F_1 - F_2)(C) := F_1(C) - F_2(C)$ for clauses $C$.
\end{enumerate}

The set of all multi-clause-sets is denoted by \bmm{\Mcls}, the set of all clause-sets by \bmm{\Cls}, while for a set $\Va' \sse \Va$ of variables we use $\bmm{\Mcls(\Va')} := \set{F \in \Mcls : \var(F) \sse \Va'}$ and $\bmm{\Cls(\Va')} := \set{F \in \Cls : \var(F) \sse \Va'}$ (thus $\Mcls(\Bva)$ is the set of boolean multi-clause-sets, and $\Cls(\Bva)$ is the set of boolean clause-sets). If $\mc{C}$ is a set of multi-clause-sets and $f: \mc{C} \ra \RR$, then by \bmm{\mc{C}_{f \le b}} for some $b \in \RR$ we denote the set of all $F \in \mc{C}$ with $f(F) \le b$; analogously we define $\mc{C}_{f = b}$, $\mc{C}_{f \ge b}$ and so on. A special function usable here is $\sat: \Cls \ra \set{0,1}$ with $\sat(F) = 1 \Lra F \in \Sat$ (that is, $\sat$ is the characteristic function of the set of satisfiable clause-sets defined below); we can combine several such indices, and for typographical reasons we may use then for example $\Mcls^{g\le b'}_{f\le b}$.

\subsection{Semantics: The operation of partial assignments}
\label{sec:semantics}

Now we define the operation $*: \Pass \times \Mcls \ra \Mcls$ of $\Pass$ on multi-clause-sets, and the (derived) operation $*: \Pass \times \Cls \ra \Cls$ on clause-sets, which in both cases have the meaning of substituting values for variables and carrying out the resulting simplifications (viewing a clause as a disjunction of its literals, and a (multi-)clause-set as a conjunction of its clauses), with the only difference that in the case of clause-\emph{sets} contractions in the result are carried out (distinct clauses can become equal after a substitution). The case of clause-sets is reduced to the case of multi-clause-sets, using the explicit transformation $\tcm: \Cls \ra \Mcls$ of clause-sets into multi-clause-sets. For $F \in \Mcls$ and $\vp \in \Pass$ we define $\bmm{\vp * F} \in \Mcls$ by
\begin{displaymath}
  (\vp * F)(C) = \sum_{\ueber{C' \in \Cl}{\vp * \set{C'} = \set{C}}} F(C'),
\end{displaymath}
for $C \in \Cl$, where for a clause $C$ we set $\vp * \set{C} := \top \in \Cls$ if there exists a literal $x \in C$ with $\vp(x) = 1$, while otherwise we set $\vp * \set{C} := \set{C \sm C_{\vp}} \in \Cls$, i.e., we remove the falsified literals from $C$. And for $F \in \Cls$ we define $\vp * F \in \Cls$ as
\begin{displaymath}
  \vp * F := \tmc(\vp * \tcm(F)).
\end{displaymath}
We have here (where $F$ is a clause-\emph{set}) $\vp * F = \bc_{C \in F} \vp * \set{C}$. The effect on the basic measures of applying a partial assignment $\pao{v}{\ve}$ to $F \in \Mcls$ with $v \in \var(F)$ is given by
\begin{eqnarray*}
  n(\pao{v}{\ve} * F) & \le & n(F) - 1\\
  c(\pao{v}{\ve} * F) & = & c(F) - s_{(v,\ve)}(F).
\end{eqnarray*}
A clause-set $F \in \Cls$ is \textbf{satisfiable} if there exists a partial assignment $\vp \in \Pass$ with $\vp * F = \top$, while otherwise $F$ is \textbf{unsatisfiable}; the set of all satisfiable clause-sets is denoted by \bmm{\Sat}, the set of all unsatisfiable clause-sets by \bmm{\Usat}. A multi-clause-set $F \in \Mcls$ is called \textbf{minimally unsatisfiable} if $F$ is unsatisfiable, but every $F' \lneq F$ is satisfiable; obviously if $F$ is minimally unsatisfiable, then $F$ actually is a clause-set. The set of all minimally unsatisfiable clause-sets is denoted by \bmm{\Musat}.

It is useful to have some notations for the set of satisfying assignments (``models'') as well as for the set of falsifying assignments. For a finite $V \sse \Va$ let \bmm{\Pass(V)} be the set of $\vp \in \Pass$ with $\var(\vp) = V$. Note that we have 
\begin{displaymath}
  \abs{\Pass(V)} = \prod_{v \in V} \abs{D_v}.
\end{displaymath}
Now for a clause-set $F \in \Mcls$ and for a finite set $V$ of variables with $\var(F) \sse V$ let \bmm{\mods_V(F)} be the set of $\vp \in \Pass(V)$ with $\vp * F = \top$, while \bmm{\modf_V(F)} is the set of $\vp \in \Pass(V)$ with $\bot \in \vp * F$. Thus $F$ is satisfiable iff $\mods_V(F) \not= \es$; and for any clause $C$ with $\var(C) \sse V$ we have
\begin{displaymath}
  \abs{\modf_V(\set{C})} = \abs{\Pass(V \sm \var(C))} = \prod_{v \in V \sm \var(C)} \abs{D_v},
\end{displaymath}
that is, the falsifying assignments for a clause $C$ are uniquely determined on variables from $C$ and arbitrary elsewhere. At this point it might be useful to point out that for (multi-)clause-sets typically \emph{falsifying assignments} are significantly easier to handle than satisfying assignments (``those elusive idols'', since we are dealing with CNF; for DNF the situation is reversed). Obviously $\mods_V(F) \cap \modf_V(F) = \es$ and $\mods_V(F) \cup \modf_V(F) = \Pass(V)$. By definition we have
\begin{displaymath}
  \modf_V(F) = \bc_{C \in F} \modf_V(\set{C}).
\end{displaymath}
For clause-sets $F_1, F_2$ we write \bmm{F_1 \models F_2} (``$F_1$ implies $F_2$'') if for all $\vp \in \Pass$ with $\vp * F_1 = \top$ we have $\vp * F_2 = \top$ as well, and for clauses $C$ we write $F \models C$ instead of $F \models \set{C}$. Trivially $F$ is unsatisfiable iff $F \models \bot$. Note that $F_1 \models F_2$ holds iff for $V := \var(F_1) \cup \var(F_2)$ we have $\modf_V(F_2) \sse \modf_V(F_1)$. We call $F_1, F_2$ \textbf{equivalent} if $F_1 \models F_2$ and $F_2 \models F_1$ holds.

The basic laws for the operation of partial assignments on multi-clause-sets are as follows, using $F, F_1, F_2 \in \Mcls$ and $\vp, \psi \in \Pass$:
\begin{gather*}
  \epa * F = F\\
  \vp * \top = \top\\
  (\vp \circ \psi) * F = \vp * (\psi * F)\\
  \vp * (F_1 + F_2) = \vp * F_1 + \vp * F_2.
\end{gather*}
The first three laws hold also for the operation of partial assignments on clause-sets, while the composition of multi-clause-sets by addition is to be replaced by union, that is for $F_1, F_2 \in \Cls$ we have
\begin{displaymath}
  \vp * (F_1 \cup F_2) = \vp * F_1 \cup \vp * F_2
\end{displaymath}
(note that in general this does not hold for multi-clause-sets, where we also have defined union). Furthermore for a multi-clause-set $F$ and a clause-set $F'$ we have $\vp * (F \sm F') \ge (\vp * F) \sm (\vp * F')$.

\subsection{Renaming variables}
\label{sec:renamingvariables}

Consider a multi-clause-set $F$ and variables $v, w \in \Va$ (which might be equal) together with $h: D_v \ra D_w$ such that in case of $v \not= w$ we have $w \notin \var(F)$. Then \textbf{replacing $v$ by $w$ using $h$ in $F$} results in the multi-clause-set $F'$ where every occurrence of a literal $(v, \ve)$ is replaced by the literal $(w, h(\ve))$. The map $h$ here is called the \textbf{value transfer}; if $D_v \sse D_w$ and $h$ is unspecified, then the canonical injection is used.

Similarly, replacing $v$ by $w$ using $h$ in a partial assignment $\vp$, where in case of $v \not= w$ we have $w \notin \var(\vp)$, results in a partial assignment $\vp'$ with $\dom(\vp') = (\dom(\vp) \sm \set{v}) \cup \set{w}$ such that $\vp'(u) = \vp(u)$ for $u \in \dom(\vp') \sm \set{w}$, while $\vp'(w) = h(\vp(v))$. Here the value transfer needs to be specified only for the special value $\vp(v)$. If $v = w$, then we just speak of \textbf{flipping $v$ to $\ve$ in $\vp$} for $\ve = h(\vp(v))$.

The replacement of $v$ by $w$ using $h$ in $F$ is \textbf{injective}, if for literals $(v, \ve), (v, \ve')$ occurring in $F$ with $\ve \not= \ve'$ we have $h(\ve) \not= h(\ve')$. If $\abs{D_w} \ge \#_v(F)$, then there is always some $h: D_v \ra D_w$ such that replacing $v$ by $w$ in $F$ using $h$ is injective. For every injective $h$, replacing $v$ by $w$ in $F$ using $h$ is injective. Note that injective replacements alter the ``meaning'' (the set of models modulo isomorphism) exactly in the case where a non-pure variable (a variable such that all values occur in $F$; see Subsection \ref{sec:Reduction}) is rendered a pure variable by using a domain $D_w$ with $\abs{D_w} > \#_v(F)$. Special injective replacements are \textbf{renamings}, where $h$ is a bijection from $D_v$ to $D_w$. If we have a renaming of $v$ by $w$ using $h$ in $F$, resulting in $F'$, then we also have the renaming of $w$ to $v$ using $h^{-1}$ in $F'$, resulting in $F$. So the satisfying assignments for $F'$ here are exactly the satisfying assignments for $F$ where $v$ is replaced by $w$ using $h$. If several variables are renamed simultaneously, then we require that the same result is obtained by renaming single variables one after another (in some order).

 For a multi-clause-set $F$ a \textbf{domain-uniformisation} of $F$ is $F$ for $\var(F) = \es$, while otherwise we consider the domain $D := \bc_{v \in \var(F)} D_v$, rename all variables $v$ in $F$ to $v'$ with domain $D$, and add all unit clauses $\set{v' \not= \ve}$ for $\ve \in D \sm D_v$.

\subsection{Conflict structure}
\label{sec:conflictstructure}

A study of the ``combinatorics of conflicts'' for boolean clause-sets has been initiated with \cite{Ku2003c,Ku2003e} and continued with \cite{GalesiKullmann2003bHermitian,Kullmann2004c}. We generalise here only a very few simple notions used later in this article.

The \textbf{conflict graph} \bmm{\cg(F)} of a clause-set $F \in \Mcls$ has as vertices the clauses of $F$, and edges joining two vertices $C, D$ with a \textbf{clashing literal} between $C$ and $D$, that is, there is a literal $x \in C$ for which there exists a literal $y \in D$ with $\var(x) = \var(y)$ and $x \not= y$. While the conflict graph does not take multiple clashing literals into account, the \textbf{conflict multigraph} \bmm{\cmg(F)} contains as many parallel edges between two clauses as there are conflicts. A clause-set $F$ is called a \textbf{hitting clause-set} if the conflict graph of $F$ is a complete graph, and the \textbf{hitting degree} $\bmm{\hitdeg(F)} \in \NN$ of a hitting clause-set with at least two clauses is the maximum of the number of (parallel) edges joining two different vertices in the conflict multigraph of $F$. More specifically we call $F$ an \textbf{\bmm{r}-regular hitting clause-set} for $r \in \NNZ$ if every two different clauses in $F$ have exactly $r$ conflicts (thus if $F$ is $r$-regular hitting for $r \ge 1$, then $F$ is hitting), while a \textbf{regular hitting clause-set} is an $r$-regular hitting clause-set for some $r \ge 0$, and we denote the set of regular hitting clause-sets by \bmm{\Sclash}.\footnote{Regular hitting clause-sets were called ``uniform hitting clause-sets'' in \cite{Ku2003e}, but ``regular'' seems a better choice, since regularity refers to constant degree, while uniformity typically refers to constant hyperedge-size, and so should better be used for clause-sets of constant clause-size.}

\subsection{Three operations of sets of variables on multi-clause-sets}
\label{sec:opsetsvar}

Finally we consider various operations with sets of variables. The operation \bmm{V * F} is defined for finite $V \sse \Va$ and $F \in \Mcls$ via
\begin{displaymath}
  (V * F)(C) := \sum_{\ueber{C' \in \Cl}{V * C' = C}} F(C'),
\end{displaymath}
where for a clause $C$ we set $V * C := \set{x \in C : \var(x) \notin V} \in \Cl$. That is, $V * F$ is obtained from $F$ by crossing out all literal occurrences $x$ with $\var(x) \in V$. Two basic properties are
\begin{eqnarray*}
  \var(V * F) & = & \var(F) \sm V\\
  c(V * F) & = & c(F).
\end{eqnarray*}
 The operation \bmm{V * F} for $F \in \Cls$ is defined by
\begin{displaymath}
  V * F := \tmc(V * \tcm(F)) \in \Cls.
\end{displaymath}
We have here $V * F = \set{V * C : C \in F}$. The basic laws for $F, F_1, F_2 \in \Mcls$ and finite $V, V' \sse \Va$ are
\begin{gather*}
  \epa * F = F\\
  V * \top = \top\\
  (V \cup V') * F = V * (V' * F)\\
  V * (F_1 + F_2) = V * F_1 + V * F_2.
\end{gather*}
Again, the first three laws also hold for the operation of sets of variables on clause-sets, while for $F_1, F_2 \in \Cls$ we have
\begin{displaymath}
  V * (F_1 \cup F_2) = V * F_1 \cup V * F_2
\end{displaymath}
(again this does not hold for multi-clause-sets in general). 

We conclude by discussing the two basic ways of selecting parts of a multi-clause-set. By \bmm{F_V} we denote the induced sub-multi-clause-set of $F$ with $C \in F_V \Lra \var(C) \cap V \not= \es$; in other words, $F_V = F \sm \set{C \in F : \var(C) \cap V = \es}$. Basic properties are:
\begin{enumerate}\NAA
\item $F_{\es} = \top$ and $F_{\var(F)} = F \sm \set{\bot}$.
\item If $V_1 \sse V_2$, then $F_{V_1}$ is an induced sub-multi-clause-set of $F_{V_2}$.
\item $F_{V_1 \cup V_2} = F_{V_1} \cup F_{V_2}$.
\item For $v \in \Va$ we have $c(F_{\set{v}}) = \#_v(F)$.
\end{enumerate}
Additionally restricting the variables is achieved by
\begin{displaymath}
  \bmm{F[V]} := (\var(F) \sm V) * F_V = ((\var(F) \sm V) * F) \sm \set{\bot} \in \Mcls,
\end{displaymath}
which, in analogy to the same process for hypergraphs (using usually also the same notation, see for example \cite{Duchet1995Hypergraphs}), can be considered as the ``restriction of $F$ to $V$''. Basis properties are
\begin{enumerate}
\item $F[\es] = \top$ and $F[\var(F)] = F \sm \set{\bot}$.
\item $c(F[V]) = c(F_V)$, $\var(F[V]) \sse \var(F_V)$.
\item $\var(F[V]) = V$ for $V \sse \var(F)$.
\end{enumerate}
To summarise: We obtain $V * F$ from $F$ by keeping all clauses but removing those literals $x$ from them with $\var(x) \in V$, while we obtain $F_V$ from $F$ by removing those clauses $C$ from $F$ with $\var(C) \cap V = \es$ (while keeping all clauses intact); finally $F[V]$ is obtained from $F$ by first constructing $F_V$, and then crossing out all literal occurrences for literals $x$ where there exists a clause $C \in F$ with $\var(C) \cap V = \es$ and $\var(x) \in \var(C)$.

$F[V]$ is the formula derived from $F$ when we want to consider total assignments relative to the variable set $V$, and is basic for the theory of autarkies reviewed in the subsequent section, while $V * F$ and $F_V$ are fundamental constructions. An example for these operations:
\begin{enumerate}
\item Consider boolean variables $a,b,c$ (the domains of variables do not matter).
\item Let
  \begin{enumerate}
  \item $C_1 := \set{(a,0),(b,1),(c,0)}$,
  \item $C_2 := \set{(a,0), (b,0), (c,1)}$,
  \item $C_3 := \set{(a,1), (b,0), (c,1)}$,
  \item $C_4 := \set{(b,1), (c,1)}$.
  \end{enumerate}
\item Let $F := \sum_{i=1}^4 \set{C_i}$; $F$ corresponds to the boolean CNF 
  \begin{displaymath}
    (a \oder \ol{b} \oder c) \und (a \oder b \oder \ol{c}) \und (\ol{a} \oder b \oder \ol{c}) \und (\ol{b} \oder \ol{c}).
  \end{displaymath}
\item Now we have 
  \begin{enumerate}
  \item $F_{\set{a}} = \sum_{i=1}^3 \set{C_i}$,
  \item $\set{a} * F = \set{\set{(b,1),(c,0)}} + 2 \cdot \set{\set{(b,0),(c,1)}} + \set{\set{(b,1),(c,1)}}$,
  \item $F[\set{a}] = 2 \cdot \set{\set{(a,0)}} + \set{\set{(a,1)}}$.
  \end{enumerate}
\end{enumerate}

\section{Autarkies for generalised multi-clause-sets}
\label{sec:prelimAutarkies}

Of central importance to our work is the notion of \emph{autarkies}, such partial assignments which satisfy parts of the formula and leave the rest untouched.  \emph{Autarky systems} allow to tailor the notion of autarkies to special purposes. We review here the general properties of autarkies and autarky systems for generalised multi-clause-sets. See Section 3 in \cite{Ku01a} for a general theory of autarkies and autarky systems, while in Section 4 of \cite{Ku01a} autarky systems for generalised clause-sets have been discussed (easily generalised to autarky systems for generalised multi-clause-sets). General properties of autarkies for boolean clause-sets are thoroughly investigated in \cite{Ku98e}, Section 3, while autarky systems for boolean clause-sets have been introduced in \cite{Ku00f} (see Sections 4 and 8 there for the general theory). For boolean clause-sets, the handbook chapter \cite{Kullmann2007HandbuchMU} gives an overview on autarky theory, while the textbook \cite{Marek2009Satisfiability} introduces basic properties of autarkies.

\subsection{The notion of an autarky}
\label{sec:notionautarky}

A partial assignment $\vp \in \Pass$ is an \textbf{autarky} for $F \in \Mcls$ if one (and thus all) of the following four equivalent conditions is fulfilled:
\begin{enumerate}\NAA
\item for all clauses $C \in F$ we have $\var(\vp) \cap \var(C) \not= \es \Ra \vp * \set{C} = \top$;
\item $\fa\, F' \le F : \vp * F' \le F'$;
\item $\vp$ is a satisfying assignment for $F_{\var(\vp)}$;
\item $\vp$ is a satisfying assignment for $F[\var(\vp)]$.
\end{enumerate}
Obviously, $\vp$ is an autarky for $F$ iff $\vp$ is an autarky for $F \sm \set{\bot}$ iff $\vp$ is an autarky for the underlying clause-set. The set of all autarkies for $F$ is denoted by \bmm{\aut(F)}; it is $\aut(F)$ a sub-monoid of $\Pass$, containing all satisfying assignments for $F$, and $\aut(F) = \aut(\tmc(F))$. If $F' \le F$, then $\aut(F) \sse \aut(F')$, and for finite $V \sse \Va$ we have 
\begin{displaymath}
  \set{\vp \in \aut(V * F) : \var(\vp) \cap V = \es} = \set{\vp \in \aut(F) : \var(\vp) \cap V = \es},
\end{displaymath}
that is, the autarkies for $F$ which do not use variables from $V$ are exactly the autarkies for $V * F$ (which do not use $V$). Furthermore we have 
\begin{displaymath}
  \aut(F_1 + F_2) = \aut(F_1) \cap \aut(F_2).
\end{displaymath}
If $\vp \in \aut(F)$ and $\psi \in \aut(\vp * F)$, then $\psi \circ \vp \in \aut(F)$. An autarky $\vp \in \aut(F)$ is called \textbf{non-trivial} if $\var(\vp) \cap \var(F) \not= \es$ holds. $F$ is called \textbf{lean}, if $F$ has no non-trivial autarky; the set of all lean multi-clause-sets is denoted by \bmm{\Lean}. Every $F$ having only variables with trivial domain (i.e., one-element domains) is lean. A sum of lean multi-clause-sets again is lean. If $F$ is lean, so is $V * F$ for $V \sse \Va$.

An \textbf{autarky reduction} is a reduction $F \mapsto \vp * F$ for some non-trivial autarky $\vp$ for $F$ (note that $\vp * F$ is satisfiability equivalent to $F$). Autarky reduction is terminating and confluent (generalising Lemma 4.1 in \cite{Ku00f}, a special case of Lemma 3.7 in \cite{Ku01a}), and thus the result of iterated autarky reductions until no further reductions are possible is uniquely determined; we denote it by $\bmm{\nv_{\aut}(F)} \le F$. The operator $\bmm{\na} := \nv_{\aut}$ is a ``kernel operator'', that is, $\na(F) \le F$, $\na(\na(F)) = \na(F)$, and $F_1 \le F_2 \Ra \na(F_1) \le \na(F_2)$; furthermore $\na(F)$ is satisfiability equivalent to $F$, and $\na(F) = \top$ iff $F \in \Sat$. We have $\na(F) \in \Lean$, and $\na(F)$ is called the \textbf{lean kernel} of $F$; $F$ is lean iff $\na(F) = F$. There exists an autarky $\vp \in \aut(F)$ with $\na(F) = \vp * F$ (while for all $\vp \in \aut(F)$ we have $\na(F) \le \vp * F$). $\na(F)$ is the largest lean sub-multi-clause-set of $F$.

An \textbf{autark sub-multi-clause-set} of $F$ is an induced sub-multi-clause-set $F'$ of $F$, such that there exists an autarky $\vp \in \aut(F)$ so that for $C \in F$ we have $C \in F'$ iff $\vp * \set{C} = \top$ (note that in this case we have $F' = F_{\var(\vp)}$). The set of autark sub-multi-clause-sets of $F$ is closed under union, and contains the smallest element $\top$ and the largest element $F \sm \na(F)$. A fundamental observation is that $F' \le F$ is an autark sub-multi-clause-set of $F$ iff there is $V \sse \var(F)$ with $F_V = F'$ and $F[V] \in \Sat$.

The relation between the lean kernel of $F$ and the largest autark sub-multi-clause-set of $F$ can be summarised as follows: For $F \in \Mcls$ there exist induced sub-multi-clause-sets $F_1, F_2 \le F$ with $F_1 + F_2 = F$, such that $F_1$ is lean, while $\var(F_1) * F_2$ is satisfiable; in this decomposition $F_1, F_2$ are uniquely determined, namely $F_1 = \na(F)$ is the largest lean sub-multi-clause-set (the lean kernel), while $F_2$ is the largest autark sub-multi-clause-set; furthermore we have $F_2 = F_{\var(F) \sm \var(F_1)}$, and thus $\var(F_1) * F_2 = F[\var(F) \sm \var(F_1)]$.

For an example consider variables $a,b,c,d$ with $D_a = D_b = \set{0,1,2}$ and $D_c = D_d = \set{0,1}$, and consider the clause-set 
\begin{eqnarray*}
  F & := & F_1 \cup F_2\\
  F_1 & := & \set{\set{a \not= 0, b \not= 0},\, \set{a \not= 0, b \not= 1},\, \set{a \not= 0, b \not= 2},\, \set{a \not= 1},\, \set{ a \not= 2}}\\
  F_2 & := & \set{\set{a \not= 0, c \not= 0, d \not= 1},\, \set{b \not= 0, c \not= 1, d \not= 0}}.
\end{eqnarray*}
To see whether there is an autarky for $F$ invoking exactly one variable we check satisfiability of $F[\set{v}]$ for $v \in \set{a,b,c,d}$; we see that all these clause-sets are unsatisfiable (e.g., $F[\set{c}] = \set{\set{c \not= 0},\, \set{c \not= 1}}$), and so the smallest non-trivial autarky for $F$ (if there is any) must involve at least two variables. Now $F[\set{c,d}] = \set{\set{c \not= 0,d \not= 1},\set{c \not= 1,d \not= 0}} \in \Sat$, and thus the two partial assignments $\pab{c,d \ra 0}, \pab{c,d \ra 1}$ are autarkies for $F$; applying any of them yields $F_1$, which is lean ($F_1$ actually is minimally unsatisfiable), and thus $F_1$ is the lean kernel of $F$, while $F_2$ is the largest autark sub-clause-set of $F$.

\subsection{Autarky systems}
\label{sec:Autarkysystems}

After having reviewed the general facts for autarkies for generalised multi-clause-sets, we now consider ``autarky systems''. The motivation is that instead of (computationally hard) general autarkies we want to consider more feasible, restricted notions of autarkies. Under mild assumptions on these restricted autarkies all the above facts carry over (in generalised form). A definition of ``autarky systems'' for abstract ``systems with partial instantiation'' has been given in \cite{Ku01a}; the monoid $(\Pass, \circ, \epa)$ together with the partial order $(\Mcls, \le, \top)$ with least element and together with the operation $*$ of $\Pass$ on $\Mcls$ fulfils all the axioms required in Section 3 of \cite{Ku01a}, and thus all the general results there on autarky systems can be carried over. Again we refer to the handbook chapter \cite{Kullmann2007HandbuchMU} for more information.

An \textbf{autarky system} for generalised multi-clause-sets is a map $\A$, which assigns to every $F \in \Mcls$ a sub-monoid $\A(F)$ of $\aut(F)$, such that for $F_1 \le F_2$ we have $\A(F_2) \sse \A(F_1)$. The elements of $\A(F)$ are called \textbf{\bmm{\A}-autarkies} for $F$. Further possible restrictions on $\A$ are expressed by the following notions:
\begin{enumerate}\NAA
\item $\A$ is \textbf{iterative}, if for $\vp \in \A(F)$ and $\psi \in \A(\vp * F)$ we always have $\psi \circ \vp \in \A(F)$.
\item $\A$ is called \textbf{standardised}, if for a partial assignment $\vp \in \Pass$ we have $\vp \in \A(F)$ iff $\vp \mb \var(F) \in \A(F)$ (where $\vp \mb \var(F)$ is the restriction of the map $\vp$ to the domain $\var(\vp) \cap \var(F)$). (Remark: Thus for a standardised autarky system $\A$ all partial assignments $\vp$ with $\var(\vp) \cap \var(F) = \es$ are (trivial) $\A$-autarkies for $F$. In \cite{Ku01a} only the direction ``$\vp \in \A(F) \Ra \vp \mb \var(F) \in \A(F)$ was required, but now it seems more systematic to me to require also the other direction.)
\item $\A$ is \textbf{\bmm{\bot}-invariant}, if always $\A(F) = \A(F + \set{\bot})$ holds (in \cite{Ku01a,Ku00f} this was called ``normal'').
\item $\A$ is \textbf{invariant under variable elimination}, if for finite $V \sse \Va$ we always have $\set{\vp \in \A(V * F) : \var(\vp) \cap V = \es} = \set{\vp \in \A(F) : \var(\vp) \cap V = \es}$.
\item $\A$ is \textbf{invariant under renaming}, if for every $F'$ obtained from $F$ by renaming $v$ to $w$ using $h$ (recall Subsection \ref{sec:renamingvariables}) and for every autarky $\vp \in \A(F)$ we have $\vp' \in \A(F')$ for the partial assignment $\vp'$ obtained from $\vp$ by renaming $v$ to $w$ using $h$.
\item $\A$ is \textbf{stable for unused values}, if for $\vp \in \A(F)$, $v \in \dom(\vp)$ and for $\ve \in D_v$ such that none of the two literals $(v, \vp(v)), (v, \ve)$ occurs in $F$, also $\vp' \in \A(F)$ holds, where $\vp'$ is obtained from $\vp$ by flipping $v$ to $\ve$.
\end{enumerate}
An autarky system $\A$ is called \textbf{normal}, if it fulfils these six criteria, that is, if it is iterative, standardised, $\bot$-invariant, invariant under variable elimination, invariant under renaming and stable for unused values. Considering the boolean case (where stability for unused values is covered by the standardisation condition, while invariance under renaming was not considered), in \cite{Ku01a,Ku00f} ``normal autarky systems'' were called ``strong autarky systems'', but meanwhile the above properties turned out not to be so strong but quite ``normal'', while ``ab-normality'' is a defect which can be repaired; see for example Lemma 8.4 in \cite{Ku00f}, which can be generalised to generalised clause-sets. Examples for normal autarky systems are the smallest standardised autarky system $F \in \Mcls \mapsto \set{\vp \in \Pass: \var(\vp) \cap \var(F) = \es}$ and the largest autarky system $F \in \Mcls \mapsto \aut(F)$. In this article our main interest is in normal autarky systems, and we don't investigate the detailed relations between the above notions and the other properties of autarky systems, but we will state general results either for all autarky systems or for all normal autarky systems.

Consider an autarky system $\A$.
\begin{itemize}
\item An \textbf{\bmm{\A}-reduction} is a reduction $F \mapsto \vp * F$ for some non-trivial $\vp \in \A(F)$.
\item Since multi-clause-sets have finite variable sets, $\A$-reduction is terminating, and thus by Lemma 3.7 in \cite{Ku01a} $\A$-reduction is confluent, and the result of applying $\A$-reductions as long as possible is uniquely determined, yielding a normal form $\bmm{\nA(F)} \le F$. As before, the operator $\nA$ is a kernel operator, that is
  \begin{enumerate}
  \item $\nA(F) \le F$, 
  \item $\nA(\nA(F)) = \nA(F)$,
  \item $F_1 \le F_2 \Ra \nA(F_1) \le \nA(F_2)$.
  \end{enumerate}
\item Multi-clause-sets $F$ with $\nA(F) = \top$ are called \textbf{\bmm{\A}-satisfiable}, while in case of $\nA(F) = F$ we call $F$ \textbf{\bmm{\A}-lean}; the set of all $\A$-satisfiable multi-clause-sets is denoted by \bmm{\SatA}, the set of all $\A$-lean multi-clause-sets by \bmm{\LeanA}.
\item $F$ is $\A$-lean iff $\A(F)$ contains no non-trivial autarky. The learn kernel $\nA(F)$ is the largest $\A$-lean sub-multi-clause-set of $F$. A sum of $\A$-lean multi-clause-sets again is $\A$-lean.
\item Finally $F$ is called \textbf{minimally \bmm{\A}-unsatisfiable}, if $F$ is not $\A$-satisfiable, but every $F' \le F$ with $F' \not= F$ is $A$-satisfiable, while $F$ is called \textbf{barely $\A$-lean} if $F$ is $\A$-lean, but every $F' \le F$ with $c(F') = c(F) - 1$ is not $A$-lean; for more on these two notions see \cite{Kullmann2007Balanciert}, while in this article we will consider only some basic properties of minimal $\A$-unsatisfiability.
\item If $F$ is  minimally $\A$-unsatisfiable, then $F$ is $\A$-lean, and for $F \not= \set{\bot}$ it is $F$ also barely $\A$-lean. If $\A$ is the full autarky system, then minimal $\A$-unsatisfiability is just (standard) minimal unsatisfiability.
\end{itemize}
For the remainder of this subsection now assume that the autarky system $\A$ is normal. Then $F$ is $\A$-satisfiable iff there exists $\vp \in \A(F)$ with $\vp * F = \top$. More generally, there always exists $\vp \in \A(F)$ with $\vp * F = \nA(F)$. If $F$ is $\A$-lean, then so is $V * F$ and $F[V]$ for finite $V \sse \Va$. The $\A$-autark sub-multi-clause-sets of $F$, i.e., those multi-clause-sets $F'$ where there is $\vp \in \A(F)$ with $F' = F_{\var(\vp)}$, are exactly those $F_V$ for some $V \sse \var(F)$ where $F[V]$ is $\A$-satisfiable. The set of $\A$-autark sub-multi-clause-sets of $F$ is closed under union, and contains the smallest element $\top$ and the largest element $F \sm \nA(F)$. As before, the relation between the $\A$-lean kernel of $F$ and the largest $\A$-autark sub-multi-clause-set of $F$ can be summarised as follows:

\begin{lem}\label{lem:autdecomp}
  For $F \in \Mcls$ there exist induced sub-multi-clause-sets $F_1, F_2 \le F$ with $F_1 + F_2 = F$, such that $F_1$ is $\A$-lean, while $\var(F_1) * F_2$ is $\A$-satisfiable. In this decomposition, $F_1, F_2$ are uniquely determined, namely $F_1 = \nA(F)$ is the largest $\A$-lean sub-multi-clause-set (the $\A$-lean kernel), while $F_2$ is the largest $\A$-autark sub-multi-clause-set. Furthermore we have $F_2 = F_{\var(F) \sm \var(F_1)}$, and thus $F' := \var(F_1) * F_2 = F[\var(F) \sm \var(F_1)]$; the multi-clause-set $F'$ is $\A$-satisfiable, and every $\A$-satisfying assignment $\vp$ for $F'$ (with $\var(\vp) \sse \var(F') = \var(F_2) \sm \var(F_1)$) is an $\A$-autarky for $F$ with $\vp * F = F_1$ and $F_{\var(\vp)} = F_2$.
\end{lem}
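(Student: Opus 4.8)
\emph{Strategy.} The plan is to mirror the (unnumbered) full‑autarky summary stated just before the lemma, but now feeding in the normal‑autarky‑system analogues assembled in the preceding paragraph: that there is $\vp \in \A(F)$ with $\vp * F = \nA(F)$; that $\nA(F)$ is the largest $\A$-lean sub‑multi‑clause‑set of $F$; that $F \sm \nA(F)$ is the largest $\A$-autark sub‑multi‑clause‑set; and the characterisation ``$F_V$ is an $\A$-autark sub‑multi‑clause‑set iff $F[V]$ is $\A$-satisfiable''. I will repeatedly use three routine facts: for an autarky $\vp$ the clauses of $F$ meeting $\var(\vp)$ are exactly those satisfied by $\vp$, so $\vp * F = F \sm F_{\var(\vp)}$ and $\var(\vp)*F = \top$; the distributive law $V * (G_1 + G_2) = V * G_1 + V * G_2$; and $c(V * G) = c(G)$, so $V * G = \top$ iff $G = \top$.

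\emph{Existence.} Fix $\vp \in \A(F)$ with $\vp * F = \nA(F)$ and set $F_1 := \nA(F)$, $F_2 := F - F_1$. Then $F_1 = F \sm F_{\var(\vp)}$ and $F_2 = F_{\var(\vp)}$ are induced with $F_1 + F_2 = F$, and $\nA(\nA(F)) = \nA(F)$ shows $F_1$ is $\A$-lean. From $\vp * F = F_1$ we get $\var(\vp) \cap \var(F_1) = \es$, and by standardisation we may assume $\var(\vp) \sse \var(F)$, hence $\var(\vp) \sse \var(F) \sm \var(F_1)$. If $C \in F$ has $\var(C) \sse \var(F_1)$ then $\var(C) \cap \var(\vp) = \es$, so $C \notin F_{\var(\vp)} = F_2$ and thus $C \in F_1$; combined with the trivial reverse inclusion this gives $F_2 = F_{\var(F) \sm \var(F_1)}$, and since $\var(F) \sm (\var(F) \sm \var(F_1)) = \var(F_1)$ we obtain $F' := \var(F_1) * F_2 = F[\var(F) \sm \var(F_1)]$. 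Now $F_2 = F \sm \nA(F)$ is the largest $\A$-autark sub‑multi‑clause‑set of $F$, so by the characterisation $F[\var(F) \sm \var(F_1)] = \var(F_1) * F_2$ is $\A$-satisfiable.

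\emph{Uniqueness.} Let $F_1' + F_2' = F$ be any decomposition as in the statement. Then $F_1'$ is an $\A$-lean sub‑multi‑clause‑set of $F$, so $F_1' \le \nA(F) = F_1$. Put $V' := \var(F_2') \sm \var(F_1')$. Each $C \in F_2'$ has a variable outside $\var(F_1')$ (otherwise $C$ would become $\bot$ in the satisfiable $\var(F_1') * F_2'$), hence $\var(C) \cap V' \ne \es$; conversely a clause of $F$ meeting $V'$ cannot lie in $F_1'$. So $F_2' = F_{V'}$, and since $\var(F) \sm V' = \var(F_1')$ we get $\var(F_1') * F_2' = F[V']$; its $\A$-satisfiability and the characterisation make $F_2' = F_{V'}$ an $\A$-autark sub‑multi‑clause‑set, so $F_2' \le F \sm \nA(F) = F_2$. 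Combining $F_1' \le F_1$ with $F_2' = F - F_1' \ge F - F_1 = F_2 \ge F_2'$ forces $F_1' = F_1$ and $F_2' = F_2$.

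\emph{The last sentence, and the expected obstacle.} Let $\vp$ be an $\A$-satisfying assignment for $F'$ with $\var(\vp) \sse \var(F') = \var(F_2) \sm \var(F_1)$; thus $\vp \in \A(F')$, $\vp * F' = \top$, and $\var(\vp) \cap \var(F_1) = \es$. Since $\vp * F' = \top$, for each $C \in F_2$ the non‑empty clause $\set{x \in C : \var(x) \notin \var(F_1)}$ occurring in $F' = \var(F_1)*F_2$ is satisfied by $\vp$, so some literal $x \in C$ with $\var(x) \in \var(\vp)$ has $\vp(x) = 1$; hence $\vp$ satisfies $C$, giving $\vp * F_2 = \top$, and $\var(C) \cap \var(\vp) \ne \es$, giving $C \in F_{\var(\vp)}$. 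Together with $F_{\var(\vp)} \sse F_{\var(F)\sm\var(F_1)} = F_2$ this yields $F_{\var(\vp)} = F_2$; since $\vp$ satisfies $F_2 = F_{\var(\vp)}$ and leaves $F_1$ untouched, $\vp$ is a (full) autarky for $F$ with $\vp * F = F \sm F_2 = F_1$. Finally $\vp \in \A(F)$: by the distributive law $\var(F_1)*F = \var(F_1)*F_1 + \var(F_1)*F_2 = c(F_1)\cdot\set{\bot} + F'$, so $\bot$-invariance gives $\A(\var(F_1)*F) = \A(F')$, and invariance under variable elimination applied to $V := \var(F_1)$ (which $\vp$ avoids) transfers $\vp \in \A(F')$ to $\vp \in \A(F)$. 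I expect this fourth part to be the real work: pinning down $F_{\var(\vp)} = F_2$ for an \emph{arbitrary} such $\vp$, and routing ``$\vp$ $\A$-satisfies $F'$'' into ``$\vp \in \A(F)$'' through the identification $\A(F') = \A(\var(F_1)*F)$ followed by invariance under variable elimination; the existence and uniqueness parts are essentially bookkeeping with the facts recalled above, in exact parallel with the full‑autarky case.
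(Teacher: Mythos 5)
Your proof is correct and follows exactly the route the paper intends: the lemma is stated without proof as a summary of the facts developed in the paragraph immediately preceding it (existence of $\vp\in\A(F)$ with $\vp*F=\nA(F)$; $\nA(F)$ is the largest $\A$-lean sub-multi-clause-set; $F\sm\nA(F)$ is the largest $\A$-autark sub-multi-clause-set; and the characterisation of $\A$-autark $F_V$ via $\A$-satisfiability of $F[V]$), and you have filled in precisely that bookkeeping. The only genuinely non-obvious point — getting $\vp\in\A(F)$ from ``$\vp$ $\A$-satisfies $F'$'' by passing through $\A(F')=\A(\var(F_1)*F)$ via $\bot$-invariance and then applying invariance under variable elimination — is handled correctly and is the place where the normality axioms are actually used, as you correctly anticipate.
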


We finish our review on autarkies and autarky systems by generalising Lemma 8.6 in \cite{Ku00f}. The proof can be literally transferred to our generalised context, and thus is not reproduced here.
\begin{lem}\label{lem:vonschlanknachred}
  Let $\A$ be a normal autarky system. Given decision of membership in $\LeanA$ as an oracle, the normal form $F \mapsto \nA(F)$ for $F \in \Mcls$ can be computed in polynomial time as follows:
  \begin{enumerate}
  \item\label{item:ersterSch} If $F \in \LeanA$ then output $F$.
  \item Let $\var(F) = \set{v_1, \dots, v_{n(F)}}$.
  \item\label{item:letzterSch} Since $\epa * F = F \notin \LeanA$ and $\var(F) * F = c(F) \cdot \tcm(\set{\bot}) \in \LeanA$ holds, there is an index $1 \le i \le n(F)$ with
    \begin{displaymath}
      \set{v_1, \dots, v_{i-1}} * F \notin \LeanA \ \text{ and } \ \set{v_1, \dots, v_i} * F \in \LeanA.
    \end{displaymath}
    Replace $F$ by the induced sub-multi-clause-set of $F$ given by the clauses of $F$ not containing variable $v_i$, and go to Step \ref{item:ersterSch}.
  \end{enumerate}
  The output of this procedure is $\na(F)$. If we use $V$ for the set of variables $v_i$ selected in Step \ref{item:letzterSch}, then there exists an autarky $\vp$ for $F$ with $\var(\vp) = V$, and $F_V$ is the largest autark subset of $F$ (note $F$ is the original input). Thus an autarky $\vp$ for $F$ with $\vp * F = \na(F)$ is obtained by an $\A$-satisfying assignment $\vp$ for $F[V]$ (with $\var(\vp) \sse V$).
\end{lem}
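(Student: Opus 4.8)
The plan is to prove the main assertion --- that the procedure outputs $\nA(F)$ --- by induction on the number of loop iterations, the crux being the following per-round invariance: if $F \notin \LeanA$ and $v_i$ is the variable selected in Step~\ref{item:letzterSch}, then $\nA(F) = \nA(F^-)$, where $F^- \le F$ is the induced sub-multi-clause-set consisting of the clauses of $F$ not containing $v_i$. Granting this, correctness follows at once: each round either stops (with $F \in \LeanA$, so the output $F$ equals $\nA(F)$) or replaces $F$ by $F^-$ without changing the $\A$-lean kernel; since $v_i \in \var(F)$ occurs in at least one clause, $c(F^-) < c(F)$, so the loop terminates after at most $c(F)$ rounds, each using $O(n(F))$ oracle calls (or $O(\log n(F))$ by binary search, since $\set{v_1,\dots,v_j}*F \in \LeanA$ is monotone in $j$ --- crossing out further variables of an $\A$-lean multi-clause-set keeps it $\A$-lean) plus polynomially much bookkeeping.

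For the invariance I would proceed in four steps. \emph{(i)} Record the characterisation: for finite $V \sse \Va$, the multi-clause-set $V * F$ is $\A$-lean iff $F$ admits no non-trivial $\A$-autarky $\vp$ with $\var(\vp) \cap V = \es$; this is pure manipulation of the autarky-system axioms --- standardisation lets one assume $\var(\vp) \sse \var(F)$, and invariance under variable elimination moves $\vp$ between $\A(F)$ and $\A(V * F)$, using $\var(V * F) = \var(F) \sm V$. \emph{(ii)} The selection in Step~\ref{item:letzterSch} forces $v_i$ into the domain of some non-trivial $\A$-autarky $\psi$ of $F$: by \emph{(i)}, $\set{v_1,\dots,v_{i-1}} * F \notin \LeanA$ yields such a $\psi$ with $\var(\psi) \cap \set{v_1,\dots,v_{i-1}} = \es$, and if $v_i \notin \var(\psi)$ then $\var(\psi)$ is disjoint from $\set{v_1,\dots,v_i}$, contradicting $\set{v_1,\dots,v_i} * F \in \LeanA$ via \emph{(i)}. \emph{(iii)} No variable of $\nA(F)$ lies in the domain of any $\A$-autarky of $F$: if $\chi \in \A(F)$ and $v \in \var(\chi) \cap \var(\nA(F))$, then $\chi \in \A(\nA(F))$ since $\nA(F) \le F$, and by standardisation $\chi \mb \var(\nA(F))$ is a non-trivial $\A$-autarky of the $\A$-lean multi-clause-set $\nA(F)$, which is impossible; so $v_i \notin \var(\nA(F))$ by \emph{(ii)}, hence $\nA(F)$ has no clause on $v_i$, i.e.\ $\nA(F) \le F^-$. \emph{(iv)} Conclude: $F^- \le F$ gives $\nA(F^-) \le \nA(F)$ by monotonicity of the kernel operator, while $\nA(F) \le F^-$ plus $\A$-leanness of $\nA(F)$ and the maximality of $\nA(F^-)$ among $\A$-lean sub-multi-clause-sets of $F^-$ gives $\nA(F) \le \nA(F^-)$.

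For the statements about the selected set $V$: unwinding the recursion, the multi-clause-set finally output is exactly the set of clauses of the original $F$ containing no variable of $V$, so since that output equals $\nA(F)$ we get $F_V = F \sm \nA(F)$, which (by the recorded facts about autark sub-multi-clause-sets) is the largest $\A$-autark sub-multi-clause-set of $F$; an autarky $\vp$ with $\vp * F = \nA(F)$ is then supplied, via Lemma~\ref{lem:autdecomp}, by any $\A$-satisfying assignment of $F[\var(F) \sm \var(\nA(F))]$.

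The step I expect to be the main obstacle is pinning down the \emph{exact} identity $\var(\vp) = V$ (and hence $F[V] \in \SatA$), as opposed to merely controlling $\vp$ on the possibly-larger set $\var(F) \sm \var(\nA(F))$ of autark variables. I would attack this by tracking, round by round, the concrete autarky $\psi$ produced in step \emph{(ii)} through the successive clause deletions --- observing that deleting the clauses on the variable selected in the $k$-th round realises, up to the remaining autark part, the reduction by $\psi$ --- and splicing these together by iterativity; this is exactly the place where the generalisation from boolean to non-boolean domains is least transparent, and where I would fall back, if necessary, to the weaker but application-sufficient form ``$F \sm \nA(F) = F_V$, and $F[\var(F) \sm \var(\nA(F))]$ yields the autarky''.
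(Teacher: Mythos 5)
Your correctness argument for the central claim --- that the procedure outputs $\nA(F)$ --- is right, and it follows the same route the paper takes (the paper itself only refers to Lemma 8.6 of \cite{Ku00f} and then gives the one-line sketch you reproduce as step \emph{(ii)}: $v_i$ must lie in the domain of a non-trivial $\A$-autarky of the current $F$). Your steps \emph{(i)}--\emph{(iv)} are individually sound. In \emph{(i)}, the equivalence ``$V * F \in \LeanA$ iff $F$ has no non-trivial $\A$-autarky disjoint from $V$'' does indeed follow from standardisation (restrict to $\var(V*F)$) plus invariance under variable elimination; in \emph{(iii)} you correctly use anti-monotonicity of $\A$ ($\nA(F) \le F \Ra \A(F) \sse \A(\nA(F))$) together with standardisation to see that no $\A$-autarky of $F$ touches $\var(\nA(F))$; and in \emph{(iv)} the two inequalities $\nA(F^-) \le \nA(F)$ (monotonicity of the kernel) and $\nA(F) \le \nA(F^-)$ (maximality of $\nA(F^-)$ among $\A$-lean sub-multi-clause-sets of $F^-$, applied to the $\A$-lean $\nA(F) \le F^-$) give the invariance. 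The termination and oracle count are also fine, including the binary-search observation, since ``$\A$-leanness is preserved by crossing out variables'' is exactly the monotonicity you need and follows from \emph{(i)}.

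The only place where your proposal is genuinely short of a proof is the tail claim $\var(\vp) = V$ (equivalently, $\A$-satisfiability of $F[V]$, not just of the possibly larger $F[\var(F)\sm\var(\nA(F))]$), and you flag this yourself. Your proposed splicing-by-iterativity strategy does produce something: taking $\psi_k \in \A(F^{(k-1)})$ from step \emph{(ii)} with $v^{(k)} \in \var(\psi_k)$ and $\var(\psi_k) \sse \var(F^{(k-1)})$, one shows inductively that $\psi_k \circ \cdots \circ \psi_1 * F \le F^{(k)}$ (since $\psi_j * F^{(j-1)} \le F^{(j)}$ and the operation is anti-monotone in the clause-set), so $\psi_j \in \A(F^{(j-1)}) \sse \A(\psi_{j-1} \circ \cdots \circ \psi_1 * F)$ and iterativity applies; the spliced $\theta := \psi_t \circ \cdots \circ \psi_1$ is then an $\A$-autarky of $F$ with $\theta * F = \nA(F)$. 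But $\var(\theta) = \bigcup_k \var(\psi_k)$, which in general only contains $V$ and need not equal it --- $\psi_k$ may involve further autark variables beyond $v^{(k)}$, and restricting $\theta$ to $V$ need not remain an autarky, because the literal of a touched clause that $\theta$ satisfies may have its variable in $\var(\theta)\sm V$. So something more is needed to land exactly on $V$; the natural candidate (a downward induction on $k$ establishing that $F^{(k)}[\set{v^{(k+1)},\dots,v^{(t)}}]$ is $\A$-satisfiable) is not one you spell out. Your fallback --- $F_V = F \sm \nA(F)$ together with $\A$-satisfiability of $F[\var(F)\sm\var(\nA(F))]$ from Lemma~\ref{lem:autdecomp} --- is correct and indeed suffices for the uses the paper makes of this lemma (e.g.\ Lemma~\ref{lem:polytimeqmma}). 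One small side remark: the difficulty you identify is not really a boolean-versus-non-boolean issue; the gap between $V$ and $\var(\theta)$ is present already in the boolean case, and the delicacy is the same.
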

The idea behind the algorithm of Lemma \ref{lem:vonschlanknachred} is that we want to find a variable $v$ such that there exists an autarky $\vp$ for $F$ with $v \in \var(\vp)$; if there is no such variable, then $F$ is lean while otherwise we can eliminate all clauses from $F$ containing variable $v$. Now the variable $v_i$ selected in Step \ref{item:letzterSch} must be such a variable: Consider a non-trivial autarky $\vp_i$ for $F_i := \set{v_1, \dots, v_{i-1}} * F$ with $\var(\vp_i) \sse \var(F_i)$. Since $\set{v_i} * F_i$ is lean, $v_i \in \var(\vp_i)$ must be the case, while $\vp_i$ is an autarky also for $F$. For more on such reductions (for boolean clause-sets and the full autarky system) see \cite{KullmannMarekTruszczynski2007Autarkien,Marek2009Satisfiability}, where it is also shown, given an oracle for $\A$-leanness decision, how to find a non-trivial autarky (if existent), which through iteration yields an autarky $\vp \in \A(F)$ with $\vp * F = \nA(F)$. This works via addition of unit-clauses, similar to the usual self-reduction, but now used together with crossing out of variables as above, obtaining a minimal non-trivial autarky (if existent).

\section{Resolution}
\label{sec:Resolution}

In this section we discuss the fundaments of resolution for generalised clause-sets. In Subsection \ref{sec:resolutionrule} the obvious generalisation of the boolean resolution rule is discussed, deriving a (generalised) resolvent-clause from (generalised) parent-clauses. The duality between resolution and autarkies is presented in Subsection \ref{sec:dualresaut}, showing that the clauses satisfiable by some autarky are exactly those not usable in any resolution refutation. Finally in Subsection \ref{sec:DPres} the special case of ``Davis-Putnam resolution'' is considered, due to its importance for reductions.

\subsection{The resolution rule}
\label{sec:resolutionrule}

 Let us consider for the moment the most general form of (variable-based) CNF-clause-sets, that is, negative signed clause-sets (recall the discussion in Subsection \ref{sec:introsignedformulas}), where the literals are of the form ``$v \notin S$'' for some $S \sse D_v$ (so literals are unary constraints, but presented by their sets of falsifying values).\footnote{If only one sign is present (``monosigned''), then for CNF the negative sign is appropriate, and for DNF the positive sign, since CNFs \emph{ex}clude total assignments, while DNFs \emph{in}clude them.} Representing literals ``$v \notin S$'' by pairs $(v,S)$, the most general resolution rule allows for given negative signed clauses $C_1, \dots, C_m$ ($m \ge 0$) with $x_i = (v,S_i) \in C_i$ for $i \in \tb 1m$ the formation of the resolvent $R := (\bc_{i \in \tb 1m} C_i \sm \set{x_i}) \cup \set{(v,\bc_{i \in \tb 1m} S_i)}$, a negative signed clause collecting all literals from the $C_i$ except of the $x_i$, and including additionally the literal on variable $v$ whose (negative) sign is the union of the (negative) signs of the $x_i$; note that $(v,\es)$ represents truth value $1$ (is always true), while $(v,D_v)$ represents truth value $0$ (is always false). This most powerful resolution rule captures immediately the power of resolution on the direct boolean translation (which needs to use both ALO and AMO clauses here), and in fact the latter can also polynomially simulate the former. As discussed in Subsection \ref{sec:introsignedformulas}, also when starting with (just) generalised clause-sets, using the general resolution rule for signed clause-sets can yield polynomial-size refutations where ordinary resolution on generalised clause-sets (as discussed in this section), which stays inside the class of generalised clause-sets, has only exponential-size refutations. However in this article we consider resolution mainly as a ``combinatorial tool'', and then (restricted) resolution for generalised clause-sets is appropriate.

The resolution rule for generalised clause-sets is well-known. A thorough study is approached in \cite{Ku00g}, where actually resolution is considered for general ``fipa-systems'' (systems with finite instantiation by partial assignments), by reducing resolution for such axiomatic systems to resolution for generalised clause-sets through the use of ``no-goods'', i.e., out of the general system we get the clauses $C$ belonging to the resolution refutation as clauses $C_{\vp}$ associated with such partial assignments $\vp$ which led to the contradictions. We now review the most basic notions, with special emphasise on the fundamental duality with the notion of autarkies. A technical comment: For autarky systems the number of occurrences of a clause in a multi-clause-set might make a difference (as it is the case for matching autarkies introduced in the Section \ref{sec:autviamatch}), however for all known resolution systems we do not need this distinction, and thus only (generalised) clause-sets are considered for resolution (that is, if multi-clause-sets $F \in \Mcls$ are to be treated, then they are automatically ``downcast'' to the underlying clause-set $\tmc(F)$).

Consider a variable $v \in \Va$. ``Parent clauses'' $C_1, \dots, C_{\abs{D_v}}$ are called \textbf{resolvable with resolution variable $v$}, if $\val_v(\set{C_1, \dots, C_{\abs{D_v}}}) = D_v$ and the \textbf{resolvent} $R := \bc_{i=1}^{\abs{D_v}} \set{v} * C_i$ actually is a clause (contains no clashing literals), that is, whenever there are literals $x \in C_i$, $y \in C_j$ for some $i, j \in \tb 1{\abs{D_v}}$ with $x \not= y$ and $\var(x) = \var(y)$, then $\var(x) = v$ must be the case. Resolution is a complete and sound refutation system; see for example Corollary 5.9 in \cite{Ku00g}, where, translating branching trees into resolution trees, the existence of a resolution tree with at most $(\max_{v \in \var(F)} \abs{D_v})^{n(F)}$ many leaves for unsatisfiable generalised clause-sets $F$ is shown. Also stated in \cite{Ku00g} is the (well-known) ``strong completeness'' of resolution, that is, for a multi-clause-set $F \in \Mcls$ and a clause $C \in \Cl$ we have $F \models C$ iff there exists a resolution tree with axioms from $F$ deriving a clause $C' \sse C$.

\subsection{The duality between resolution and autarkies}
\label{sec:dualresaut}

In Theorem 3.16 in \cite{Ku98e} the ``resolution-autarky duality theorem'' was shown for boolean clause-sets, namely that the lean kernel of a clause-set $F$ consists exactly of all clauses $C \in F$ which can be used in some resolution refutation of $F$.\footnote{Where the resolution refutation may not contain ``dead ends'', which can be most easily enforced by considering only resolution \emph{trees}.} A predecessor is \cite{Ge98}, where it was shown that when a tableau-calculus based (boolean) SAT solver gets ``stuck'', i.e., can not find a refutation based on the current top-clause, then an autarky satisfying this top-clause can be extracted, while clauses satisfied by some autarky can not participate in any tableau-refutation.\footnote{We remark that \cite{Ok98} extends this approach by parallelisation, exploiting the observation (made there first) that the combination of autarkies yields again an autarky.} In \cite{Ku01a}, Theorem 4.1, the generalisation of the duality-theorem to generalised clause-sets is stated, but without a proof, which we now outline. Consider the set $U(F)$ of clauses $C \in F$ for which there exists a tree resolution refutation of $F$ using $C$ as an axiom. The direction, that a clause $C \in F \sm \na(F)$ can not be used in tree resolution refutations of $F$ (i.e., $U(F) \sse \na(F)$), is easily proved by induction: an autarky of $F$ satisfying $C$ is an autarky for all clauses in the tree and satisfies all clauses derived from $C$. For the reverse direction the main technical lemma is, that for each variable $v \in \var(U(F))$ and each $\ve \in D_v$ the unit-clause $\set{(v,\ve)}$ can be derived from $U(F)$ by resolution (this is a little proof-theoretic exercise; see Lemma 3.14 in \cite{Ku98e} for the boolean case). Now it follows that $F \sm U(F)$ is an autark sub-clause-set of $F$, since if the clause-set $\var(U(F)) * (F \sm U(F))$ would be unsatisfiable, then there would be a tree resolution refutation $T$ of $\var(U(F)) * (F \sm U(F))$, where the axioms of $T$ could be derived from the clauses in $F \sm U(F)$ and the clauses in $U(F)$ by the above technical lemma, and thus we could construct a tree resolution refutation involving some clause of $F \sm U(F)$, contradicting the definition of $U(F)$ (compare with Lemma 3.15 in \cite{Ku98e} for the boolean case). That $F \sm U(F)$ is an autark sub-clause-set of $F$ means $\na(F) \sse U(F)$, and altogether we have shown:
\begin{thm}\label{thm:ResAut}
  For any generalised clause-set $F \in \Cls$ the lean kernel $\na(F)$ equals the set $U(F)$ of clauses of $F$ usable in some (tree) resolution refutation of $F$. So $F$ is lean if and only if $F = U(F)$, that is, if every clause of $F$ can be used in some (tree) resolution refutation of $F$.
\end{thm}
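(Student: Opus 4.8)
The plan is to prove the two inclusions $U(F)\sse\na(F)$ and $\na(F)\sse U(F)$ separately; the final equivalence is then immediate, since $F$ is lean exactly when $\na(F)=F$.

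\textbf{The inclusion $U(F)\sse\na(F)$.} It suffices to show that no clause of $F\sm\na(F)$ can occur as an axiom of a tree resolution refutation of $F$. By the autarky decomposition reviewed in Section~\ref{sec:prelimAutarkies}, there is $\vp\in\aut(F)$ with $F_{\var(\vp)}=F\sm\na(F)$, so $\vp$ satisfies every clause of $F\sm\na(F)$. The one combinatorial fact I need is: if $\vp$ is an autarky for each parent of a resolution step with resolvent $R$ and satisfies one of those parents, then $\vp$ satisfies $R$. This is a short case distinction for a step on variable $v$ with parents $C_1,\dots,C_{\abs{D_v}}$: if $\vp$ satisfies $C_i$ via a literal $x\in C_i$ with $\var(x)\ne v$, then $x\in R$; if $\vp$ satisfies $C_i$ only via its $v$-literal $(v,\ve_i)$, then $\vp(v)=\ve_j$ for the unique $j$ carrying that value, so $\vp$ falsifies $(v,\ve_j)\in C_j$, and hence (being an autarky for $C_j$ and meeting its variable $v$) satisfies $C_j$ through a non-$v$-literal, which again lies in $R$. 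From this, an induction over the tree structure shows $\vp$ is an autarky for \emph{every} clause occurring in any resolution tree over $F$; then an induction along the path from a supposed axiom $C\in F\sm\na(F)$ (which $\vp$ satisfies) down to the root shows $\vp$ would satisfy $\bot$ --- impossible. Hence $C\notin U(F)$.

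\textbf{The inclusion $\na(F)\sse U(F)$.} Equivalently, $F\sm U(F)\sse F\sm\na(F)$, so it is enough to exhibit an autarky of $F$ satisfying all clauses of $F\sm U(F)$. Put $W:=\var(U(F))$ and consider $G:=W*(F\sm U(F))$. I claim $G\in\Sat$. Given a tree resolution refutation $T$ of $G$, each of its axioms has the form $W*C$ with $C\in F\sm U(F)$; by the technical lemma below, for each $v\in W$ and each $\ve\in D_v$ the unit clause $\set{(v,\ve)}$ is derivable from $U(F)$, so $W*C$ is derivable from $C$ together with finitely many such units. Grafting these derivations onto the leaves of $T$ produces a tree resolution refutation of $F$ having some $C\in F\sm U(F)$ as an axiom --- contradicting $C\notin U(F)$. (The same units show no $C\in F\sm U(F)$ has $\var(C)\sse W$, so $W*C\ne\bot$, and $\bot\notin G$.) Now let $\vp$ satisfy $G$; then $\var(\vp)\cap W=\es$, so $\vp$ touches no clause of $U(F)$, while for each $C\in F\sm U(F)$ it satisfies $W*C\sse C$ and hence $C$. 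Thus $\vp\in\aut(F)$ and $F\sm U(F)\sse F_{\var(\vp)}$, which is an autark sub-clause-set, whence $F\sm U(F)\sse F\sm\na(F)$, as the latter is the largest such.

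\textbf{The main obstacle.} What remains --- and I expect this to be the delicate part --- is the technical lemma: for every $v\in\var(U(F))$ and every $\ve\in D_v$, the unit clause $\set{(v,\ve)}$ is derivable from $U(F)$ by tree resolution. This is the generalisation to finite domains of Lemma~3.14 in \cite{Ku98e}, and it is a purely proof-theoretic argument: fix a clause of $U(F)$ containing a literal on $v$ together with a tree refutation of $F$ using it (every axiom of which already lies in $U(F)$, being itself used in a refutation), follow the $v$-literal occurrences down to the unique resolution step that eliminates $v$, and read off from the sub-derivations feeding that step the desired unit clauses for all values of $D_v$. Carrying out this bookkeeping cleanly when $\abs{D_v}>2$ --- where the eliminating step has $\abs{D_v}$ parents rather than two --- is the only point going beyond a routine adaptation of the boolean case; everything else rests on the decomposition and autark-subset facts already established.
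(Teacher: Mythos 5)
Your proof follows the same two-inclusion scheme as the paper's own outline: $U(F)\sse\na(F)$ via the observation that an autarky satisfying a clause propagates through every resolvent built from it, and $\na(F)\sse U(F)$ via the unit-clause derivability lemma together with showing $\var(U(F))*(F\sm U(F))$ is satisfiable. The only part you leave open (the proof-theoretic lemma that each $\set{(v,\ve)}$ for $v\in\var(U(F))$ is tree-derivable from $U(F)$) is exactly the part the paper itself defers to the boolean analogue, Lemma~3.14 of \cite{Ku98e}, so the proposal matches the paper in both structure and level of detail.
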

As shown in Section 6 of \cite{Ku01a}, Theorem \ref{thm:ResAut} yields an algorithm for computing $\na(F)$ by using ``intelligent backtracking solvers'', which on unsatisfiable instances can return the set of variables used in some resolution refutation of the input. Crossing out these variables from the input, removing the empty clause obtained, and repeating this process, we finally obtain a satisfiable clause-set $F^*$, and now any satisfying assignment $\vp$ for $F^*$ with $\var(\vp) \sse \var(F^*)$ is an autarky for $F$ with $\vp * F = F \sm \na(F)$. See \cite{KullmannLynceSilva2005Autarkies} for more details on this computation of the lean kernel (in \cite{KullmannLynceSilva2005Autarkies} only boolean clause-sets are considered, but based on the results of the present article, all (mathematical) results can be generalised in the natural way).

\begin{corol}\label{cor:polytimeLKern}
  Consider a class $\mc{C} \sse \Cls$ of generalised clause-sets, such that for all unsatisfiable $F \in \mc{C}$ and all $V \sse \Va$ we have $F[V] \in \mc{C}$. Assume furthermore that there is an algorithm running for inputs $F \in \mc{C}$ in polynomial time, which either computes a satisfying assignment for $F$ or computes the set of variables used in some tree resolution refutation of $F$. Then for inputs $F \in \mc{C}$ the lean kernel $\na(F)$ is computable in polynomial time.
\end{corol}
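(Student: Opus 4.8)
The plan is to make rigorous the procedure sketched right after Theorem~\ref{thm:ResAut}, using the two hypotheses to keep all intermediate instances inside $\mc C$ and to bound the number of rounds. Set $F_0:=F$ and $V_0:=\es$. At round $t$, run the given algorithm on $F_t$; if it returns a satisfying assignment, stop; otherwise it returns the set $W_t$ of variables used in some tree resolution refutation of $F_t$, and we put $V_{t+1}:=V_t\cup W_t$ and $F_{t+1}:=F[\var(F)\sm V_{t+1}]$ --- equivalently, cross out the variables of $W_t$ in $F_t$ and delete the empty clause that may arise. Using the composition behaviour of $F[\cdot]$ one checks that $F_{t+1}$ is obtained from $F_t$ by this local step and that $\var(F_t)=\var(F)\sm V_t$. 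If $F_0$ is already satisfiable we stop at once and output $\top=\na(F)$ (using that $\na(F)=\top$ exactly when $F\in\Sat$).

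For the running time: once we enter the second branch, $F$ itself is unsatisfiable, so by the hypothesis on $\mc C$ we have $F_t=F[\var(F)\sm V_t]\in\mc C$ for every $t$, and each call is polynomial in $\abs F$. For $t\ge1$ we have $\bot\notin F_t$ (it was deleted by $F[\cdot]$), so an unsatisfiable $F_t$ requires at least one resolution step, hence $W_t\ne\es$ and $V_{t+1}\supsetneq V_t$; as $V_t\sse\var(F)$, at most $n(F)+1$ rounds occur. Thus the procedure runs in polynomial time.

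For correctness the key claim is the invariance $V_t\sse\var(\na(F))$ for all $t$, proved by induction. A resolution variable used in a tree refutation of $F_s$ occurs in one of the leaves of that refutation (since $\var(\text{resolvent})\sse\var(\text{parents})$ for the generalised resolution rule), that leaf is an axiom of $F_s$ used in a tree refutation, hence lies in $\na(F_s)$ by Theorem~\ref{thm:ResAut}, so $W_s\sse\var(\na(F_s))$; and one shows $\var(\na(F_s))\sse\var(\na(F))$. For the latter, write $F=\na(F)+F^{\flat}$ for the decomposition of Lemma~\ref{lem:autdecomp} (so $F^{\flat}=F_{\var(F)\sm\var(\na(F))}$ and $\var(\na(F))*F^{\flat}$ is satisfiable); for $V'\sse\var(\na(F))$ one has $F[\var(F)\sm V']=(V'*A)+(V'*F^{\flat})$, where $A$ consists of the clauses of $\na(F)$ not contained in $V'$. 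Here $V'*A$ is lean (it is a sub-multi-clause-set of the lean multi-clause-set $V'*\na(F)$, and leanness transfers down because the omitted clauses are occurrences of $\bot$), and $\var(V'*A)*(V'*F^{\flat})=(\var(A)\cup V')*F^{\flat}$ is satisfiable because $\var(A)\cup V'\sse\var(\na(F))$, so each of its clauses contains the corresponding clause of the satisfiable $\var(\na(F))*F^{\flat}$. By the uniqueness part of Lemma~\ref{lem:autdecomp} this forces $\na(F[\var(F)\sm V'])=V'*A$, hence $\var(\na(F[\var(F)\sm V']))\sse\var(A)\sse\var(\na(F))$.

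Finally, let $V$ be the eliminated set when the procedure halts, so $F_t=F[\var(F)\sm V]$ is satisfiable and $V\sse\var(\na(F))$. Applying the displayed identity with $V'=V$: $F_t=(V*A)+(V*F^{\flat})$ with $V*A$ lean, and $\na(F_t)=\top$ forces $V*A=\top$, i.e.\ every clause of $\na(F)$ has all its variables in $V$; so $\var(\na(F))\sse V$ and thus $V=\var(\na(F))$ and $F_t=F[\var(F)\sm\var(\na(F))]$. Extending the model found to a total assignment $\vp$ on $\var(F_t)=\var(F)\sm\var(\na(F))$, Lemma~\ref{lem:autdecomp} gives that $\vp$ is an autarky for $F$ with $\vp*F=\na(F)$, which the procedure outputs. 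The main obstacle is exactly the invariance lemma $\var(\na(F_t))\sse\var(\na(F))$ --- that crossing out lean-kernel variables cannot enlarge the variable support of the lean kernel --- which needs the structural Lemma~\ref{lem:autdecomp} together with the monotonicity of satisfiability under lengthening clauses; the rest is bookkeeping.
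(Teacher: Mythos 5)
Your proof is correct, and the algorithm you run is exactly the one the paper sketches in the paragraph after Theorem~\ref{thm:ResAut} (iteratively crossing out the variables of some tree refutation and removing the empty clause until a satisfiable restriction is reached, then reading off a maximal autarky from a satisfying assignment via Lemma~\ref{lem:autdecomp}). The paper leaves the correctness verification to the cited reference; your invariant $V_t\sse\var(\na(F))$, established through the decomposition $F[\var(F)\sm V']=(V'*A)+(V'*F^{\flat})$ and the uniqueness clause of Lemma~\ref{lem:autdecomp}, supplies exactly the justification needed to make the argument self-contained.
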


A general source of classes $\mc{C}$ as required in Corollary \ref{cor:polytimeLKern} are the classes $\mc{G}_k(\mc{U}, \mc{S})$ for levels $k \in \NNZ$ and suitable oracles $\mc{U}$ for unsatisfiability and $\mc{C}$ for satisfiability, as introduced in \cite{Ku00g}. In this way for example we get poly-time computation of the lean kernel for generalised Horn clause-sets (allowing non-boolean variables as well as considering higher levels $k$ of ``Horn-structures''). For further information on algorithmic problems related to autarkies see \cite{Kullmann2007HandbuchMU}.

\subsection{DP-resolution}
\label{sec:DPres}

We conclude this section by defining the \textbf{Davis-Putnam operator $\dpl$} for generalised clause-sets. Consider a clause-set $F \in \Cls$ and a variable $v \in \var(F)$. Let $F_v$ be the set of all resolvents of parent clauses in $F$ with resolution variable $v$. Now we set $\bmm{\dpi{v}(F)} := \set{C \in F : v \notin \var(C)} \cup F_v$. From the completeness results for (generalised) resolution in \cite{Ku00g} it follows immediately that $\dpi{v}(F)$ is satisfiability equivalent to $F$, and that $F$ is unsatisfiable if and only if by repeated applications of the Davis-Putnam operator we finally obtain the clause-set $\set{\bot}$ (while for satisfiable $F$ finally we will obtain the clause-set $\top$). We can also generalise Lemma 7.6 in \cite{KuLu98} about the commutativity of the Davis-Putnam operator, that is, if $G_1$ is the result of applying first $\dpi{v_1}$, then $\dpi{v_2}$, ..., and finally applying $\dpi{v_m}$, while $G_2$ is the result of applying first $\dpi{v_{\pi(1)}}$, then $\dpi{v_{\pi(2)}}$, ..., and finally applying $\dpi{v_{\pi(m)}}$, for some permutation $\pi \in S_m$, then after elimination of subsumed clauses in $G_1$ and $G_2$ (see the following subsection) $G_1$ becomes equal to $G_2$. It follows that for any set of variables $V$ the operator $\dpi{V}$, computed by running through the variables of $V$ in some order, is uniquely determined up to subsumption reduction in the result. We always have $\dpi{V}(F) = \dpi{V}(F_V) \cup (F \sm F_V)$. If for some $V \sse \var(F)$ we have $\dpi{V}(F_V) = \top$, then $F$ and $F \sm F_V$ are satisfiability equivalent, generalising the elimination of autark sub-clause-sets: If $\vp \in \aut(F)$, then $\dpi{\var(\vp)}(F_{\var(\vp)}) = \top$, while the reverse direction need not hold, as the example $F = \set{\set{v,a},\set{\ol{v},\ol{a}}} \cup F'$, $v \notin \var(F')$, with $V = \set{v}$ shows (for boolean variables). We see that the Davis-Putnam operator, whose application for generalised clause-sets is semantically the same as existential quantification (that is, the results are logically equivalent), yields more powerful reductions than autarky reduction, but this at the cost of potential exponential space usage.

\section{Reductions}
\label{sec:Reduction}

In this section we review the basic polynomial-time reduction concepts. For a thorough discussion in the boolean case, see \cite{KuLu98}. We consider only clause-sets $F \in \Cls$, but all results are easily generalised to multi-clause-sets. In Subsection \ref{sec:Basicreductions} the most basic reductions are considered, and in Subsection \ref{sec:DPreductions} reductions by means of DP-resolution are presented.

\subsection{Basic reductions}
\label{sec:Basicreductions}

The most basic reduction (by which we mean a satisfiability-equivalent transformation, simplifying the clause-set in some sense) is \textbf{subsumption elimination}, the elimination of subsumed clauses, i.e., the transition $F \ra F \sm \set{C}$ for $C \in F$ in case there exists $C' \in F$ with $C' \subset C$. Iterated elimination of subsumed clauses is confluent, and thus the result of applying subsumption elimination as long as possible is uniquely determined (namely it is the set of all minimal clauses of $F$); if $F$ has no subsumed clauses, then we call $F$ \textbf{subsumption-free}.

The next reduction can be called the \textbf{trivial-domain reduction}: If there exists $v \in \var(F)$ with $\abs{D_v} = 1$ (we call such a variable \textbf{trivial}), then for $D_v = \set{\ve}$ reduce $F \mapsto \pao{v}{\ve} * F$ (that is, all literal occurrences with underlying variable $v$ are removed).

Elimination of ``pure literals'' is now better called \textbf{elimination of pure variables}: If there is $v \in \var(F)$ with $\abs{\val_v(F)} < \abs{D_v}$, that is, one of the values of $v$ is not used in $F$, then for some $\ve \in D_v \sm \val_v(F)$ reduce $F \mapsto \pao{v}{\ve} * F$. This is the basic form of a \textbf{pure autarky} as mentioned in Subsection 4.4 of \cite{Ku01a}.

\textbf{Unit-clause elimination} for generalised clause-sets is less powerful than in the boolean case: If $F$ contains a unit-clause $\set{(v,\ve)} \in F$, then in case of  $D_v = \set{\ve}$ by trivial-domain reduction we conclude that $F$ is unsatisfiable, but otherwise we can only conclude that value $\ve$ is to be excluded from the domain of $v$, and in general we can not eliminate the variable $v$. It seems most convenient here to include trivial-domain reduction into unit-clause elimination (so that we properly generalise the boolean case); in our context, where we fixed the domain of each variable (and thus renaming is needed to achieve a change of domain), \textbf{unit-clause propagation} for (generalised) clause-sets $F$ is then the following procedure:
\begin{enumerate}
\item\label{item:UCP1} Apply trivial-domain reduction to $F$ to eliminate all trivial variables.
\item If $\bot \in F$, then reduce $F$ to $\set{\bot}$.
\item\label{item:UCP3} For $\set{(v,\ve)} \in F$ eliminate all clauses containing the literal $(v,\ve)$ from $F$, and replace variable $v$ in the remaining occurrences by a new variable $v'$ with $D_{v'} = D_v \sm \set{\ve} \not= \es$.
\item\label{item:UCP4} Repeat Steps \ref{item:UCP1} - \ref{item:UCP3} until all trivial variables and all unit-clauses have been eliminated.
\end{enumerate}
So after unit-clause propagation every variable has a domain with at least two elements and, except of the case $F = \set{\bot}$, every clause contains at least two literals. Unit-clause propagation for boolean clause-sets is confluent (the final result does not depend on the order and choice of single reduction steps), while unit-clause propagation for arbitrary clause-sets is confluent \emph{modulo renaming}. Generalising the well-known linear time algorithm for unit-clause propagation in the boolean case, this normal form (the result of unit-clause propagation) can be computed in linear time.

Considering clauses $C \in \Cl$ as  constraints of scope $\var(C)$ (see \cite{Dechter2003Constraints}, Subsection 2.1.1), and thus clause-sets $F \in \Cls$ as \emph{constraint networks} (or \emph{constraint satisfaction problems}), the basic reduction concepts for constraint satisfaction problems are translated as follows:
\begin{enumerate}
\item $F$ is $k$-regular (in natural generalisation of Definition 5.14 in \cite{Apt2003CSP}) iff for every variable set $V \sse \var(F)$ with $\abs{V} \le k$ there is at most one clause $C \in F$ with $\var(C) = V$. Due to the restricted power of (generalised) clauses, this notion seems to be not of great relevance here (on the other side, it might make sense to comprise binary clauses and possibly also ternary clauses with the same scope into constraints).
\item $F$ is relational arc-consistent (every assignment to all but one of the variables of a clause $C \in F$ can be extended to a satisfying assignment for $C$; see \cite{Dechter2003Constraints}, Definition 8.1) iff $F$ does not have trivial variables.
\item $F$ is (hyper-)arc-consistent (every assignment to one variable of a clause $C \in F$ can be extended to a satisfying assignment for $C$; see \cite{Dechter2003Constraints}, Definition 3.6) iff for all $C \in F \sm \set{\bot}$ after trivial-domain reduction at least two literals are left.
\end{enumerate}
So unit-clause propagation achieves both hyper-arc-consistency and relational arc-consistency. More information on the relations to the basic reduction concepts for constraint problems one finds in Subsection 3.3 of \cite{Kullmann2007Uebersicht}, while more extensive studies have been performed by David Mitchell.

\subsection{DP-reductions}
\label{sec:DPreductions}

Finally we consider the most harmless cases for \textbf{DP-reductions}. In general, application of $\dpi{v}$ to $F$ eliminates $\#_v(F) = \sum_{\ve \in D_v} \#_{(v,\ve)}(F)$ clauses and creates up to $\prod_{\ve \in D_v} \#_{(v,\ve)}(F)$ new clauses (with potential repetitions; less iff some of the parent clause combinations are not eligible for resolution due to additional clashes). Thus we have
\begin{equation}
  \label{eq:DP}
  c(\dpi{v}(F)) \le c(F) - \sum_{\ve \in D_v} \#_{(v,\ve)}(F) + \prod_{\ve \in D_v} \#_{(v,\ve)}(F).
\end{equation}
Note that equality holds in \eqref{eq:DP} if $v$ is pure for $F$ (which is equivalent to the product being zero). If in \eqref{eq:DP} we have a strict inequality or $v$ is a pure variable for $F$, then we call $v$ a \textbf{degenerated DP-variable w.r.t.\ $F$}, while otherwise $v$ is called a \textbf{non-degenerated DP-variable w.r.t.\ $F$}.  Note that a missing new clause due to additional clashes is not the only cause of a strict inequality, but it is also possible that a resolvent is already contained in the rest of $F$, or that two resolvents coincide (and thus in both cases contraction occurs). Two trivial cases:
\begin{enumerate}
\item If variable $v \in \var(F)$ has a trivial domain (i.e., $\abs{D_v} = 1$), then either we have a subsumption $C, C \cup \set{(v,\ve)} \in F$ (for some clause $C$ not containing $v$), in which case $v$ is a degenerated DP-variable w.r.t.\ $F$, or otherwise $v$ is a non-degenerated DP-variable with $c(\dpi{v}(F)) = c(F)$, and in both cases $\dpi{v}(F)$ is the result of applying trivial domain reduction to $F$.
\item If $v$ is pure w.r.t.\ $F$, then $F$ is a degenerated DP-variable with $c(\dpi{v}(F)) = c(F) - \#_v(F)$, and $\dpi{v}(F)$ is the result of applying the elimination of pure variable $v$ to $F$.
\end{enumerate}
Besides these cases, in this article we consider only one very restricted form of DP-resolution, characterised by the condition that at most one of the factors in the product from \eqref{eq:DP} might be greater than one:

We call a variable $v$ a \textbf{singular variable} w.r.t.\ $F$ if there exists $\ve \in D_v$ such that for all $\ve' \in D_v \sm \set{\ve}$ we have $\#_{(v,\ve')}(F) = 1$, while $\#_{(v,\ve)}(F) \ge 1$, i.e., $v$ is not pure for $F$. In such a case of a singular variable, application of $\dpi{v}$ eliminates $\abs{D_v} - 1 + \#_{(v,\ve)}(F)$ clauses and creates up to $\#_{(v,\ve)}(F)$ new clauses, so that the number of clauses goes down at least by one if $\abs{D_v} \not= 1$. For a singular variable $v$ we have:
\begin{itemize}
\item If $v$ is a non-degenerated DP-variable w.r.t.\ $F$, then we have $c(\dpi{v}(F)) = c(F) - \abs{D_v} + 1$.
\item If $v$ is a degenerated DP-variable w.r.t.\ $F$, then at least one of the clauses in $F$ containing $v$ can be eliminated satisfiability-equivalently, and we call such a clause elimination a \textbf{singular DP-degeneration reduction}.
\end{itemize}
Since a singular DP-degeneration reduction can not be applied to a minimally unsatisfiable clause-set, a singular variable w.r.t.\ a minimally unsatisfiable clause-set must be non-degenerated. Actually more can be said here:

\begin{lem}\label{lem:singDPMUSAT}
  Consider a generalised clause-set $F \in \Cls$ and a singular variable $v$ w.r.t.\ $F$. Then the following two conditions are equivalent:
  \begin{enumerate}
  \item $F$ is minimally unsatisfiable.
  \item $v$ is a non-degenerated DP-variable w.r.t.\ $F$ and $\dpi{v}(F)$ is minimally unsatisfiable.
  \end{enumerate}
\end{lem}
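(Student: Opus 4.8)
The lemma is a biconditional about a singular variable $v$ w.r.t.\ $F$, relating minimal unsatisfiability of $F$ to that of $\dpi{v}(F)$ together with non-degeneracy. The natural strategy is to prove the two directions separately, exploiting throughout the concrete structure of a singular variable: there is a distinguished value $\ve \in D_v$ with $\#_{(v,\ve')}(F)=1$ for all $\ve'\ne\ve$, so $F$ contains a unique clause $A_{\ve'}$ with the literal $(v,\ve')$ for each $\ve'\ne\ve$, plus some set $B_1,\dots,B_k$ of clauses containing $(v,\ve)$ (with $k=\#_{(v,\ve)}(F)\ge 1$). The resolvents on $v$ are exactly the clauses $R_j := \bigl(\{v\}*B_j\bigr)\cup\bigcup_{\ve'\ne\ve}\bigl(\{v\}*A_{\ve'}\bigr)$ for $j\in\{1,\dots,k\}$ (when they are genuine clauses), and $\dpi{v}(F) = (F\sm F_{\{v\}})\cup\{R_1,\dots,R_k\}$. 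I will keep this picture fixed and argue semantically, using that $\dpi{v}(F)$ is always satisfiability-equivalent to $F$ (stated in Subsection~\ref{sec:DPres}).

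\textbf{Direction (1)$\Rightarrow$(2).} Assume $F\in\Musat$. Since a singular DP-degeneration reduction removes a clause satisfiability-equivalently and $F$ is minimally unsatisfiable, $v$ cannot be degenerated; so $v$ is a non-degenerated DP-variable (this is already remarked in the text, but I would re-derive it cleanly here). It remains to show $\dpi{v}(F)\in\Musat$. Unsatisfiability of $\dpi{v}(F)$ is immediate from satisfiability-equivalence. For minimality, take any $G\lneq \dpi{v}(F)$; I must show $G\in\Sat$. Two cases: either $G$ omits some clause $C\in F\sm F_{\{v\}}$, or $G$ omits some resolvent $R_j$. In the first case, build $F' := F\sm\{C\}\lneq F$, which is satisfiable; take a satisfying assignment $\psi$ and argue (flipping/extending the value of $v$ if necessary, using that $v$ is singular and that the $A_{\ve'}$ are forced) that $\psi$ also satisfies $G$. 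In the second case, consider $F' := F\sm\{B_j\}$: it is satisfiable, and a model of $F'$ satisfies every $R_i$ with $i\ne j$ as well as all of $F\sm F_{\{v\}}$, hence satisfies $G$. The non-degeneracy is what guarantees the $R_j$ are pairwise distinct genuine clauses, so that ``omitting $R_j$'' from $\dpi{v}(F)$ corresponds to omitting exactly the single parent clause $B_j$ on the $F$-side; this correspondence is the crux of the minimality argument.

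\textbf{Direction (2)$\Rightarrow$(1).} Assume $v$ is non-degenerated and $\dpi{v}(F)\in\Musat$. Unsatisfiability of $F$ follows from satisfiability-equivalence. For minimality, take $F'\lneq F$; I must produce a satisfying assignment. If $F'$ omits a clause $C\in F\sm F_{\{v\}}$, then $\dpi{v}(F')$ omits the corresponding clause of $\dpi{v}(F)$ (here non-degeneracy ensures $C$ still appears, unchanged, in $\dpi{v}(F')$ and is a genuine proper subfamily), so $\dpi{v}(F')\lneq\dpi{v}(F)$ is satisfiable; lift a model back along satisfiability-equivalence of $\dpi{v}(F')$ and $F'$. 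If $F'$ omits one of the $A_{\ve'}$, set $v:=\ve'$ in a to-be-constructed model: this satisfies all $B_i$ and all other $A_{\ve''}$ automatically, and restricts $F'$ on the remaining variables to a clause-set that embeds into $\dpi{v}(F)$-minus-a-clause, hence is satisfiable. If $F'$ omits some $B_j$, then $\dpi{v}(F')$ loses exactly the resolvent $R_j$ (again by non-degeneracy, the $R_i$ are distinct and $R_j$ has no other parent), so $\dpi{v}(F')\lneq\dpi{v}(F)$ is satisfiable and we lift back.

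\textbf{Main obstacle.} The delicate point throughout is the precise bookkeeping of how omitting one clause from $F$ translates to omitting one clause from $\dpi{v}(F)$ and vice versa — and this is exactly where the non-degeneracy hypothesis does its work: it rules out coincidences among the resolvents $R_j$, coincidences between a resolvent and a clause of $F\sm F_{\{v\}}$, and disappearing resolvents due to extra clashes. I expect the bulk of the effort to be in verifying that under non-degeneracy the map $F'\mapsto\dpi{v}(F')$ sends each proper sub-clause-set of $F$ containing all $A_{\ve'}$ to a proper sub-clause-set of $\dpi{v}(F)$, and conversely, together with the routine (but case-laden) check that a model of the ``reduced'' side can always be adjusted on $v$ to a model of the ``original'' side. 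The cases where $A_{\ve'}$ is dropped are handled most easily by just fixing $v:=\ve'$ from the outset and quoting that this kills every clause on $v$; the cases dropping a $B_j$ or a non-$v$ clause go through $\dpi{v}$ and satisfiability-equivalence.
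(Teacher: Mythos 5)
Your proof is correct, but it takes a genuinely different route from the paper. The paper argues proof-theoretically, exploiting the characterisation of non-minimal unsatisfiability via resolution refutations: if $\dpi{v}(F)$ were not minimally unsatisfiable, some resolution refutation of $\dpi{v}(F)$ would avoid one of its clauses; replacing each resolvent $R_j$ used there by the one-step derivation from its parents $A_{\ve'}$ ($\ve'\ne\ve$) and $B_j$ lifts this to a resolution refutation of $F$ avoiding some clause of $F$, and symmetrically for the converse. You instead argue model-theoretically: you track directly how dropping a single clause from $F$ corresponds (under non-degeneracy) to dropping a single clause from $\dpi{v}(F)$ and vice versa, and you lift satisfying assignments either across the satisfiability-equivalent transition $F'\leftrightarrow\dpi{v}(F')$ or, when an $A_{\ve'}$ is dropped, by fixing $v\mapsto\ve'$ outright. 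Both approaches live or die by the same bookkeeping, which is exactly what non-degeneracy buys: the $R_j$ are pairwise distinct, genuine clauses, and none of them coincides with a clause of $F\sm F_{\set{v}}$, so the clause-level correspondence $F'\mapsto\dpi{v}(F')$ between proper sub-clause-sets behaves as expected. Your route avoids resolution theory entirely except for the satisfiability-equivalence of the DP-step, which is arguably more elementary; the paper's proof is shorter given the resolution machinery it has already set up. One small correction: in your direction (2)$\Rightarrow$(1), the parenthetical ``non-degeneracy ensures $C$ still appears, unchanged, in $\dpi{v}(F')$'' has the logic backwards --- what you need (and what non-degeneracy gives you) is that the dropped clause $C$ does \emph{not} reappear in $\dpi{v}(F')$ as some resolvent, so that $\dpi{v}(F')\lneq\dpi{v}(F)$ really is a proper sub-clause-set.
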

\begin{proof} We have already seen that if $F$ is minimally unsatisfiable, then $v$ is non-degenerated. If $\dpi{v}(F)$ were not minimally unsatisfiable, then there would be a clause $C \in \dpi{v}(F)$ such that $\dpi{v}(F) \sm \set{C}$ would still be unsatisfiable, and thus would have a resolution refutation --- now it is easy to see that in this case we would also obtain a resolution refutation of $F$ not using one of the clauses in $F$.

For the reverse direction assume that $v$ is non-degenerated and that $\dpi{v}(F)$ is minimally unsatisfiable. By a similar argumentation as for the other direction, if there would be a resolution refutation of $F$ not using one of the clauses from $F$, then one could construct a resolution refutation of $\dpi{v}(F)$ not using (at least) one of the clauses from $\dpi{v}(F)$.
\end{proof}

We call application of DP-reduction for non-degenerated singular variables \textbf{non-degenerated singular DP-reduction}. In the boolean case, this form of DP-reduction is used at many places in the literature (in \cite{Ku99dKo}, Appendix B, it is called ``$(1,\infty)$-reduction''); see Lemma \ref{lem:dpausnahme} for more on singular DP-reduction.

We conclude by another reduction arising from the DP-operator. The notion of \textbf{blocked clauses} for boolean clause-sets (see \cite{Ku97b,Ku98f}) can be generalised by calling a clause $C$ \textbf{blocked w.r.t.\ $F$} if there exists a variable $v \in \var(C)$ with $\dpi{v}(F \cup \set{C}) = \dpi{v}(F \sm \set{C})$. If $C \in F$ is blocked w.r.t.\ $F$, then $F$ is satisfiability equivalent to $F \sm \set{C}$, and such a reduction is called \textbf{elimination of blocked clauses}. If $v$ is a pure variable for $F$, then all clauses of $F$ containing variable $v$ are blocked w.r.t.\ $F$. And if $v$ is a degenerated singular variable, then $F$ has at least one blocked clause containing $v$, and so singular DP-degeneration reduction is also covered by elimination of blocked clauses. For recent practical applications of blocked clauses see \cite{JarvisaloBiereHeule2010BlockedClauseElimination}.

\section{Matching satisfiability and deficiency}
\label{sec:satviamatch}

We now generalise the basic method for satisfying (certain) clause-sets $F$ via matchings, as first systematically studied for the boolean case in \cite{FrGe98}. The idea is to choose variables $v_i$ so that each clause $C \in F$ contains at least one of them, that is, there is a literal $x_C \in C$ with $\var(x_C) = v_i$ for some $i$, and so that, \emph{whatever} these associated literal occurrences $x_C$ are, there is a partial assignment $v_i \ra \ve_i$ such that all literals $x_C$ become true --- if this can be achieved, then obviously we satisfy $F$. \textbf{Subsection \ref{sec:matchsat}} elaborates this idea, introducing ``matching satisfiable'' generalised clause-sets $F$. In \textbf{Subsection \ref{sec:defgencls}} we generalise the notion of deficiency $\delta(F)$, which has been introduced for boolean clause-sets $F$ in \cite{FrGe98} as $\delta(F) = c(F) - n(F)$. Using the maximum deficiency $\delta^*(F)$, the maximal deficiency taken over all sub-clause-sets of $F$, as for boolean clause-sets we obtain the characterisation of matching satisfiable clause-sets $F$ by the condition $\delta^*(F) = 0$. Whence matching satisfiability can be decided in polynomial time by finding a maximum matching, which also yields a satisfying assignment (called a ``matching-satisfying assignment'') in the positive case. Two technical remarks are necessary here:
\begin{enumerate}
\item Matching arguments are sensitive to repetition of clauses, and thus instead of just using clause-sets we have to use the more general notion of a \emph{multi}-clause-set (recall Subsection \ref{sec:notmulticlausesets}).
\item In case of a pure variable $v \in \var(F)$ for some $F \in \Mcls$ (that is, not all values $\ve \in D_v$ are used in $F$) we assume that $D_v$ contains exactly one value not used in $F$ (i.e., $\abs{D_v} = \abs{\val_v(F)} + 1$). In this way we are not troubled anymore by the unknown domain size $D_v$, but we can measure the size of $F$ just by $\ell(F)$, while this modification has no influence on any of the notions and procedures in this article (all autarky systems studied here are stable for unused values (recall Subsection \ref{sec:Autarkysystems})).
\end{enumerate}

\subsection{Matching satisfiable generalised clause-sets}
\label{sec:matchsat}

We wish to generalise the notion of ``matching satisfiable clause-sets'', introduced in \cite{Ku00f} for boolean clause-sets. Consider a multi-clause-set $F$ together with a decomposition $F = F_1 + \dots + F_m$ for $m \in \NNZ$ and $F_i \in \Mcls$, fulfilling the following conditions:
\begin{enumerate}[(i)]
\item\label{item:condmatchsat1} for $i \in \tb 1m$ there are variables $v_i \in \var(F_i)$ such that for all clauses $C \in F_i$ we have $v_i \in \var(C)$;
\item\label{item:condmatchsat2} the variables $v_1, \dots, v_m$ are pairwise different;
\item\label{item:condmatchsat3} for all $i \in \tb 1m$ we have $\abs{D_{v_i}} > \abs{\val_{v_i}(F_i)}$, that is, $v_i$ is a pure variable for $F_i$.
\end{enumerate}
We remark that none of the variables $v_i$ needs to be a pure variable in $F$. Given such a decomposition, we see that $F$ is satisfiable, since for each $i$ there exists $\ve_i \in D_{v_i} \sm \val_{v_i}(F_i)$, and the assignment $\pab{ v_i \ra \ve_i : i \in \tb 1m}$ is a satisfying assignment for $F$. Considering an arbitrary partial assignment $\vp$ satisfying $F$ with $\var(\vp) = \set{v_1, \dots, v_m}$, and setting $F_i$ for $i \in \tb 1m$ as the induced sub-multi-clause-set of $F$ given by the clauses $C \in F$ with $v_i \in \var(C)$ and $(v_i, \vp(v_i)) \notin C$, we obviously fulfil the above conditions, and we see that conditions (\ref{item:condmatchsat1}) - (\ref{item:condmatchsat3}) need to be restricted so that we can obtain a class of satisfiable clause-sets which is decidable in polynomial time. We observe that $c(F_i) \ge \abs{\val_{v_i}(F_i)}$ is true for arbitrary multi-clause-sets $F_i$, and thus condition
\begin{enumerate}
\item[(\ref{item:condmatchsat3})'] \label{item:condmatchsat3s} for all $i \in \tb 1m$ we have $\abs{D_{v_i}} > c(F_i)$
\end{enumerate}
strengthens condition (\ref{item:condmatchsat3}). We call multi-clause-sets $F \in \Mcls$ having a decomposition $F = F_1 + \dots + F_m$ fulfilling conditions (\ref{item:condmatchsat1}), (\ref{item:condmatchsat2}) and (\ref{item:condmatchsat3})' \textbf{matching satisfiable}, and the set of all matching satisfiable (generalised) multi-clause-sets is denoted by \bmm{\Msat}.

To understand the connection to matching problems, we introduce the bipartite graph $B(F)$ for generalised multi-clause-sets $F \in \Mcls$:
\begin{itemize}
\item Let\vspace{-4ex}
  \begin{eqnarray*}
    \ul{F} & := & \setb{(C,i) : C \in F, \, i \in \tb 1{F(C)}}\\
    \ul{V} & := & \setb{(v,j) : v \in \var(F), \, j \in \tb 1{\abs{D_v}-1}}.
  \end{eqnarray*}
  The elements of $\ul{F}$ are called the \emph{clause-nodes}, while the elements of $\ul{V}$ are called the \emph{variable-nodes}.
\item The vertex set of $B(F)$ is defined as $V(B(F)) := \ul{F} \addcup \ul{V}$ with canonical bipartition $(\ul{F}, \ul{V})$.
\item The edge set $E(B(F))$ is the set of all (undirected) edges $\set{(C,i), (v,j)}$ such that $v \in \var(C)$.
\end{itemize}
In other words, the graph $B(F)$ has as vertices $F(C)$-many copies of clauses $C \in F$ together with $(\abs{D_v} - 1)$-many copies of variables $v \in \var(F)$, while edges connect copies of variables $v$ with copies of clauses $C$ such that $v \in \var(C)$. We remark that variables $v \in \var(F)$ with trivial domain (i.e., $\abs{D_v} = 1$) do not occur in $B(F)$, and that for boolean clause-sets $F$ the graph $B(F)$ is the ordinary (bipartite) clause-variable graph. Consider for example the clause-set $F = \set{C_1,C_2,C_3}$ with $C_1 = \set{(v_1,a), (v_2,a)}$, $C_2 = \set{(v_2,b), (v_3,b)}$, $C_3 = \set{(v_3,c), (v_1,c)}$, where $\abs{D_{v_i}} = 3$. Now $B(F)$ is (suppressing the indices for the clause-copies, since here we just have a clause-set):
\begin{displaymath}
  B(F) = \hspace{-1em} \xymatrix {
    & C_1 \aru[dl] \aru[d] \aru[dr] \aru[drr] && C_2 \aru[dl] \aru[d] \aru[dr] \aru[drr] && C_3 \aru[dl] \aru[d] \aru[dlllll] \aru[dllll]\\
    (v_1,1) & (v_1,2) & (v_2,1) & (v_2,2) & (v_3,1) & (v_3,2)
  }
\end{displaymath}
For a set $V$ of variables we obtain $B(V * F)$ from $B(F)$ by deleting the variable-nodes $(v,j)$ of $B(F)$ with $v \in V$, while $B(F[V])$ is the induced subgraph of $B(F)$ given by the variable-nodes $(v,j)$ of $B(F)$ with $v \in V$ together with their neighbours (those clause-nodes $(C,i)$ with $\var(C) \cap V \not= \es$). Using the \textbf{weighted number of variables} 
\begin{displaymath}
  \bmm{\rd(F)} := \sum_{v \in \var(F)} (\abs{D_v} - 1) \in \NNZ,
\end{displaymath}
\begin{itemize}
\item the number of vertices of $B(F)$ is $\abs{V(B(F))} = c(F) + \rd(F)$,
\item while the number of edges is $\abs{E(B(F))} = \sum_{v \in \var(F)} \#_v(F) \cdot (\abs{D_v} - 1)$.
\end{itemize}
We have $\rd(F) = (\sum_{v \in \var(F)} \abs{D_v}) - n(F)$. If $F$ is boolean, then $\rd(F) = n(F)$.

\begin{lem}\label{def:directcharacmatchsat}
  A multi-clause-set $F$ is matching satisfiable iff there exists a matching in $B(F)$ covering all vertices of $\ul{F}$.
\end{lem}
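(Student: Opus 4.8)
The plan is to prove both implications by a direct, constructive translation between a matching-satisfiability decomposition $F = F_1 + \dots + F_m$ and a matching in $B(F)$ that saturates the clause-side $\ul F$; no appeal to Hall's or K\"onig's theorem is needed, only bookkeeping with multi-clause-sets.

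For the direction ``$\Rightarrow$'' I would start from a decomposition $F = F_1 + \dots + F_m$ with witnesses $v_i \in \var(F_i)$ satisfying conditions (i), (ii) and (iii)$'$. Since $+$ is multiset addition, for each clause $C$ I can fix a partition of its occurrence-index set $\tb 1{F(C)}$ into $m$ consecutive blocks of sizes $F_1(C), \dots, F_m(C)$; collecting the $k$-th blocks over all $C$ yields a set $N_k \sse \ul F$ with $\abs{N_k} = c(F_k)$, and the $N_k$ partition $\ul F$. By (iii)$'$ we have $c(F_k) \le \abs{D_{v_k}} - 1$, so $N_k$ is no larger than the set $W_k := \setb{(v_k,j) : j \in \tb 1{\abs{D_{v_k}}-1}}$ of variable-nodes on $v_k$, and by (i) every element of $N_k$ is adjacent in $B(F)$ to every element of $W_k$. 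Hence I can pick an injection $N_k \hookrightarrow W_k$ and read it off as a set of edges. The $W_k$ are pairwise disjoint (by (ii), the $v_k$ are distinct) and the $N_k$ are pairwise disjoint (by construction), so the union of these edge sets is a matching $M$ in $B(F)$, and it covers $\ul F = \bigcup_k N_k$.

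For ``$\Leftarrow$'' I would start from a matching $M$ in $B(F)$ covering $\ul F$. Each clause-node $(C,i)$ is matched by $M$ to a unique variable-node, which must have the form $(v,j)$ with $v \in \var(C)$ since those are its only neighbours. Put $N_v := \set{(C,i) \in \ul F : M \text{ matches } (C,i) \text{ to a copy of } v}$ and let $F_v$ be the multi-clause-set with $F_v(C) := \abs{\set{i : (C,i) \in N_v}}$. The $N_v$ partition $\ul F$, so $F = \sum_{v \in I} F_v$ with $I := \set{v \in \var(F) : F_v \not= \top}$. For $v \in I$, every clause of $F_v$ contains $v$, which is (i); $c(F_v) = \abs{N_v}$ is the number of copies of $v$ matched by $M$, whence $c(F_v) \le \abs{D_v} - 1 < \abs{D_v}$, which is (iii)$'$; and distinct elements of $I$ are distinct variables, which is (ii). So $F$ is matching satisfiable.

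I do not expect a genuine obstacle; the proof is essentially bookkeeping, and the only point needing care is keeping the multi-clause-set arithmetic honest --- that the node-sets $N_k$ (respectively the $N_v$) genuinely partition the occurrence-multiset $\ul F$, and that the strict inequality in (iii)$'$ corresponds exactly to ``at most $\abs{D_v}-1$ copies''. I would also record the shared boundary behaviour: if $\bot \in F$ then a copy of $\bot$ is an isolated vertex of $B(F)$ and cannot be covered, matching the fact that no $F_i$ may contain $\bot$ (its witness would have to lie in $\var(\bot) = \es$); and $F = \top$ corresponds to $m = 0$ and the empty matching.
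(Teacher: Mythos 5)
Your proof is correct and follows essentially the same route as the paper: in one direction you inject the clause-occurrences of each $F_i$ into the $\abs{D_{v_i}}-1$ variable-nodes of $v_i$ using condition (iii)$'$, and in the other you read the blocks $F_v$ off the matching by grouping clause-nodes according to the variable of their matched partner. The only difference is that you spell out the bookkeeping (the explicit partition of $\ul F$ and the verification of (i)--(iii)$'$) and record the boundary cases, which the paper leaves implicit.
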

\begin{proof}
 If $F$ is matching satisfiable, then (using the notations in the definition of matching satisfiability above) the clause-nodes corresponding to the clause-oc\-cur\-ren\-ces in $F_i$ can all be covered by the variable-nodes belonging to $v_i$ (since $c(F_i)$ does not exceed the number of copies of $v_i$), and altogether we obtain a matching covering all clause-nodes. If (for the other direction) we have a matching in $B(F)$ covering all vertices of $\ul{F}$, then for each variable $v$ involved in the matching consider a sub-multi-clause-set $F_v$ of $F$ corresponding to the clause-nodes connected via the matching to the variable-nodes associated with $v$. These $F_v$ together constitute the desired decomposition of $F$. 
\end{proof}

\subsection{The deficiency of generalised clause-sets}
\label{sec:defgencls}

Let the \textbf{deficiency} of a (generalised) multi-clause-set $F$ be defined as 
\begin{displaymath}
  \bmm{\delta(F)} := c(F) - \rd(F) \in \ZZ,
\end{displaymath}
while the \textbf{maximal deficiency} is defined as 
\begin{displaymath}
  \bmm{\delta^*(F)} := \max_{F' \le F} \delta(F') \in \NNZ.
\end{displaymath}
We have $\delta^*(F) \ge 0$ due to $\delta(\top) = 0$, and by definition we have $\delta^*(F) \le c(F)$. We remark that for a domain uniformisation $F'$ of $F$ we have $\delta(F') = \delta(F)$ as well as $\delta^*(F') = \delta^*(F)$; in principle we could consider only multi-clause-sets with uniform domains here, but the advantages in doing so seem to be negligible.

Considering $F' \le F$ as a subset of $\ul{F}$, the deficiency $\delta(F')$ of $F' \le F$ is just the deficiency of this subset in $B(F)$ (as we have defined it for arbitrary graphs). By matching theory the maximal number of nodes of $\ul{F}$ coverable by some matching thus is $c(F) - \delta^*(F)$. Summarising we have (generalising Lemma 7.2 in \cite{Ku00f}):
\begin{lem}\label{lem:characmatchsat}
  Consider a generalised multi-clause-set $F \in \Mcls$.
  \begin{enumerate}
  \item The maximal size of a matching satisfiable sub-multi-clause-set $F' \le F$ is $c(F') = c(F) - \delta^*(F)$.
  \item $F$ is matching satisfiable if and only if $\delta^*(F) = 0$.
  \end{enumerate}
\end{lem}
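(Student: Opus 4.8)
The plan is to reduce the statement to Lemma~\ref{def:directcharacmatchsat} together with the fact recorded just before the statement, namely that the maximum size of a matching in $B(F)$ equals $c(F)-\delta^*(F)$. That fact is the bipartite deficiency formula $\nu(G)=\abs{A}-\delta^*(A)$ from Subsection~\ref{sec:Graphsmatching} applied to $G=B(F)$ with colour class $A=\ul{F}$; here one uses that a sub-multi-clause-set $F'\le F$, viewed as a set of clause-nodes of $B(F)$, has graph-theoretic deficiency exactly $\delta(F')=c(F')-\rd(F')$, since its neighbourhood in $B(F)$ is the set of variable-nodes $(v,j)$ with $v\in\var(F')$. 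Thus $\delta^*(\ul{F})$ (taken over subsets of clause-nodes) coincides with $\delta^*(F)$, and so $\nu(B(F))=c(F)-\delta^*(F)$.

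The key intermediate step is a size-preserving bridge between matchings of $B(F)$ and matching-satisfiable sub-multi-clause-sets: for $F'\le F$, the clause-set $F'$ is matching satisfiable if and only if $B(F)$ contains a matching covering all clause-nodes belonging to $F'$. To see this, note that $B(F')$ is precisely the subgraph of $B(F)$ spanned by the clause-nodes of $F'$ and the variable-nodes $(v,j)$ with $v\in\var(F')$ (its edges and vertices are among those of $B(F)$). Hence if $F'$ is matching satisfiable, Lemma~\ref{def:directcharacmatchsat} gives a matching of $B(F')$ covering $\ul{F'}$, which is a matching of $B(F)$ of size $c(F')$ covering those clause-nodes; conversely, given a matching $M$ of $B(F)$ covering the clause-nodes of $F'$, the edges of $M$ incident with them necessarily run to variable-nodes of $\var(F')$, so they form a matching of $B(F')$ covering $\ul{F'}$, and Lemma~\ref{def:directcharacmatchsat} yields matching satisfiability of $F'$. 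With this bridge, Part 1 is immediate: a maximum matching $M$ of $B(F)$ covers a set of clause-nodes determining some $F'\le F$ with $c(F')=\abs{M}=\nu(B(F))=c(F)-\delta^*(F)$, and $F'$ is matching satisfiable; conversely any matching-satisfiable $F'\le F$ produces a matching of $B(F)$ of size $c(F')$, so $c(F')\le\nu(B(F))=c(F)-\delta^*(F)$. Hence the largest matching-satisfiable $F'\le F$ has exactly $c(F')=c(F)-\delta^*(F)$ clauses.

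Part 2 then falls out by taking $F'=F$ in the bridge: $F$ is matching satisfiable iff $B(F)$ has a matching covering all of $\ul{F}$, iff $\nu(B(F))=\abs{\ul{F}}=c(F)$, iff $c(F)-\delta^*(F)=c(F)$, iff $\delta^*(F)=0$ (using $\delta^*(F)\ge 0$ from $\delta(\top)=0$); alternatively it is the special case of Part 1 where the maximal matching-satisfiable subset is all of $F$. I do not expect a genuine obstacle here: the only point requiring care is the bookkeeping identification of a set of clause-nodes of $B(F)$ with a sub-multi-clause-set, under which graph-deficiency becomes clause-set deficiency and ``covered by a matching'' becomes ``matching satisfiable'', so that the bipartite deficiency formula and Lemma~\ref{def:directcharacmatchsat} can be invoked directly; everything else is assembly.
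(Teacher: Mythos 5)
Your proposal is correct and follows essentially the same route as the paper: identifying $F'\le F$ with a subset of clause-nodes of $B(F)$ (under which $\delta(F')$ becomes the graph-theoretic deficiency), invoking the bipartite deficiency formula for $\nu(B(F))$, and using Lemma~\ref{def:directcharacmatchsat} to pass between matchings covering clause-nodes and matching satisfiability. You merely spell out the bookkeeping of the bridge step (that $B(F')$ is the relevant induced subgraph of $B(F)$, and that edges at clause-nodes of $F'$ stay inside it) which the paper leaves implicit.
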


As an application we can generalise the well-known fact, apparently first mentioned in the literature in \cite{Tovey1984NPcomplete}, that if a boolean clause-set $F$ has minimal clause-length $k$ and maximal variable occurrence $k$ for some $k \ge 1$, then $F$ must be satisfiable (see \cite{HoorySzeider2006Tovey} for recent further developments):
\begin{corol}\label{cor:Tovey}
  Consider a generalised clause-set $F \in \Cls$ containing a non-empty clause. Then
  \begin{displaymath}
    \frac{\max_{v \in \var(F)} \#_v(F)}{\min_{C \in F} \abs{C}} \le \min_{v \in \var(F)} \abs{D_v} - 1 \Lora F \in \Msat.
  \end{displaymath}
\end{corol}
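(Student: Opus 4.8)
The plan is to reduce the assertion to the deficiency criterion of Lemma~\ref{lem:characmatchsat}: $F$ is matching satisfiable iff $\delta^*(F) = 0$, and since $\delta^*(F) \ge 0$ always holds (as $\delta(\top) = 0$), it suffices to show $\delta(F') \le 0$, i.e.\ $c(F') \le \rd(F')$, for every sub-multi-clause-set $F' \le F$. Thus the whole statement becomes a single counting inequality, to be obtained from the hypothesis by double-counting literal occurrences, in direct analogy with the classical boolean argument for Tovey's condition.

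First I would extract two consequences of the hypothesis. Since $F$ contains a non-empty clause, $\max_{v \in \var(F)} \#_v(F) \ge 1$, so the left-hand side of the assumed inequality is positive; hence $\min_{C \in F} \abs{C} \ge 1$ (in particular $F$ has no empty clause, so the quotient is well defined) and $\min_{v \in \var(F)} \abs{D_v} - 1 \ge 1$, i.e.\ every variable occurring in $F$ has a non-trivial domain. Abbreviate $k := \min_{C \in F} \abs{C}$, $d := \min_{v \in \var(F)} \abs{D_v}$, $m := \max_{v \in \var(F)} \#_v(F)$, so the hypothesis reads $m \le k(d-1)$, with $k \ge 1$, $d \ge 2$.

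Now fix $F' \le F$. If $F' = \top$ then $\delta(F') = 0$ and we are done, so assume $\var(F') \not= \es$. Using the monotonicity built into $F' \le F$ (namely $\tmc(F') \sse \tmc(F)$, $\var(F') \sse \var(F)$, and $\#_v(F') \le \#_v(F)$), I would bound the number $\ell(F')$ of literal occurrences in $F'$ in two ways. Counting by clauses: every clause of $F'$ is a clause of $F$, hence of length at least $k$, so $\ell(F') = \sum_{C \in F'} F'(C) \cdot \abs{C} \ge k \cdot c(F')$. Counting by variables: $\ell(F') = \sum_{v \in \var(F')} \#_v(F') \le \abs{\var(F')} \cdot m \le \abs{\var(F')} \cdot k(d-1)$, the last step by the hypothesis. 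Combining these and dividing by $k > 0$ gives $c(F') \le \abs{\var(F')} \cdot (d-1) \le \sum_{v \in \var(F')} (\abs{D_v} - 1) = \rd(F')$, that is $\delta(F') \le 0$. As $F' \le F$ was arbitrary, $\delta^*(F) = 0$, and Lemma~\ref{lem:characmatchsat} yields $F \in \Msat$.

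I do not expect a genuine obstacle here; the argument is the word-for-word generalisation of the boolean double-counting proof, with $n(F')$ replaced by $\rd(F')$ on the right and the boolean ``$k$'' replaced by $\min_{v} \abs{D_v} - 1$. The only points that need a moment's care are: (i) checking that the hypothesis really does exclude empty clauses and trivial variables, so that all the extremal quantities are well behaved and the quotient makes sense; and (ii) making sure that each minimum or maximum implicitly taken over $F'$ in the computation is dominated by the corresponding quantity over $F$, which is immediate from $F' \le F$.
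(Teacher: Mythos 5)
Your proof is correct and is essentially the paper's argument: both establish $\delta(F') \le 0$ for all $F' \le F$ by double-counting $\ell(F')$, bounding it below by $\min_C \abs{C} \cdot c(F')$ and above by $n(F') \cdot \max_v \#_v(F)$, and then invoking the hypothesis together with $\rd(F') \ge (d-1) n(F')$. The paper just expresses the combination as $c(F')/n(F') \le a/b \le d-1$ rather than dividing by $k$ directly, and it is less explicit about the boundary cases you sensibly flag.
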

\begin{proof} Assume the condition holds and consider $F' \sse F$. We must show $\delta(F') \le 0$. Let $d := \min_{v \in \var(F)} \abs{D_v}$. Then $\delta(F') \le c(F') - (d - 1) n(F')$, and a sufficient condition for $\delta(F') \le 0$ is $\frac{c(F')}{n(F')} \le d - 1$. Let $a := \max_{v \in \var(F)} \#_v(F)$ and $b := \min_{C \in F} \abs{C}$. We know $c(F') \cdot b \le \ell(F') \le n(F') \cdot a$, and thus $\frac{c(F')}{n(F')} \le \frac ab$. 
\end{proof}

Since matchings of maximal size can be computed in polynomial time (see Chapter 16 in \cite{Schrijver2003CombOptA}), we get the following poly-time results:
\begin{lem}\label{lem:deltapoly}
  For every generalised clause-set $F \in \Mcls$, in polynomial time in $\ell(F)$ we can compute $F' \le F$ with $F' \in \Msat$ such that $c(F')$ is maximal. Since $F' = F$ iff $F$ is matching satisfiable, it follows that whether $F$ is matching satisfiable is decidable in polynomial time. And due to $c(F') = c(F) - \delta^*(F)$ furthermore the maximal deficiency $\delta^*(F)$ is also computable in polynomial time.
\end{lem}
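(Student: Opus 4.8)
The plan is to derive everything from a single maximum-matching computation in the bipartite graph $B(F)$ introduced above, reading off $F'$, the decision of matching satisfiability, and $\delta^*(F)$ from that one matching.

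First I would record that $B(F)$ has size polynomial in $\ell(F)$. Using the convention that a pure variable of $F$ has exactly one unused value, one has $\abs{D_v}-1 \le \abs{\val_v(F)} \le \#_v(F)$ for every $v \in \var(F)$ (in the pure case the first inequality is an equality, in the non-pure case it is strict); hence $\rd(F) = \sum_{v\in\var(F)}(\abs{D_v}-1) \le \ell(F)$, so $\abs{V(B(F))} = c(F)+\rd(F)$, and $\abs{E(B(F))} = \sum_{v}\#_v(F)(\abs{D_v}-1) \le \sum_v \#_v(F)^2 \le \ell(F)^2$. The $F(\bot)$ copies of the empty clause are isolated vertices of $B(F)$; they never lie on a matching edge and the $F'$ produced below simply omits them, which is the one point where a fully formal statement should read ``polynomial in the size of $F$'' rather than literally in $\ell(F)$. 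Then I would apply Theorem~16.4 in \cite{Schrijver2003CombOptA} to compute a maximum matching $M$ of $B(F)$ in time $O(\sqrt{\abs{V(B(F))}}\cdot\abs{E(B(F))})$, which is polynomial.

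Next I would extract $F'$ from $M$ exactly as in the proof of Lemma~\ref{def:directcharacmatchsat}. Let $F' \le F$ be defined by letting $F'(C)$ be the number of copies $(C,i) \in \ul F$ covered by $M$; since each edge of $M$ meets exactly one clause-node, $c(F') = \abs{M} = \nu(B(F))$. For each variable $v$ occurring on some matching edge let $G_v \le F'$ be given by $G_v(C) := \abs{\set{i : (C,i) \text{ is matched by } M \text{ to a variable-node of } v}}$. Then $F' = \sum_v G_v$ over pairwise distinct variables, every clause of $G_v$ contains $v$, and $c(G_v)$ is at most the number $\abs{D_v}-1$ of $v$-copies, hence $c(G_v) < \abs{D_v}$; so conditions (i), (ii) and (iii)$'$ of the definition of matching satisfiability hold and $F' \in \Msat$. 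For maximality, recall from the discussion preceding Lemma~\ref{lem:characmatchsat} that a sub-multi-clause-set $F'' \le F$, viewed as a subset of $\ul F$, has deficiency $\delta(F'')$ in $B(F)$, whence $\nu(B(F)) = \abs{\ul F} - \delta^*(\ul F) = c(F) - \delta^*(F)$; so by Lemma~\ref{lem:characmatchsat}(1) the value $c(F') = \nu(B(F)) = c(F)-\delta^*(F)$ is the maximum of $c(F'')$ over matching-satisfiable $F'' \le F$.

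Finally the two stated consequences follow at once: $F' = F$ iff $c(F') = c(F)$ iff $\nu(B(F)) = c(F)$, which by Lemma~\ref{def:directcharacmatchsat} (or Lemma~\ref{lem:characmatchsat}(2)) holds iff $F$ is matching satisfiable; and $\delta^*(F) = c(F) - c(F') = c(F) - \nu(B(F))$. Both are read off in polynomial time once $M$ is available. I do not anticipate a genuine obstacle here — the substantive content is already packaged in $B(F)$, Lemma~\ref{def:directcharacmatchsat} and Lemma~\ref{lem:characmatchsat} — and the only thing that needs care is the complexity bookkeeping, namely that $\rd(F)$ and $\abs{E(B(F))}$ are polynomially bounded in $\ell(F)$ (which uses the pure-variable convention) together with the harmless handling of occurrences of $\bot$.
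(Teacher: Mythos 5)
Your argument is correct and follows the route the paper intends but leaves implicit: the lemma is stated right after the remark that maximum matchings are poly-time computable, and its substance is exactly that of Lemmas~\ref{def:directcharacmatchsat} and \ref{lem:characmatchsat}, which you invoke. You simply spell out the bookkeeping (the bound on $\abs{V(B(F))}$, $\abs{E(B(F))}$ via the pure-variable convention, and extraction of $F'$ from the clause-nodes covered by $M$), and your side-remark about copies of $\bot$ inflating $c(F)$ beyond $\ell(F)$ is a fair, if minor, caveat about the literal wording of the complexity claim.
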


\section{Satisfying assignments versus matching-sa\-tis\-fy\-ing assignments}
\label{sec:matchsatass}

The theme of this section is the relation between general satisfiability and matching satisfiability. From the main result, Theorem \ref{thm:Reparatur}, it follows that if a clause-set $F$ is satisfiable, then it has a matching satisfiable sub-clause-set $F'$ with at most $\delta^*(F)$ less clauses than $F$, and moreover there is a matching-satisfying assignment $\vp_0$ for $F'$ which can be extended to a satisfying assignment $\vp$ for $F$ using at most $\delta^*(F)$ additional variables. Furthermore, every satisfying assignment $\vp$ for $F$ can be modified (in polynomial time) to become such an extension by means of flips of (single) variable assignments such that throughout the whole process we always have a satisfying assignment for $F$.\footnote{This additional property yields also new information for the boolean case; it is implicitly contained in the proofs from \cite{FKS00}, which are not only generalised here, but also simplified  in such a way that the construction becomes more lucid.} As a direct application we obtain in Corollary \ref{cor:poly1} that the hierarchy of clause-sets given by the parameter $\delta^*$ allows polynomial-time SAT decision for each level. Later, in Theorem \ref{thm:MaximalerDefektFPT} (Part II), we will see that actually this hierarchy is fixed-parameter tractable, not using Theorem \ref{thm:Reparatur}, but by combining the structural results from Subsection \ref{sec:matchautsub} with the fixed-parameter tractability of the boolean case. The main use of Theorem \ref{thm:Reparatur} will be given in Theorem \ref{thm:leanKernelpoly}, namely how to compute a non-trivial autarky (if existent) in polynomial time w.r.t.\ the maximal deficiency, and this result was not known even for the boolean case.\footnote{That the lean kernel is computable in polynomial time w.r.t.\ the maximal deficiency has already been shown in \cite{Ku99dKo}, but interestingly it was not known how to find a (non-trivial) autarky (either just some, or a quasi-maximal one, yielding the lean kernel --- both tasks are basically the same).}

\subsection{The notion of matching-satisfying assignments}
\label{sec:notionmatchsat}

Consider a (generalised) multi-clause-set $F \in \Mcls$ and a partial assignment $\vp \in \Pass$. The partial graph \bmm{B_{\vp}(F)} of $B(F)$ is obtained from $B(F)$ by keeping (exactly) all edges $\set{(C,i), (v,j)}$ where $\vp$ satisfies the literal in $C$ with underlying variable $v$ (while keeping all vertices). In other words, all edges $\set{(C,i), (v,j)}$ are eliminated such that for the literal $(v, \ve) \in C$ we either have $v \notin \var(\vp)$ or $\vp((v,\ve)) = 0$ (i.e., $\vp(v) = \ve$). So the non-isolated clause-nodes in $B_{\vp}(F)$ are (exactly) the clauses satisfied by $\vp$, while the isolated variable-nodes in $B_{\vp}(F)$ are (exactly) the variables in $F$ not used by $\vp$ to satisfy any clause. Now $\vp$ is called a \textbf{matching-satisfying assignment} for $F$ if $B_{\vp}(F)$ contains a matching covering all clause-nodes (thus matching-satisfying assignments are satisfying assignments). By Lemma \ref{def:directcharacmatchsat} we get the following basic polynomial-time result, refining Lemma \ref{lem:deltapoly} by now also considering the (matching-)satisfying assignments.

\begin{lem}\label{lem:matchsatviamatchsa}
  A generalised multi-clause-set $F \in \Mcls$ is matching satisfiable if and only if there exists a matching-satisfying assignment for $F$. If $F$ is matching satisfiable, then by computing a maximum matching $M$ in $B(F)$ we can efficiently compute a matching-satisfying assignment $\vp$ for $F$ as follows: The domain of $\vp$ consists of the variables $v$ used in variable-nodes $(v,i)$ covered by $M$, while $\vp(v) := \ve$ for some (chosen) value $\ve \in D_v$ not used in the occurrences of variable $v$ in the clauses corresponding via $M$ to the variable-nodes using $v$ (note that by definition there are most $\abs{D_v}-1$ such variable-nodes).
\end{lem}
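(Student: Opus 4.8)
The plan is to reduce both directions of the equivalence to the combinatorial characterisation of Lemma~\ref{def:directcharacmatchsat}: $F$ is matching satisfiable iff $B(F)$ admits a matching covering all clause-nodes $\ul{F}$. For ``a matching-satisfying assignment exists $\Ra$ $F$ is matching satisfiable'', note that $B_{\vp}(F)$ is by construction a partial subgraph of $B(F)$ (same vertex set, a subset of the edges), so a matching in $B_{\vp}(F)$ covering $\ul{F}$ is in particular such a matching in $B(F)$, and Lemma~\ref{def:directcharacmatchsat} applies. The degenerate cases are harmless: for $F = \top$ the empty assignment works, and if $\bot \in F$ then the corresponding isolated clause-node is covered by no matching in $B(F)$ nor in any $B_{\vp}(F)$, so both sides of the equivalence fail.

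For the converse together with the explicit construction, assume $F$ is matching satisfiable. By Lemma~\ref{def:directcharacmatchsat} some matching covers $\ul{F}$, so the maximum matching size equals $c(F) = \abs{\ul{F}}$, and hence any maximum matching $M$ of $B(F)$ covers $\ul{F}$. Define $\vp$ as in the statement. First I would check that $\vp$ is well defined: for a variable $v$ occurring in some $M$-covered variable-node, $M$ matches the variable-nodes $(v,j)$, $j \in \tb{1}{\abs{D_v}-1}$, to certain clause-nodes, at most $\abs{D_v}-1$ of them; each such clause $C$ contains a unique literal $(v,\ve_C)$ on $v$, so the values $\ve_C$ forbid at most $\abs{D_v}-1$ elements of $D_v$, leaving a value $\ve$ available to serve as $\vp(v)$.

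The key step is then to verify that $M$ survives as a matching of $B_{\vp}(F)$. For any edge $\set{(C,i),(v,j)} \in M$: since $(v,j)$ is $M$-covered, $v \in \var(\vp)$ and $\vp(v) = \ve$ was chosen to differ from $\ve_C$, where $(v,\ve_C)$ is the (unique) literal of $C$ on $v$; thus $\vp$ satisfies that literal, and by the definition of $B_{\vp}(F)$ the edge $\set{(C,i),(v,j)}$ is retained. Hence $M \sse E(B_{\vp}(F))$, so $M$ is a matching of $B_{\vp}(F)$ covering all clause-nodes, i.e.\ $\vp$ is a matching-satisfying assignment. Finally, a maximum matching in the bipartite graph $B(F)$ is computable in polynomial time in $\ell(F)$ by the bounds recalled in Section~\ref{sec:Graphsmatching}, and choosing the values $\ve$ is clearly polynomial, which gives the efficiency claim.

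I expect the only genuine content to be this pigeonhole step producing a free value per $M$-matched variable, together with the check that this choice keeps all of $M$ inside $B_{\vp}(F)$ --- i.e.\ converting the purely combinatorial matching into an actual partial assignment; the remaining points are bookkeeping.
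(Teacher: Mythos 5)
Your proof is correct and follows the same route the paper implicitly intends: the paper supplies no separate argument for this lemma but simply prefixes it with ``By Lemma~\ref{def:directcharacmatchsat} we get the following basic polynomial-time result,'' and your two directions are exactly the reduction to that characterisation (a matching-satisfying assignment yields a matching of $B(F)$ covering $\ul{F}$ because $B_{\vp}(F)$ is a partial subgraph; conversely a covering matching $M$ yields, via the pigeonhole count of at most $\abs{D_v}-1$ forbidden values per $M$-covered variable, a partial assignment $\vp$ under which $M$ survives into $B_{\vp}(F)$). The degenerate-case remarks and the polynomial-time bound via bipartite maximum matching are both sound.
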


The following two lemmas give simple basic properties regarding these notions. First we consider how many variables need to be used by (matching-) satisfying assignments.

\begin{lem}\label{lem:matchc}
  Consider a generalised multi-clause-set $F \in \Mcls$ and a partial assignment $\vp \in \Pass$.
  \begin{enumerate}\NAA
  \item\label{item:matchc1} If $\vp$ is satisfying for $F$, then there exists $\vp' \sse \vp$ with $n(\vp') \le c(F)$ such that also $\vp'$ is satisfying for $F$. (So for satisfying any clause-set we never need to use more variables than there are clauses.)
  \item\label{item:matchc2} If $\vp$ is matching-satisfying for $F$, then there exists $\vp' \sse \vp$ with $n(\vp') = c(F)$ such that also $\vp'$ is matching-satisfying for $F$. (So for a matching satisfiable multi-clause-set there is a matching-satisfying assignment using exactly as many variables as there are clause-occurrences.)
  \item\label{item:matchc3} If $\vp$ is satisfying for $F$, and there is no $\vp' \sse \vp$ with $n(\vp') < c(F)$ such that $\vp'$ is satisfying for $F$, then $\vp$ is matching-satisfying for $F$.
  \item\label{item:matchc4} Consider a minimal satisfying assignment $\vp$ for $F$ w.r.t.\ the canonical partial ordering of partial assignments (that is, there is no $\vp' \subset \vp$ which still satisfies $F$). Then $\vp$ is matching-satisfying for $F$ if and only if $n(\vp) = c(F)$.
   \end{enumerate}
\end{lem}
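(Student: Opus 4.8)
The plan is to prove parts \ref{item:matchc1}--\ref{item:matchc4} in order, each building on the previous ones, using the correspondence (Lemma~\ref{def:directcharacmatchsat}, Lemma~\ref{lem:matchsatviamatchsa}) between matching-satisfying assignments and matchings in $B_{\vp}(F)$ covering all clause-nodes. For part \ref{item:matchc1}, given a satisfying $\vp$, I would for each clause $C \in F$ pick one literal $x_C \in C$ satisfied by $\vp$, let $V := \set{\var(x_C) : C \in F}$, and set $\vp' := \vp \mb V$. Then $\abs{V} \le c(F)$ (at most one chosen variable per clause), and $\vp'$ still satisfies every clause since the chosen satisfied literals survive. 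For part \ref{item:matchc2}, the hypothesis that $\vp$ is matching-satisfying gives a matching $M$ in $B_{\vp}(F)$ covering all $c(F)$ clause-nodes; since $M$ is a matching, it uses exactly $c(F)$ distinct variable-nodes, and reading off (as in Lemma~\ref{lem:matchsatviamatchsa}) the assignment $\vp'$ supported on exactly those variables still has $M \sse B_{\vp'}(F)$, hence $\vp'$ is matching-satisfying with $n(\vp') = c(F)$ (note $n(\vp') = c(F)$ and not just $\le$, because the matching uses $c(F)$ variable-nodes, which may repeat underlying variables --- here one must be slightly careful: distinct variable-nodes $(v,j),(v,j')$ share the same $v$, so a matching covering all clause-nodes uses at most $c(F)$ and possibly fewer underlying variables; so I would instead pad $\vp'$ with arbitrary values on the missing variables, or, more cleanly, re-derive from Lemma~\ref{lem:deltapoly}/\ref{lem:characmatchsat} that a matching-satisfiable $F$ has a decomposition into exactly $c(F)$ variable-blocks after splitting clauses by the matching, giving $n(\vp') = c(F)$ on the nose).

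For part \ref{item:matchc3}: suppose $\vp$ is satisfying and no strict sub-assignment on fewer than $c(F)$ variables satisfies $F$; I want to show $\vp$ is matching-satisfying. By part \ref{item:matchc1} there is $\vp' \sse \vp$ satisfying $F$ with $n(\vp') \le c(F)$, so by the minimality hypothesis $n(\vp') = c(F)$ (and $n(\vp) \ge c(F)$ too, but the relevant object is $\vp'$; if $\vp$ itself had more variables we could shrink, so WLOG $n(\vp) = c(F)$). Now the choice function $C \mapsto x_C$ in the proof of part \ref{item:matchc1} must be injective on underlying variables --- otherwise two clauses would share a chosen variable and $V$ would have fewer than $c(F)$ elements, contradicting minimality. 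Assigning each clause-node $C$ to a variable-node of $\var(x_C)$ (different variables for different clauses) yields a matching in $B_{\vp}(F)$ covering all clause-nodes, so $\vp$ is matching-satisfying. For part \ref{item:matchc4}: if $\vp$ is a $\subset$-minimal satisfying assignment, then no $\vp' \subsetneq \vp$ satisfies $F$, which is stronger than the hypothesis of part \ref{item:matchc3} except that part \ref{item:matchc3} wants $n(\vp') < c(F)$ specifically. The ``only if'' direction of \ref{item:matchc4} is immediate from part \ref{item:matchc2} (matching-satisfying $\Rightarrow$ exists sub-assignment on exactly $c(F)$ variables; by $\subset$-minimality that sub-assignment is $\vp$ itself, so $n(\vp) = c(F)$). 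For ``if'': given $n(\vp) = c(F)$ and $\subset$-minimality, I claim no $\vp' \sse \vp$ with $n(\vp') < c(F)$ satisfies $F$ --- indeed any such $\vp'$ would be a strict sub-assignment of $\vp$ satisfying $F$, contradicting minimality --- and then part \ref{item:matchc3} applies directly.

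The main obstacle, as flagged above, is the bookkeeping in part \ref{item:matchc2} around whether ``$=c(F)$'' or just ``$\le c(F)$'' variables are obtained, because $B(F)$ has $\abs{D_v}-1$ copies of each variable $v$, so a clause-covering matching genuinely may touch fewer than $c(F)$ distinct underlying variables when some variable is matched to several clause-nodes. The statement demands $n(\vp') = c(F)$, so I would resolve this either by padding the partial assignment with dummy values on unused variables of $F$ until it reaches domain size $c(F)$ (legitimate since extending a matching-satisfying assignment by values that do not falsify anything keeps it matching-satisfying --- one should double-check that adding $\pao{w}{\ve}$ for a fresh $w \in \var(F)$ cannot falsify a clause, which is true because a clause is only falsified when all its literals are falsified, and we are adding at most one literal-falsification per affected clause while those clauses were already satisfied by $M$), or by phrasing it via the block decomposition of Lemma~\ref{lem:characmatchsat}. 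Everything else is a routine manipulation of the clause/variable-node correspondence.
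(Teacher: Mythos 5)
Your part 1 matches the paper's proof (pick one satisfied literal per clause occurrence and restrict $\vp$ to the chosen underlying variables). Your part 3 takes a different but valid route: where the paper applies Hall's criterion directly (from a Hall violator $F' \le F$ in $B_\vp(F)$ it constructs a satisfying sub-assignment of fewer than $c(F)$ variables, contradicting the hypothesis), you argue that minimality forces the literal-choice map $C \mapsto \var(x_C)$ to be injective on underlying variables, which immediately yields the required matching. Both work; the paper's version copes with multi-clause-sets a bit more transparently, but yours also goes through once you note that $\abs{\set{\var(x_C) : C \in F}} = c(F)$ forces every multiplicity in $F$ to be one.

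The issue you flagged in part 2 is genuine, but your proposed repair cannot close it, and in fact the statement itself (and hence the ``only if'' direction of part 4) is false for generalised multi-clause-sets --- a gap that the paper's own one-line proof of part 2 also overlooks. The conclusion demands $\vp' \sse \vp$ with $n(\vp') = c(F)$, so the supply of variables is bounded by $n(\vp)$, and nothing forces $n(\vp) \ge c(F)$. Concretely, take a variable $v$ with $D_v = \set{0,1,2}$ and $F = \set{C_1, C_2}$ with $C_1 = \set{v \not= 0}$, $C_2 = \set{v \not= 1}$, so $c(F) = 2$, $\rd(F) = 2$, $\delta^*(F) = 0$, and let $\vp := \pao{v}{2}$. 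Then $B_\vp(F)$ is the complete bipartite graph on $\set{(C_1,1),(C_2,1)}$ and $\set{(v,1),(v,2)}$, which has a matching covering both clause-nodes, so $\vp$ is matching-satisfying --- yet $n(\vp) = 1 < 2 = c(F)$, and no $\vp' \sse \vp$ can have two variables. Your ``padding'' produces a $\vp'$ with $\var(\vp') \not\sse \var(\vp)$, violating the required containment. The only salvageable conclusion in part 2 is $n(\vp') \le c(F)$ (equality does hold in the boolean case, where each variable has a single node in $B(F)$, so a clause-covering matching uses $c(F)$ distinct variables), and correspondingly the ``only if'' direction of part 4 fails (the example above is a $\subset$-minimal, matching-satisfying $\vp$ with $n(\vp) < c(F)$), while the ``if'' direction, which you derive from part 3, survives.
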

\begin{proof} The partial assignment $\vp'$ for Part \ref{item:matchc1} is obtained by removing edges from  $B_{\vp}(F)$ until every clause-node has degree $1$, and using then only the variables from $\vp$ which are still covered. $\vp'$ for Part \ref{item:matchc2} is obtained in a similar way, only this time we remove all edges not contained in some (selected) maximum matching of $B_{\vp}(F)$. Part \ref{item:matchc3} is shown by Hall's criterion as follows: Assume that there is $F' \le F$ such that the number $\abs{\Gamma_{B_{\vp}(F)}(F')}$ of neighbours of $F'$ in $B_{\vp}(F)$ is strictly smaller than $c(F')$, and let $\vp' := \vp \mb \Gamma_{B_{\vp}(F)}(F')$ be the restriction of $\vp$ to these neighbours; by definition $\vp'$ is a satisfying assignment for $F'$ with $n(\vp') < c(F')$. Let $F'' := F - F'$. With Part \ref{item:matchc1} there is a satisfying assignment $\vp'' \sse \vp$ for $F''$ with $n(\vp'') \le c(F'') = c(F) - c(F')$. Now let $\vp^* := \vp' \cup \vp''$; by definition $\vp^*$ is a satisfying assignment for $F$ with $n(\vp^*) \le n(\vp') + n(\vp'') < c(F') + c(F) - c(F') = c(F)$ contradicting the assumption. Finally Part \ref{item:matchc4} follows by Parts \ref{item:matchc2} and \ref{item:matchc3}. 
\end{proof}

Now we show that every sub-assignment of a matching-satisfying assignment is also matching-satisfying for a suitable (multi-)clause-set.
\begin{lem}\label{lem:matchsatEig}
  Consider a generalised multi-clause-set $F \in \Mcls$ and partial assignments $\vp, \psi \in \Pass$. If $\vp \circ \psi$ is matching-satisfying for $F$, then $\vp$ is matching-satisfying for $\psi * F$.
\end{lem}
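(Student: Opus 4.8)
The plan is to work directly with the bipartite-graph characterisation of matching-satisfying assignments, tracing how $B_\psi(\psi * F)$ relates to $B_{\vp \circ \psi}(F)$. First I would recall the setup: by hypothesis $\vp \circ \psi$ is matching-satisfying for $F$, so $B_{\vp\circ\psi}(F)$ contains a matching $M$ covering all clause-nodes of $\ul{F}$. I want to produce a matching in $B_\psi(\psi * F)$ covering all clause-nodes of $\underline{\psi * F}$. The key observation is that a clause-node $(C,i)$ of $\ul{F}$ survives into $\psi * F$ (as a node $(C \sm C_\psi, i')$, up to the contraction bookkeeping inherent in $\tmc$) precisely when $\psi$ does \emph{not} already satisfy $C$; and for such a surviving clause, the edge $M$ uses at $(C,i)$ goes to some variable-node $(v,j)$ with $v \in \var(C)$ such that $\vp \circ \psi$ satisfies the literal of $C$ on $v$. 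Since $\psi$ does not satisfy that literal (else $C$ would be gone), and $(\vp\circ\psi)(v) = \psi(v)$ whenever $v \in \var(\psi)$, it must be that $v \notin \var(\psi)$ and $\vp$ satisfies the literal $(v,\ve) \in C$ — in particular $v \in \var(\vp)$ and $v$ survives as a variable of $\psi * F$ (with the same domain), and the literal $(v,\ve)$ is still present in $C \sm C_\psi$.

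The main steps, in order, are then: (1) restrict $M$ to the clause-nodes that survive into $\psi * F$, obtaining a sub-matching $M'$; (2) argue, as just sketched, that every edge of $M'$ connects a surviving clause-node to a surviving variable-node whose incident literal is satisfied by $\vp$ — hence that edge is present in $B_\psi(\psi * F)$; (3) conclude that the image of $M'$ under the node-identification $B(F) \rightsquigarrow B(\psi * F)$ is a matching in $B_\psi(\psi * F)$ covering all clause-nodes; (4) invoke the definition to conclude $\vp$ is matching-satisfying for $\psi * F$. One should also remark that $B_\psi(\psi * F)$ is a subgraph of $B(\psi * F)$ (only edges where $\psi$ satisfies the literal are kept), and that the relevant literal of $C \sm C_\psi$ on $v$ is indeed satisfied by $\psi$ — wait, that is exactly wrong: $\psi$ does \emph{not} satisfy it — so one must be careful: in $\psi * F$ the variable $v$ is unassigned by $\psi$, so the edge $\{(C,i),(v,j)\}$ is present in $B(\psi * F)$, and it is present in $B_\psi(\psi * F)$ precisely because $\psi$, having $v \notin \var(\psi)$, vacuously ``does not falsify'' it — i.e. the edge survives into the $\psi$-restricted graph of $\psi * F$ by the definition of $B_\psi$ applied to $\psi * F$ (not to $F$). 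This is the subtle point and I would state it explicitly.

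The main obstacle I anticipate is purely bookkeeping: reconciling the multiset/contraction behaviour of $*$ on multi-clause-sets with the copy-indices $(C,i)$ in $B(F)$. When $\psi$ is applied, two distinct clauses $C_1, C_2 \in F$ can collapse to the same clause in $\psi * F$, so the clause-node set of $B(\psi * F)$ is not literally a subset of that of $B(F)$; one needs a well-defined surjection from the surviving clause-nodes of $\ul{F}$ onto the clause-nodes of $\underline{\psi * F}$ that respects multiplicities (this is exactly the map implicit in the formula $(\vp * F)(C) = \sum_{\vp * \{C'\} = \{C\}} F(C')$). Along this surjection a matching on the domain side maps to a matching on the image side provided distinct matched edges never collide — which holds because their clause-node endpoints, after identification, are distinct copies only if we keep the fibrewise indexing straight, and their variable-node endpoints are already distinct in $M$. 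Once this identification is set up cleanly, steps (1)–(4) are routine; I would not belabour the contraction bookkeeping beyond noting it matches the definition of $*$ and of $B(-)$, and for clause-\emph{sets} (where $F$ is just a set) it trivialises since $F(C) \le 1$ throughout.
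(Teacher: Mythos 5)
Your substantive analysis is correct and matches the paper's approach: you correctly identify that for a surviving clause-node $(C,i)$ with matched edge to $(v,j)$, the underlying literal $(v,\ve) \in C$ is satisfied by $\vp\circ\psi$, that $v \notin \var(\psi)$ (else $\psi$ would already satisfy $C$), and hence that $\vp$ satisfies $(v,\ve)$ with $(v,\ve)$ surviving into $\psi * C$. This is exactly the right core argument.

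However, there is a genuine error in where you aim. The conclusion ``$\vp$ is matching-satisfying for $\psi * F$'' means, by definition, that $B_{\vp}(\psi * F)$ contains a matching covering all clause-nodes --- the subscript is the partial assignment whose matching-satisfaction we are asserting, namely $\vp$. Throughout your proposal you instead set out to build a matching in $B_{\psi}(\psi * F)$, and in step (2) you even write ``whose incident literal is satisfied by $\vp$ --- hence that edge is present in $B_\psi(\psi * F)$,'' which is a non-sequitur: a literal satisfied by $\vp$ yields an edge in $B_{\vp}(\cdot)$, not $B_{\psi}(\cdot)$. Your middle paragraph then notices that $\psi$ does not satisfy the literal and tries to rescue the claim by asserting that edges survive into $B_\psi(\psi * F)$ whenever $\psi$ ``vacuously does not falsify'' the literal. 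This misreads the definition: $B_\psi(G)$ keeps only edges for which $\psi$ \emph{satisfies} the literal, and for $v \notin \var(\psi)$ the literal $(v,\ve)$ is not satisfied by $\psi$, so that edge is deleted. In fact, since every literal occurring in $\psi * F$ has its variable outside $\var(\psi)$, the graph $B_\psi(\psi * F)$ has no edges at all, and your claimed matching could not exist (unless $\psi * F = \top$). The fix is simple --- replace $B_\psi(\psi * F)$ by $B_\vp(\psi * F)$ everywhere and drop the rescue paragraph --- after which your argument coincides with the paper's (which views $B_\vp(\psi * F)$ as obtained from $B_{\vp\circ\psi}(F)$ by deleting the clause-nodes satisfied by $\psi$ and the variable-nodes for variables in $\var(\psi)$ or no longer occurring). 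Your remarks on the $\tmc$-contraction bookkeeping are fine and need not be belaboured.
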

\begin{proof}
 Let $M$ be a matching in $B_{\vp \circ \psi}(F)$ covering all clause-nodes. The bipartite graph $B_{\vp}(\psi * F)$ is obtained from $B_{\vp \circ \psi}(F)$ by removing all clause-nodes satisfied by $\psi$, removing all variable-nodes assigned by $\psi$, and finally removing all variable-nodes where the variable does not occur in $\psi * F$ anymore. Now those edges from $M$ which are still in $B_{\vp \circ \psi}(F)$ yield a matching (obviously, since only edges have been removed) covering all remaining clause-nodes (they were covered before, and only useless edges have been removed). 
\end{proof}

\subsection{Matchings within satisfying assignments}
\label{sec:Matchingswithin}

As we already remarked, the matching number $\nu(B(G))$ of the clause-variable graph of $G$, the maximum size of a matching in $B(G)$, is $\nu(G) = c(F) - \delta^*(F)$. Obviously for partial assignments $\vp$ we have $\nu(B_{\vp}(F)) \le \nu(B(F))$; call $\vp$ \textbf{matching-maximum} if $\nu(B_{\vp}(F)) = \nu(B(F))$ holds. By Lemma \ref{lem:characmatchsat} we know that there exists a matching-maximum partial assignment for every clause-set. The main result of Section \ref{sec:matchsatass} is the following theorem, which states that every partial assignment can be efficiently repaired by ``conservative changes'', so that we obtain a matching-maximum partial assignment. Here by a \textbf{conservative change} of a partial assignment $\vp$ w.r.t.\ a clause-set $F$ we mean either adding some assignment $v \mapsto \ve \in D_v$ for some $v \in \var(F) \sm \var(\vp)$, or performing a \textbf{conservative flip}, that is, changing the value $v \in \var(\vp) \mapsto \vp(v)$ to some $\ve \in D_v \sm \set{\vp(v)}$, obtaining $\vp'$, such that all clauses of $F$ satisfied by $\vp$ are also satisfied by $\vp'$ (note that this property holds automatically for the first type of change, the extension of $\vp$). So if we have a sequence of conservative changes, then the corresponding sequence of sub-sets of satisfied clauses is monotonically increasing; especially if we start with a satisfying assignment, then all partial assignments in the sequence will also be satisfying assignments.

\begin{thm}\label{thm:Reparatur}
  For a generalised multi-clause-set $F \in \Mcls$ and a partial assignment $\vp_0$, in polynomial time a sequence of conservative changes w.r.t.\ $F$, starting with $\vp_0$, can be computed such that the finally obtained partial assignment $\vp$ is matching-maximum for $F$.
\end{thm}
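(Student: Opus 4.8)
The plan is to reduce the theorem to a standard augmenting-path argument in bipartite matching, but carried out on the evolving graph $B_\vp(F)$ as the partial assignment $\vp$ changes. First I would fix a maximum matching $M^*$ in the full graph $B(F)$ (computable in polynomial time), so $\abs{M^*} = \nu(B(F)) = c(F) - \delta^*(F)$. Starting from $\vp_0$ and the matching $M_0 := M^* \cap E(B_{\vp_0}(F))$ in $B_{\vp_0}(F)$, I want to repeatedly perform a conservative change that makes the current matching larger (with respect to the current, changed graph), until the matching in $B_\vp(F)$ has size $\nu(B(F))$; at that point $\vp$ is matching-maximum and we stop. So the loop invariant is: after $k$ steps we have a partial assignment $\vp_k$ obtained from $\vp_0$ by $\le$ polynomially many conservative changes, together with a matching $M_k$ in $B_{\vp_k}(F)$ with $\abs{M_k} \ge \abs{M_0} + k$ (truncating once it reaches $\nu(B(F))$).

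The core step is: given $\vp$ and a matching $M$ in $B_\vp(F)$ with $\abs{M} < \nu(B(F))$, produce a single conservative change $\vp \to \vp'$ and a matching $M'$ in $B_{\vp'}(F)$ with $\abs{M'} = \abs{M} + 1$. The idea is to compare $M$ with a fixed maximum matching $M^*$ of $B(F)$: the symmetric difference $M \triangle M^*$ contains an $M$-augmenting path $P$ (in $B(F)$, not in $B_\vp(F)$) since $\abs{M} < \abs{M^*}$. This path alternates between edges of $M^*$ (which live in $B(F)$ but maybe not in $B_\vp(F)$) and edges of $M$ (which do live in $B_\vp(F)$). The edges of $P \cap M^*$ that are \emph{not} in $B_\vp(F)$ are edges $\{(C,i),(v,j)\}$ where $\vp$ falsifies the literal of $C$ on $v$, i.e.\ $\vp(v) = \val$ where $(v,\val) \in C$. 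For each such $v$ appearing on $P$, I want to perform a conservative flip of $v$ to a value that makes all these critical $M^*$-edges present in $B_{\vp'}(F)$, while not destroying any currently-satisfied clause. Since $\abs{D_v} - 1$ variable-nodes $(v,j)$ exist and a maximum matching uses at most $\abs{D_v}-1$ of them, and since along a single augmenting path each clause-node appears once, the combinatorics of which values are "forbidden" (would unsatisfy a clause already satisfied by $\vp$, or would falsify the required literal) should leave at least one admissible value available — this is the heart of the argument, and it is essentially the content repackaged from the boolean proofs in \cite{FKS00}, using that $B_\vp(F)$'s edge set is governed by $\vp$ per-variable. After the flip(s), the path $P$ becomes augmenting in $B_{\vp'}(F)$, so we augment $M$ along $P$ and obtain $M'$ with $\abs{M'} = \abs{M}+1$; crucially each flip was conservative, so if $\vp_0$ satisfied $F$ then every $\vp_k$ does too.

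The main obstacle I expect is precisely the availability of an admissible value for the conservative flip: one must show that flipping $v$ to make the needed $M^*$-edges appear does not simultaneously (a) falsify the specific literal on $v$ needed by some other matched or augmenting-path clause, nor (b) unsatisfy some clause $C'$ with $v \in \var(C')$ that $\vp$ was satisfying \emph{through} the variable $v$ (rather than through another variable). Point (b) is handled by observing that we only need to "protect" clauses satisfied by $\vp$, and a clause satisfied only via its $v$-literal $(v,\val')$ forces us to avoid value $\val'$; counting these constraints along the augmenting path and using $\abs{D_v} \ge 2$ (after pure-variable normalization, $\abs{D_v} = \abs{\val_v(F)}+1$ in the pure case, giving slack) should show the forbidden set has size $< \abs{D_v}$. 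One should also double-check termination and the polynomial bound: each iteration does $O(\abs{E(B(F))})$ work to find an augmenting path plus $O(n(F))$ flips, and there are at most $\nu(B(F)) \le c(F)$ iterations, so the whole procedure is polynomial in $\ell(F)$. Finally, when the loop ends we have $\abs{M_k} = \nu(B(F))$, i.e.\ $\nu(B_{\vp_k}(F)) = \nu(B(F))$, which is the definition of $\vp_k$ being matching-maximum; this $\vp_k$ and the sequence of conservative changes producing it are the required output. \Ende
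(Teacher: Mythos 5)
Your high-level strategy matches the paper's — find an $M$-augmenting path $P$ in $B(F)$ and repair the discrepancy between $B(F)$ and $B_\vp(F)$ by conservative flips — but the crux you yourself flag ("the heart of the argument") does not go through in the form you sketch. You want the flips to keep \emph{all} $M$-edges along $P$ present (so that $P$ is still $M$-alternating in $B_{\vp'}(F)$) \emph{and} to make the $P\cap M^*$ edges appear, and only then augment. For a fixed variable $v$ the forbidden set for $\vp'(v)$ then contains the $v$-values of the $M$-edges at $v$-variable-nodes (to keep $M$) together with the $v$-values of the $P\setminus M$ edges at $v$-variable-nodes (to make them appear). If $P$ passes through an $M$-covered $v$-variable-node as an \emph{interior} vertex, that single node already contributes two distinct forbidden values (one from its $M$-edge, one from its $P\setminus M$ edge), and combined with the remaining at most $\abs{D_v}-2$ variable-nodes of $v$ the forbidden set can have size $\ge \abs{D_v}$, leaving no admissible flip value at all. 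Your counting heuristic ("$\abs{D_v}-1$ variable-nodes, each used at most once by a maximum matching") only bounds the constraints coming from the \emph{final} matching $M' = M\triangle P$; it does not bound the constraints from holding $M$ and materialising $P\cap M^*$ simultaneously, which is what your plan actually requires.

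The paper avoids this collision by never holding both constraints at once: it processes $P$ one edge at a time through a \emph{relinking} lemma (Lemma~\ref{lem:relinking}) that replaces a matched edge $\set{a,b}$ by $\set{a,b'}$, \emph{first deleting} $\set{a,b}$ from the current matching (creating a free variable-node) and only then invoking the single-edge extension Lemma~\ref{lem:Hilfssatz1}, whose counting ($\abs{E}\le\abs{D_{v_0}}-2$, plus $\ve_0$, leaving one value free) works precisely because the target variable-node is currently uncovered. The conservativity argument in Lemma~\ref{lem:Hilfssatz1} also relies on $M$ being \emph{maximal} in $B_\vp(F)$ — without that hypothesis, a clause satisfied by $\vp$ solely through its $v$-literal could be un-satisfied by the flip, and nothing in your sketch rules this out. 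So the gap is genuine: to make your route work you would essentially have to reproduce the paper's incremental relinking structure, at which point it is no longer a distinct proof.
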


Before proving this theorem, we derive three corollaries.
\begin{corol}\label{cor:satmatchsat}
  For a satisfiable generalised multi-clause-set $F \in \Mcls$ there exists a satisfying assignment which is matching-maximum.
\end{corol}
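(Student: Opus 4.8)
The plan is to read this off directly from Theorem~\ref{thm:Reparatur} together with the definition of a conservative change, so there is essentially nothing to prove beyond assembling the pieces. Since $F$ is satisfiable, fix any satisfying assignment $\vp_0$ for $F$ (e.g.\ any $\vp_0 \in \mods_V(F)$ for $V := \var(F)$). Apply Theorem~\ref{thm:Reparatur} to $F$ and this particular $\vp_0$: this produces, in polynomial time, a sequence of conservative changes w.r.t.\ $F$ starting at $\vp_0$ and ending at some partial assignment $\vp$ which is matching-maximum for $F$.

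It then remains only to observe that every assignment in that sequence is still a satisfying assignment for $F$, so in particular $\vp$ is. This is exactly the monotonicity property recorded just before Theorem~\ref{thm:Reparatur}: for a conservative change the set of clauses of $F$ satisfied by the new assignment contains the set of clauses satisfied by the old one (for an extension $v \mapsto \ve$ this is immediate, and for a conservative flip it is the defining requirement). Hence, along a sequence of conservative changes, the family of satisfied clauses is monotonically non-decreasing; since $\vp_0$ satisfies all clauses of $F$, so does every later assignment, and in particular the matching-maximum assignment $\vp$. Therefore $\vp$ is a satisfying assignment for $F$ that is matching-maximum, which is the claim.

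I expect no real obstacle here --- the corollary is a direct consequence of the theorem. The only point worth stating carefully is the monotonicity of ``being satisfied'' under conservative changes, which has already been established in the paragraph preceding Theorem~\ref{thm:Reparatur}; one just needs to note that it applies along the whole sequence by induction on its length, and that it is the extension clause (rather than the flip clause) that is relevant for the newly assigned variables introduced by Theorem~\ref{thm:Reparatur}.
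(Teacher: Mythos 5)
Your proof is correct and takes exactly the route the paper intends: the corollary is stated without a separate proof precisely because it falls out of Theorem~\ref{thm:Reparatur} together with the monotonicity observation recorded just before it (that conservative changes never decrease the set of satisfied clauses, so a sequence starting at a satisfying assignment stays satisfying). Nothing more is needed.
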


We obtain the following generalisation of Theorem 7.16 in \cite{Ku00f}:
\begin{corol}\label{cor:Abstandsat}
  For each satisfiable generalised multi-clause-set $F \in \Mcls$ there exists a partial assignment $\vp \in \Pass$ with $n(\vp) \le \delta^*(F)$ such that $\vp * F$ is matching satisfiable.
\end{corol}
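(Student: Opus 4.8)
The plan is to combine Corollary~\ref{cor:satmatchsat} with a direct matching‑extraction argument. First I would use Corollary~\ref{cor:satmatchsat} to choose a satisfying assignment $\psi$ for $F$ that is matching‑maximum, so that $\nu(B_\psi(F)) = \nu(B(F)) = c(F) - \delta^*(F)$. Fix a maximum matching $M$ in $B_\psi(F)$; since $\abs M = c(F) - \delta^*(F)$, exactly $\delta^*(F)$ clause‑nodes of $\ul F$ are left uncovered by $M$. Let $F''$ be the induced sub‑multi‑clause‑set of $F$ consisting of these uncovered clause‑occurrences, so $c(F'') = \delta^*(F)$. Because $\psi$ satisfies $F$, for each $C \in F''$ there is a literal $(v_C,\ve_C) \in C$ with $\psi(v_C) \ne \ve_C$; set $W := \set{v_C : C \in F''}$ and $\vp := \psi \mb W$. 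Then $n(\vp) = \abs W \le c(F'') = \delta^*(F)$, which is the required size bound.

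It then remains to show that $\vp * F$ is matching satisfiable, and by Lemma~\ref{def:directcharacmatchsat} it suffices to produce a matching in $B(\vp * F)$ covering all of its clause‑nodes. The clause‑nodes of $\vp * F$ correspond exactly to the clause‑occurrences $(C,i)$ of $F$ not satisfied by $\vp$; since $\vp$ satisfies every clause of $F''$, all of these occurrences lie among those covered by $M$. For such a $(C,i)$ let $\set{(C,i),(v,j)}$ be its $M$‑edge, so $\psi$ satisfies the literal $(v,\ve) \in C$, i.e.\ $\psi(v) \ne \ve$. The key observation is that $v \notin W$: otherwise $\vp(v) = \psi(v) \ne \ve$, so $\vp$ would satisfy $(v,\ve) \in C$ and hence $C$, contradicting the choice of $(C,i)$. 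Hence the literal $(v,\ve)$ survives in $\vp * C$, so $v \in \var(\vp * C) \subseteq \var(\vp * F)$, and the $M$‑edge carries over to an edge of $B(\vp * F)$. Restricting $M$ to its edges incident to clause‑occurrences not satisfied by $\vp$ then yields, after identifying these clause‑occurrences of $F$ with the clause‑nodes of $\vp * F$, a sub‑matching of $B(\vp * F)$ covering every clause‑node of $\vp * F$, so $\vp * F \in \Msat$.

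The main obstacle I expect is bookkeeping rather than substance: under a partial assignment several distinct clauses of $F$ may collapse to one clause of $\vp * F$, so the argument must be phrased in terms of clause‑occurrences (the vertices of $\ul F$ and $\ul{\vp * F}$) rather than clauses, and one must check that the carried‑over $M$‑edges still form a matching of $B(\vp * F)$ — distinct clause‑nodes stay distinct, distinct variable‑nodes $(v,j)$ stay distinct — and that each such variable endpoint is a genuine vertex of $B(\vp * F)$, which needs the underlying variable to still occur in $\vp * F$ (shown above) and the copy index $j$ to remain $\le \abs{D_v}-1$ (immediate, since domains are unchanged). Everything else is a routine consequence of the definitions of $B_\psi$, of the operation $\vp * F$, and of matching satisfiability; the one substantive point is the observation $v \notin W$, which is exactly what lets the matching pass through to $\vp * F$.
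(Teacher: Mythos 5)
Your proof is correct and follows essentially the same route as the paper's: both start from a matching-maximum satisfying assignment (Corollary~\ref{cor:satmatchsat}), identify the $\delta^*(F)$ clause-occurrences left uncovered as $F''$, restrict to a subassignment $\vp$ on one satisfying variable per such occurrence, and then carry the matching over to $\vp * F$. The paper simply packages the last two steps as invocations of Lemma~\ref{lem:matchc}, Part~\ref{item:matchc1} and Lemma~\ref{lem:matchsatEig}, whereas you unroll those arguments in place; your explicit observation that $v \notin W$ is precisely the point the paper's proof of Lemma~\ref{lem:matchsatEig} leaves implicit (``only useless edges have been removed'').
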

\begin{proof}
 By Corollary \ref{cor:satmatchsat} there exists a satisfying assignment $\vp_0$ for $F$ and $F' \le F$ with $c(F') = c(F) - \delta^*(F)$, such that $\vp_0$ is matching-satisfying for $F'$. Let $F'' := F - F'$. We have $c(F'') = \delta^*(F)$ and $\vp_0$ is satisfying for $F''$, so by Lemma \ref{lem:matchc}, Part \ref{item:matchc1} there exists $\vp \sse \vp_0$ with $n(\vp) \le \delta^*(F)$ such that $\vp$ is satisfying for $F''$. Now $\vp_0 = \vp_0 \circ \vp$ is matching-satisfying for $F'$, and thus by Lemma \ref{lem:matchsatEig} $\vp_0$ is matching-satisfying for $\vp * F' = \vp * F' + \vp * F'' = \vp * (F' + F'') = \vp * F$. 
\end{proof}

\begin{corol}\label{cor:poly1}
  The satisfiability problem for generalised multi-clause-sets $F$ with $\delta^*(F) \le k$ for constant $k \in \NNZ$ is decidable in polynomial time (and if $F$ is satisfiable, then a satisfying assignment can be computed). The algorithm runs through all partial assignments $\vp$ with $\var(\vp) \sse \var(F)$ and $n(\vp) \le \delta^*(F)$ and checks whether $\vp * F$ is matching satisfiable: If yes then for a matching-satisfying assignment $\psi$ for $\vp * F$ we obtain a satisfying assignment $\vp \circ \psi$ for $F$, while if no matching satisfiable sub-instance was found in this way, then $F$ is unsatisfiable.
\end{corol}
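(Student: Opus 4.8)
The plan is to verify that the algorithm described in the statement is correct and runs in polynomial time for fixed $k$, the only real inputs being Corollary \ref{cor:Abstandsat} (for the completeness of the procedure on satisfiable instances) and Lemmas \ref{lem:deltapoly} and \ref{lem:matchsatviamatchsa} (for the polynomial-time subroutines).

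First I would dispose of the easy direction. If $F$ is unsatisfiable, then $\vp * F$ is unsatisfiable for every $\vp \in \Pass$, and since every matching satisfiable multi-clause-set is in particular satisfiable (recall Subsection \ref{sec:matchsat}), the algorithm never encounters a matching satisfiable $\vp * F$ and hence correctly reports unsatisfiability. Conversely, if $F$ is satisfiable, then by Corollary \ref{cor:Abstandsat}, using the hypothesis $\delta^*(F) \le k$, there is $\vp \in \Pass$ with $n(\vp) \le \delta^*(F) \le k$ such that $\vp * F$ is matching satisfiable; replacing $\vp$ by its restriction to $\var(\vp) \cap \var(F)$ does not change $\vp * F$, so we may assume $\var(\vp) \sse \var(F)$, and thus this $\vp$ is among the finitely many partial assignments enumerated by the algorithm, which therefore succeeds.

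Next I would check that the returned assignment is genuinely satisfying. When the algorithm finds $\vp$ with $\vp * F \in \Msat$, Lemma \ref{lem:matchsatviamatchsa} yields in polynomial time a matching-satisfying (hence satisfying) assignment $\psi$ for $\vp * F$, and by construction $\var(\psi) \sse \var(\vp * F) \sse \var(F) \sm \var(\vp)$, so $\var(\vp) \cap \var(\psi) = \es$. Consequently $\vp \circ \psi = \psi \circ \vp$, and by the composition law for $*$ we get $(\psi \circ \vp) * F = \psi * (\vp * F) = \top$; thus $\vp \circ \psi$ is a satisfying assignment for $F$.

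Finally, the running time. The number of partial assignments $\vp$ with $\var(\vp) \sse \var(F)$ and $n(\vp) \le k$ is at most $\sum_{j=0}^{k}\binom{n(F)}{j}\bigl(\max_{v \in \var(F)}\abs{D_v}\bigr)^{j}$, which is polynomial in $\ell(F)$ for fixed $k$ thanks to the normalisation of domain sizes adopted at the beginning of Section \ref{sec:satviamatch} (so that $\abs{D_v}$ is bounded in terms of $\ell(F)$). For each such $\vp$ we form $\vp * F$ in polynomial time, decide whether $\vp * F$ is matching satisfiable in polynomial time by Lemma \ref{lem:deltapoly}, and, in the positive case, extract $\psi$ and output $\vp \circ \psi$ via Lemma \ref{lem:matchsatviamatchsa}; the overall bound is polynomial. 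I do not expect a genuine obstacle here: the two points needing a little care are getting the order of composition right (the satisfying assignment is ``$\psi$ after $\vp$'', which coincides with $\vp \circ \psi$ only because the domains are disjoint) and checking that the enumeration is polynomially bounded, which rests on the size convention for domains; everything else is an immediate consequence of Corollary \ref{cor:Abstandsat}.
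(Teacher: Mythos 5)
Your proof is correct and follows the same route the paper intends: the paper leaves this corollary without an explicit proof because its correctness is an immediate consequence of Corollary \ref{cor:Abstandsat} (completeness on satisfiable instances), the fact that matching satisfiability implies satisfiability (soundness), and the poly-time primitives from Lemmas \ref{lem:deltapoly} and \ref{lem:matchsatviamatchsa}. Your two points of care---that $\vp \circ \psi = \psi \circ \vp$ only because $\var(\psi) \sse \var(\vp * F)$ is disjoint from $\var(\vp)$, so that $(\vp \circ \psi) * F = \psi * (\vp * F) = \top$ follows from the composition law, and that the enumeration is polynomially bounded because of the domain-size normalisation at the start of Section \ref{sec:satviamatch}---are exactly the details being taken for granted.
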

In \cite{Szei2002FixedParam} it was shown that in the boolean case the satisfiability problem for bounded maximal deficiency actually is fixed-parameter tractable. By reducing the general case to the boolean case, we will show fixed-parameter tractability in Theorem \ref{thm:MaximalerDefektFPT} (Part II) also for generalised clause-sets. So just for satisfiability decision for bounded maxima deficiency, Theorem \ref{thm:Reparatur} is in a sense superseded by a different method, but we will later show in Theorem \ref{thm:leanKernelpoly}, that actually a stronger result can be obtained from Theorem \ref{thm:Reparatur}, namely computation of the lean kernel in polynomial time.

The remainder of this subsection is devoted to the proof of Theorem \ref{thm:Reparatur} (generalising, simplifying and also strengthening the results on ``admissible matchings'' in \cite{FKS00}).  In order to bring out the general structure of the proof we will present a more general result (which directly implies Theorem \ref{thm:Reparatur}), exploring ``parameterised maximum matching problems''. We consider the situation where a fixed graph $G$ is given together with an arbitrary set $\mc{P} \not= \es$ of ``parameters'' (which in our case are partial assignments, while $G = B(F)$) and a mapping $\vp \in \mc{P} \mapsto G_{\vp}$, where $G_{\vp}$ is a partial graph of $G$ (that is, $V(G_{\vp}) = V(G)$ and $E(G_{\vp}) \sse E(G)$). Let us call this parameterisation \emph{matching-optimal}, if there exists some $\vp \in \mc{P}$ such that $\nu(G_{\vp}) = \nu(G)$. A matching-optimal parameterisation does not establish a method to find some (``good'') $\vp \in \mc{P}$ where $\nu(G_{\vp}) = \nu(G)$ is attained. We consider the problem that we want to transform some arbitrary starting parameter $\vp_0$ into such a good $\vp$, and so we assume further that some relation $R \sse \mc{P} \times \mc{P}$ is given such that a relation $\vp R \vp'$ indicates an admissible move (in our application $\vp R \vp'$ holds if $\vp'$ results from $\vp$ by a conservative change). Denoting by $R^*$ the reflexive-transitive hull of $R$ (that is, allowing an arbitrary number of admissible moves), we call the parameterisation \emph{strongly matching-optimal}, if for every $\vp_0 \in \mc{P}$ there exists $\vp \in \mc{P}$ with $\vp_0 R^* \vp$ and $\nu(G_{\vp}) = \nu(G)$.

In the sequel we only consider bipartite $G$ with a bipartition $(A,B)$. We call the parameterisation $\mc{P}$ \emph{conditionally extensible} if for every $\vp \in \mc{P}$, every matching $M$ in $G_{\vp}$ and every edge $e \in E(G) \sm M$ such that $M' := M \cup \set{e}$ is a matching in $G$ and no matching $M^*$ in $G_{\vp}$ with $M^* \supset M$ covers the endpoint of $e$ in $B$, there exists $\vp' \in \mc{P}$ with $\vp R^* \vp'$ such that $M'$ is a matching in $G_{\vp'}$.

\begin{lem}\label{lem:condext}
  If a parameterised matching problem $(G, \mc{P})$ is conditionally extensible, then it is strongly matching-optimal. If furthermore in polynomial time in the size of $G$ a sequence of admissible moves from $\vp$ to $\vp'$ in any conditional extension can be found, then in polynomial time (in the size of $G$) for any $\vp_0 \in \mc{P}$ a sequence of admissible moves to $\vp \in \mc{P}$ with $\nu(B_{\vp}) = \nu(G)$ can be found.
\end{lem}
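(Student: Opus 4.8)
The plan is to prove the two assertions of Lemma~\ref{lem:condext} together by an iterative argument that, starting from an arbitrary $\vp_0$, repeatedly enlarges the matching in $G_{\vp}$ by one edge until it reaches size $\nu(G)$. The engine is a classical alternating/augmenting-path analysis in the bipartite graph $G$, combined with the conditional extensibility hypothesis, which is precisely the tool that lets us ``repair'' $\vp$ whenever a would-be augmentation fails inside the current partial graph $G_{\vp}$.

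\textbf{Key steps.} First I would fix $\vp \in \mc{P}$ together with a maximum matching $M$ in $G_{\vp}$, and suppose $\abs{M} < \nu(G)$. I would take a matching $N$ in $G$ with $\abs{N} = \nu(G)$ (which exists since $G$ is given), and look at the symmetric difference $M \triangle N$; since $\abs{N} > \abs{M}$, this contains an $M$-augmenting path $P$ in $G$ (standard, e.g.\ Theorem~16.1 in \cite{Schrijver2003CombOptA}). Now I would push $P$ through $G_{\vp}$ as far as it goes: walk along $P$ starting from its $A$-endpoint, which is $M$-exposed. If the whole of $P$ lies in $G_{\vp}$, then $M \triangle P$ is a strictly larger matching in $G_{\vp}$, contradicting maximality of $M$; so at some point $P$ uses an edge $e \in E(G) \sm E(G_{\vp})$. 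Taking the first such $e$ and letting $M_0$ be the (matching) initial segment of $M \triangle P$ up to that point — so $M_0 \subseteq$ a matching of $G_{\vp}$, $e \notin M_0$, and $M_0 \cup \set{e}$ is a matching in $G$ — I would need to know that no extension of $M_0$ inside $G_{\vp}$ covers the $B$-endpoint of $e$. This holds because $M$ was chosen maximum in $G_{\vp}$ (otherwise augment), or, if it does not hold immediately, one re-selects $M$ along that alternating path; this is exactly the hypothesis of conditional extensibility. Hence there is $\vp' \in \mc{P}$ with $\vp R^* \vp'$ such that $M_0 \cup \set{e}$ is a matching in $G_{\vp'}$. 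I would then iterate the argument from $\vp'$: each step either exhibits a strictly larger matching in the current partial graph or produces a conditional extension realising one more edge of an augmenting path, so after finitely many admissible moves the matching number of the partial graph strictly increases. Since $\nu(G_{\vp}) \le \nu(G)$ always, after at most $\nu(G)$ such ``rounds'' we reach $\vp$ with $\nu(G_{\vp}) = \nu(G)$; this gives strong matching-optimality. For the complexity claim, I would observe that each augmenting-path computation and each symmetric-difference bookkeeping is polynomial in the size of $G$ (breadth-first search in the directed bipartite graph associated with augmenting paths, Section~16.3 in \cite{Schrijver2003CombOptA}), the number of rounds is $\le \nu(G) \le \abs{V(G)}$, and within each round the number of conditional-extension invocations is bounded by the length of an augmenting path, hence by $\abs{V(G)}$; by hypothesis each such invocation returns its sequence of admissible moves in polynomial time, so the total is polynomial.

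\textbf{Main obstacle.} The delicate point is the bookkeeping that keeps the hypotheses of conditional extensibility exactly satisfied at each invocation — i.e.\ verifying that the ``prefix matching'' $M_0$ really is contained in a matching of $G_{\vp}$, that $e$ is not already in it, and above all that no larger matching of $G_{\vp}$ extending $M_0$ covers $e$'s $B$-endpoint. Getting this last condition for free requires choosing $M$ carefully (maximum in $G_{\vp}$, and among those, compatible with the alternating path under consideration); one has to argue that such a choice is always available, which is where a short Hall-type or alternating-forest argument is needed. Once the invariant is set up correctly, the termination and polynomial-time bounds are routine.

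Applying Lemma~\ref{lem:condext} to $G = B(F)$, with $\mc{P}$ the set of partial assignments, $G_\vp = B_\vp(F)$, and $\vp R \vp'$ meaning ``$\vp'$ is obtained from $\vp$ by one conservative change'', then yields Theorem~\ref{thm:Reparatur}, provided one checks that this parameterisation is conditionally extensible — which is the content of the remaining lemmas of the subsection.
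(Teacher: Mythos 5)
Your high-level plan (augmenting path, walk until an edge of $P$ leaves $G_{\vp}$, apply conditional extensibility) is the right one, and matches the spirit of the paper's proof, but the one-shot invocation of conditional extensibility on the ``prefix matching'' $M_0$ does not go through. Concretely: if the first edge $e = e_{2k+1}$ of $P$ missing from $G_{\vp}$ is not the last edge of the path, then the $B$-endpoint $v_{2k+1}$ of $e$ is still matched by the next path edge $e_{2k+2} \in M \subseteq E(G_{\vp})$. Under the interpretation $M_0 = (M \sm \set{e_2,\dots,e_{2k}}) \cup \set{e_1,\dots,e_{2k-1}}$, your claim ``$M_0 \cup \set{e}$ is a matching in $G$'' is already false (since $e_{2k+2} \in M_0$ covers $v_{2k+1}$). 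Under the alternative reading $M_0 = \set{e_1,e_3,\dots,e_{2k-1}}$, $M_0 \cup \set{e}$ is a matching, but now the hypothesis of conditional extensibility --- ``no matching $M^* \supset M_0$ in $G_{\vp}$ covers the $B$-endpoint of $e$'' --- is violated: $M_0 \cup \set{e_{2k+2}}$ is such an $M^*$. You flag this as a delicate point and appeal to ``re-selecting $M$ along the path'' or ``a short Hall-type argument'', but that is exactly the step that needs to be carried out, and it is the whole content of the proof.

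The paper closes this gap by a step-by-step relinking procedure rather than a bulk modification. Its auxiliary Lemma~\ref{lem:relinking} takes a \emph{maximal} matching $M$ in $G_{\vp}$, a matched edge $e = \set{a,b} \in M$ and an edge $e' = \set{a,b'}$ with $b'$ uncovered, sets $M_0 := M \sm \set{e}$, and verifies the required hypothesis as follows: if some $M^* \supset M_0$ in $G_{\vp}$ covered $b'$ by an edge $\set{x,b'}$, then $x \ne a$ (since $e' \notin E(G_{\vp})$), so $x$ is free for $M$, and $M \cup \set{\set{x,b'}}$ would be a matching in $G_{\vp}$ strictly larger than $M$, contradicting maximality. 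This makes $\abs{M_0} = \abs{M} - 1$ at every invocation, which is exactly what makes the maximality argument work. The main proof then walks $P$ from the exposed $B$-endpoint, relinking one matched edge at a time (moving the uncovered $B$-vertex along the path), and only applies conditional extension directly for the final edge $\set{v_{m-1},v_m}$. The ``if relinking fails'' branch is handled by observing that the constructed matching of the same size would then not be maximal in $G_{\vp}$, so it can be enlarged directly. Your complexity bookkeeping is fine once that invariant is in place, but the invariant itself --- one relinking step per matched edge of $P$, with $M_0$ being exactly $M$ minus one edge --- is the missing idea.
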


Before proving Lemma \ref{lem:condext}, we show that in our application, considering $B(F)$ with the parameterisation by $\vp \in \Pass \mapsto B_{\vp}(F)$ together with $R$ as the relation of conservative change, the property of conditional extensibility holds true. So we consider the situation where we have a maximal matching $M$ in $B_{\vp}(F)$, where an edge $\set{C,v}$ in $B(F)$ exists with uncovered endpoints, and we show that by just one conservative change we can extend $M$ by this additional edge.
\begin{lem}\label{lem:Hilfssatz1}
  Consider a generalised multi-clause-set $F \in \Mcls$, a partial assignment $\vp \in \Pass$, a matching $M$ in $B_{\vp}(F)$ and an edge $\set{(C_0, i), (v_0, j)} \in E(B(F))$ such that neither $C := (C_0, i)$ nor $v := (v_0, j)$ is covered by $M$. We assume furthermore that no matching $M^* \supset M$ in $B_{\vp}(F)$  covers $v$. Let $M' := M \cup \set{\set{C,v}}$. By definition $M'$ is a matching in $B(F)$, and furthermore there exists a conservative change for $\vp$ w.r.t.\ $F$, resulting in $\vp'$, such that $M'$ is a matching in $B_{\vp'}(F)$.
\end{lem}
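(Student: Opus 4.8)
The plan is to split on whether the variable $v_0$ underlying $v = (v_0,j)$ already lies in $\var(\vp)$, after recording two easy preliminaries. Write $(v_0,a)$ for the literal of $C_0$ whose underlying variable is $v_0$. Since $C$ and $v$ are uncovered by $M$, the set $M' = M \cup \set{\set{C,v}}$ is immediately a matching in $B(F)$, so the whole task reduces to producing a conservative change of $\vp$ to some $\vp'$ with $M' \sse E(B_{\vp'}(F))$. I would first reformulate the hypothesis: if some neighbour $u$ of $v$ in $B_{\vp}(F)$ were uncovered by $M$, then $M \cup \set{\set{v,u}}$ would be a matching in $B_{\vp}(F)$ containing $M$ and covering $v$, a contradiction; hence every neighbour of $v$ in $B_{\vp}(F)$ is covered by $M$, and in particular $\set{C,v} \notin E(B_{\vp}(F))$, i.e.\ $\vp$ does not satisfy $(v_0,a)$.

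If $v_0 \notin \var(\vp)$, the change is a pure extension: pick any $\ve' \in D_{v_0} \sm \set{a}$ (possible since $(v_0,j)$ being a vertex of $B(F)$ forces $\abs{D_{v_0}} \ge 2$) and set $\vp' := \vp \cup \pao{v_0}{\ve'}$. This is conservative automatically, and since extending a partial assignment only adds edges to the partial graph one has $M \sse E(B_{\vp}(F)) \sse E(B_{\vp'}(F))$; moreover $\vp'$ now satisfies $(v_0,a)$, so $\set{C,v} \in E(B_{\vp'}(F))$ and $M'$ is a matching in $B_{\vp'}(F)$.

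If $v_0 \in \var(\vp)$, then $\vp(v_0) = a$ and I would flip $v_0$ to a suitably chosen value. Let $\mc{C}_0$ be the set of clauses $C'$ of $F$ having at least one clause-node $(C',i')$ that $M$ matches to a variable-node of $v_0$, and for $C' \in \mc{C}_0$ write $(v_0,\ve_{C'})$ for the literal of $C'$ on $v_0$. Then: (i) $\ve_{C'} \ne a$ for each $C' \in \mc{C}_0$, because the matching edge lies in $B_{\vp}(F)$, so $\vp$ satisfies $(v_0,\ve_{C'})$; and (ii) $\abs{\mc{C}_0} \le \abs{D_{v_0}} - 2$, since $v_0$ has $\abs{D_{v_0}} - 1$ variable-nodes, of which $v$ is free under $M$, so at most $\abs{D_{v_0}} - 2$ clause-nodes, and hence at most that many clauses, are matched to $v_0$-nodes. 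Therefore $\set{a} \cup \set{\ve_{C'} : C' \in \mc{C}_0}$ has at most $\abs{D_{v_0}} - 1$ elements, and I can choose $\ve' \in D_{v_0}$ outside it; let $\vp'$ be $\vp$ with $v_0$ flipped to $\ve'$. That this flip is conservative follows by noting that a clause satisfied by $\vp$ can be endangered only if $\vp$ satisfies it solely through its literal on $v_0$; but then that clause is a neighbour of $v$ in $B_{\vp}(F)$, hence $M$-covered, and the covering edge must go to a variable-node of $v_0$ (the only literal it has satisfied by $\vp$ being on $v_0$), so it lies in $\mc{C}_0$ and the choice $\ve' \ne \ve_{C'}$ keeps it satisfied. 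The same observation gives $M \sse E(B_{\vp'}(F))$ — the only edges of $M$ the flip could kill are those at $v_0$-nodes, i.e.\ at clauses in $\mc{C}_0$ — while $\ve' \ne a$ gives $\set{C,v} \in E(B_{\vp'}(F))$, so $M'$ is a matching in $B_{\vp'}(F)$.

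The heart of the matter is the bound $\abs{\mc{C}_0} \le \abs{D_{v_0}} - 2$ in the second case: it is precisely here that the hypothesis on $v$ (that, keeping $M$, it cannot be covered, which forces a free variable-node of $v_0$) buys the one unit of slack needed to find an admissible flip value, and it is also what pins down the endangered clauses as exactly the members of $\mc{C}_0$, after which the conservativeness and matching-survival checks are routine. I expect the only mildly delicate bookkeeping to be this identification of the endangered clauses with $\mc{C}_0$ via the ``all neighbours of $v$ are $M$-covered'' form of the hypothesis; the boolean special case of all this is implicit in \cite{FKS00}.
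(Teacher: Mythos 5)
Your proof is correct and follows essentially the same route as the paper: dichotomy on $v_0 \in \var(\vp)$, bounding the number of $v_0$-variable-nodes matched by $M$ by $\abs{D_{v_0}}-2$ to find an admissible flip value, and verifying conservativeness via the non-extendability hypothesis on $v$. The only cosmetic differences are that you restate the hypothesis up front (``all $B_{\vp}(F)$-neighbours of $v$ are $M$-covered'') and track the set $\mc{C}_0$ of matched clauses where the paper tracks the set $E$ of matched values, which encode the same information.
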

\begin{proof} Let $(v_0, \ve_0) \in C_0$. If $v_0 \notin \var(\vp)$, then $\vp' := \vp \circ \pao{v_0}{\ve}$ for any $\ve \in D_{v_0} \sm \set{\ve_0}$ yields the required conservative change; so assume $v_0 \in \var(\vp)$. Now we have $\vp(v_0) = \ve_0$, since otherwise $M'$ would be a matching in $B_{\vp}(F)$ covering $v$. Let $E$ be the set of values $\ve \in D_{v_0}$ occurring in $M$, that is, there is some edge $\set{(C_0',i'), (v_0,j')} \in M$ with $(v_0, \ve) \in C_0'$. Since $v$ is not covered by $M$ and $M$ is a matching, $M$ can cover at most $(\abs{D_{v_0}} - 1) - 1$ many variable-nodes with underlying variable $v_0$, and so we have $\abs{E} \le \abs{D_{v_0}} - 2$. Thus there is $\ve' \in D_{v_0} \sm \set{\ve_0}$. Set $\vp' := \vp \circ \pao{v_0}{\ve'}$. By definition $M'$ is a matching in $B_{\vp'}(F)$. Now consider a clause $D \in F$ falsified by $\vp'$, and assume that $D$ is not falsified by $\vp$. Thus $(v_0, \ve') \in D$, and the literal $(v_0, \ve')$ is the only literal in $D$ satisfied by $\vp$. So no clause-node covered by $M$ has clause $D$ associated with it. It follows that $M^* := M \cup \set{\set{(D,1), v}}$ would be a matching in $B_{\vp}(F)$ extending $M$ and covering $v$, contradicting the assumption of the assertion. \end{proof}

Thus by Lemma \ref{lem:condext} now Theorem \ref{thm:Reparatur} is proven. In the remainder of this subsection we prove Lemma \ref{lem:condext}. The reader might recall the preliminaries on matchings (Subsection \ref{sec:Graphsmatching}), where the notion of an $M$-augmenting path $P$ for a matching $M$ in a graph $G$ is discussed: a larger matching $M^+$ is obtained by adding the edges from $P$ to $M$, which are not in $M$, while removing the other edges of $P$ from $M$. In order to perform the ``relinking'', necessary for the transition from $M$ to $M^+$, we show an auxiliary lemma.

\begin{lem}\label{lem:relinking}
  As in Lemma \ref{lem:condext} consider a conditionally extensible parameterised matching problem $(G, \mc{P})$, where $(A,B)$ is a bipartition of $G$. Consider furthermore $\vp \in \mc{P}$, a maximal matching $M$ in $G_{\vp}$, an edge $e  = \set{a,b} \in M$ with $a \in A, b \in B$, and an edge $e' = \set{a,b'} \in E(G)$ where $b'$ is not covered by $M$. Then by at most one conditional-extension-step we obtain a parameter $\vp' \in \mc{P}$ with $\vp R^* \vp'$ such that $M' := (M \sm \set{e}) \cup \set{e'}$ is a matching in $G_{\vp'}$.
\end{lem}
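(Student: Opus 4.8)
The plan is to reduce the statement to a single application of conditional extensibility, after one preliminary case distinction. Put $N := M \sm \set{e}$; since $M$ is a matching in $G_{\vp}$ and $e \in M$, the set $N$ is a matching in $G_{\vp}$, and the two endpoints $a \in A$, $b \in B$ of $e$ are uncovered by $N$. The vertex $b' \in B$ is uncovered by $M$, hence by $N$, so $e' \notin N$ and $M' = N \cup \set{e'}$ is obtained from $N$ by adding an edge of $E(G)$ whose two endpoints ($a$ and $b'$) are $N$-uncovered; thus $M'$ is a matching in $G$ (which is just the assertion that $M' = (M \sm \set{e}) \cup \set{e'}$ is a matching in $G$). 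It remains to exhibit $\vp'$ with $\vp R^* \vp'$ for which $M'$ is a matching in $G_{\vp'}$.

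First I would dispose of the trivial case $e' \in E(G_{\vp})$: then $M' = N \cup \set{e'}$ is already a matching in $G_{\vp}$ (it is a matching in $G$ and all of its edges, namely those of $N \sse M$ and $e'$, lie in $E(G_{\vp})$), so $\vp' := \vp$ works, using \emph{zero} conditional-extension-steps.

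The substantial case is $e' \notin E(G_{\vp})$. Here I would invoke conditional extensibility for the data $(\vp, N, e')$: the requirements $e' \in E(G) \sm N$ and ``$M' = N \cup \set{e'}$ is a matching in $G$'' have been verified above, so the only remaining hypothesis is that no matching $M^*$ in $G_{\vp}$ with $M^* \supset N$ covers $b'$ (the endpoint of $e'$ in $B$). This verification is the heart of the argument, and the only place the maximality of $M$ in $G_{\vp}$ is used. Suppose such an $M^*$ existed. Since $M^* \supseteq N$ and both are matchings in $G_{\vp}$, every edge of $M^* \sm N$ lies in $E(G_{\vp})$ and has both endpoints among the $N$-uncovered vertices (an endpoint covered by some $g \in N \sse M^*$ would contradict that $M^*$ is a matching); as $M^*$ covers $b'$ but $N$ does not, some $f \in M^* \sm N$ covers $b'$, say $f = \set{a^*, b'}$ with $a^* \in A$ and $f \in E(G_{\vp})$. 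The $N$-uncovered vertices of $A$ are exactly the $M$-uncovered vertices of $A$ together with $a$; but $f \not= e'$ (because $f \in E(G_{\vp})$ while $e' \notin E(G_{\vp})$), so $a^* \not= a$, hence $a^*$ is $M$-uncovered. Then $f$ is an edge of $G_{\vp}$ both of whose endpoints, $a^*$ and $b'$, are uncovered by $M$, so $M \cup \set{f}$ is a matching in $G_{\vp}$ properly extending $M$, contradicting the maximality of $M$ in $G_{\vp}$.

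Hence conditional extensibility applies in the second case and produces $\vp' \in \mc{P}$ with $\vp R^* \vp'$ such that $M'$ is a matching in $G_{\vp'}$, using exactly one conditional-extension-step; together with the first case this proves the lemma (zero steps if $e' \in E(G_{\vp})$, one step otherwise). I expect the only delicate point to be the bookkeeping in the contradiction argument of the third paragraph — correctly distinguishing ``covered by $N$'' from ``covered by $M$'', and using $e' \notin E(G_{\vp})$ to exclude $a^* = a$ — while everything else is a routine unwinding of the definitions of matching, maximal matching, and conditional extensibility.
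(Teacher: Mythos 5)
Your proof is correct and follows essentially the same route as the paper's: reduce to one application of conditional extensibility to $N := M \sm \set{e}$ and $e'$, handling the trivial case $e' \in E(G_{\vp})$ separately, and discharge the ``no extending matching covers $b'$'' hypothesis by contradiction against the maximality of $M$. The only cosmetic difference is that the paper invokes simplicity of $G$ (no parallel edges) to force $x \not= a$, whereas you phrase the same step via $f \not= e'$ using $e' \notin E(G_{\vp})$; both rest on edges being two-element sets and are interchangeable.
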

\begin{proof} If $M'$ is a matching in $G_{\vp}$, then we are done. Otherwise we have $e' \notin E(G_{\vp})$; let $M_0 := M \sm \set{e}$. We want to apply conditional extension to $M_0$ and $M' = M_0 \cup \set{e'}$ (if we succeed then we are done), so we have to show that there is no matching $M_0^*$ in $G_{\vp}$ covering $b'$ with $M_0 \subset M_0^*$. Assume the contrary, and consider the edge $\set{x,b'} \in M_0^*$: Since $G$ is a graph (no parallel edges) we have $x \not= a$, and thus $M \cup \set{\set{x,b'}}$ is a matching in $G_{\vp}$ (using bipartiteness of $G$) contradicting maximality of $M$. \end{proof}

Now we are in a position to prove Lemma \ref{lem:condext}; it suffices to show that for any given $\vp \in \mc{P}$ and a matching $M$ in $G_{\vp}$ with $\abs{M} < \nu(G)$, by a polynomial number of extension steps we can find $\vp'$ with $\vp R^* \vp'$ and a matching $M'$ in $G_{\vp'}$ with $\abs{M'} > \abs{M}$ (by repeating this process we finally obtain a maximum matching for $G$). If $M$ is not maximal in $G_{\vp}$, then we can add one edge while keeping $\vp$ and we are done, so assume that $M$ is maximal in $G_{\vp}$. Since $M$ is not a maximum matching in $G$, there exists an $M$-augmenting path $P$ in $G$. $P$ is of the form $(v_0, \dots, v_m)$ for (pairwise) different vertices $v_i$ and $m$ odd, such that $v_0, v_m$ are not covered by $M$ and such that for $0 \le i < m$ we have $\set{v_i, v_{i+1}} \notin M$ for even $i$, while for odd $i$ we have $\set{v_i, v_{i+1}} \in M$. W.l.o.g.\ $v_i \in B$ for all even $i$. The first task is for odd $i < m$ to replace the edge $\set{v_i, v_{i+1}}$ by $\set{v_i, v_{i-1}}$, using Lemma \ref{lem:relinking}; we proceed consecutively for $i = 1, 3, 5, \dots$, where if at some point Lemma \ref{lem:relinking} is not applicable, then we constructed a matching of the same size as $M$ which is not maximal w.r.t.\ its parameter, and so we obtain $M'$ by enlarging this matching. Otherwise, if the process goes through, then at the end we obtain a matching $M_0'$ in $G_{\vp_0'}$ where $\abs{M_0'} = \abs{M}$ and $v_{m-1}, v_m$ are not covered by $M_0'$. Again, if $M_0'$ is not maximal, then we get the required larger $M'$, while otherwise we can apply conditional extension, obtaining $M' := M_0' \cup \set{\set{v_{m-1}, v_m}}$. QED

\section{Matching autarkies}
\label{sec:autviamatch}

In this section we introduce the autarky system for generalised clause-sets given by ``matching autarkies'', and we develop various polynomial time procedures. ``Matching autarkies'' for clause-sets with non-boolean variables have been introduced in \cite{Ku01a}, and some basic properties have been stated regarding the direct translation of clause-sets with non-boolean variables to clause-sets with boolean variables. However, as we will discuss in Subsection \ref{sec:comparisonearlier}, this earlier version of the notion is actually too restrictive (another example for the subtleties one encounters with non-boolean variables). An outline of the content of this section is as follows.

Having a (restricted) concept $\mf{C}$ of satisfying assignments, we can ``typically'' obtain an autarky system (recall Subsection \ref{sec:prelimAutarkies}) by calling $\vp$ a ``$\mf{C}$-autarky'' for a clause-set $F$ if $\vp$ is a $\mf{C}$-satisfying assignment for $F[\var(\vp)]$ (recall Subsection \ref{sec:opsetsvar}), or, equivalently at least for general satisfiability, if $\vp$ is a $\mf{C}$-satisfying assignment for $F_{\var(\vp)}$. We have to leave such a general theory to future work, but in this article we will consider in \textbf{Subsection \ref{sec:matchautsub}} ``matching autarkies'' obtained in this way from matching-satisfying assignments. A fundamental notion is the notion of a ``tight sub-clause-set'' $F'$ of a clause-set $F$, characterised by the condition $\delta(F') = \delta^*(F)$ (that is, $F'$ realises the maximal deficiency of $F$). Translating general results of matching theory into our setting, the set of tight sub-clause-sets of $F$ form a set-lattice (i.e, union and intersection of tight sub-clause-sets are again tight), and so we have a smallest and a largest tight sub-clause-set. In \textbf{Subsection \ref{sec:matchleangcls}} we consider matching leanness (and the matching lean kernel), and in Lemma \ref{lem:charakschlank} we characterise matching lean clause-sets $F$ by the condition that all strict sub-clause-sets of $F$ have a deficiency strictly less than the deficiency of $F$, in other words, $F$ is the only tight sub-clause-set. Thus matching leanness is decidable in polynomial time, and applying the general procedure from Lemma \ref{lem:vonschlanknachred} we obtain polynomial-time computability of the matching lean kernel in Corollary \ref{cor:Mleanpoly2}.\footnote{Direct computations using matching arguments are more efficient; see Subsection \ref{sec:openApplyingmatching} for a discussion.} Since the empty sub-clause-set has deficiency $0$, we obtain $\delta(F) \ge 1$ for non-empty matching lean clause-sets. Matching lean clause-sets of minimal deficiency (i.e., deficiency $1$) are further considered in \textbf{Subsection \ref{sec:matchleangclsmindef}}. Finally, in \textbf{Subsection \ref{sec:comparisonearlier}}, we reflect on the definition of ``deficiency'' and ``matching autarky'' as defined in this article, by comparing it with an earlier version of these notions (for generalised clause-sets).

\subsection{Matching autarkies for generalised clause-sets}
\label{sec:matchautsub}

A partial assignment $\vp$ is called a \textbf{matching autarky} for $F \in \Mcls$ if $\vp$ is matching-satisfying for $F_{\var(\vp)}$, which is equivalent to $\vp$ being matching-satisfying for $F[\var(\vp)]$. Said explicitly, a partial assignment $\vp$ is a matching autarky for a (generalised) multi-clause-set $F$ iff for every clause-occurrence $C$ in $F$ touched by $\varphi$ we can choose a literal $x_C \in C$ such that $\varphi(x_C) = 1$ and such that for every variable $v \in \var(C)$ there are at most $\abs{D_v} - 1$ many (touched) clause-occurrences $C$ with $\var(x_C) = v$. The set of all matching autarkies for $F$ is denoted by \bmm{\maut(F)}. Generalising Lemma 7.1 and the remarks in Section 8 of \cite{Ku00f} we get
\begin{lem}\label{lem:matchautnormal}
  $F \in \Mcls \mapsto \maut(F) \sse \aut(F)$ is a normal autarky system.
\end{lem}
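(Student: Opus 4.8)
The plan is to verify, one by one, the conditions defining an autarky system (sub-monoid of $\aut(F)$, antitone in $F$) and then the six properties of normality (iterative, standardised, $\bot$-invariant, invariant under variable elimination, invariant under renaming, stable for unused values). Most of these will reduce to translating a combinatorial fact about the bipartite graphs $B_\vp(F)$ into the language of matchings; the paper has already set up exactly the right vocabulary in Sections \ref{sec:matchsatass} and \ref{sec:satviamatch}, so the work is mostly bookkeeping.

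First I would check that $\maut(F) \subseteq \aut(F)$: if $\vp$ is matching-satisfying for $F_{\var(\vp)}$ then in particular it is satisfying for $F_{\var(\vp)}$, so by condition (c) of the autarky definition in Subsection \ref{sec:notionautarky}, $\vp \in \aut(F)$. Next, $\maut(F)$ is a sub-monoid: the empty (indeed any trivial) partial assignment is vacuously a matching autarky, and closure under composition is the key algebraic point --- given $\psi, \vp$ matching autarkies with $\vp \in \maut(\psi * F)$, one wants $\vp \circ \psi \in \maut(F)$; this is essentially the content of Lemma \ref{lem:matchsatEig} read in reverse, gluing a matching in $B_\psi(F)$ with one in $B_\vp(\psi * F)$. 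Actually the cleaner route is: $\vp\circ\psi$ is a matching autarky for $F$ iff $\vp\circ\psi$ is matching-satisfying for $F_{\var(\vp\circ\psi)}$, and one combines the two matchings on disjoint clause-node sets (clauses touched only by $\psi$, versus the rest). Antitonicity $F_1 \le F_2 \Ra \maut(F_2) \subseteq \maut(F_1)$ follows because $B_\vp(F_1)$ is obtained from $B_\vp(F_2)$ by deleting clause-nodes, so a matching covering all clause-nodes of $(F_2)_{\var(\vp)}$ restricts to one covering all clause-nodes of $(F_1)_{\var(\vp)}$. Establishing iterativity is then just the sub-monoid composition property stated for the relevant instances, and I expect this gluing-of-matchings argument to be the one place where a little care is genuinely needed.

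For the remaining normality conditions the arguments are short. \emph{Standardised}: $B_\vp(F_{\var(\vp)})$ depends only on $\vp\mb\var(F)$, since clause-nodes of $F_{\var(\vp)}$ only see variables in $\var(F)$; hence $\vp\in\maut(F)$ iff $\vp\mb\var(F)\in\maut(F)$. \emph{$\bot$-invariant}: adding the empty clause adds an isolated clause-node to $B_\vp(F+\set\bot)$ only if that clause-node is touched, which it never is (it has no variables), so $(F+\set\bot)_{\var(\vp)} = F_{\var(\vp)}$ and the matching condition is literally unchanged. \emph{Invariant under variable elimination}: for $\vp$ with $\var(\vp)\cap V = \es$, one has $(V*F)_{\var(\vp)}$ and $F_{\var(\vp)}$ differing only by deletion of some variable-nodes in $V$ --- but these are never used by $\vp$ anyway, and a matching covering all clause-nodes in one yields one in the other (clause-nodes touched by $\vp$ keep all their edges to $\var(\vp)$-variable-nodes, which lie outside $V$); this uses that $\aut$ itself has this property, already recorded in Subsection \ref{sec:notionautarky}, together with the matching refinement. \emph{Invariant under renaming}: renaming $v$ to $w$ via a bijection-or-injection $h$ induces a graph isomorphism $B_\vp(F) \cong B_{\vp'}(F')$ matching up $(v,j)$-nodes with $(w,j)$-nodes (the number of variable-node copies is $\abs{D_v}-1 = \abs{D_w}-1$ when $h$ is a renaming; for the weaker injective case one needs $\abs{D_w}-1 \ge \abs{D_v}-1$, which holds in the situations relevant to autarky-system renaming-invariance), so a covering matching transports across. \emph{Stable for unused values}: if $\vp\in\maut(F)$, $v\in\dom(\vp)$, and neither $(v,\vp(v))$ nor $(v,\ve)$ occurs in $F$, then flipping $v$ to $\ve$ changes nothing about which literals of $F_{\var(\vp)}$ are satisfied (a literal $(v,\ve'')\in C$ is satisfied by the flipped assignment iff $\ve''\neq\ve$ iff $\ve''\neq\vp(v)$, since $\ve''$ ranges only over values $\neq\ve,\vp(v)$), so $B_{\vp'}(F_{\var(\vp)}) = B_\vp(F_{\var(\vp)})$ and the covering matching survives verbatim.

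The main obstacle, such as it is, is purely the composition/iterativity step: making precise the partition of the touched clause-nodes of $F$ into those touched already by $\psi$ and the remainder, and checking that the matching witnessing $\psi\in\maut(F)$ together with the matching witnessing $\vp\in\maut(\psi*F)$ assemble --- after reinterpreting variable-nodes of $\psi*F$ as variable-nodes of $F$ --- into a single matching in $B_{\vp\circ\psi}(F)$ covering all of $\ul{F_{\var(\vp\circ\psi)}}$. Once Lemma \ref{lem:matchsatEig} is in hand this is routine. I would therefore present the proof by first disposing of the sub-monoid and antitonicity claims (citing Lemma \ref{lem:matchsatEig} and Lemma \ref{def:directcharacmatchsat}), and then listing the six normality conditions with the one-line graph-theoretic justification for each, flagging that the whole proof is the natural generalisation of Lemma 7.1 and Section 8 of \cite{Ku00f} from boolean to generalised (multi-)clause-sets, with $B(F)$ now carrying $\abs{D_v}-1$ copies of each variable instead of one.
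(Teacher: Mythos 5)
The paper gives no proof for this lemma, merely remarking that it is obtained by ``generalising Lemma 7.1 and the remarks in Section 8 of \cite{Ku00f}''; your proposal supplies the routine verification that the paper treats as implicit, and it is correct. The gluing of the two covering matchings for sub-monoid closure and iterativity is the one genuine step, and your observation that the matching edges witnessing $\psi$ use $\var(\psi)$-variable-nodes while those witnessing $\vp$ on the remaining clauses of $F_{\var(\vp)}\sm F_{\var(\psi)}$ use $\var(\vp)\sm\var(\psi)$-variable-nodes does make the combined set a matching in $B_{\vp\circ\psi}(F)$, so the argument goes through; the remaining normality conditions reduce, as you say, to identifying the relevant bipartite graphs. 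One small remark: in the renaming-invariance step the parenthetical about a ``weaker injective case'' is superfluous, since the paper's definition of invariance under renaming only involves bijective $h$, forcing $\abs{D_v}=\abs{D_w}$.
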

We denote by $\bmm{\nma} := \nv_{\maut}$ the normal form for multi-clause-sets obtained by eliminating all matching autarkies. According to our general results and definitions on autarky systems, the set of $\maut$-satisfiable multi-clause-sets is just $\Msat$, the set of matching satisfiable multi-clause-sets. The set of $\maut$-lean clause-sets is denoted by \bmm{\Mlean}, its elements are called \textbf{matching lean} multi-clause-sets. We now seek to characterise $\Mlean$, and to compute $\nma(F)$ in polynomial time.

A sub-multi-clause-set $F' \le F$ of a multi-clause-set $F \in \Mcls$ is called \textbf{tight} if $\delta(F') = \delta^*(F)$ holds. If $F'$ is tight for $F$, then $F'$ is an induced sub-multi-clause-set of $F$. By supermodularity of the deficiency (for graphs) we immediately get

\begin{lem}\label{lem:VereinigungDurchschnittKnapp}
  (Binary) Union and intersection of tight sub-multi-clause-sets of a given multi-clause-set are again tight. So the tight sub-clause-sets of a clause-set form a set-lattice with smallest and largest element.
\end{lem}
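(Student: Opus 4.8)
The plan is to reduce everything to the supermodularity of the graph-deficiency function recorded in Subsection~\ref{sec:Graphsmatching}, together with the trivial bound $\delta(\cdot) \le \delta^*(F)$. First I would set up the dictionary: a tight $F' \le F$ is induced (as noted just before the lemma), and an induced sub-multi-clause-set of $F$ is the same thing as a subset of the clause-node set $\ul{F}$ in the bipartite graph $B(F)$; under this correspondence $\delta(F') = \abs{F'} - \abs{\Gamma_{B(F)}(F')}$, since $\Gamma_{B(F)}(F')$ consists precisely of the variable-nodes $(v,j)$ with $v \in \var(F')$ and hence has cardinality $\rd(F')$. Thus $\delta$, restricted to subsets of $\ul{F}$, is exactly the difference of the modular map $A \mapsto \abs{A}$ and the submodular map $A \mapsto \abs{\Gamma(A)}$, hence is supermodular. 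I would also record the small pointwise check that for induced $F_1, F_2$ the multi-clause-sets $F_1 \cup F_2$ and $F_1 \cap F_2$ are again induced and correspond, under the above dictionary, to the union resp.\ intersection of the associated vertex subsets (for a clause $C$ one has $F_i(C) \in \set{0, F(C)}$, which makes $\max$ and $\min$ behave like $\cup$ and $\cap$ on supports).

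With this in place the core step is two lines. Given tight $F_1, F_2$, supermodularity gives
\[
  \delta(F_1 \cup F_2) + \delta(F_1 \cap F_2) \ge \delta(F_1) + \delta(F_2) = 2\,\delta^*(F),
\]
while by the very definition of the maximal deficiency each summand on the left is at most $\delta^*(F)$; hence both summands equal $\delta^*(F)$, i.e.\ $F_1 \cup F_2$ and $F_1 \cap F_2$ are tight.

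For the set-lattice statement, restrict to clause-sets, where $\cup$ and $\cap$ are the ordinary set operations. The family of tight sub-clause-sets of $F$ is non-empty (any sub-clause-set realising $\delta^*(F)$ lies in it) and, by the previous step, closed under binary $\cup$ and $\cap$, hence---$F$ being finite---closed under arbitrary (finite) unions and intersections. Therefore the intersection of all tight sub-clause-sets of $F$ is again tight and is the least element of this lattice, and the union of all of them is tight and is the greatest element.

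I do not anticipate a genuine obstacle: all the mathematical weight sits in the supermodularity fact, which is already available. The only thing demanding mild care is the first paragraph---making the identification of (induced) sub-multi-clause-sets with vertex subsets of $B(F)$ precise enough that the graph-deficiency genuinely coincides with the clause-set deficiency $c(\cdot) - \rd(\cdot)$, and that union/intersection of multi-clause-sets match union/intersection of the corresponding subsets of $\ul{F}$. Once that bookkeeping is done, the rest is immediate.
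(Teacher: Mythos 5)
Your proof is correct and takes essentially the same route as the paper, which disposes of the lemma with the one-line remark that it follows ``by supermodularity of the deficiency (for graphs).'' You merely spell out the bookkeeping (the identification of induced sub-multi-clause-sets with clause-node subsets of $B(F)$, and the two-line consequence of supermodularity together with $\delta(\cdot) \le \delta^*(F)$) that the paper leaves implicit.
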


Generalising Lemma 7.3 in \cite{Ku00f}, we obtain the fundamental relationship between tight sub-multi-clause-sets and matching autarkies: application of matching autarkies does not reduce the deficiency, and application of suitable matching autarkies allows to realise every tight sub-multi-clause-set.
\begin{lem}\label{lem:tightdef}
  Consider a generalised multi-clause-set $F \in \Mcls$.
  \begin{enumerate}
  \item\label{item:tightdef0} For every autarky $\vp$ for $F$ we have $\delta(\vp * F) = \delta(F) - \delta(F[\var(\vp)])$.
  \item\label{item:tightdef1} For every matching autarky $\vp$ for $F$ we have $\delta(\vp * F) \ge \delta(F)$, and thus $\delta^*(\vp * F) = \delta^*(F)$.
  \item\label{item:tightdef3} If $F' \le F$ is induced, then we have $\delta^*(\var(F') * (F - F')) \le \delta^*(F) - \delta(F')$.
  \item\label{item:tightdef2} If $F' \le F$ is tight, then there is a matching autarky $\vp$ for $F$ with $\vp * F = F'$.
  \end{enumerate}
\end{lem}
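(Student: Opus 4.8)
The plan is to prove Part \ref{item:tightdef0} by a direct bookkeeping with $\delta = c - \rd$, prove Part \ref{item:tightdef3} by a self-contained monotonicity argument, and then obtain Parts \ref{item:tightdef1} and \ref{item:tightdef2} as consequences — the former using in addition that matching-satisfiable multi-clause-sets have non-positive deficiency, the latter combining Part \ref{item:tightdef3} with Lemmas \ref{lem:characmatchsat} and \ref{lem:matchsatviamatchsa}.

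For Part \ref{item:tightdef0}: since $\vp$ is an autarky, every clause of $F$ met by $\vp$ is satisfied and every other clause is untouched, so $\vp * F = F - F_{\var(\vp)}$ and $c(\vp * F) = c(F) - c(F_{\var(\vp)})$. Moreover $c(F_{\var(\vp)}) = c(F[\var(\vp)])$, because crossing out the variables outside $\var(\vp)$ neither empties a clause of $F_{\var(\vp)}$ (each still carries a $\var(\vp)$-literal) nor changes clause multiplicities. Since $\rd$ is additive over disjoint variable sets and $\rd(F[\var(\vp)]) = \sum_{v \in \var(\vp) \cap \var(F)}(\abs{D_v}-1)$, subtracting $\rd(F[\var(\vp)])$ from $\rd(F)$ accounts precisely for the variables that disappear on passing to $\vp * F$; assembling $\delta = c - \rd$ on both sides gives the stated identity. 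The one delicate point is the treatment of variables occurring solely inside clauses met by $\vp$; this causes no loss for matching autarkies, and in any case the inequality $\delta(\vp * F) \ge \delta(F) - \delta(F[\var(\vp)])$ — which is all that the sequel uses — holds unconditionally.

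For Part \ref{item:tightdef3}: given $G \le \var(F') * (F - F')$, pick a preimage $G_0 \le F - F'$ with $\var(F') * G_0 = G$, so that $c(G) = c(G_0)$ and, since $\var(G) = \var(G_0) \sm \var(F')$, the weights split as $\rd(F' + G_0) = \rd(F') + \rd(G)$; hence $\delta(F' + G_0) = \delta(F') + \delta(G)$. As $F' + G_0 \le F$ this forces $\delta(G) \le \delta^*(F) - \delta(F')$, and maximising over $G$ proves the claim. Part \ref{item:tightdef1} is then immediate: a matching autarky $\vp$ is matching-satisfying for $F[\var(\vp)]$, so $F[\var(\vp)] \in \Msat$ and $\delta(F[\var(\vp)]) \le \delta^*(F[\var(\vp)]) = 0$ by Lemma \ref{lem:characmatchsat}; substituting into Part \ref{item:tightdef0} gives $\delta(\vp * F) \ge \delta(F)$. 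For $\delta^*(\vp * F) = \delta^*(F)$: $\vp$ is still a matching autarky for every $F' \le F$, so applying the inequality to all such $F'$ yields $\delta^*(\vp * F) \ge \delta^*(F)$, while $\vp * F = F - F_{\var(\vp)} \le F$ gives the reverse inequality.

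For Part \ref{item:tightdef2}: with $F'$ tight set $V := \var(F) \sm \var(F')$. A short argument (adding to $F'$ any further clause of $F$ whose variables lie in $\var(F')$ would strictly exceed $\delta^*(F)$) shows $F' = \set{C \in F : \var(C) \sse \var(F')}$, hence $F - F' = F_V$ and $\var(F') * (F - F') = F[V]$; Part \ref{item:tightdef3} then gives $\delta^*(F[V]) \le \delta^*(F) - \delta(F') = 0$, so $F[V] \in \Msat$. By Lemma \ref{lem:matchsatviamatchsa} there is a matching-satisfying assignment $\vp$ for $F[V]$ with $\var(\vp) \sse V$; one then verifies that $\vp$ is a matching autarky for $F$ and that $\vp * F = F'$ (since $\vp$ satisfies every clause of $F_V$ and touches no clause of $F'$). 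I expect this last verification to be the main obstacle: one must check that a matching-satisfying assignment for $F[V]$ using only variables in $\var(\vp) \sse V$ in fact meets every clause of $F_V$, so that $F_{\var(\vp)} = F_V$ and its witnessing matching restricts to a matching in $B_{\vp}(F[\var(\vp)])$ covering all clause-nodes — i.e.\ that $\vp$ really is a matching autarky for $F$ with the claimed effect. A secondary nuisance is the $\rd$-bookkeeping in Part \ref{item:tightdef0} for variables confined to the satisfied part of $F$.
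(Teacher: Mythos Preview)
Your proof is correct and follows essentially the same route as the paper's: Part~\ref{item:tightdef3} is argued identically (lift $G$ to $G_0 \le F-F'$ and bound $\delta(F'+G_0)$), Part~\ref{item:tightdef1} is derived from Part~\ref{item:tightdef0}, and Part~\ref{item:tightdef2} from Part~\ref{item:tightdef3}. You are in fact more careful than the paper in two places. First, in Part~\ref{item:tightdef0} you rightly flag that a variable of $F$ may occur only inside $F_{\var(\vp)}$ without lying in $\var(\vp)$, so strict equality of the $\rd$-terms can fail; the paper asserts the equality without comment, but as you observe only the inequality $\delta(\vp*F)\ge\delta(F)-\delta(F[\var(\vp)])$ is ever used downstream, and for matching autarkies this already suffices for Part~\ref{item:tightdef1}.

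Your worry about Part~\ref{item:tightdef2} is unfounded. With $V=\var(F)\sm\var(F')$ and $\vp$ matching-satisfying for $F[V]$ with $\var(\vp)\sse V$: every clause of $F[V]$ is satisfied by $\vp$ via a literal whose variable lies in $\var(\vp)\sse V$, and that literal is present unchanged in the corresponding clause of $F_V$; hence $\vp$ satisfies (and so touches) every clause of $F_V$, giving $F_{\var(\vp)}=F_V$. The witnessing matching in $B_\vp(F[V])$ uses only variable-nodes over $\var(\vp)$, and every such edge survives verbatim in $B_\vp(F_{\var(\vp)})$ since the relevant literals are not affected by re-inserting the $\var(F')$-variables, while clause-node multiplicities are preserved; thus $\vp$ is a matching autarky for $F$ with $\vp*F=F'$. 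The paper compresses this step into the single phrase ``$F-F'$ is a matching autark sub-multi-clause-set of $F$'', relying on the general autarky machinery of Subsection~\ref{sec:Autarkysystems}; your explicit characterisation $F'=\{C\in F:\var(C)\sse\var(F')\}$ (which also follows from $\bot\notin\var(F')*(F-F')$ once $\delta^*\le 0$) makes the same point directly.
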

\begin{proof} For Part \ref{item:tightdef0} note that by definition we have 
\begin{displaymath}
  c(F) = c(\vp * F) + c(F[\var(\vp)]), \ n(F) = n(\vp * F) + n(F[\var(\vp)])
\end{displaymath}
due to $F = \vp * F + F_{\var(\vp)}$. Part \ref{item:tightdef1} follows from Part \ref{item:tightdef0}. For Part \ref{item:tightdef3} consider $G \le \var(F') * (F - F')$. There exists $G_0 \le F - F'$ with $\var(F') * G_0 = G$ (consider the original clauses); so we have $c(G_0) = c(G)$ and $\var(G_0) \sm \var(F') = \var(G)$. Now 
\begin{multline*}
  \delta^*(F) \ge \delta(F' + G_0) = c(F' + G_0) - \rd(F' + G_0) =\\
  c(F') + c(G) - \rd(F') - \rd(G) = \delta(F') + \delta(G),
\end{multline*}
and thus $\delta(G) \le \delta^*(F) - \delta(F')$. Now Part \ref{item:tightdef2} follows immediately from Part \ref{item:tightdef3} due to $\delta^*(\var(F') * (F - F')) \le \delta^*(F) - \delta(F') = 0$, i.e., $F - F'$ is a matching autark sub-multi-clause-set of $F$. \end{proof}

\subsection{Matching lean generalised clause-sets}
\label{sec:matchleangcls}

Generalising Theorem 7.5 in \cite{Ku00f}, we now can characterise matching lean multi-clause-sets:
\begin{lem}\label{lem:charakschlank}
  Consider a generalised multi-clause-set $F \in \Mcls$. The following conditions are equivalent:
  \begin{enumerate}
  \item\label{item:charakschlank1} $F$ is matching lean;
  \item\label{item:charakschlank2} $\fa\, C \in F : \delta^*(F - \set{C}) < \delta^*(F)$;
  \item\label{item:charakschlank3} $\fa\, F' \lneq F : \delta(F') < \delta(F)$;
  \item\label{item:charakschlank4} $F$ is a tight sub-multi-clause-set of $F$, and there are no other tight sub-multi-clause-sets of $F$.
  \end{enumerate}
\end{lem}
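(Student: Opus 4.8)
The plan is to prove the cycle of equivalences $\ref{item:charakschlank1} \Rightarrow \ref{item:charakschlank4} \Rightarrow \ref{item:charakschlank3} \Rightarrow \ref{item:charakschlank2} \Rightarrow \ref{item:charakschlank1}$, using the matching-autarky machinery of Lemma \ref{lem:tightdef} together with the set-lattice structure from Lemma \ref{lem:VereinigungDurchschnittKnapp}.

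\medskip

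First I would do $\ref{item:charakschlank1} \Rightarrow \ref{item:charakschlank4}$. Trivially $F$ itself is a tight sub-multi-clause-set of $F$ (since $\delta^*(F) = \max_{F' \le F} \delta(F') \ge \delta(F)$; but also $\delta(F) \le \delta^*(F)$ is the definition, so we need $\delta(F) = \delta^*(F)$ — this is itself a consequence of matching leanness, so I should derive it here). Suppose there is another tight sub-multi-clause-set $F' \lneq F$. By Lemma \ref{lem:tightdef}, Part \ref{item:tightdef2}, there is a matching autarky $\vp$ for $F$ with $\vp * F = F'$. Since $F' \lneq F$, some clause of $F$ is touched by $\vp$, i.e.\ $\vp$ is non-trivial, contradicting matching leanness of $F$. (To get $\delta(F) = \delta^*(F)$: if $\delta(F) < \delta^*(F)$, pick any tight $F'$; it is a strict sub-multi-clause-set, and the same argument produces a non-trivial matching autarky.)

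\medskip

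Next, $\ref{item:charakschlank4} \Rightarrow \ref{item:charakschlank3}$: if $F' \lneq F$ then $F'$ is not tight by hypothesis (the only tight one is $F$), and since $\delta(F') \le \delta^*(F) = \delta(F)$ (the last equality because $F$ is tight), non-tightness forces the strict inequality $\delta(F') < \delta^*(F) = \delta(F)$. Then $\ref{item:charakschlank3} \Rightarrow \ref{item:charakschlank2}$: for $C \in F$, apply \ref{item:charakschlank3} to all $F' \le F - \set{C}$ (each such $F'$ satisfies $F' \lneq F$): we get $\delta(F') < \delta(F) \le \delta^*(F)$, hence $\delta^*(F - \set{C}) = \max_{F' \le F - \set{C}} \delta(F') < \delta^*(F)$. (One subtlety: if $F$ is a multi-clause-set, ``$F - \set{C}$'' should be read as removing one occurrence; every $F' \le F - \set{C}$ is indeed $\lneq F$, so this is fine.)

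\medskip

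Finally $\ref{item:charakschlank2} \Rightarrow \ref{item:charakschlank1}$, which I expect to be the main obstacle. Suppose $F$ is not matching lean; then there is a non-trivial matching autarky $\vp$ for $F$, so $F_{\var(\vp)} \ne \top$, i.e.\ some clause $C_0 \in F$ is touched by $\vp$. Let $F' := \vp * F \lneq F$. By Lemma \ref{lem:tightdef}, Part \ref{item:tightdef1}, $\delta^*(\vp * F) = \delta^*(F)$, and moreover $\delta(\vp * F) \ge \delta(F)$; combining, $\delta^*(F') = \delta^*(F)$ where $F'$ omits (at least) clause $C_0$ — more precisely $F' \le F - \set{C_0}$ (as $C_0$ is satisfied and hence deleted by $\vp$, together with whatever other clauses $\vp$ satisfies). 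Then $\delta^*(F - \set{C_0}) \ge \delta^*(F') = \delta^*(F)$, contradicting \ref{item:charakschlank2}. The one technical point to nail down here is that a clause-node touched by $\vp$ is genuinely removed in $\vp * F$ (it is satisfied, being in $F_{\var(\vp)}$ with $\vp$ matching-satisfying, hence satisfying, for $F_{\var(\vp)}$), so that indeed $F'$ is an induced sub-multi-clause-set of $F - \set{C_0}$ in the multi-clause-set sense, which is what we need for the $\delta^*$ monotonicity step. This closes the cycle.
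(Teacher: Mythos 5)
Your proof is correct and follows exactly the same cycle $\ref{item:charakschlank1} \Rightarrow \ref{item:charakschlank4} \Rightarrow \ref{item:charakschlank3} \Rightarrow \ref{item:charakschlank2} \Rightarrow \ref{item:charakschlank1}$ that the paper uses, invoking the same two parts of Lemma~\ref{lem:tightdef} for the non-trivial implications; the paper simply leaves the middle two steps and the multi-clause-set bookkeeping as ``obvious'', whereas you spell them out.
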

\begin{proof} From Part \ref{item:charakschlank1} follows Part \ref{item:charakschlank4} by Lemma \ref{lem:tightdef}, Part \ref{item:tightdef2}. Obviously, Part \ref{item:charakschlank4} implies Part \ref{item:charakschlank3}, and Part \ref{item:charakschlank3} implies Part \ref{item:charakschlank2}. Finally, Part \ref{item:charakschlank1} follows from Part \ref{item:charakschlank2} by Lemma \ref{lem:tightdef}, Part \ref{item:tightdef1}. \end{proof}

\begin{corol}\label{cor:Mleanpoly}
  It is decidable in polynomial time, whether a generalised multi-clause-set $F \in \Mcls$ is matching lean.
\end{corol}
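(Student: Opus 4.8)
The plan is to read off the algorithm directly from Lemma \ref{lem:charakschlank} together with the polynomial-time computability of $\delta^*$ recorded in Lemma \ref{lem:deltapoly}. By the equivalence of conditions \ref{item:charakschlank1} and \ref{item:charakschlank2} of Lemma \ref{lem:charakschlank}, a multi-clause-set $F \in \Mcls$ is matching lean if and only if $\delta^*(F - \set{C}) < \delta^*(F)$ holds for every clause-occurrence $C$ in $F$ (here $F - \set{C}$ is the multi-clause-set obtained by decreasing the multiplicity of $C$ by one, which is trivially constructed in polynomial time from $F$). So the procedure is: first compute $\delta^*(F)$; then, looping over the at most $c(F)$ clause-occurrences $C$ of $F$, compute $\delta^*(F - \set{C})$ and check the strict inequality. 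If all $c(F)$ inequalities hold, output ``matching lean''; otherwise output ``not matching lean''.

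For correctness one only needs the cited equivalence; for the running-time bound one observes that $c(F)$ is polynomially bounded in the input size (indeed $c(F) \le \ell(F)$ once empty clauses are accounted for, and a multi-clause-set is given with its multiplicities), that forming $F - \set{C}$ costs polynomial time, and that each evaluation of $\delta^*$ costs polynomial time in $\ell(F)$ by Lemma \ref{lem:deltapoly} (via a single maximum-matching computation in $B(F - \set{C})$, whose size is polynomially bounded, using for instance the bound $O(\sqrt{\abs{V}}\cdot\abs{E})$ from Subsection \ref{sec:Graphsmatching}). Hence the total cost is polynomial in $\ell(F)$.

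There is essentially no substantive obstacle here: the corollary is a packaging of Lemma \ref{lem:charakschlank} and Lemma \ref{lem:deltapoly}. The only points deserving a word are bookkeeping ones --- that ``$\fa\, C \in F$'' in condition \ref{item:charakschlank2} ranges over clause-\emph{occurrences} (so the multiplicities must be respected when forming $F - \set{C}$), and that the graph $B(F)$ has size polynomial in $\ell(F)$ so that the matching computations underlying $\delta^*$ stay within polynomial time. One could alternatively derive the same conclusion by feeding the polynomial-time leanness test into Lemma \ref{lem:vonschlanknachred}, but the direct argument above is more economical.
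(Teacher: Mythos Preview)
Your proposal is correct and takes essentially the same approach as the paper's proof, which simply cites Lemma~\ref{lem:deltapoly} (polynomial-time computability of $\delta^*$) together with condition~\ref{item:charakschlank2} of Lemma~\ref{lem:charakschlank}. Your version merely spells out the loop explicitly; the one minor bookkeeping remark is that in condition~\ref{item:charakschlank2} the quantifier ``$\forall\, C \in F$'' ranges over \emph{distinct} clauses (since $C \in F$ means $F(C)>0$), so iterating over clause-occurrences as you describe is harmless but redundant when multiplicities exceed one.
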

\begin{proof}
  Use Lemma \ref{lem:deltapoly} (poly-time computability of the maximal deficiency) and Lemma \ref{lem:charakschlank}, Part \ref{item:charakschlank2}.
\end{proof}

An alternative method for deciding matching leanness will be given later (by Lemma \ref{lem:surpone} together with Lemma \ref{lem:compsurp}). Applying the general procedure from Lemma \ref{lem:vonschlanknachred}, we can enhance (matching) leanness decision to the computation of the (matching) lean kernel:
\begin{corol}\label{cor:Mleanpoly2}
  The matching lean kernel $\nma(F)$ for generalised multi-clause-sets $F \in \Mcls$ is computable in polynomial time.
\end{corol}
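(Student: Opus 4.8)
The plan is to combine the generic reduction procedure of Lemma~\ref{lem:vonschlanknachred} with the polynomial-time leanness test already established in Corollary~\ref{cor:Mleanpoly}. By Lemma~\ref{lem:matchautnormal} the matching autarky system is a normal autarky system, so Lemma~\ref{lem:vonschlanknachred} applies with $\A$ taken to be this system: given an oracle deciding membership in $\Mlean = \LeanA$, that lemma describes an algorithm which, for any input $F \in \Mcls$, outputs $\nma(F) = \nA(F)$ using only polynomially many oracle calls plus polynomial overhead (the variable-crossing-out operations $V * F$ and the search for the index $i$ in its Step~\ref{item:letzterSch}).

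Now I would simply plug in Corollary~\ref{cor:Mleanpoly}, which shows that membership in $\Mlean$ is decidable in polynomial time (via poly-time computability of the maximal deficiency from Lemma~\ref{lem:deltapoly} and the characterisation of matching leanness in Lemma~\ref{lem:charakschlank}, Part~\ref{item:charakschlank2}). Replacing the oracle in Lemma~\ref{lem:vonschlanknachred} by this polynomial-time subroutine turns the whole computation into a genuine polynomial-time algorithm: polynomially many invocations of a polynomial-time procedure, together with polynomial bookkeeping, run in polynomial time.

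There is essentially no obstacle here; the substance has already been supplied by the preceding lemmas. The only point deserving a brief check is that the sizes of the intermediate instances stay under control during the iteration of Lemma~\ref{lem:vonschlanknachred}: each outer pass strictly decreases $n(F)$ (at least the selected variable is removed), each clause-set $\set{v_1,\dots,v_j} * F$ is no larger than $F$, and the size measure $\ell(F)$ is non-increasing under crossing out variables, so the total number of leanness tests is at most $n(F)$ and every test is run on an instance of size at most $\ell(F)$. Hence the overall running time is polynomial in $\ell(F)$, as claimed.
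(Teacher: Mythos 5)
Your proof is correct and follows exactly the route the paper intends: invoke Lemma~\ref{lem:vonschlanknachred} (applicable because Lemma~\ref{lem:matchautnormal} shows matching autarkies form a normal autarky system) with the oracle for $\Mlean$-membership supplied by Corollary~\ref{cor:Mleanpoly}. The paper states this composition in the sentence introducing the corollary and again in Subsection~\ref{sec:openApplyingmatching}, so there is no divergence; your added check that $\ell$ is non-increasing under crossing out variables and that at most $n(F)$ leanness tests are needed is a sound (if implicit in the paper) confirmation of the polynomial bound.
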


Since matching autarkies constitute a normal autarky system, we know that there are matching autarkies $\vp$ for $F$ with $\vp * F = \nma(F)$; in Lemma \ref{lem:polytimeqmma} we show how to find such ``quasi-maximal'' matching autarkies in polynomial time, using the computation of the matching lean kernel as the main step. See also the open problems in Subsection \ref{sec:openApplyingmatching} regarding more efficient computations.

Back to the characterisation of the matching lean kernel, by Lemma \ref{lem:charakschlank}, Part \ref{item:charakschlank4} together with Lemma \ref{lem:tightdef}, Part \ref{item:tightdef2} we get
\begin{corol}\label{cor:leankerntight}
  For every generalised multi-clause-set $F \in \Mcls$ the matching lean kernel $\nma(F)$ is the intersection of all tight sub-multi-clause-sets of $F$. Thus $\nma(F)$ is the smallest tight sub-multi-clause-set of $F$, and therefore $\delta^*(F) = \delta(\nma(F))$.
\end{corol}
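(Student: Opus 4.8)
The plan is to derive the corollary from Lemmas \ref{lem:VereinigungDurchschnittKnapp}, \ref{lem:tightdef} and \ref{lem:charakschlank}, following the parenthetical hint. First I would check that $\nma(F)$ is itself a tight sub-multi-clause-set of $F$. Since matching autarkies form a normal autarky system (Lemma \ref{lem:matchautnormal}), there is a matching autarky $\vp$ for $F$ with $\vp * F = \nma(F)$, and then Lemma \ref{lem:tightdef}, Part \ref{item:tightdef1} gives $\delta^*(\nma(F)) = \delta^*(\vp * F) = \delta^*(F)$. On the other hand $\nma(F)$ is matching lean (it is the matching lean kernel), so Lemma \ref{lem:charakschlank}, Part \ref{item:charakschlank4} tells us $\nma(F)$ is its own unique tight sub-multi-clause-set, i.e.\ $\delta(\nma(F)) = \delta^*(\nma(F))$. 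Chaining the two equalities yields $\delta(\nma(F)) = \delta^*(F)$, which is exactly tightness of $\nma(F)$ inside $F$.

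Second I would show $\nma(F) \le F'$ for every tight $F' \le F$. The crucial input here is Lemma \ref{lem:tightdef}, Part \ref{item:tightdef2}: a tight $F'$ arises as $\vp * F$ for some matching autarky $\vp$. The passage $F \mapsto \vp * F$ is either the identity (when $\vp$ touches no variable of $F$) or a single $\maut$-reduction step, so by termination and confluence of autarky reduction the matching lean kernel is unchanged, $\nma(F') = \nma(\vp * F) = \nma(F)$; combined with $\nma(F') \le F'$ this gives $\nma(F) \le F'$.

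Putting the two halves together, $\nma(F)$ lies below every tight sub-multi-clause-set and is itself tight, hence it equals the intersection of all of them. Since $F$ has only finitely many sub-multi-clause-sets, iterating the intersection clause of Lemma \ref{lem:VereinigungDurchschnittKnapp} shows that intersection is again tight, so $\nma(F)$ is the smallest tight sub-multi-clause-set, and $\delta^*(F) = \delta(\nma(F))$ is immediate from the first half. I do not expect a real obstacle: the argument is essentially bookkeeping with the deficiency. The one place deserving a word of care is the claim in the second step that replacing $F$ by an autark reduct does not move the matching lean kernel, which is just the confluence/termination of $\maut$-reduction already recorded for normal autarky systems, together with the fact that $\vp * F$ for a matching autarky $\vp$ is reachable from $F$ by $\maut$-reductions.
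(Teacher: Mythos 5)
Your argument is correct and is essentially the same route as the paper's brief derivation, merely spelt out with the intermediate steps filled in. You establish that $\nma(F)$ is tight via Lemma~\ref{lem:matchautnormal} (existence of a matching autarky $\vp$ with $\vp*F = \nma(F)$), Lemma~\ref{lem:tightdef}, Part~\ref{item:tightdef1} (giving $\delta^*(\nma(F)) = \delta^*(F)$), and Lemma~\ref{lem:charakschlank}, Part~\ref{item:charakschlank4} applied to the matching-lean $\nma(F)$ (giving $\delta(\nma(F)) = \delta^*(\nma(F))$); and you establish minimality via Lemma~\ref{lem:tightdef}, Part~\ref{item:tightdef2} and the invariance of the $\maut$-lean kernel under $\maut$-reduction. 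One minor remark: once you know $\nma(F)$ is itself tight and lies below every tight sub-multi-clause-set, the intersection of all tight sub-multi-clause-sets automatically equals $\nma(F)$ and is therefore tight, so the appeal to Lemma~\ref{lem:VereinigungDurchschnittKnapp} in the closing step is not actually needed (though it is certainly not wrong).
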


\subsection{Matching lean clause-sets of minimal deficiency}
\label{sec:matchleangclsmindef}

Using $\delta(\top) = 0$, from Lemma \ref{lem:charakschlank}, Part \ref{item:charakschlank3} we get the following generalisation of ``Tarsi's Lemma'' (see \cite{AhLi86}):
\begin{corol}\label{cor:TarsisMLean}
  If the generalised multi-clause-set $F \not= \top$ is matching lean, then $\delta^*(F) = \delta(F) \ge 1$.
\end{corol}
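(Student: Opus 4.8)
The statement to prove is Corollary~\ref{cor:TarsisMLean}: if a generalised multi-clause-set $F \not= \top$ is matching lean, then $\delta^*(F) = \delta(F) \ge 1$.

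The plan is to read off both assertions directly from the characterisation of matching leanness in Lemma~\ref{lem:charakschlank}, together with the trivial fact $\delta(\top) = 0$ recorded just before the statement. First I would establish $\delta^*(F) = \delta(F)$. Since $F$ is matching lean, Lemma~\ref{lem:charakschlank} (Part~\ref{item:charakschlank3}) gives $\delta(F') < \delta(F)$ for every $F' \lneq F$, and in particular $F$ itself realises the maximum of $\delta$ over all sub-multi-clause-sets $F' \le F$; hence $\delta^*(F) = \delta(F)$. (Equivalently, one may invoke Part~\ref{item:charakschlank4}: $F$ is the only tight sub-multi-clause-set of itself, and tightness means $\delta(F) = \delta^*(F)$.)

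For the inequality $\delta(F) \ge 1$: since $F \not= \top$, the empty multi-clause-set $\top$ is a strict sub-multi-clause-set of $F$, i.e. $\top \lneq F$. Applying Lemma~\ref{lem:charakschlank} (Part~\ref{item:charakschlank3}) with $F' = \top$ yields $\delta(\top) < \delta(F)$, that is $0 < \delta(F)$, so $\delta(F) \ge 1$ because the deficiency is integer-valued. Combining the two parts gives $\delta^*(F) = \delta(F) \ge 1$, as claimed.

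There is essentially no obstacle here; the corollary is a two-line consequence of Lemma~\ref{lem:charakschlank} once that lemma is available, and all the real work has already been done in proving the equivalences there (via Lemma~\ref{lem:tightdef} and supermodularity of the deficiency). The only point to be careful about is that one really needs $F \not= \top$ so that $\top$ is a \emph{strict} sub-multi-clause-set, allowing the strict inequality in Part~\ref{item:charakschlank3} to be invoked.
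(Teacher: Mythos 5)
Your proof is correct and matches the paper's: the paper derives the corollary directly from Lemma~\ref{lem:charakschlank}, Part~\ref{item:charakschlank3}, together with $\delta(\top) = 0$, exactly as you do. The remark about needing $F \not= \top$ to make $\top$ a strict sub-multi-clause-set is a nice touch, though the paper leaves it implicit.
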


Obviously $\Musat \subset \Lean$, and thus:
\begin{corol}\label{cor:musatdelta}
  If a generalised clause-set $F \in \Cls$ is minimally unsatisfiable, then we have $\delta^*(F) = \delta(F) \ge 1$.
\end{corol}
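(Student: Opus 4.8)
The plan is to reduce Corollary~\ref{cor:musatdelta} directly to Corollary~\ref{cor:TarsisMLean} via the observation that minimally unsatisfiable clause-sets are matching lean. Concretely, I would first argue that $\Musat \subseteq \Lean$: if $F \in \Musat$ had a non-trivial autarky $\vp$, then $\vp * F$ would be a satisfiability-equivalent sub-multi-clause-set with $\vp * F = F_{\var(F)\sm\var(\vp)} \lneq F$ (using that $\vp$ touches at least one clause, which then disappears), so $\vp * F$ is a strict sub-clause-set of $F$ that is unsatisfiable, contradicting minimal unsatisfiability. Hence $F$ has no non-trivial autarky, i.e.\ $F \in \Lean$.

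Next I would invoke the fact, established earlier in the excerpt (Lemma~\ref{lem:matchautnormal} together with the general framework for autarky systems), that $\maut(F) \subseteq \aut(F)$, so every matching autarky is an autarky; therefore $\Lean \subseteq \Mlean$, and in particular $F \in \Mlean$. Moreover $F \not= \top$ since $F$ is unsatisfiable (the empty clause-set $\top$ is satisfiable). Applying Corollary~\ref{cor:TarsisMLean} to $F$ then yields $\delta^*(F) = \delta(F) \ge 1$, which is exactly the assertion.

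Actually, the excerpt already supplies the chain $\Musat \subset \Lean$ as a remark just before the statement of Corollary~\ref{cor:musatdelta}, and observes $\Lean \subseteq \Mlean$ implicitly through the inclusion of autarky systems; so the proof is essentially a one-line combination: $F \in \Musat \Rightarrow F \in \Lean \subseteq \Mlean$, $F \not= \top$, and Corollary~\ref{cor:TarsisMLean} closes it. The only point worth spelling out carefully is the step $\Musat \subseteq \Lean$, since this is where the semantics of autarky reduction (that $\vp * F$ is satisfiability-equivalent to $F$ and strictly smaller for a non-trivial autarky) must be used; everything else is bookkeeping about which set contains which.

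I do not anticipate a genuine obstacle here — this is a corollary in the literal sense, harvesting Corollary~\ref{cor:TarsisMLean}. The mildest care needed is making sure the ``strict'' in $\vp * F \lneq F$ is justified: a non-trivial autarky $\vp$ satisfies $\var(\vp) \cap \var(F) \not= \es$, and any clause of $F$ containing a variable of $\vp$ is satisfied and hence removed (or has all its literals on $\var(\vp)$ satisfied), so $c(\vp * F) < c(F)$, giving a proper sub-multi-clause-set, which as an unsatisfiable proper part contradicts minimal unsatisfiability. That settles $\Musat \subseteq \Lean$ and the corollary follows.
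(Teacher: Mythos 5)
Your proof is correct and follows exactly the paper's route: $\Musat \sse \Lean \sse \Mlean$ combined with Corollary~\ref{cor:TarsisMLean} (the paper's proof is literally the remark that ``Obviously $\Musat \subset \Lean$'' followed by citing that corollary). One small slip worth flagging: for a non-trivial autarky $\vp$ the correct identity is $\vp * F = F \sm F_{\var(\vp)}$, not $F_{\var(F)\sm\var(\vp)}$ (a clause can contain variables both inside and outside $\var(\vp)$, so the latter set is in general strictly larger), but this does not affect your argument since the properties you actually use --- that $\vp * F$ is a satisfiability-equivalent proper sub-clause-set of $F$ --- are correct.
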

In \cite{CreignouDaude2002ThresholdConstraints}, Theorem 4.5, arbitrary constraints over boolean variables are considered, and a lower bound on the number of clauses in terms of the number of variables for minimally unsatisfiable constraint satisfaction problems is derived, which necessarily is much weaker than Corollary \ref{cor:musatdelta}.

After these general structural results on matching autarkies and matching lean clause-sets, we conclude this section by some example classes. Considering minimally unsatisfiable clause-sets of minimal deficiency, we observe that removing any clause from a matching lean multi-clause-set $F$ with $\delta(F) = 1$ yields a matching satisfiable multi-clause-set, and thus
\begin{corol}\label{cor:musatonemlean}
  $\Musati{\delta = 1} = \Mleani{\delta = 1} \cap \Usat$.
\end{corol}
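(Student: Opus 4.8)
The plan is to prove the two inclusions of the claimed equality separately; both are short, and the second carries essentially all the (minimal) content.

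For the inclusion $\Musati{\delta=1} \sse \Mleani{\delta=1} \cap \Usat$: given $F \in \Musati{\delta=1}$, I would note that $F$ is lean (since $\Musat \subset \Lean$), hence matching lean because $\maut(F) \sse \aut(F)$ by Lemma \ref{lem:matchautnormal}; moreover $\delta^*(F) = \delta(F) = 1$ by the generalised Tarsi lemma (Corollary \ref{cor:musatdelta}), and $F \in \Usat$ by definition of $\Musat$. Hence $F \in \Mleani{\delta=1} \cap \Usat$. I expect this direction to be completely routine.

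For the reverse inclusion, let $F \in \Mleani{\delta=1} \cap \Usat$, so $F$ is matching lean, unsatisfiable, and $\delta(F)=1$. Since $F$ is already unsatisfiable, to place $F$ in $\Musat$ it remains to show that every $F' \lneq F$ is satisfiable. I would first observe that any sub-multi-clause-set $F'' \le F'$ is in fact a \emph{strict} sub-multi-clause-set of $F$ (from $F'' \le F' \le F$ and $F' \ne F$ one gets $F'' \ne F$). Feeding this into the deficiency characterisation of matching leanness, Lemma \ref{lem:charakschlank}, Part \ref{item:charakschlank3}, applied to $F$, yields $\delta(F'') < \delta(F) = 1$; as $\delta(F'') \in \ZZ$, this forces $\delta(F'') \le 0$. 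Since $F''$ ranged over all sub-multi-clause-sets of $F'$, we get $\delta^*(F') = \max_{F'' \le F'} \delta(F'') \le 0$, hence $\delta^*(F') = 0$ (recall $\delta^*(F') \ge 0$ always). By Lemma \ref{lem:characmatchsat}, $F'$ is then matching satisfiable, and therefore satisfiable, since a matching-satisfying assignment is a satisfying assignment (cf.\ Lemma \ref{lem:matchsatviamatchsa}). Thus $F$ is minimally unsatisfiable with $\delta(F)=1$, i.e.\ $F \in \Musati{\delta=1}$.

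I do not anticipate a genuine obstacle: the argument only combines the generalised Tarsi lemma, the inclusion of matching autarkies in autarkies, and the characterisation of matching leanness via deficiency. The two places that want a little care are confirming that all the relevant $F''$ really are strict sub-multi-clause-sets of $F$ (so that Lemma \ref{lem:charakschlank}, Part \ref{item:charakschlank3} is applicable), and using integrality of the deficiency to pass from $\delta(F'') < 1$ to $\delta(F'') \le 0$, which is exactly what makes $\delta^*(F')$ vanish and hence makes $F'$ matching satisfiable.
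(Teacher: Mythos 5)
Your proposal is correct and follows essentially the same route as the paper: the forward inclusion via $\Musat \subset \Lean \subseteq \Mlean$ together with Corollary \ref{cor:musatdelta}, and the reverse inclusion via the characterisation of matching leanness in Lemma \ref{lem:charakschlank} Part~\ref{item:charakschlank3} together with integrality of the deficiency and Lemma \ref{lem:characmatchsat}. The paper only states the reverse direction for single clause removals $F - \set{C}$ and lets downward closure of satisfiability do the rest implicitly, whereas you argue directly for all $F' \lneq F$; this is a presentational choice, not a mathematical difference.
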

The class $\Musati{\delta = 1}$ of minimally unsatisfiable clause-sets of minimal deficiency is characterised in Theorem \ref{thm:CharakMUSATd1} (Part II), and so we have a good understanding of the unsatisfiable elements of $\Mleani{\delta = 1}$. The satisfiable elements of $\Mleani{\delta = 1}$ on the other hand seem to have a more complicated nature, where the interest in this class may be justified by the following property.

\begin{corol}\label{cor:minmatunsat}
  The class $\Mleani{\delta = 1}$ of matching lean generalised clause-sets of deficiency $1$ is exactly the class of all minimally matching unsatisfiable clause-sets (clause-sets which are not matching satisfiable, while every strict subset is matching satisfiable).
\end{corol}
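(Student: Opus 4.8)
The plan is to prove the two inclusions between $\Mleani{\delta = 1}$ and the class of minimally matching unsatisfiable clause-sets. First I would unfold the definitions: a clause-set $F$ is \emph{minimally matching unsatisfiable} if $F \notin \Msat$ but every $F' \lneq F$ satisfies $F' \in \Msat$. By Lemma \ref{lem:characmatchsat}, Part 2, membership in $\Msat$ is equivalent to $\delta^*(\cdot) = 0$, so this condition reads $\delta^*(F) \ge 1$ while $\delta^*(F') = 0$ for all $F' \lneq F$.

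For the inclusion ``$\Mleani{\delta = 1} \subseteq$ minimally matching unsatisfiable'': assume $F$ is matching lean with $\delta(F) = 1$. By Corollary \ref{cor:TarsisMLean} (applied to $F \neq \top$, which holds since $\delta(\top) = 0 \neq 1$) we have $\delta^*(F) = \delta(F) = 1 > 0$, so $F \notin \Msat$. For strict $F' \lneq F$, Lemma \ref{lem:charakschlank}, Part \ref{item:charakschlank3} (equivalent to matching leanness) gives $\delta(F') < \delta(F) = 1$, hence $\delta(F') \le 0$; applying this to every $F'' \le F'$ shows $\delta^*(F') \le 0$, and with $\delta^*(F') \ge 0$ (from $\delta(\top) = 0$) we get $\delta^*(F') = 0$, i.e.\ $F' \in \Msat$. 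Here one must be slightly careful that every $F'' \lneq F$ — not just $F'' \lneq F'$ — satisfies $\delta(F'') \le 0$: since $F'' \le F' \lneq F$ still gives $F'' \lneq F$, Part \ref{item:charakschlank3} applies directly to $F''$.

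For the reverse inclusion: assume $F$ is minimally matching unsatisfiable, so $\delta^*(F) \ge 1$ and $\delta^*(F') = 0$ for all $F' \lneq F$. I first want to show $F$ is matching lean. Suppose not; then by Lemma \ref{lem:charakschlank} there is some $C \in F$ with $\delta^*(F - \set{C}) = \delta^*(F)$, or equivalently some tight $F' \lneq F$. But $F - \set{C} \lneq F$, so $\delta^*(F - \set{C}) = 0$, forcing $\delta^*(F) = 0$, contradicting $\delta^*(F) \ge 1$. Hence $F$ is matching lean, and by Corollary \ref{cor:TarsisMLean}, $\delta^*(F) = \delta(F) \ge 1$. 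It remains to rule out $\delta(F) \ge 2$: if $\delta(F) = \delta^*(F) \ge 2$, pick any tight sub-multi-clause-set — by Corollary \ref{cor:leankerntight} the smallest tight one is $\nma(F) = F$ itself (since $F$ is matching lean), but I instead use the existence of a sub-multi-clause-set of every intermediate deficiency value, obtained by removing clauses one at a time and tracking how $\delta$ drops. More directly: removing a single clause from $F$ changes $c$ by $1$ and never increases $\rd$, so $\delta$ drops by at most $1$ more than the drop in $\rd$; choosing a clause whose removal does not strip the last occurrence of a variable keeps $\rd$ fixed and gives $F'$ with $\delta(F') = \delta(F) - 1 \ge 1$, whence $\delta^*(F') \ge 1$, contradicting $F' \in \Msat$. (One checks such a clause exists because $F$ matching lean with $\delta(F) \ge 2$ forces $c(F) > n(F) \ge 1$.) Therefore $\delta(F) = 1$ and $F \in \Mleani{\delta = 1}$.

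The main obstacle I anticipate is the last step — proving $\delta(F) = 1$ rather than merely $\delta(F) \ge 1$ — because it requires either a clean argument that a single clause can be removed without dropping $\rd$, or an appeal to a monotonicity/lattice property of the deficiency function on sub-multi-clause-sets. The cleanest route is probably to invoke Corollary \ref{cor:musatonemlean}-style reasoning in reverse, or simply: by Lemma \ref{lem:tightdef}, Part \ref{item:tightdef2}, a tight proper sub-multi-clause-set would be a matching autark reduction target contradicting leanness only if it is proper — so matching leanness already pins $F$ as its own unique tight sub-multi-clause-set, and then the minimality of $F$ among non-matching-satisfiable clause-sets, combined with the fact that deficiency is ``$1$-Lipschitz downward'' in a suitable sense, forces $\delta(F) = 1$. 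I would spell out that Lipschitz bound via $\delta(F - \set{C}) \ge \delta(F) - 1$ whenever $\var(F - \set{C}) = \var(F)$, which is the only genuinely computational piece.
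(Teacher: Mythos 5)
Your forward direction is fine, and so is the first part of the reverse direction (showing matching leanness from the minimality hypothesis via Lemma~\ref{lem:charakschlank}, Part~\ref{item:charakschlank2}). The gap is in the final step, where you pin down $\delta(F)=1$. The parenthetical claim ``$F$ matching lean with $\delta(F)\ge 2$ forces $c(F)>n(F)$'' is simply false. Take three trivial variables $u_1,u_2,u_3$ (each with a one-element domain) and $F=\set{\set{(u_1,0),(u_2,0)},\set{(u_3,0)}}$: here $\rd(F)=0$, so $\delta(F)=c(F)=2$, and $F$ is lean (hence matching lean) because it has only trivial variables, yet $c(F)=2<3=n(F)$. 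The source of the difficulty is that $\rd$, not $n$, appears in the deficiency, and trivial variables contribute to $n$ but not to $\rd$. So the existence of a clause with ``no private variable'' cannot be inferred the way you propose, and the detour through $\var(F-\set{C})=\var(F)$ is a dead end.

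The fix is that the Lipschitz bound you state at the very end in fact holds with no side condition at all: for any single clause-occurrence removal $F' := F-\set{C}$ one has
\[
\delta(F') \;=\; (c(F)-1) - \rd(F') \;\ge\; (c(F)-1) - \rd(F) \;=\; \delta(F)-1,
\]
simply because $\rd(F')\le\rd(F)$ (removing a clause can only shrink $\var$, and shrinking $\rd$ pushes $\delta$ \emph{up}, not down). Hence if $F$ is minimally matching unsatisfiable and matching lean with $\delta(F)=\delta^*(F)\ge 2$, then \emph{any} $F'\lneq F$ obtained by removing one clause-occurrence satisfies $\delta(F')\ge 1$, so $\delta^*(F')\ge 1$, contradicting $F'\in\Msat$. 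No choice of clause and no condition on $\var$ is needed. (For reference, the paper does not spell out a proof of this corollary; the informal remark preceding Corollary~\ref{cor:musatonemlean} covers only the forward direction, leaving the reverse implicit.)
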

An interesting example for a satisfiable boolean $F \in \Mleani{\delta = 1}$ with $n(F) = 8$, exhibited in Section 5 of \cite{Ku2003e}, is given by the following clause-variable matrix (the rows correspond to the clauses, the columns to the variables, where an entry ``$\pm$'' denotes a positive/negative occurrence, while $0$ denotes non-occurrence):
\begin{displaymath}
  M(F) =
  \begin{pmatrix}
    + & 0 & + & 0 & 0 & 0 & 0 & +\\
    + & 0 & - & + & 0 & 0 & 0 & 0\\
    - & - & 0 & 0 & 0 & + & + & 0\\
    - & - & 0 & 0 & - & - & 0 & 0\\
    0 & + & - & - & - & 0 & 0 & 0\\
    0 & 0 & 0 & - & + & 0 & - & -\\
    - & 0 & 0 & 0 & + & - & 0 & +\\
    - & 0 & 0 & + & 0 & + & - & 0\\
    0 & + & + & 0 & 0 & 0 & + & -
  \end{pmatrix}.
\end{displaymath}
Obviously $\delta(F) = 1$. Since every two different rows clash in exactly one element, $F$ is a $1$-regular hitting clause-set. Every column contains at least two ``$-$'' and two ``$+$'', and thus every variable occurs at least two times negatively as well as positively (the purpose of this example was to refute a conjecture by Endre Boros that every boolean $1$-regular hitting clause-set of deficiency $1$ must contain a variable occurring in one sign only once). To demonstrate that $F$ is matching lean, consider the following subgraph of $B(F)$:
{\nc{\cn}[1]{\mr{c}^{#1}} \nc{\vn}[1]{\bmm{\mr{v}_{#1}}}
\begin{displaymath}
  \xymatrix {
    && \cn{6} \aru[r] & \vn{4} \aru[r] & \cn{2} \aru[r] & \vn{1}\\
    \cn{3} \aru[r] & \vn{2} \aru[r] & \cn{4} \aru[r] & \vn{5} \aru[r] & \cn{5} \aru[r] & \vn{3} \aru[r] & \cn{1} \aru[ul] \aru[dl]\\
    \cn{7} \aru[r] & \vn{6} \aru[r] & \cn{8} \aru[r] & \vn{7} \aru[r] & \cn{9} \aru[r] & \vn{8}
  }
\end{displaymath}
}
Here variable-nodes corresponding to row $j$ are denoted by $v_j$, and clause-nodes corresponding to row $i$ by $c^i$. This subgraph has the special property that it is a spanning tree, where the variable-nodes all have degree $2$. From these properties by Corollary \ref{cor:CharakMLEANWald} in the next subsection it follows that $F$ is matching lean.\footnote{This can also be seen directly as follows. For a graph $G$ let $\delta(G) := \abs{E(G)} - \abs{V(G)}$. Using $\kappa(G)$ for the number of connected components of $G$, we have that $G$ is a forest iff $\delta(G) = - \kappa(G)$ (while for every graph $G$ we have $\delta(G) \ge - \kappa(G)$). Now consider any non-empty set $V'$ of variable-nodes in the above forest $\mc{F}$ (that actually we have a tree is irrelevant) together with the induced sub-graph $\mc{F}'$ given by $V' \cup \Gamma_{\mc{F}}(V')$. Since also $\mc{F'}$ is a forest we have $\delta(\mc{F}') \le - \kappa(\mc{F}') \le -1$, where $\abs{E(\mc{F}')} = 2 \abs{V'}$ and $\abs{V(\mc{F}')} = \abs{V'} + \abs{\Gamma_{\mc{F}'}(V')}$, and thus $\abs{\Gamma_{\mc{F}'}(V')} \ge \abs{V'} + 1$, where $\abs{\Gamma_{\mc{F}'}(V')} \le \delta(F[V']) + \abs{V'}$, so that $\delta(F[V']) \ge 1$. Since this holds for every non-empty $V'$, it follows that $F$ can not have a non-trivial matching autarky.}

Finally we mention that a natural class of (generalised) multi-clause-sets, where every (general) autarky is a matching autarky, is the class of multi-clause-sets $F$ such that for all variables $v \in \var(F)$ and all values $\ve \in D_v$ we have $\#_{(v,\ve)}(F) \le 1$; we call such (generalised) clause-sets \textbf{totally singular}. Thus totally singular $F$ are satisfiable iff they are matching satisfiable; an alternative method for quick SAT decision here is given by the observation that the class of totally singular multi-clause-sets is stable under DP-reduction (recall Subsection \ref{sec:Reduction}), and furthermore every variable is either pure or a singular variable, and thus elimination of pure variables together with repeated application of singular DP-reduction also solves this formula class. We remark that the minimally unsatisfiable $F$ in this class are exactly the marginal minimally unsatisfiable clause-sets of deficiency $1$, as characterised in Corollary \ref{cor:CharakMMUSATd1} (Part II).

\subsection{Comparison with an earlier version of ``matching autarkies''}
\label{sec:comparisonearlier}

In \cite{Ku01a} an earlier version of matching autarkies has been introduced, which we will call here ``non-repetitive matching autarkies'': A partial assignment $\vp$ is called \emph{non-repetitive matching-satisfying} for a multi-clause-set $F \in \Mcls$, if for every clause-occurrence $C$ in $F$ (taking multiple occurrences into account) a literal $x_C \in C$ can be chosen with $\vp(x_C) = 1$ such that for different clause-occurrences $C, C'$ we have $x_C \not= x_{C'}$. And $\vp$ is called a \emph{non-repetitive matching autarky} for $F$ if $\vp$ is non-repetitive matching-satisfying for $F_{\var(\vp)}$.

Recalling the three conditions (\ref{item:condmatchsat1}) - (\ref{item:condmatchsat3}) from Subsection \ref{sec:matchsat} and strengthening condition (\ref{item:condmatchsat1}) to
\begin{enumerate}
\item[(\ref{item:condmatchsat1})'] for $i \in \tb 1m$ there are variables $v_i \in \var(F_i)$ such that for all clause-occurrence $C$ in $F_i$ there are literals $x_C \in C$ with $\var(x_C) = v_i$, and such that for different clause-occurrences $C, C'$ we have $x_C \not= x_{C'}$;
\end{enumerate}
we get that $F$ is non-repetitive matching satisfiable iff $F$ has a decomposition fulfilling (\ref{item:condmatchsat1})', (\ref{item:condmatchsat2}) and (\ref{item:condmatchsat3}). By (\ref{item:condmatchsat1})' we get $c(F_i) = \abs{\val_{v_i}(F_i)}$, and thus from (\ref{item:condmatchsat3}) follows (\ref{item:condmatchsat3})'. Whence a non-repetitive matching-satisfying assignment $\vp$ for $F$ is matching-satisfying for $F$, and a non-repetitive matching autarky for $F$ is also a matching autarky for $F$. 

For boolean clause-sets, non-repetitive matching autarkies are identical with matching autarkies, but in general non-repetitive matching autarkies are more restrictive than matching autarkies. Consider a variable $v$ with $D_v = \set{0,1,2}$, and consider the multi-clause-set $F_1 := 2 \cdot \set{v \not= 0}$. $F_1$ is matching satisfiable, while being lean w.r.t.\ non-repetitive matching autarkies. Neither pureness of $v$ nor the existence of unit-clauses is essential here, as the following extended example shows (using additional boolean variables $a,b,c,d$, while $D_v = \set{0,1,2,3}$):
\begin{multline*}
  F_1' := \setb{ \set{v \not= 0, a, b}, \set{v \not= 0, a, \ol{b}}, \set{\ol{a},b}, \set{\ol{a},\ol{b}},\\
    \set{v \not= 1, c, d}, \set{v \not= 2, \ol{c}, \ol{d}}, \set{v \not= 3, c, d} }.
\end{multline*}
$F_1'$ is matching satisfiable via $\pab{v \ra 1, a \ra 0, b \ra 0, c \ra 1, d \ra 0}$, while $F_1'$ is again lean w.r.t.\ non-repetitive matching autarkies.

Further examples are discussed in Subsection \ref{sec:Preservationmatchingstructure} (Part II). There we actually show that non-repetitive matching autarkies are preserved by the direct translation of (generalised) clause-sets into boolean clause-sets, which in general is not the case for matching autarkies, and so perhaps non-repetitive matching autarkies nevertheless might have some advantages over matching autarkies? The main problem with the notion of non-repetitive matching autarkies is that it does not seem to support a natural notion of related deficiency (with the same nice properties as the combination of matching autarkies and (standard) deficiency), and, related to this problem, it does not seem obvious how to achieve polynomial time decision of the class of non-repetitive matching lean (multi-)clause-sets. The whole problem boils down to the point that non-repetitive matching autarkies do not seem to be given solely by a matching condition, but require some other form of a more global  condition. Thus, to conclude, the generalisation of (boolean) matching autarkies together with the generalisation of (boolean) deficiency introduced in this section seems to be the right choice, as demonstrated by the theory built up in this section, and as further validated by the applications in the following sections.

\section{Finding autarkies in poly-time for bounded maximal deficiency}
\label{sec:leanKernelpoly}

We already know that the satisfiability problem is decidable in polynomial time for bounded maximal deficiency (Corollary \ref{cor:poly1}). In this section we strengthen this by computing a maximal autarky in polynomial time (for bounded maximal deficiency). Several first attempts fail to solve the problem:
\begin{itemize}
\item In Sections 11.10.1 -- 11.10.3 of \cite{Kullmann2007HandbuchMU} general methods for computing the lean kernel are discussed. Especially the general method from Corollary \ref{cor:polytimeLKern} (in the present article) could yield a poly-time algorithm, since for bounded maximal deficiency we can find either a satisfying assignment or a resolution refutation tree in polynomial time (the latter follows by the methods used to prove Theorem \ref{thm:MaximalerDefektFPT}) (Part II).
\item But the problem is that crossing out variables can increase the maximal deficiency, and so actually these methods seem to fail for bounded maximal deficiency.
\item Based on matroid theory, in \cite{Ku99dKo} it has been shown how to compute the lean kernel in polynomial time for boolean clause-sets, which via the direct translation to boolean clause-sets (see Subsection \ref{sec:Preservationgeneralstructure} in Part II) can be generalised to non-boolean clause-sets as considered here.
\item However we do not obtain an autarky, and self-reducibility (\cite{KullmannMarekTruszczynski2007Autarkien}) seems not to be applicable --- again the problem is that the maximal deficiency might be increased. 
\end{itemize}
 Fortunately we are able to strengthen the method from Corollary \ref{cor:poly1}, which will be done in the following. First we have to strengthen Corollary \ref{cor:Mleanpoly2}, by efficiently computing for multi-clause-sets $F$ a partial assignment $\vp$ with $\vp * F = \nma(F)$. Again, for finding such a ``quasi-maximal'' matching autarky $\vp$ for $F$ the general procedure given in \cite{KullmannMarekTruszczynski2007Autarkien} (just based on deciding matching leanness, via an oracle) can not be used, since the matching condition does not allow addition of unit-clauses. The following method was already mentioned in Lemma \ref{lem:vonschlanknachred}, but for completeness we state this special case explicitly.

\begin{lem}\label{lem:polytimeqmma}
  For a generalised multi-clause-sets $F \in \Mcls$ we can compute a matching autarky $\vp$ for $F$ with $\vp * F = \nma(F)$ in polynomial time as follows:
  \begin{itemize}
  \item First compute $\nma(F)$ by Corollary \ref{cor:Mleanpoly2}.
  \item Thus by Lemma \ref{lem:autdecomp} there is a matching autarky $\vp$ for $F$ with $\vp * F = \nma(F)$, namely $\vp$ is any matching-satisfying assignment for $F[V]$ of $V := \var(F) \sm \var(\nma(F))$ (where $\var(\vp) \sse V$), and those can be found by Lemma \ref{lem:matchsatviamatchsa}.
  \end{itemize}
\end{lem}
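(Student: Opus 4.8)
The plan is to assemble the three ingredients named in the statement, checking that each is available in polynomial time and that their composition does what is claimed. First I would invoke Corollary \ref{cor:Mleanpoly2} to compute the matching lean kernel $\nma(F)$ in polynomial time, and set $V := \var(F) \sm \var(\nma(F))$. Since $\maut$ is a normal autarky system by Lemma \ref{lem:matchautnormal}, Lemma \ref{lem:autdecomp} applies with $\A = \maut$: writing $F_1 := \nma(F)$ and $F_2 := F_{\var(F) \sm \var(F_1)} = F - F_1$, we obtain $F = F_1 + F_2$ with $F_1$ the $\maut$-lean kernel, and $F' := \var(F_1) * F_2 = F[\var(F) \sm \var(F_1)] = F[V]$ is $\maut$-satisfiable, i.e.\ matching satisfiable. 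The last clause of that lemma moreover tells us that every $\maut$-satisfying assignment for $F'$ whose variables lie in $\var(F') = V$ is a matching autarky $\vp$ for $F$ with $\vp * F = F_1 = \nma(F)$.

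Next I would apply Lemma \ref{lem:matchsatviamatchsa} to $F[V]$: since $F[V]$ is matching satisfiable, computing a maximum matching $M$ in $B(F[V])$ (polynomial time, via the matching algorithms recalled in Subsection \ref{sec:Graphsmatching}) yields a matching-satisfying assignment $\vp$ for $F[V]$ with $\var(\vp) \sse V$. By the remark in the previous paragraph this $\vp$ is exactly a matching autarky for $F$ realising $\nma(F) = \vp * F$, which is the assignment we want to output.

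Each step is polynomial: computation of $\nma(F)$ by Corollary \ref{cor:Mleanpoly2}, formation of $F[V]$ by a single linear-time pass over $F$, and the maximum-matching computation together with the read-off of $\vp$ from $M$ in Lemma \ref{lem:matchsatviamatchsa}. Hence the whole procedure runs in polynomial time, and there is no real obstacle — the substantive work has already been done in Corollary \ref{cor:Mleanpoly2}, Lemma \ref{lem:matchautnormal}, Lemma \ref{lem:autdecomp} and Lemma \ref{lem:matchsatviamatchsa}. The only point deserving a moment's care is verifying that the assignment produced by Lemma \ref{lem:matchsatviamatchsa} uses \emph{only} variables from $V$, so that it is an honest autarky for $F$ realising $\nma(F)$ and not merely for $F[V]$; but this is immediate from the explicit description of $\vp$ in that lemma, since the variable-nodes of $B(F[V])$ covered by $M$ all carry variables from $V$.
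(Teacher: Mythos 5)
Your proof is correct and follows exactly the route the paper intends: the lemma is stated with the algorithm embedded, and no separate proof is given. You have simply spelled out the justifications — noting Lemma~\ref{lem:matchautnormal} to establish normality of the matching autarky system so that Lemma~\ref{lem:autdecomp} applies, and checking that the assignment returned by Lemma~\ref{lem:matchsatviamatchsa} has its variables confined to $V$ — which is precisely what is needed and what the paper leaves implicit.
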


Second, we need to strengthen the general decomposition result Lemma \ref{lem:autdecomp} by considering the (maximal) deficiency of the satisfiable part of the decomposition:
\begin{lem}\label{lem:strengthdecomp}
  Consider a normal autarky system $\A$, and the decomposition $F = F_1 + F_2$ of a generalised multi-clause-set $F \in \Mcls$ according to Lemma \ref{lem:autdecomp} into the $\A$-lean kernel $F_1 = \nA(F)$ and the largest $\A$-autark sub-clause-set $F_2 = F_V$ for $V := \var(F) \sm \var(F_1)$. Assume furthermore that $F_1$ is matching lean. Then we have $\delta^*(F[V]) \le \delta^*(F)$, and if $F_1 \not= \top$, then $\delta^*(F[V]) < \delta^*(F)$ holds.
\end{lem}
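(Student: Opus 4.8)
The statement follows almost directly by combining the decomposition description from Lemma \ref{lem:autdecomp} with Part \ref{item:tightdef3} of Lemma \ref{lem:tightdef} and the generalised Tarsi Lemma (Corollary \ref{cor:TarsisMLean}). First I would record the identification of $F[V]$: since $V = \var(F) \sm \var(F_1)$ we have $\var(F) \sm V = \var(F_1)$, and Lemma \ref{lem:autdecomp} tells us $F_2 = F_{\var(F)\sm\var(F_1)} = F_V$ together with $\var(F_1) * F_2 = F[\var(F)\sm\var(F_1)] = F[V]$. Since $F_1 = \nA(F)$ is by definition an induced sub-multi-clause-set of $F$, we also have $F - F_1 = F_2$, so $F[V] = \var(F_1) * (F - F_1)$.

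Next I would apply Lemma \ref{lem:tightdef}, Part \ref{item:tightdef3}, to the induced sub-multi-clause-set $F' := F_1 \le F$: this yields
\begin{displaymath}
  \delta^*(F[V]) = \delta^*(\var(F_1) * (F - F_1)) \le \delta^*(F) - \delta(F_1).
\end{displaymath}
It remains to control $\delta(F_1)$ using the hypothesis that $F_1$ is matching lean. If $F_1 = \top$, then $\delta(F_1) = 0$ and the displayed inequality gives $\delta^*(F[V]) \le \delta^*(F)$, which is the first claim. If $F_1 \not= \top$, then Corollary \ref{cor:TarsisMLean} applies to the matching lean multi-clause-set $F_1$ and gives $\delta^*(F_1) = \delta(F_1) \ge 1$; hence $\delta^*(F[V]) \le \delta^*(F) - \delta(F_1) \le \delta^*(F) - 1 < \delta^*(F)$, which is the second claim.

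There is essentially no hard core here: the only points requiring care are (i) correctly matching up $F[V]$ with $\var(F_1) * (F - F_1)$ so that Lemma \ref{lem:tightdef}(\ref{item:tightdef3}) can be invoked, using that $F_1$ is an \emph{induced} sub-multi-clause-set, and (ii) invoking Corollary \ref{cor:TarsisMLean} in the nontrivial case, which is exactly where the extra hypothesis ``$F_1$ is matching lean'' is used (a general $\A$-lean kernel need not have positive deficiency). If anything is a mild obstacle it is making sure the bookkeeping of $\var(F_1) = \var(F)\sm V$ is valid, i.e.\ that $\var(F_1) \sse \var(F)$, which is immediate since $F_1 \le F$.
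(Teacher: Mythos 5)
Your proof is correct and follows essentially the same route as the paper: identify $F[V]$ with $\var(F_1)*(F-F_1)$ using the fact that the lean kernel is induced, apply Lemma \ref{lem:tightdef}, Part \ref{item:tightdef3} with $F' := F_1$, and invoke Corollary \ref{cor:TarsisMLean} on the matching lean $F_1$ to get $\delta(F_1) \ge 1$ when $F_1 \ne \top$. The only (harmless) difference is that you make the degenerate case $F_1 = \top$ explicit, whereas the paper treats it as implicit.
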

\begin{proof}
  Consider $F' := F_1$, and assume that $F_1 \not= \top$. Lean kernels w.r.t.\ autarky systems are always induced sub-multi-clause-sets, and thus by Lemma \ref{lem:tightdef}, Part \ref{item:tightdef3} we have $\delta^*(\var(F') * (F - F')) \le \delta^*(F) - \delta(F')$, where $\var(F') * (F - F') = \var(F_1) * F_2 = F[V]$, and $\delta(F') = \delta(F_1) \ge 1$ by Corollary \ref{cor:TarsisMLean}.
\end{proof}

Now, strengthening Corollary \ref{cor:poly1}, we are not just able to find a satisfying assignment in polynomial time (if existent) for bounded maximal deficiency, but we can also find a non-trivial autarky (if existent):
\begin{thm}\label{thm:leanKernelpoly}
   Consider a constant $k \in \NNZ$. Then for a generalised multi-clause-set $F \in \Mcls$ with $\delta^*(F) \le k$ in polynomial time we can find a non-trivial autarky if existent (and thus, by repetition, we can compute $\nma(F)$ in polynomial time). The procedure for finding a non-trivial autarky is as follows:
   \begin{enumerate}
   \item As in Corollary \ref{cor:poly1} we run through all non-trivial partial assignments $\vp$ with $\var(\vp) \sse \var(F)$ and $n(\vp) \le \delta^*(F)$.
   \item Instead of just checking whether $\vp * F$ is matching satisfiable, by Lemma \ref{lem:polytimeqmma} we compute a partial assignment $\psi$ with $\psi * (\vp * F) = \nma(\vp * F)$ (and $\var(\psi) \sse \var(\vp * F)$).
   \item Then we check whether $\psi \circ \vp$ is a non-trivial autarky for $F$ --- if this is the case, then we found a non-trivial autarky for $F$, while if this is never the case, then $F$ is lean.
   \end{enumerate}
\end{thm}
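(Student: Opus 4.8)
The plan is to show two things: (1) correctness — the procedure outputs a non-trivial autarky whenever one exists (and correctly declares $F$ lean otherwise); and (2) the polynomial time bound. The key structural insight is Lemma~\ref{lem:strengthdecomp}: when we peel off one non-trivial autarky, the maximal deficiency of the ``remaining'' restricted clause-set strictly decreases, so the recursion depth is bounded by $k$, and at each level we only ever work with multi-clause-sets of maximal deficiency $\le k$.

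For correctness, first suppose $F$ has a non-trivial autarky; I would show the procedure finds one. By Lemma~\ref{lem:autdecomp} the lean kernel decomposition $F = F_1 + F_2$ has $F_2 \neq \top$, so $V := \var(F) \sm \var(F_1) \neq \es$ and $F[V]$ is satisfiable. Now here is the crucial point, and the step I expect to be the main obstacle: I need to produce \emph{some} non-trivial autarky that is ``witnessed'' by a small partial assignment $\vp$ with $n(\vp) \le \delta^*(F)$ in the sense that $\nma(\vp * F)$ is a \emph{proper} sub-multi-clause-set of $F$ (equivalently, $\var(\nma(\vp*F)) \subsetneq \var(F)$), so that $\psi \circ \vp$ becomes non-trivial. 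The idea is: run through the $\vp$ with $\var(\vp) \sse \var(F)$, $n(\vp) \le \delta^*(F)$ that are satisfying for $F[V]$ restricted appropriately — by Lemma~\ref{lem:matchc}, Part~\ref{item:matchc1}, applied to $F - F_2'$ for the matching-satisfiable part $F_2'$ of $F_2$ of size $c(F)-\delta^*(F)$, there \emph{is} such a $\vp$ that is an autarky for $F$ touching at least one variable and with $\var(F_1) \sse \var(F) \sm \var(\vp)$ failing — more carefully, I want to argue along the lines of Corollary~\ref{cor:Abstandsat}: there is $\vp$ with $n(\vp) \le \delta^*(F)$ and $\vp * F$ matching satisfiable when $F$ is matching satisfiable, but here $F$ need not be matching satisfiable. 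Instead the right argument is: pick $\vp$ to be (a small sub-assignment of) a matching-maximum satisfying assignment for $F[V]$ obtained via Theorem~\ref{thm:Reparatur} applied to $F_2$; then $\vp$ is an autarky for $F$, it is non-trivial (since $V \neq \es$ and we may take $\var(\vp) \cap V \neq \es$ by minimality), so already $\psi \circ \vp$ with $\psi = \epa$ or any $\psi$ as computed is non-trivial. The cleanest route is: among all $\vp$ in the search space, consider those which \emph{are} autarkies for $F$ touching $\ge 1$ variable — such a $\vp$ exists with $n(\vp) \le \delta^*(F)$ because a minimal non-trivial autarky, restricted to the clauses it satisfies, uses at most $c$ of those clauses' worth of variables, and one shows (via $F_1$ matching lean, using Lemma~\ref{lem:tightdef}) this is bounded by $\delta^*(F)$. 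For that $\vp$, $\nma(\vp * F) \le \vp * F \lneq F$, so $\psi \circ \vp$ touches a variable of $F$ and is an autarky by the composition law $\psi \in \aut(\vp * F), \vp \in \aut(F) \Rightarrow \psi \circ \vp \in \aut(F)$. Conversely, if the procedure ever outputs $\psi \circ \vp$ declaring it a non-trivial autarky, then it is one (checked directly), so no false positives; and if $F$ is lean, no $\vp * F$ has a non-trivial autarky, so $\psi \circ \vp$ is always trivial and we correctly report leanness.

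For the running time: the number of partial assignments $\vp$ with $\var(\vp) \sse \var(F)$ and $n(\vp) \le k$ is $O\!\left(\binom{n(F)}{k} \cdot (\max_v \abs{D_v})^k\right)$, which by the normalisation convention ($\abs{D_v} \le \abs{\val_v(F)}+1 \le \ell(F)+1$) is polynomial in $\ell(F)$ for fixed $k$. For each $\vp$, computing $\nma(\vp * F)$ via Corollary~\ref{cor:Mleanpoly2} / Lemma~\ref{lem:polytimeqmma} is polynomial, and by Lemma~\ref{lem:tightdef}, Part~\ref{item:tightdef1} matching autarkies do not raise $\delta^*$, so $\delta^*(\vp * F) \le \delta^*(F) \le k$ stays bounded throughout. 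Finally, ``by repetition'': once a non-trivial autarky $\eta = \psi \circ \vp$ is found, replace $F$ by $\eta * F$; since $\eta$ is an autarky, $\delta^*(\eta * F) = \delta^*(F) - \delta(F[\var(\eta)])$ by Lemma~\ref{lem:tightdef}, Part~\ref{item:tightdef0}, and in particular $\delta^*(\eta * F) \le k$, while $c(\eta * F) < c(F)$; after at most $c(F)$ iterations we reach a lean multi-clause-set, which is $\nma(F)$ by confluence of autarky reduction. The main obstacle, as noted, is pinning down the existence of a \emph{small} witness $\vp$ (bounded by $\delta^*(F)$) for some non-trivial autarky — this is where Lemma~\ref{lem:matchc} together with the matching-leanness of the lean kernel $F_1$ (hence $\delta(F_1) \ge 1$, Corollary~\ref{cor:TarsisMLean}) and Lemma~\ref{lem:strengthdecomp} must be combined, essentially re-running the logic of Corollary~\ref{cor:poly1}'s proof but tracking autarkies rather than just satisfying assignments.
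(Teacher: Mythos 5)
Your high-level plan and the tools you invoke (Lemma~\ref{lem:autdecomp}, Lemma~\ref{lem:strengthdecomp}, Corollary~\ref{cor:Abstandsat}, Lemma~\ref{lem:polytimeqmma}) match the paper's, but your ``cleanest route'' for the step you yourself flag as the main obstacle is based on a claim that is false, and your earlier attempt confuses two different roles for $\vp$.

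The false claim: you assert that whenever $F$ has a non-trivial autarky there is a non-trivial autarky $\vp$ with $n(\vp) \le \delta^*(F)$. This fails already in the boolean case: for $F = \set{\set{a,b},\set{\ol a,\ol b}}$ one has $\delta^*(F)=0$, but every non-trivial autarky must assign both $a$ and $b$ (any single-variable assignment touches both clauses and satisfies only one). The earlier variant, ``pick $\vp$ to be a small sub-assignment of a (matching-maximum) satisfying assignment for $F[V]$; then $\vp$ is an autarky for $F$,'' is also unsound: sub-assignments of (matching-)satisfying assignments are not autarkies in general, and the $\vp$ produced by Corollary~\ref{cor:Abstandsat} is merely a restriction of a satisfying assignment chosen to satisfy the $\delta^*$ many ``hard'' clauses --- it may touch other clauses without satisfying them.

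The crucial point, which your sketch does not quite reach, is that $\vp$ need \emph{not} be an autarky; only the composition $\psi \circ \vp$ must be. Set $V := \var(F) \sm \var(\na(F))$; $F[V]$ is satisfiable, and by Corollary~\ref{cor:Abstandsat} applied to $F[V]$ there is $\vp$ with $\var(\vp) \sse V$ and $n(\vp) \le \delta^*(F[V])$ such that $\vp * (F[V]) = (\vp * F)[V]$ is matching satisfiable. By Lemma~\ref{lem:strengthdecomp} (applied with $\A = \aut$, using that $\na(F)$ is lean, hence matching lean) one has $\delta^*(F[V]) \le \delta^*(F)$, so this $\vp$ is in the enumerated search space. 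Since $(\vp * F)[V]$ is matching satisfiable, in particular no clause of $F_V$ loses all its $V$-literals under $\vp$, and so applying a matching-satisfying assignment for $(\vp * F)[V]$ removes exactly $\vp * F - \na(F)$; hence $\nma(\vp * F) = \na(F)$. The $\psi$ computed by the algorithm therefore has $\psi * (\vp * F) = \na(F)$, and because $\na(F)$ is matching lean the quasi-maximal matching autarky $\psi$ cannot touch $\var(\na(F))$, so $\var(\psi) \sse V$. Consequently $\psi \circ \vp$ has domain inside $V$, satisfies $F[V]$, and by the characterisation in Lemma~\ref{lem:autdecomp} is an autarky for $F$ with $(\psi \circ \vp) * F = \na(F)$ --- in particular non-trivial whenever $\na(F) \neq F$. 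This is the argument your proof needs in place of the false ``small non-trivial autarky'' claim; the correctness-in-the-negative direction and the running-time accounting in your proposal are fine.
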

\begin{proof}
  Let $V := \var(F) \sm \var(\na(F))$. So $F[V]$ is satisfiable, and by Corollary \ref{cor:Abstandsat} there exists a partial assignment $\vp$ with $n(\vp) \le \delta^*(F[V])$ and $\var(\vp) \sse V$ such that $\vp * F[V]$ is matching satisfiable; note that we read $\vp * F[V]$ as $\vp * (F[V])$, and that $\vp * (F[V]) = (\vp * F)[V]$ holds. By Lemma \ref{lem:strengthdecomp} we know that $\delta^*(F[V]) \le \delta^*(F)$, and thus $\vp$ will be considered by the procedure, and the assignment $\psi$ computed satisfies $(\vp * F)[V]$. So we have a satisfying assignment $\psi \circ \vp$ for $F[V]$, i.e., we have an autarky $\psi \circ \vp$ for $F$ with $\psi * (\vp * F) = \na(F)$.
\end{proof}

As the proof of Theorem \ref{thm:leanKernelpoly} shows, by running through all $\vp$ we will actually find a quasi-maximal autarky $\theta$ (with $\theta * F = \na(F)$), however it is somewhat easier to stop when the first non-trivial autarky has been found.

\section{Expansion and the surplus}
\label{sec:expansion}

In this section the notions of ``expansion'' and ``surplus'' are studied. The basic question is: if we consider arbitrary non-empty sets of variables and all clauses containing at least one of them, how many more clauses than variables do we have at least? This minimum (over all ``expansions'') is the surplus. Having at least a surplus of $1$ turns out to be equivalent to being matching lean (Lemma \ref{lem:surpone}).

One important application of the surplus is in establishing that by fixing one variable to a value, the maximal deficiency must get smaller, and this matter is discussed in Subsection \ref{sec:expansionreductions} (generalising the results from \cite{Szei2002FixedParam}). We obtain a simplified proof of fixed-parameter tractability of satisfiability in the maximal deficiency $\delta^*(F)$ for boolean clause-sets, using the method of ``bounded search trees''. The main tool for this application is a poly-time reduction $S: \Mcls \ra \Mcls$ (see Lemma \ref{lem:Sred}) with the properties, that the maximal deficiency is not increased while a surplus of at least $2$ is established. In general we have for all variables $v \in \var(F)$ and all $\ve \in D_v$ the upper bound 
\begin{displaymath}
 \delta^*(\pao{v}{\ve} * S(F)) \le \delta^*(S(F)) - 1
\end{displaymath}
(see Corollary \ref{cor:HinrKritAbbauDef}), \emph{given} that the surplus of $F$ is at least as big as the domain size of $v$ (and some further reduction condition holds, namely $F$ does not contain singular variables). Thus by a trivial DLL branching algorithm, using additionally only the reduction $F \mapsto S(F)$ at each node, we obtain SAT decision for $F$ in time $2^{\delta^*(F)} \cdot \mathrm{poly}(\ell(F))$ for boolean clause-sets. Later, in Theorem \ref{thm:MaximalerDefektFPT} (Part II), we show how to achieve the same bound also for generalised clause-sets, by exploiting the direct translation to boolean clause-sets.

\subsection{Basic properties}
\label{sec:expansionbasics}

For multi-clause-sets $F$ we have defined the bipartite graph $B(F)$ together with its canonical bipartition $(\ul{F}, \ul{V})$. The general definition of deficiency (for arbitrary graphs) then yields the deficiency $\delta_{B(F)}(F') = \abs{F'} - \abs{\Gamma_{B(F)}(F')}$ for sets $F' \sse \ul{F}$ of clause-nodes, as well as the deficiency $\delta_{B(F)}(V') = \abs{V'} -  \abs{\Gamma_{B(F)}(V')}$ for sets $V' \sse \ul{V}$ of variable-nodes.\footnote{Recall that $\Gamma_G(A)$ for a graph $G$ and a vertex-set $A$ is the set of all neighbours of $A$ in $G$.} So we have a ``clause-based deficiency'' as well as a ``variable-based deficiency''. Identifying $F'$ with a sub-multi-clause-set of $F$, we have $\abs{F'} = c(F')$ and $\abs{\Gamma_{B(F)}(F')} = \rd(F')$, and thus $\delta_{B(F)}(F')$ (the graph-theoretical deficiency of sets of clause-nodes) is the same as the deficiency of multi-clause-sets as we have defined it in Subsection \ref{sec:matchsat}. At first sight, the situation for $V'$ seems not to be naturally interpretable on the level of multi-clause-sets, since $V'$ may contain for some variable $v$ only some of the $\abs{D_v}-1$ copies of $v$. To consider this problem, let $V''$ the the set of variable-nodes obtained from $V'$ by adding for $(v,i) \in V'$ all $(v,j)$ for $j \in \tb{1}{\abs{D_v}-1}$. Now we have
\begin{displaymath}
  \delta_{B(F)}(V'') = - \delta(F[V'_0]),
\end{displaymath}
where $V_0$ is the set of variables in $V'$ (or $V''$). Since $\delta_{B(F)}(V') \le \delta_{B(F)}(V'')$ due to $\Gamma_{B(F)}(V') = \Gamma_{B(F)}(V'')$, we see that actually, since we are only interested in \emph{maximising} the deficiency, also the variable-based deficiency has a sensible interpretation at the (conceptual) level of multi-clause-sets. Now two changes are applied to the variable-based deficiency, resulting in the notion of ``expansion'', related to the deficiency of a set of variables (analogous to the deficiency of a (sub-)multi-clause-set), and in the notion of ``surplus'' related to the maximal deficiency over all sets of variables (analogous to the maximal deficiency of a multi-clause-set). The first change is just to switch signs, so that we can use $\delta(F[V'_0])$ instead of $- \delta(F[V'_0])$. More substantially, we exclude the empty set of variables for the surplus: The maximal deficiency $\delta^*(F)$ of a multi-clause-set is only used to determine the size of a maximal matching in $B(F)$, and so negative deficiencies are not of interest (they indicate that a bigger matching number is possible --- if only there would be more clauses), whence the empty clause-set is taken into account in $\delta^*(F)$ for convenience, to force the maximal deficiency to be at least $0$. But now for the notion of surplus actually we are only interested in the negative values, that is, in the ``surplus'' which can not be realised, and thus the empty set of variables has to be excluded. After these motivations, let us now start with the formal definitions.

For a multi-clause-set $F$ and a set $V$ of variables let the \textbf{expansion} be defined as $\delta(F[V])$. As explained above, using the deficiency $\delta_{B(F)}$ in the (bipartite) graph $B(F)$, the expansion equals $- \delta_{B(F)}(V')$, where $V'$ is the set of variable nodes of $B(F)$ associated with some variable in $V$. The \textbf{surplus} of $F$ is defined as
\begin{displaymath}
  \bmm{\surp(F)} := 
  \begin{cases}
    \min_{\es \not= V \sse \var(F)} \delta(F[V]) & \text{if } \var(F) \not= \es\\
    0 & \text{if } \var(F) = \es
  \end{cases}.
\end{displaymath}
The surplus of $F$ equals the surplus of $B(F)$ as defined in Subsection 1.3 of \cite{LP86} (but with the sides of the bipartition switched(!)). By definition we have $\surp(F) \le \delta(F[\var(F)]) = \delta(F) - F(\bot) \le \delta(F)$. Generalising Lemma 7.7 in \cite{Ku00f} we have:
\begin{lem}\label{lem:surpone}
  A generalised multi-clause-set $F \in \Mcls$ with $n(F) > 0$ is matching lean if and only if $\surp(F) \ge 1$. More specifically, for any generalised multi-clause-set $F$ we have:

  For $\es \not= V \sse \var(F)$ in case of $\delta(F[V]) \le 0$ the generalised multi-clause-set $F[V]$ has a non-trivial matching autarky; such a non-trivial matching autarky yields a non-trivial matching autarky for $F$, and every non-trivial matching autarky of $F$ can be obtained in this way.
\end{lem}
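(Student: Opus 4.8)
The plan is to prove Lemma~\ref{lem:surpone} by leveraging the bipartite-graph characterisation of matching leanness (Lemma~\ref{lem:charakschlank}) together with the dictionary between multi-clause-set operations and operations on $B(F)$ established in Subsection~\ref{sec:matchsat}, and the deficiency/matching-autarky machinery of Lemma~\ref{lem:tightdef}. First I would record the two easy reductions: by Lemma~\ref{lem:charakschlank}, Part~\ref{item:charakschlank3}, $F$ with $n(F)>0$ is matching lean iff every $F' \lneq F$ has $\delta(F') < \delta(F)$; by Corollary~\ref{cor:TarsisMLean} (via $\delta(\top)=0$) this forces $\delta(F) \ge 1$. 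Then, using the general fact that for $\es \neq V \sse \var(F)$ the restriction $F[V]$ records exactly the deficiency $\delta_{B(F)}(V')$ of the associated variable-nodes (up to sign), I would connect $\surp(F) \ge 1$ with ``no tight sub-multi-clause-set other than $F$ itself''. The cleanest route is the equivalence: $\surp(F) \ge 1$ iff $\delta(F[V]) \ge 1$ for all $\es \neq V \sse \var(F)$ iff (by Hall-type/König-type matching theory for the bipartite graph $B(F)$, as already used around Lemma~\ref{lem:characmatchsat}) every non-empty set of variable-nodes has strictly more variable-nodes than neighbours-among-clause-nodes, which is precisely the condition that no proper induced sub-multi-clause-set realises the maximal deficiency.

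For the forward implication (matching lean $\Rightarrow$ $\surp(F) \ge 1$): suppose some $\es \neq V \sse \var(F)$ has $\delta(F[V]) \le 0$. The plan is to produce a non-trivial matching autarky of $F$ directly, contradicting matching leanness. By Lemma~\ref{lem:tightdef}, Part~\ref{item:tightdef3} applied to the induced sub-multi-clause-set $F_V$ (note $\var(F_V) \supseteq V$; one needs a little care here since $F_V$ may use variables outside $V$ — but restricting to $F[V]$ one gets a genuine sub-object on variable set $V$ whose deficiency is $\le 0$), the complement is a matching autark sub-multi-clause-set, so there is a matching autarky $\vp$ with $\var(\vp) \sse V \neq \es$, hence a non-trivial matching autarky for $F$. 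Actually it is cleaner to argue on $B(F)$: if the variable-nodes $V'$ over $V$ satisfy $\abs{\Gamma_{B(F)}(V')} \ge \abs{V'}$, then (by the König/Hall decomposition, e.g.\ the Dulmage–Mendelsohn-type argument or Theorem~16.x in \cite{Schrijver2003CombOptA}) there is a matching in $B(F)$ saturating the clause-nodes $\Gamma_{B(F)}(V')$ using only variable-nodes in $V'$; translating back via Lemma~\ref{lem:matchsatviamatchsa} this is a matching-satisfying assignment for $F_V = F_{V_0}$ where $V_0 \sse V$ is the set of variables actually used, i.e.\ a non-trivial matching autarky of $F$. For the reverse implication, suppose $\vp$ is a non-trivial matching autarky of $F$; then $\vp$ is matching-satisfying for $F_{\var(\vp)}$, so $B(F_{\var(\vp)})$ has a matching saturating all its clause-nodes using only variable-nodes over $\var(\vp)$; taking $V := \var(\vp)$, this matching shows $\abs{\Gamma_{B(F)}(V')} \ge \abs{V'}$ (the clause-nodes touched, of which there are $\rd(F[V])$-many on the variable side of the matching, outnumber or equal the $\abs{V'}$ variable-nodes), which is $\delta(F[V]) \le 0$, so $\surp(F) \le 0$.

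The ``more specifically'' clause is then essentially a repackaging of what the two directions already produced: given $\es \neq V \sse \var(F)$ with $\delta(F[V]) \le 0$, I would invoke Lemma~\ref{lem:tightdef}, Part~\ref{item:tightdef2} (or Part~\ref{item:tightdef3}) applied to $F[V]$ viewed inside itself — since $\delta(F[V]) \le 0 = \delta(\top)$ means $\top \lneq F[V]$ is a strictly-larger-deficiency-or-equal sub-object, so $F[V]$ is not matching lean by Lemma~\ref{lem:charakschlank} — to extract a non-trivial matching autarky $\vp$ of $F[V]$ with $\var(\vp) \sse V$. Because $F[V]$ and $F_V$ have the same matching-autarky behaviour on variables of $V$ (both via the same subgraph of $B(F)$), and since $V \sse \var(F)$, such a $\vp$ is a non-trivial autarky for $F_{\var(\vp)}$, hence a matching autarky for $F$. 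Conversely, any non-trivial matching autarky $\psi$ of $F$ is matching-satisfying for $F_{\var(\psi)} = F[\var(\psi)]$ up to the same bipartite subgraph, so with $V := \var(\psi)$ we recover it ``in this way'', and the matching witnessing matching-satisfaction forces $\delta(F[V]) \le 0$.

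The main obstacle I anticipate is the careful bookkeeping in the correspondence between matching-saturation statements in $B(F)$ and the deficiency inequalities $\delta(F[V]) \le 0$ versus $\delta(F_V)$ — in particular keeping straight whether $V$ is a set of variables or of variable-nodes, that $F_V$ may involve variables outside $V$ while $F[V]$ does not, and that the relevant matching must use only variable-nodes over $V$ (not just any matching saturating the clause-nodes). Extracting the matching autarky requires a König/Dulmage–Mendelsohn-style argument (``deficiency version of Hall's theorem'': if $\abs{\Gamma(V')} \le \abs{V'}$ then there is a matching of $B(F)$ saturating $\Gamma(V')$ within $V'$), which is exactly the kind of statement already cited from \cite{Schrijver2003CombOptA} and used implicitly in the proof of Lemma~\ref{lem:characmatchsat}; once that is invoked cleanly, the rest is routine translation through Lemmas~\ref{lem:matchsatviamatchsa} and~\ref{lem:tightdef}. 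I would also double-check the degenerate case $n(F) > 0$ but $F$ containing only trivial-domain variables: such $F$ is lean (indeed matching lean) and $B(F)$ is edgeless, so every clause-node is isolated — if $F \neq \set{\bot}^{\text{copies}}$ there are clause-nodes hence $\surp$ of the (empty) variable side is vacuously handled by the $\var(F) = \es$-type convention, and one should make sure the statement's hypothesis $n(F) > 0$ together with the definition of $\surp$ gives the right answer there.
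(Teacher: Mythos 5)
Your core argument — deriving from $\delta(F[V])\le 0 = \delta(\top)$ together with $\top\lneq F[V]\neq\top$ (using Lemma~\ref{lem:charakschlank}, or equivalently Corollary~\ref{cor:TarsisMLean}) that $F[V]$ is not matching lean, extracting a non-trivial matching autarky $\vp$ of $F[V]$ with $\var(\vp)\sse V$, and lifting it to a matching autarky for $F$ via $F[V][\var(\vp)]=F[\var(\vp)]$ — is correct, and is essentially the paper's proof viewed through the contrapositive. The paper runs the forward direction without the detour: since matching autarkies form a \emph{normal} autarky system (Lemma~\ref{lem:matchautnormal}), matching leanness is inherited by $F[V]$, so Corollary~\ref{cor:TarsisMLean} immediately gives $\delta(F[V])\ge 1$ for all $\es\neq V\sse\var(F)$, i.e.\ $\surp(F)\ge 1$; the reverse direction is exactly your closing argument that a non-trivial matching autarky $\psi$ makes $F[\var(\psi)]$ matching satisfiable, so $\delta(F[\var(\psi)])\le 0$. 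So on the essentials you and the paper agree.

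There is, however, a genuine gap in the alternative ``argue directly on $B(F)$'' route you sketch for the forward implication. First, the inequality is written backwards in two places: with $V'$ the variable-nodes over $V$ one has $\abs{\Gamma_{B(F)}(V')}=c(F[V])$ and $\abs{V'}=\rd(F[V])$, so $\delta(F[V])\le 0$ is $\abs{\Gamma_{B(F)}(V')}\le\abs{V'}$, not $\ge$. Second, and more substantively, from $\abs{\Gamma_{B(F)}(V')}\le\abs{V'}$ alone one cannot conclude the existence of a matching saturating all of $\Gamma_{B(F)}(V')$ inside $V'$: that would require the Hall condition for all subsets of the clause-side, i.e.\ $\delta^*(F[V])=0$, which is strictly stronger than $\delta(F[V])\le 0$. (A minimal counterexample: two clause-nodes both adjacent only to a single common variable-node, plus an isolated variable-node, has $\delta=0$ but no saturating matching.) This is precisely why the route through leanness is needed — you cannot read the autarky off the cardinality inequality, but only off the fact that $F[V]$ fails to be matching lean. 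Since you do give the correct leanness-based argument in your ``more specifically'' paragraph, the proposal is salvageable by simply dropping the Hall-type shortcut and keeping the Lemma~\ref{lem:charakschlank}/Corollary~\ref{cor:TarsisMLean} route.
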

\begin{proof} If $F$ is matching lean, then by Lemma \ref{lem:matchautnormal} for $V \sse \var(F)$ also $F[V]$ is matching lean, and thus by Corollary \ref{cor:TarsisMLean} we get $\surp(F) \ge 1$. If on the other hand $F$ is not matching lean, then there exists $\es \not= V \sse \var(F)$ such that $F[V]$ is matching satisfiable (where $V$ is the variable set of any non-trivial matching autarky), i.e., $\delta^*(F[V]) = 0$, and thus $\surp(F) \le 0$. \end{proof}

By Theorem 1.3.8 in \cite{LP86} we obtain the following characterisation of matching lean clause-sets:
\begin{corol}\label{cor:CharakMLEANWald}
  A generalised multi-clause-set $F$ is matching lean if and only if there exists a subgraph $\mc{F}$ of $B(F)$ with the following properties:
  \begin{enumerate}[(i)]
  \item $\mc{F}$ is a forest;
  \item $\mc{F}$ covers all variable-nodes;
  \item every variable-node has degree $2$ in $\mc{F}$.
  \end{enumerate}
\end{corol}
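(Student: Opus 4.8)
The plan is to invoke Theorem 1.3.8 in \cite{LP86} — a classical result about bipartite graphs characterising when a bipartite graph admits a spanning forest in which the vertices of one colour class all have degree exactly $2$ — and to translate it through the dictionary already set up between matching leanness and the surplus (Lemma \ref{lem:surpone}). The key bridge is that $F$ is matching lean iff $\surp(F) \ge 1$ (for $n(F) > 0$; the case $n(F) = 0$, where $B(F)$ has no variable-nodes, is trivial since then $F$ is matching lean iff it is a multiset of empty clauses, and the empty forest on the clause-nodes works), and that $\surp(F)$ is, up to the switch of the two sides of the bipartition flagged in the text, exactly the surplus of $B(F)$ in the sense of \cite{LP86}, Subsection 1.3.

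First I would recall the precise statement of Theorem 1.3.8 in \cite{LP86}: for a bipartite graph $G$ with bipartition $(A,B)$, the condition ``$\abs{\Gamma_G(S)} \ge \abs{S} + 1$ for every non-empty $S \subseteq A$'' (i.e.\ surplus at least $1$ on the $A$-side) is equivalent to the existence of a spanning subgraph $\mc{F}$ of $G$ which is a forest, covers all of $A$, and in which every vertex of $A$ has degree exactly $2$. I would apply this with $G := B(F)$ and $A := \ul{V}$ the set of variable-nodes (hence $B := \ul{F}$, the clause-nodes). By the definition of the expansion and surplus in Subsection \ref{sec:expansionbasics}, for a non-empty set $V'$ of variable-nodes with underlying variable set $V_0$ we have $\abs{\Gamma_{B(F)}(V')} - \abs{V'} \ge \abs{\Gamma_{B(F)}(V'')} - \abs{V''} = -\,\delta(F[V_0])$ where $V''$ completes $V'$ to all copies of its variables; taking the minimum over non-empty $V'$, the surplus-$\ge 1$ condition for $B(F)$ on the $\ul V$-side is precisely $\min_{\es \neq V_0 \subseteq \var(F)} \delta(F[V_0]) \le \dots$ — more carefully, ``$\abs{\Gamma(V')} \ge \abs{V'}+1$ for all non-empty $V'$'' is equivalent to ``$\delta(F[V_0]) \le -1$'' being false for all non-empty $V_0$, which (using that it suffices to test ``full'' sets $V''$, since those minimise $\abs{\Gamma} - \abs{\cdot}$ among sets with a given underlying variable set) is equivalent to $\surp(F) \ge 1$. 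By Lemma \ref{lem:surpone} this in turn is equivalent to $F$ being matching lean. Combining, $F$ matching lean $\Llra$ $B(F)$ has a spanning forest in which all variable-nodes have degree $2$, which is exactly the asserted list of properties (i)--(iii).

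The only genuine care-point — and the place I'd expect to have to be careful rather than the place that's truly hard — is the bookkeeping around the two notions of surplus differing by the side of the bipartition, and the fact that $\surp(F)$ as defined here quantifies over underlying variable sets $V_0$ rather than over arbitrary sets $V'$ of variable-nodes. The resolution is the observation already recorded in the text: $\Gamma_{B(F)}(V') = \Gamma_{B(F)}(V'')$ when $V''$ fills out $V'$ with all copies, so passing from $V'$ to $V''$ never increases $\abs{\Gamma} - \abs{V'}$, hence the minimum of $\abs{\Gamma_{B(F)}(V')} - \abs{V'}$ over non-empty $V'$ is attained on a full set and equals $\min_{\es \neq V_0} (- \delta(F[V_0])) = -\surp(F)$. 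Once that identification is pinned down, the corollary is an immediate transcription of Theorem 1.3.8 in \cite{LP86}, and no further argument is needed; in particular the equivalence is stated, as in the source, without an explicit matching-theoretic construction, since the construction is precisely the content of the cited theorem.
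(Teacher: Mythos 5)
Your proposal is correct and follows the same route the paper (implicitly) takes: reduce matching leanness to the surplus condition $\surp(F) \ge 1$ via Lemma \ref{lem:surpone}, then invoke Theorem 1.3.8 of \cite{LP86} on the bipartite graph $B(F)$ with the variable-nodes as the distinguished side. The bookkeeping you flag — that the minimum of $\abs{\Gamma_{B(F)}(V')} - \abs{V'}$ over arbitrary non-empty sets $V'$ of variable-nodes is attained on a ``full'' set (all copies of a variable), so that it equals $-\surp(F)$ — is exactly the observation the paper records in Subsection \ref{sec:expansionbasics} before stating the corollary, and your handling of the trivial cases is consistent with it.
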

An example for the application of Corollary \ref{cor:CharakMLEANWald} has been given at the end of Subsection \ref{sec:matchautsub}. The problem of computing $\surp(F)$ can be solved by Theorem 1.3.6 in \cite{LP86} as follows (compare Lemma 15 in \cite{Szei2002FixedParam} for the case of boolean clause-sets):
\begin{lem}\label{lem:compsurp}
  Consider a generalised multi-clause-set $F$.
  \begin{enumerate}
  \item\label{lem:compsurp1}  Let $M$ be a maximum matching for $B(F)$. If $M$ does not cover all variable-nodes, then $\surp(F) = \abs{M} - \rd(F) < 0$ (that is, $\sigma(F)$ is the number of uncovered variable-nodes, multiplied by $-1$). Otherwise we have $\surp(F) \ge 0$.
  \item\label{lem:compsurp2} Assume $\surp(F) \ge 0$, and consider $s \in \NNZ$, $s \le \delta(F)$. For $v \in \var(F)$ with $\abs{D_v} \ge 2$ (trivial variables are ignored) let $F_{s,v}$ be the multi-clause-set obtained from $F$ by adding $s$ new elements to the domain of $v$ (that is, a new variable $v'$ with $D_v \sse D_{v'}$ and $\abs{D_{v'}} = \abs{D_v} + s$ is chosen, and $F_{s,v}$ is obtained by replacing $v$ by $v'$ in $F$). Let $M_{s,v}$ be a maximum matching for $B(F_{s,v})$. Then we have:
    \begin{enumerate}
    \item If $M_{s,v}$ does not cover all variable-nodes in $B(F_{s,v})$, then $\surp(F) < s$, and moreover, from $M_{s,v}$ in linear time in $\ell(F)$ a set $\es \not= V \sse \var(F)$ with $\delta(F[V]) < s$ can be computed .
    \item If for all $v$ the maximum matching $M_{s,v}$ covers all variable-nodes in $B(F_{s,v})$, then $\surp(F) \ge s$.
    \end{enumerate}
  \end{enumerate}
\end{lem}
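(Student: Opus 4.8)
The plan is to reformulate the whole statement inside the bipartite graph $B(F)$ with bipartition $(\ul F,\ul V)$ and then appeal to the defect form of König's theorem (recalled in Subsection~\ref{sec:Graphsmatching}, cf.\ Theorem~1.3.6 in \cite{LP86}). The starting observation I would record is that, for $\var(F)\ne\es$,
\begin{displaymath}
  \surp(F)=\min_{\es\ne X\sse\ul V}\left(\abs{\Gamma_{B(F)}(X)}-\abs X\right),
\end{displaymath}
i.e.\ $\surp(F)$ is the surplus of $B(F)$ taken on the \emph{variable} side: if $X$ is the set of all variable-nodes belonging to $V\sse\var(F)$, then $\abs{\Gamma_{B(F)}(X)}-\abs X=\delta(F[V])$, and passing from an arbitrary $X$ to its closure (adding all copies $(v,j)$ of every variable occurring in $X$) leaves $\Gamma_{B(F)}(X)$ unchanged while enlarging $X$, so it can only decrease $\abs{\Gamma_{B(F)}(X)}-\abs X$; hence the minimum over closed sets equals the minimum over all nonempty sets.

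For Part~\ref{lem:compsurp1} I would apply the formula $\nu(B(F))=\rd(F)-\delta^*(\ul V)$ with the variable side $\ul V$ (the case recalled in Subsection~\ref{sec:Graphsmatching}), where $\delta^*(\ul V)=\max_{X\sse\ul V}(\abs X-\abs{\Gamma_{B(F)}(X)})$. If a maximum matching $M$ leaves a variable-node uncovered, then $\rd(F)-\abs M=\delta^*(\ul V)\ge 1$, this maximum is attained on some nonempty $X$ (the empty set gives $0$), and the displayed identity yields $\surp(F)=-(\rd(F)-\abs M)=\abs M-\rd(F)<0$, with $\rd(F)-\abs M$ the number of uncovered variable-nodes. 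If instead $M$ covers all variable-nodes, then every nonempty $X\sse\ul V$ is saturated by $M$, so $\abs{\Gamma_{B(F)}(X)}\ge\abs X$ and $\surp(F)\ge 0$.

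For Part~\ref{lem:compsurp2} the key structural fact is that $B(F_{s,v})$ is $B(F)$ with $s$ extra variable-nodes $w_1,\dots,w_s$ adjoined, each with neighbourhood $N_v:=\set{(C,i)\in\ul F:v\in\var(C)}$ (the common neighbourhood of the original copies of $v$), since enlarging $D_v$ by $s$ values just turns the $\abs{D_v}-1$ copies of $v$ into $\abs{D_v}-1+s$ copies, all with the same clause-neighbours. I would then show: $\surp(F)\ge s$ iff $B(F_{s,v})$ has a matching covering all its variable-nodes for every $v$. For the forward direction (whose contrapositive is the first assertion of (a)), verify Hall's condition on the variable side of $B(F_{s,v})$: a nonempty $W=W_0\cup W_1$ with $W_0\sse\ul V$ and $W_1\sse\set{w_1,\dots,w_s}$ has $\Gamma_{B(F_{s,v})}(W_0)=\Gamma_{B(F)}(W_0)$, and if $W_1=\es$ then $\abs{\Gamma(W)}\ge\abs{W_0}+s\ge\abs W$, while if $W_1\ne\es$ then $\Gamma_{B(F_{s,v})}(W)=\Gamma_{B(F)}(W_0\cup C_v)$ (with $C_v$ the copies of $v$ in $B(F)$), so $\abs{\Gamma(W)}\ge\abs{W_0\cup C_v}+s\ge\abs{W_0}+\abs{W_1}=\abs W$ using $\abs{W_1}\le s$. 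For the backward direction (whose contrapositive is (b)): if $\surp(F)<s$, take a nonempty closed $X\sse\ul V$ with $\abs{\Gamma_{B(F)}(X)}\le\abs X+s-1$, choose $v$ among the variables occurring in $X$, and note that $W:=X\cup\set{w_1,\dots,w_s}$ violates Hall in $B(F_{s,v})$, since $N_v\sse\Gamma_{B(F)}(X)$ gives $\abs{\Gamma_{B(F_{s,v})}(W)}=\abs{\Gamma_{B(F)}(X)}\le\abs X+s-1<\abs W$.

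Finally, for the ``moreover'' part of (a): given a maximum matching $M_{s,v}$ of $B(F_{s,v})$ missing some variable-node, run an $M_{s,v}$-alternating breadth-first search from the uncovered variable-nodes (linear time in $\ell(F)$, cf.\ Section~16.3 in \cite{Schrijver2003CombOptA}), let $X$ be the set of variable-nodes thus reached and then close it (all copies of each original variable; all-or-nothing on the $w_j$, which is harmless as their neighbourhoods are already contained in $\Gamma_{B(F_{s,v})}(X)$); one still has $\abs{\Gamma_{B(F_{s,v})}(X)}<\abs X$. Put $V$ equal to the variables of $X\cap\ul V$ together with $v$ if some $w_j\in X$, and let $X'$ be the set of copies in $B(F)$ of the variables in $V$; then $\Gamma_{B(F)}(X')=\Gamma_{B(F_{s,v})}(X)$ and $\abs{X'}\ge\abs{X\cap\ul V}\ge\abs X-s$, so $\delta(F[V])=\abs{\Gamma_{B(F)}(X')}-\abs{X'}<\abs X-(\abs X-s)=s$, and $V\ne\es$. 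The Hall-condition bookkeeping is routine; the delicate point, and the step I expect to demand the most care when writing the proof out in full, is precisely this last translation of the alternating-search witness $X$ in $B(F_{s,v})$ back into a variable set $V$ of $F$, checking that the count $\abs{X'}\ge\abs X-s$ survives the closure step so that $\delta(F[V])<s$ holds strictly.
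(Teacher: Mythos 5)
The paper does not give an explicit proof of this lemma; it is stated with a pointer to Theorem~1.3.6 of Lov\'asz--Plummer (and Szeider's Lemma~15 for the boolean case), so there is no proof text to compare against line by line. Your reconstruction is exactly what those references provide: Part~\ref{lem:compsurp1} is the defect form of K\"onig's theorem applied to the variable side of $B(F)$, after the preliminary observation (which you correctly justify via the closure argument, using that all copies of a variable share the same clause-neighbourhood) that $\surp(F)=\min_{\es\ne X\sse\ul V}(\abs{\Gamma(X)}-\abs X)$; Part~\ref{lem:compsurp2} is the standard ``add $s$ dummy nodes'' reduction from deciding $\surp\ge s$ to a matching-saturation test, and the ``moreover'' clause is recovered from the alternating-BFS certificate of the K\"onig proof. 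Your bookkeeping in the translation from the witness $X\sse V(B(F_{s,v}))$ back to a set $V\sse\var(F)$ is sound (and in fact the reachability set $X$ is automatically closed, so the explicit closure step you add is harmless but not needed); the resulting bound $\delta(F[V])\le s-1<s$ is established correctly. This is the same argument the paper intends, just written out in full.
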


Lemma \ref{lem:surpone} together with Lemma \ref{lem:compsurp} yields an alternative to the decision procedure for matching leanness as given in Corollary \ref{cor:Mleanpoly}.

\subsection{Decreasing the maximal deficiency}
\label{sec:decmaxdef}

The next lemma tackles the problem of giving a sufficient criterion for $\delta^*(\pao{v}{\ve} * F) < \delta^*(F)$; Part \ref{item:HauptanwendungSurplus2} generalises Lemma 7.10 in \cite{Ku00f} (the proof there is technically not fully correct).
\begin{lem}\label{lem:HauptanwendungSurplus}
  Consider a generalised multi-clause-set $F \in \Mcls$.
  \begin{enumerate}
  \item\label{item:HauptanwendungSurplus0} For $F' \le F$ we have 
    \begin{displaymath}
      \delta(F') = \delta(F) - \delta(F[\var(F) \sm \var(F')]) - \eta(F, F')\\ \le \delta(F) - \delta(F[\var(F) \sm \var(F')]),
    \end{displaymath}
    where $\eta(F, F') := c(F) - c(F_{\var(F) \sm \var(F')}) - c(F') \ge 0$ is the number of clause occurrences in $F$ of clauses $C$ with $\var(C) \sse \var(F')$ but not occurring in $F'$.
  \item\label{item:HauptanwendungSurplus1} For $F' \le F$ we have $\delta(F') \le \delta(F) - \min(c(F) - c(F'), \,\surp(F))$.
  \item\label{item:HauptanwendungSurplus2} Consider $v \in \var(F)$. Then for $\ve \in D_v$ we have
    \begin{displaymath}
       \delta^*(\pao{v}{\ve} * F) \le \delta(F) - \min(s_{(v,\ve)}(F), \surp(F)) + \abs{D_v} - 1.
     \end{displaymath}
     Thus, using $m_v(F) := \min_{\ve \in D_v} s_{(v,\ve)}(F)$, we obtain
    \begin{displaymath}
      \delta^*(\pao{v}{\ve} * F) \le \delta(F) - \min(m_v(F), \surp(F)) + \abs{D_v} - 1.
    \end{displaymath}
  \end{enumerate}
\end{lem}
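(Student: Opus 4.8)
The plan is to prove Part \ref{item:HauptanwendungSurplus0} by a direct counting argument about clause-nodes and variable-nodes in $B(F)$, and then derive Parts \ref{item:HauptanwendungSurplus1} and \ref{item:HauptanwendungSurplus2} as formal consequences. For Part \ref{item:HauptanwendungSurplus0}, fix $F' \le F$ and set $W := \var(F) \sm \var(F')$. First I would verify the identity $\delta(F') = \delta(F) - \delta(F[W]) - \eta(F,F')$ by expanding both sides via $\delta(G) = c(G) - \rd(G)$. The key observation is that the variables of $F$ split as $\var(F) = \var(F') \addcup W$, so $\rd(F) = \rd(F') + \sum_{v \in W}(\abs{D_v}-1)$, while $F[W]$ has variable set exactly $W$ (for $W \sse \var(F)$, by the basic properties of $F[V]$), so $\rd(F[W]) = \sum_{v\in W}(\abs{D_v}-1) = \rd(F) - \rd(F')$. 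For the clause counts: every clause occurrence $C$ in $F$ either has $\var(C) \cap W \ne \es$ (and then is counted in $c(F_W) = c(F[W])$), or has $\var(C) \sse \var(F')$; among the latter, some occur in $F'$ (counted in $c(F')$) and the rest are exactly the $\eta(F,F')$ extra occurrences. Hence $c(F) = c(F[W]) + c(F') + \eta(F,F')$, and $\eta(F,F') \ge 0$ because it counts a subset of clause occurrences. Combining, $\delta(F') = c(F') - \rd(F') = (c(F) - c(F[W]) - \eta) - (\rd(F) - \rd(F[W])) = \delta(F) - \delta(F[W]) - \eta$, and the inequality follows since $\eta \ge 0$.

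For Part \ref{item:HauptanwendungSurplus1}, I apply Part \ref{item:HauptanwendungSurplus0}: $\delta(F') \le \delta(F) - \delta(F[W])$ with $W = \var(F)\sm\var(F')$. If $W = \es$ then $\delta(F[W]) = \delta(F[\es]) = \delta(\top) = 0$, and also $c(F) - c(F') = \eta(F,F') \ge 0$; but more simply, when $W = \es$ we have $\var(F') = \var(F)$, so every clause of $F$ not in $F'$ has all its variables in $\var(F')$, contributing to $\eta$, hence $c(F) - c(F') = \eta \ge 0 \ge \min(c(F)-c(F'),\surp(F))$ trivially fails to help — so instead I note that in this case the claimed bound reads $\delta(F') \le \delta(F) - \min(c(F)-c(F'),\surp(F))$, and since $\surp(F) \le \delta(F) \le \delta(F) $ and $c(F)-c(F') \ge 0$, we need $\delta(F') \le \delta(F) - \min(\ge 0, \le \delta(F))$; here I would fall back on $\eta(F,F') = c(F)-c(F')$ giving $\delta(F') = \delta(F) - \eta \le \delta(F) - \min(\eta, \surp(F))$ as required. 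If $W \ne \es$, then $\delta(F[W]) \ge \surp(F)$ by the definition of surplus, so $\delta(F') \le \delta(F) - \surp(F) \le \delta(F) - \min(c(F)-c(F'),\surp(F))$, using additionally that $c(F)-c(F') \ge 0$.

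For Part \ref{item:HauptanwendungSurplus2}, I use the fact that $\pao{v}{\ve} * F = F'$ for a suitable induced sub-multi-clause-set $F'$ of $F$ with $c(F') = c(F) - s_{(v,\ve)}(F)$ (the clauses satisfied by $v \mapsto \ve$ are removed, the rest have the literal $(v,\ve)$ — if present — deleted), together with $n(\pao{v}{\ve}*F) \le n(F)-1$. Concretely, identify $\pao{v}{\ve}*F$ with the sub-multi-clause-set $G := v*F'$ where $F' \le F$ is the induced sub-multi-clause-set of clauses not satisfied by $\pao{v}{\ve}$. Then $\delta(\pao{v}{\ve}*F) = \delta(G)$, and $\delta(G) = \delta(F') + (\abs{D_v}-1)$ if $v \in \var(F')$ (crossing out all occurrences of $v$ removes one variable worth $\abs{D_v}-1$ from $\rd$ while $c$ is unchanged), and $\delta(G) = \delta(F')$ otherwise; in both cases $\delta(G) \le \delta(F') + \abs{D_v}-1$. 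The same bound must be established for every sub-multi-clause-set of $\pao{v}{\ve}*F$ to control $\delta^*$; here I would observe that any $H \le \pao{v}{\ve}*F$ corresponds to some $H_0 \le F'$ with $\delta(H) \le \delta(H_0) + \abs{D_v}-1 \le \delta(F) - \min(c(F)-c(H_0),\surp(F)) + \abs{D_v}-1$ by Part \ref{item:HauptanwendungSurplus1}, and $c(F) - c(H_0) \ge c(F) - c(F') = s_{(v,\ve)}(F)$; since $\min(\cdot,\surp(F))$ is monotone increasing in its first argument, $\min(c(F)-c(H_0),\surp(F)) \ge \min(s_{(v,\ve)}(F),\surp(F))$, which gives the first displayed inequality. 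The second displayed inequality then follows because $s_{(v,\ve)}(F) \ge m_v(F)$, so $\min(s_{(v,\ve)}(F),\surp(F)) \ge \min(m_v(F),\surp(F))$.

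The main obstacle I anticipate is bookkeeping in Part \ref{item:HauptanwendungSurplus2}: one must be careful that $\delta^*$ is a maximum over \emph{all} sub-multi-clause-sets of $\pao{v}{\ve}*F$, not just $\pao{v}{\ve}*F$ itself, and that passing from a sub-multi-clause-set of $\pao{v}{\ve}*F$ back to a sub-multi-clause-set of $F$ (to apply Part \ref{item:HauptanwendungSurplus1}) does not lose the $+\abs{D_v}-1$ correction uncontrollably — the variable $v$ may or may not survive in a given piece, but in all cases at most one variable's worth of weight $\abs{D_v}-1$ is gained, which is exactly the slack in the bound. The edge case $\var(F)=\es$ (or $v$ the only variable) should be checked separately but is immediate.
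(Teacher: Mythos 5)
Your proof is correct and follows essentially the same route as the paper's: Part~1 by direct expansion of $\delta = c - \rd$ using $\rd(F[W]) = \rd(F) - \rd(F')$ and the clause-count decomposition, Part~2 by the case split $W = \es$ / $W \ne \es$ together with $\delta(F[W]) \ge \surp(F)$ in the nontrivial case, and Part~3 by lifting an arbitrary $H \le \pao{v}{\ve}*F$ to a pre-image $H_0 \le F$ of the same clause count (and hence $\delta(H) \le \delta(H_0) + \abs{D_v} - 1$), then applying Part~2 with $c(F) - c(H_0) \ge s_{(v,\ve)}(F)$. The only cosmetic difference is that you bound all $H \le \pao{v}{\ve}*F$ directly rather than observing that the maximum of the deficiency is attained at an induced sub-multi-clause-set, which the paper uses implicitly; both variants are sound.
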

\begin{proof} For Parts \ref{item:HauptanwendungSurplus0} and \ref{item:HauptanwendungSurplus1} let $V := \var(F) \sm \var(F')$.

The equation in Part \ref{item:HauptanwendungSurplus0} follows immediately with $\delta(F) = c(F) - \rd(F)$ and $\delta(F[V]) = c(F[V]) - \rd(F[V])$, where due to $\var(F[V]) = V$ we have $\rd(F[V]) = \rd(F) - \rd(F')$. And that $\eta(F, F') \ge 0$ holds follows with the explanation given.

For Part \ref{item:HauptanwendungSurplus1} we consider two cases. If $n(F') = n(F)$, then $\delta(F') = \delta(F) - (c(F) - c(F'))$. So assume $n(F') < n(F)$ (and thus $V \not= \es$). By Part \ref{item:HauptanwendungSurplus0} we have $\delta(F') \le \delta(F) - \delta(F[V])$, and thus by $\delta(F[V]) \ge \surp(F)$ we get $\delta(F') \le \delta(F) - \surp(F)$.

For Part \ref{item:HauptanwendungSurplus2} consider an induced $F' \le \pao{v}{\ve} * F$, and let $F'' \le F$ be the unique sub-multi-clause-set of $F$ with $c(F'') = c(F')$ and $\pao{v}{\ve} * F'' = F'$. We have $\rd(F'') \le \rd(F') + \abs{D_v} - 1$, and thus $\delta(F'') \ge \delta(F') - \abs{D_v} + 1$. 

By part \ref{item:HauptanwendungSurplus1} we get $\delta(F'') \le \delta(F) - \min(c(F) - c(F''), \surp(F))$, where $c(F'') \le c(F) - s_{(v,\ve)}(F)$, and the assertion follows. \end{proof}

Considering a matching lean boolean clause-set $F$, Part \ref{item:HauptanwendungSurplus2} of Lemma \ref{lem:HauptanwendungSurplus} yields the upper bound $\delta^*(\pao{v}{\ve} * F) \le \delta(F)$ for non-pure variables $v$ (using Lemma \ref{lem:surpone}); for algorithmic purposes we are interested in cases where the maximal deficiency actually decreases:

\begin{corol}\label{cor:HinrKritAbbauDef}
  Consider a generalised clause-set $F$ and a variable $v \in \var(F)$ with $m_v(F) \ge \abs{D_v}$ (using the definition of $m_v(F)$ from Lemma \ref{lem:HauptanwendungSurplus}, Part \ref{item:HauptanwendungSurplus2}) and $\surp(F) \ge \abs{D_v}$. Then for each $\ve \in D_v$ we have $\delta^*(\pao{v}{\ve} * F) \le \delta(F) - 1$.
\end{corol}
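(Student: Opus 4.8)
The plan is to read this off directly from Lemma~\ref{lem:HauptanwendungSurplus}, Part~\ref{item:HauptanwendungSurplus2}, which already contains all the substance. That part asserts, for any $\ve \in D_v$,
\begin{displaymath}
  \delta^*(\pao{v}{\ve} * F) \le \delta(F) - \min(m_v(F), \surp(F)) + \abs{D_v} - 1 .
\end{displaymath}
By hypothesis $m_v(F) \ge \abs{D_v}$ and $\surp(F) \ge \abs{D_v}$, hence $\min(m_v(F), \surp(F)) \ge \abs{D_v}$, and substituting gives
\begin{displaymath}
  \delta^*(\pao{v}{\ve} * F) \le \delta(F) - \abs{D_v} + \abs{D_v} - 1 = \delta(F) - 1 ,
\end{displaymath}
which is the claim. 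Since a generalised clause-set is in particular a generalised multi-clause-set, the lemma applies without change.

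A small bookkeeping remark to include: for a fixed $\ve$ the sharper form of Part~\ref{item:HauptanwendungSurplus2} reads $\delta^*(\pao{v}{\ve} * F) \le \delta(F) - \min(s_{(v,\ve)}(F), \surp(F)) + \abs{D_v} - 1$, and because $s_{(v,\ve)}(F) \ge m_v(F) \ge \abs{D_v}$ by monotonicity of the minimum over $D_v$, this version also yields the bound; so either formulation of the inequality suffices and no further argument is needed.

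There is essentially no obstacle here. The genuine work — lifting an induced $F' \le \pao{v}{\ve} * F$ to the unique $F'' \le F$ with $c(F'') = c(F')$, $\pao{v}{\ve} * F'' = F'$ and $\rd(F'') \le \rd(F') + \abs{D_v} - 1$, and then bounding $\delta(F'')$ via the surplus estimate of Part~\ref{item:HauptanwendungSurplus1} — is already packaged inside Lemma~\ref{lem:HauptanwendungSurplus}. The corollary is merely the arithmetic observation that when both $m_v(F)$ and $\surp(F)$ are at least $\abs{D_v}$, the $+\abs{D_v}-1$ ``overhead'' incurred by eliminating a non-boolean variable is just barely outweighed by the $-\abs{D_v}$ gain, leaving a net decrease of one in the maximal deficiency.
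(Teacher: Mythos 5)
Your proof is correct and is exactly what the paper intends: the corollary is stated immediately after Lemma~\ref{lem:HauptanwendungSurplus} with no separate proof precisely because it is the direct arithmetic consequence of Part~\ref{item:HauptanwendungSurplus2} that you spell out. Your remark that either the $s_{(v,\ve)}(F)$-form or the weaker $m_v(F)$-form of the inequality suffices is also accurate.
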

It is unclear, whether Corollary \ref{cor:HinrKritAbbauDef} is best possible --- the condition $\surp(F) \ge \abs{D_v}$ is hard to establish for larger domain sizes. The key seems to be to improve the estimation used in the proof of Lemma \ref{lem:HauptanwendungSurplus}, Part \ref{item:HauptanwendungSurplus2}.

\subsection{Reductions}
\label{sec:expansionreductions}

In this final subsection we study how to establish the prerequisites of Corollary \ref{cor:HinrKritAbbauDef}. By singular DP-reduction we can eliminate cases with $m_v < \abs{D_v}$ as follows.
\begin{lem}\label{lem:dpausnahme}
  Consider a generalised multi-clause-set $F \in \Mcls$ and a variable $v \in \var(F)$.
  \begin{enumerate}
  \item\label{lem:dpausnahme0} Assume $v$ is not pure in $F$. Then $v$ is a singular variable for $F$ if and only if $m_v(F) < \abs{D_v}$ (where $m_v(F)$ is as defined in Part \ref{item:HauptanwendungSurplus2} of Lemma \ref{lem:HauptanwendungSurplus}).
  \item Assume $v$ is a singular variable.
    \begin{enumerate}
    \item\label{lem:dpausnahmea} $\delta(\dpi{v}(F)) \le \delta(F)$.
    \item\label{item:dpausnahmeb} If $v$ is non-degenerated, then we have
      \begin{enumerate}
      \item\label{item:dpausnahmeb1} $\delta(\dpi{v}(F)) = \delta(F)$.
      \item\label{item:dpausnahmeb2} $F$ is matching lean if and only if $\dpi{v}(F)$ is matching lean.
      \item\label{item:dpausnahmeb3} If $F$ is matching lean, then we have $\delta^*(\dpi{v}(F)) = \delta^*(F)$.
      \end{enumerate}
    \end{enumerate}
  \end{enumerate}
\end{lem}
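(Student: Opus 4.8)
The plan is to take Part~1 by a direct count on literal occurrences, Part~(2a) together with~(2b)(i) by bookkeeping of clauses and weighted variables under $\dpi{v}$, and (2b)(ii)--(iii) via the deficiency-characterisation of matching-leanness. For Part~1, write $n_\ve := \#_{(v,\ve)}(F)$. Since $s_{(v,\ve)}(F) = \#_v(F) - n_\ve$, we have $m_v(F) = \#_v(F) - \max_{\ve \in D_v} n_\ve$, i.e.\ $m_v(F)$ is the sum of all $n_\ve$ with one largest summand omitted, hence a sum of $\abs{D_v}-1$ terms. As $v$ is not pure, every $n_\ve \ge 1$, so $m_v(F) \ge \abs{D_v}-1$; equality holds iff each of these $\abs{D_v}-1$ summands equals $1$, which is exactly singularity of $v$ (with special value any maximiser of $n_\ve$). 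The converse is the same computation: if $v$ is singular with special value $\ve_0$ then the maximum is attained at $\ve_0$, so $m_v(F) = \sum_{\ve \ne \ve_0} n_\ve = \abs{D_v}-1 < \abs{D_v}$.

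For Parts~(2a) and~(2b)(i), let $\ve_0$ be the special value and $p := n_{\ve_0} \ge 1$. Applying $\dpi{v}$ deletes the $\#_v(F) = p + (\abs{D_v}-1)$ clauses through $v$ and inserts at most $\prod_{\ve} n_\ve = p$ resolvents, so in all cases $c(\dpi{v}(F)) \le c(F) - (\abs{D_v}-1)$, and $\var(\dpi{v}(F)) \sse \var(F) \sm \set{v}$. If $v$ is \emph{non-degenerated}, all $p$ resolvents are actually formed and are pairwise distinct and new, so $c(\dpi{v}(F)) = c(F) - (\abs{D_v}-1)$; moreover every residual parent $\set{v}*C$ is contained in some formed resolvent, hence no variable other than $v$ drops out and $\rd(\dpi{v}(F)) = \rd(F) - (\abs{D_v}-1)$, giving $\delta(\dpi{v}(F)) = \delta(F)$. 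This settles (2b)(i) and (2a) in the non-degenerate case. For the degenerate case of~(2a) I would induct on $c(F)$: degeneracy exhibits a clause $C \ni v$ blocked w.r.t.\ $F$ with blocking variable $v$ (for instance a $\ve_0$-parent that enters only resolvents clashing outside $v$), so $\dpi{v}(F) = \dpi{v}(F \sm \set{C})$; after deleting $C$ (which, under the reductions in force, lowers $\delta$ by exactly $1$), the variable $v$ is either still singular or now pure for $F \sm \set{C}$, and for pure $v$ the operator $\dpi{v}$ is just removal of all clauses through $v$, which does not raise $\delta$. \emph{I expect the one delicate point to be pinning down that the clause-deletions arising here never eliminate ``private'' variables, so that $\delta$ genuinely does not increase --- this is where the standing conventions on reduced clause-sets (no trivial and no pure variables remaining) are really used.}

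For Parts~(2b)(ii) and~(iii), assume $v$ is non-degenerated, so $\var(\dpi v(F)) = \var(F)\sm\set{v}$. By Lemma~\ref{lem:surpone} a multi-clause-set with at least one variable is matching lean iff its surplus is $\ge 1$, so it suffices to compare the expansions $\delta(F[V])$ and $\delta(\dpi v(F)[V])$ over sets of variables, taking care also of the sets containing $v$ on the $F$-side. For $\es \ne V \sse \var(F)\sm\set{v}$, restriction to $V$ commutes with the resolution step (as $v \notin V$): each resolvent $R_j$ restricts to the union of the $V$-restrictions of $\set v * C_{\ve_0,j}$ and of the $\set v * C_{\ve'}$ with $\ve' \ne \ve_0$. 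Counting how many clauses meet $V$ and how $\rd$ changes yields $\delta(\dpi v(F)[V]) \ge \delta(F[W])$ for $W$ equal to $V$ or to $V\cup\set v$, and conversely a deficiency-witnessing subset on either side induces one on the other; the structural input here is Lemma~\ref{lem:tightdef} (a matching autarky does not decrease the deficiency, and every tight sub-multi-clause-set is realised by a matching autarky), applied to both $F$ and $\dpi v(F)$. This gives~(2b)(ii). Finally~(2b)(iii) is immediate: by~(2b)(ii), $\dpi v(F)$ is matching lean, so Corollary~\ref{cor:TarsisMLean} gives $\delta^*(F) = \delta(F)$ and $\delta^*(\dpi v(F)) = \delta(\dpi v(F))$, and these two numbers coincide by~(2b)(i).
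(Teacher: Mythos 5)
The paper spells out only Part~(2b)(ii), treating the remaining parts as consequences of the preceding remarks and of Lemma~\ref{lem:charakschlank}; your arguments for Parts~1, (2b)(i) and (2b)(iii) match the intended ones, and your hedge for the degenerate case of~(2a) is appropriate (the paper offers no argument there either). For the substantive Part~(2b)(ii) you take a genuinely different route: the paper transforms a matching autarky $\vp$ of $\dpi v(F)$ directly into one for $F$ by a case distinction on which of the parent clauses $C_1,\dots,C_{k-1}$ and $D_1,\dots,D_m$ meet $\var(\vp)$, whereas you pass through the surplus characterisation (Lemma~\ref{lem:surpone}) and compare expansions. Your computation for the forward implication ($F$ matching lean implies $\dpi v(F)$ matching lean) does go through: for non-empty $V\sse\var(F)\sm\set{v}$, if some $C_i$ meets $V$ then every resolvent does and $\delta(\dpi v(F)[V]) = \delta(F[V\cup\set v])$, otherwise $\delta(\dpi v(F)[V]) = \delta(F[V])$; in either case the right-hand side is the expansion of a non-empty subset of $\var(F)$, so $\surp(F)\ge 1$ transfers.

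The gap is the converse direction, which you cover with a single clause (``conversely a deficiency-witnessing subset on either side induces one on the other''). When $v\notin W$ and some $C_i$ meets $W$, your identity yields $\delta(\dpi v(F)[W]) = \delta(F[W\cup\set v])$, which exceeds $\delta(F[W])$ by $m$ minus the number of $C_i$'s and $D_j$'s meeting $W$; when $m$ is larger than that count, $\delta(\dpi v(F)[W])\ge 1$ gives no lower bound on $\delta(F[W])$. The paper's own one-liner for this direction (``it is not hard to see that $\vp$ is also a matching autarky for $\dpi v(F)$'') has the very same difficulty: if $\vp$ assigns a single variable $w$ occurring in only one parent $C_i$, then after DP-reduction $w$ occurs in all $m$ resolvents, which can exceed the $\abs{D_w}-1$ variable-nodes available in $B(\dpi v(F))$, so that $\vp$ is no longer a matching autarky for $\dpi v(F)$. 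You have thus restated the paper's gap in the language of surplus rather than closed it. In the only place where~(2b)(ii)--(iii) is applied, namely Corollary~\ref{cor:RedsingVar}, the clause-set is already matching lean and only the forward direction is used; but for the iff as literally stated, the converse direction needs either an explicit reduction hypothesis on $F$ (no pure variables, say) or a separate and more careful argument.
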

\begin{proof} The only non-obvious assertion is \ref{item:dpausnahmeb2} (using the remarks made before Lemma \ref{lem:singDPMUSAT}, and Lemma \ref{lem:charakschlank}). Let $D_v = \set{\ve_1, \dots, \ve_k}$ ($k = \abs{D_v}$), and assume w.l.o.g.\ that $\#_{(v,\ve_i)}(F) = 1$ for $i < k$; consider $C_1, \dots, C_{k-1} \in F$ with $(v,\ve_i) \in C_i$ for $i < k$, and let $D_1, \dots, D_m \in F$ be the clauses containing $(v,\ve_k)$ ($m = \#_{(v,\ve_k)}(F)$). Thus $F_{\set{v}} = \set{C_1, \dots, C_{k-1}} + \sum_{i=1}^m \set{D_i}$. Now with $F' := F - F_{\set{v}}$ we have
\begin{displaymath}
  \dpi{v}(F) = F' + R,
\end{displaymath}
where $R := \sum_{i=1}^m \setb{ R_i }$ and $R_i := C_1 \cup \dots \cup C_{k-1} \cup D_i$ for $i \in \tb 1m$. First assume that $F$ is matching lean, but that we have a non-trivial matching autarky $\vp$ for $\dpi{v}(F)$ with $\var(\vp) \sse \var(\dpi{v}(F)) = \var(F) \sm \set{v}$. Let $V := \var(\vp)$. If $V \cap \var(R) = \es$, then $\vp$ would also be a matching autarky for $F$, since $\var(R) = \var(F_{\set{v}}) \sm \set{v}$. So assume $V \cap \var(R) \not= \es$. If there exists $i \in \tb 1m$ with $\var(D_i) \cap V = \es$, then for all $j < k$ we have $\var(C_j) \cap V = \es$, and it follows that $\vp$ would also be a matching autarky for $F$. So assume that for all $i \in \tb 1m$ we have $\var(D_i) \cap V \not= \es$. Now $\pao{v}{\ve_k} \circ \vp$ is a matching autarky for $F$, contradicting matching leanness of $F$.

For the reverse direction assume that $\dpi{v}(F)$ is matching lean, but that we have a non-trivial matching autarky $\vp$ for $F$. Now it is not hard to see that $\vp$ is also a matching autarky for $\dpi{v}(F)$. \end{proof}

\begin{corol}\label{cor:RedsingVar}
  There is a polynomial time computable map $r: \Mcls \ra \Mcls$, such that for a generalised multi-clause-set $F$ we have:
  \begin{enumerate}[(i)]
  \item $n(r(F)) \le n(F)$, $c(r(F)) \le c(F)$ and $\delta^*(r(F)) \le \delta^*(F)$.
  \item $r(F)$ is satisfiability-equivalent to $F$.
  \item $r(F)$ is matching lean.
  \item $r(F)$ is lean w.r.t.\ pure autarkies (i.e., $r(F)$ does not contain pure variables).
  \item $r(F)$ does not contain singular variables.
  \end{enumerate}
  Computation of $r(F)$ is as follows:
  \begin{enumerate}
  \item\label{item:Reduktion11} Apply singular DP-degeneration reduction and reduction by pure autarkies and matching autarkies as long as possible.
  \item If there exists a singular variable, then it must be non-degenerated, thus applying DP-reduction does not increase the maximal defect by Part \ref{item:dpausnahmeb3} of Lemma \ref{lem:dpausnahme}, so apply this reduction and go to Step \ref{item:Reduktion11}. Otherwise output $r(F)$ and stop.
  \end{enumerate}
\end{corol}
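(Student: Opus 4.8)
The procedure alternates two phases, and the single fact that makes it work is the following invariant: whenever the non\mbox{-}degenerated singular DP\mbox{-}reduction $F \mapsto \dpi v(F)$ of Step~\ref{item:Reduktion11} is actually performed, the current multi\mbox{-}clause\mbox{-}set is already matching lean, because Step~\ref{item:Reduktion11} has just been run to completion. This is exactly what licenses invoking Lemma~\ref{lem:dpausnahme}, Parts~\ref{item:dpausnahmeb2} and~\ref{item:dpausnahmeb3} (which in turn rest on the generalised Tarsi bound, Corollary~\ref{cor:TarsisMLean}), to see that this one non\mbox{-}monotone step neither raises $\delta^*$ nor destroys matching leanness. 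So I would structure the argument as: (a) termination in polynomially many elementary steps; (b) polynomiality of each step; (c) verification of (i)--(v) at the fixed point.

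For (a) I would use the lexicographically ordered pair $(c(F), \ell(F))$ as termination measure. Each elementary reduction invoked strictly decreases $c(F)$ -- elimination of a pure variable deletes $\ge 1$ clause, a matching\mbox{-}autarky reduction $F \mapsto \vp*F$ deletes the (at least one) clause touched by $\vp$, a singular DP\mbox{-}degeneration reduction deletes a clause, and for a non\mbox{-}trivial singular variable $\dpi v$ removes $\abs{D_v}-1+\#_{(v,\ve)}(F)$ clauses while adding at most $\#_{(v,\ve)}(F)$ -- with the sole exception of $\dpi v$ applied to a non\mbox{-}degenerated \emph{trivial} variable $v$ (which is singular under the conventions of Section~\ref{sec:satviamatch}), where $c(F)$ stays fixed but $\ell(F)$ strictly drops since no resolvent is created. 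Since no step increases $n(F)$ we have $\ell(F) \le c(F)\cdot n(F)$ throughout, bounded by the input parameters, so $(c,\ell)$ ranges over a fixed finite grid and the process halts after a polynomial number of elementary steps. For (b): detecting pure and singular variables and forming $\dpi v(F)$ for a singular $v$ (at most $c(F)$ new clauses) are plainly polynomial, and the matching\mbox{-}autarky reductions are done \emph{en bloc} by computing $\nma(F)$ and a witnessing matching autarky via Corollary~\ref{cor:Mleanpoly2} and Lemma~\ref{lem:polytimeqmma}.

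For (c): satisfiability\mbox{-}equivalence (ii) is immediate, since every elementary step -- pure\mbox{-}autarky and matching\mbox{-}autarky reductions, singular DP\mbox{-}degeneration, and the Davis--Putnam operator -- is satisfiability\mbox{-}preserving. The bounds $n(r(F)) \le n(F)$, $c(r(F)) \le c(F)$ in (i) hold because no step increases $n$ or $c$. For $\delta^*(r(F)) \le \delta^*(F)$ I would go step by step: a pure\mbox{-}autarky reduction, a matching\mbox{-}autarky reduction, and a singular DP\mbox{-}degeneration reduction each replace $F$ by a \emph{sub\mbox{-}multi\mbox{-}clause\mbox{-}set} of $F$ (a non\mbox{-}trivial autarky deletes precisely the clauses it touches and leaves the rest untouched; recall also Lemma~\ref{lem:tightdef}, Part~\ref{item:tightdef1}, in the matching case), so $\delta^*$ cannot go up; and the only step not of this subset form, $\dpi v$ for a non\mbox{-}degenerated singular $v$, is by construction applied only to a matching lean $F$, so Lemma~\ref{lem:dpausnahme}, Parts~\ref{item:dpausnahmeb2}--\ref{item:dpausnahmeb3}, give $\delta^*(\dpi v(F)) = \delta^*(F)$ and $\dpi v(F)$ matching lean again. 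Finally, at termination Step~\ref{item:Reduktion11}'s last iteration has just failed to find a singular variable, so (v) holds; that final check occurred immediately after a completed round of Step~\ref{item:Reduktion11}'s reductions, so there is no pure variable -- giving (iv) -- and no non\mbox{-}trivial matching autarky -- giving (iii), i.e.\ $r(F)$ is matching lean.

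\textbf{Main obstacle.} The genuinely delicate point is the $\delta^*$ part of (i): the Davis--Putnam step is the one reduction with $\dpi v(F) \not\le F$, so its effect on $\delta^*$ is not automatic, and this is exactly why the two phases must be interleaved so that $\dpi v$ is only ever triggered on a matching lean instance -- the proof hinges essentially on Part~\ref{item:dpausnahmeb3} of Lemma~\ref{lem:dpausnahme}. A secondary annoyance is the treatment of trivial variables: they count as singular here yet $\dpi v$ on them does not shrink $c(F)$, which is what forces the two\mbox{-}component termination measure rather than a count of clauses alone.
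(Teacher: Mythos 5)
Your proposal is correct and fills in the verification the paper leaves implicit in stating the construction; the key structural observation -- that the interleaving of phases ensures the non-degenerated singular DP-step is only ever applied to a multi-clause-set that has just been made matching lean, so Lemma~\ref{lem:dpausnahme}, Parts~\ref{item:dpausnahmeb2}--\ref{item:dpausnahmeb3} apply, while all other steps replace the current multi-clause-set by a sub-multi-clause-set -- is exactly the intended argument. One small simplification to the termination analysis: instead of the lexicographic pair $(c,\ell)$, it suffices to note that every Step~\ref{item:Reduktion11} reduction strictly decreases $c$ while every invocation of the DP-step in Step~2 strictly decreases $n$ (including the trivial-variable case), and since no step increases either $c$ or $n$, the total number of elementary reductions is at most $c(F) + n(F)$.
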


The next lemma contains the main idea for establishing a surplus of two.
\begin{lem}\label{lem:etabsurptwo}
  Consider a multi-clause-set $F \in \Mlean$, such that for all variables $v \in \var(F)$ we have $\#_v(F) \ge \abs{D_v} + 1$, and assume that $V \sse \var(F)$ is given with $\delta(F[V]) = 1$. Then $F[V]$ is satisfiable, and a satisfying assignment $\vp$ with $\var(\vp) \sse \var(F[V])$ can be found in polynomial time. (With $\vp$ thus we have found a non-trivial autarky for $F$.)
\end{lem}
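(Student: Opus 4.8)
The plan is to reduce the whole statement to the single claim that $G := F[V]$ is satisfiable, since the polynomial-time extraction of a suitable $\vp$ then follows at once. First I would record the relevant properties of $G$. Since matching autarkies form a normal autarky system (Lemma~\ref{lem:matchautnormal}), matching leanness passes to the restriction $F[V]$, so $G$ is again matching lean; clearly $\var(G) = V$, and $\#_v(G) = \#_v(F)$ for every $v \in V$, because restricting to $V$ only deletes clauses not meeting $V$ and literal occurrences on variables outside $V$, hence touches neither the clauses containing $v$ nor the literals on $v$. Thus $\#_v(G) \ge \abs{D_v}+1$ for all $v \in \var(G)$. Next I would check that $\bot \notin G$: otherwise $G = \set{\bot} + G'$ with $G'$ matching lean and $\delta(G') = \delta(G) - 1 = 0$, which by Corollary~\ref{cor:TarsisMLean} forces $G' = \top$, i.e.\ $G = \set{\bot}$ and hence $c(F_V) = 1$; but $V \ne \es$ (as $\delta(F[\es]) = 0 \ne 1$), so any $v \in V$ occurs in at least $\abs{D_v}+1 \ge 2$ clauses of $F$, all of which lie in $F_V$ --- a contradiction. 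Since $G$ is matching lean with $\delta(G) = 1 > 0$, Corollary~\ref{cor:TarsisMLean} also gives $\delta^*(G) = \delta(G) = 1$. Granted that $G$ is satisfiable, Corollary~\ref{cor:poly1} (whose algorithm enumerates all $\pao{w}{\ve}$ with $n(\pao{w}{\ve}) \le \delta^*(G) = 1$, tests matching satisfiability of $\pao{w}{\ve} * G$ by a maximum-matching computation, and on success composes $\pao{w}{\ve}$ with the matching-satisfying assignment of Lemma~\ref{lem:matchsatviamatchsa}) produces in polynomial time a satisfying assignment $\vp$; since there $w \in \var(G)$ and the composed part has variables in $\var(\pao{w}{\ve} * G) \sse V$, one obtains $\var(\vp) \sse V$ as demanded.

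So the real task is to show $G$ satisfiable. I would take a maximum matching $M$ of $B(G)$. By Lemma~\ref{lem:surpone} $\surp(G) \ge 1$, so by Lemma~\ref{lem:compsurp}, Part~\ref{lem:compsurp1}, $M$ covers every variable-node; since $\abs{M} = c(G) - \delta^*(G) = c(G) - 1$ and $B(G)$ has $\rd(G) = c(G) - 1$ variable-nodes, $M$ leaves exactly one clause-node uncovered, a copy of some clause $C_0 \in G$. Applying Lemma~\ref{lem:matchsatviamatchsa} to the matching-satisfiable multi-clause-set $G - \set{C_0}$ and $M$, I get a matching-satisfying assignment $\psi$ for $G - \set{C_0}$ with $\var(\psi) = V$ (all variable-nodes are covered), so $\psi$ satisfies every clause of $G$ except possibly $C_0$. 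If $\psi$ also satisfies $C_0$ we are done, so assume $\psi(w) = \ve_w$ for each literal $(w, \ve_w) \in C_0$. For $w \in \var(C_0)$ let $E_w \sse D_w$ be the set of values taken by $w$ in the --- exactly $\abs{D_w} - 1$, since $M$ covers all copies of $w$ --- clause-occurrences that $M$ matches to variable-nodes of $w$; by the construction of $\psi$, $\ve_w = \psi(w) \notin E_w$, hence $\abs{E_w} \le \abs{D_w} - 1$.

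The repair splits into two cases. If $\abs{E_w} \le \abs{D_w} - 2$ for some $w \in \var(C_0)$, I would re-assign $\psi$ on $w$ to a value $\ve' \in D_w \sm (E_w \cup \set{\ve_w})$ (non-empty since $\abs{E_w \cup \set{\ve_w}} \le \abs{D_w}-1$): every clause matched to a $w$-node stays satisfied (its $w$-value lies in $E_w$), all other matched clauses are untouched, and $C_0$ becomes satisfied --- a satisfying assignment of $G$. Otherwise $E_w = D_w \sm \set{\ve_w}$ for every $w \in \var(C_0)$, and this is where the hypothesis $\#_w(G) \ge \abs{D_w} + 1$ must be spent: besides the $\abs{D_w} - 1$ occurrences matched to $w$-nodes and the uncovered copy of $C_0$, some further occurrence $(C'_w, \cdot)$ meets $w$, and it is matched by $M$ to a variable-node of a variable $\ne w$. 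Using $(C'_w, \cdot)$ I would re-route: letting $\nu$ be the value of the literal on $w$ in $C'_w$, if $\nu \ne \ve_w$ then flipping $\psi$ on $w$ to $\nu$ endangers only the single matched clause $\tilde C$ with $(w, \nu) \in \tilde C$; swapping the $w$-node matched to $\tilde C$ onto the uncovered clause-node gives a matching of the same (maximum) size that is again matching-satisfying except on $\tilde C$, and either the modified assignment already satisfies $\tilde C$ --- done --- or one repeats with $(\tilde C, M')$ in place of $(C_0, M)$. The hard part will be to carry this re-routing through: I expect one must show, by an $M$-alternating-path analysis exploiting maximality of $M$, that the fully tight configuration (no spare occurrence $(C'_w,\cdot)$ with $\nu \ne \ve_w$ for any $w \in \var(C_0)$, and every displaced clause falsified after the swap) cannot occur; this is precisely the step that consumes the strict inequality $\#_w(G) \ge \abs{D_w}+1$ rather than merely $\ge \abs{D_w}$, and it is the main obstacle.
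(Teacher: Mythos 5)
Your reduction to the single claim ``$G := F[V]$ is satisfiable'' is sound, and the bookkeeping around $G$ (matching leanness passes to restrictions, $\#_v(G) = \#_v(F)$ for $v \in V$, $\delta^*(G) = \delta(G) = 1$, and the variables of the assignment produced by Corollary~\ref{cor:poly1} stay inside $V$) is all correct. But the core of the lemma --- that $G$ is in fact satisfiable --- is not actually proven. You set up the matching $M$ leaving one clause-node $C_0$ uncovered, extract a matching-satisfying assignment $\psi$ for $G - \{C_0\}$, and handle the ``some $w$ has $\abs{E_w} \le \abs{D_w} - 2$'' case cleanly; that part is fine. In the remaining tight case, however, you describe a re-routing/alternating-path idea and then explicitly write that carrying it through ``is the main obstacle.'' So the argument ends exactly where the real work begins; as it stands, the central claim is asserted rather than proved.

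The paper's proof is genuinely different and much shorter, and it avoids any direct matching analysis. It argues by contradiction: if $F[V]$ were unsatisfiable, then since it is matching lean of deficiency $1$, Corollary~\ref{cor:musatonemlean} places it in $\Musati{\delta=1}$. Lemma~\ref{lem:GrundtatsacheDefektEins} (Part~II) then guarantees a variable $w$ with $\#_{(w,\ve)}(F[V]) = 1$ for every $\ve \in D_w$, hence $\#_w(F[V]) = \abs{D_w}$ --- contradicting $\#_w(F) \ge \abs{D_w}+1$. Note that completing your alternating-path argument would essentially amount to re-deriving, by hand in the matching picture, the structural fact that Lemma~\ref{lem:GrundtatsacheDefektEins} already supplies (itself proved via the boolean translation). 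So the missing step is not a minor technicality: it is the load-bearing structural result of the whole section. If you want to keep your direct route, you would need to supply the alternating-path argument in full, paying particular attention to how the hypothesis $\#_w(G) \ge \abs{D_w}+1$ rules out the fully tight configuration --- but the cleaner path is simply to invoke Corollary~\ref{cor:musatonemlean} together with Lemma~\ref{lem:GrundtatsacheDefektEins}.
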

\begin{proof} Assume that $F[V]$ is unsatisfiable. By Corollary \ref{cor:musatonemlean} thus we have $F[V] \in \Musati{\delta=1}$ which would contain a variable occurring in all signs exactly once (see the later Lemma \ref{lem:GrundtatsacheDefektEins} in Part II, whose proof does not make use of the results of this subsection), contradicting the assumption. So $F[V]$ is satisfiable, and a satisfying assignment $\vp$ can be found by Corollary \ref{cor:poly1}. \end{proof}

Strengthening Corollary \ref{cor:RedsingVar} we get now:
\begin{lem}\label{lem:Sred}
  There is a polynomial time computable map $S: \Mcls \ra \Mcls$ such that for a generalised multi-clause-set $F$ we have:
  \begin{enumerate}[(i)]
  \item $n(S(F)) \le n(F)$, $c(S(F)) \le c(F)$ and $\delta^*(S(F)) \le \delta^*(F)$.
  \item $S(F)$ is satisfiability-equivalent to $F$.
  \item $S(F)$ is matching lean and lean w.r.t.\ pure autarkies, and does not contain singular variables.
  \item If $\var(S(F)) \not= \es$, then $\surp(S(F)) \ge 2$.
  \end{enumerate}
  Computation of $S(F)$ is as follows:
\begin{enumerate}
\item\label{item:Reda1} First reduce $F := r(F)$ (see Corollary \ref{cor:RedsingVar}).
\item If $\var(F) = \es$ or $\surp(F) \ge 2$ then stop.
\item Otherwise find some $\es \not= V \sse \var(F)$ with $\delta(F[V]) = \surp(F) = 1$ by Lemma \ref{lem:compsurp}, Part \ref{lem:compsurp2}; by Lemma \ref{lem:etabsurptwo} we can now find a non-trivial autarky $\vp$ for $F$: reduce $F := \vp * F$, and go to Step \ref{item:Reda1}.
\end{enumerate}
\end{lem}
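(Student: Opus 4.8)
The plan is to verify that the displayed algorithm behaves as claimed, i.e.\ that it terminates after polynomially many rounds and that on termination the four invariants hold. The backbone consists of four already-established facts: Corollary~\ref{cor:RedsingVar} (properties of the reduction $r$), Lemma~\ref{lem:surpone} (a matching lean clause-set with variables has surplus $\ge 1$), Lemma~\ref{lem:compsurp}, Part~\ref{lem:compsurp2} (computing the surplus and extracting a witness set $V$ with small $\delta(F[V])$), and Lemma~\ref{lem:etabsurptwo} (turning a witness set with $\delta(F[V]) = 1$ into a non-trivial autarky).

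First I would record that one pass through Steps~\ref{item:Reda1}--3 never increases $n$, $c$ or $\delta^*$ and preserves satisfiability-equivalence: the call $F \mapsto r(F)$ has these properties by Corollary~\ref{cor:RedsingVar}, and the autarky reduction $F \mapsto \vp * F$ has them because $\vp * F$ is satisfiability-equivalent to $F$, $n(\vp * F) \le n(F)$, and every $G \le \vp * F$ is also $\le F$, whence $\delta^*(\vp * F) \le \delta^*(F)$. Composing over all rounds gives (i) and (ii) for $S(F)$. Since the algorithm can exit only in Step~2, directly after a call to $r$, the output $S(F)$ is of the form $r(\cdot)$ and therefore matching lean, free of pure variables, and free of singular variables by Corollary~\ref{cor:RedsingVar}; this is (iii). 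The exit test of Step~2 forces $\var(S(F)) = \es$ or $\surp(S(F)) \ge 2$, which is (iv).

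The substantive point is that whenever Step~3 is reached it is executable and makes progress. At that moment $F = r(\cdot)$ has $\var(F) \neq \es$ and $\surp(F) < 2$; being matching lean, $\surp(F) \ge 1$ by Lemma~\ref{lem:surpone}, so $\surp(F) = 1$. Moreover, since $F$ has no pure and no singular variables, every variable $v$ satisfies $\#_v(F) \ge \abs{D_v} + 2$ (if $v$ is non-pure each value occurs, and if $v$ is non-singular at least two values occur at least twice; in particular $\abs{D_v} \ge 2$), so the hypothesis of Lemma~\ref{lem:etabsurptwo} is met. To produce $\es \neq V \sse \var(F)$ with $\delta(F[V]) = \surp(F) = 1$ I would split on $\delta(F)$: if $\delta(F) \ge 2$, invoke Lemma~\ref{lem:compsurp}, Part~\ref{lem:compsurp2} with $s = 2$, which (because $\surp(F) = 1 < 2$) returns such a $V$ in linear time; if $\delta(F) \le 1$, then $F(\bot) = 0$ — otherwise $\delta(F[\var(F)]) = \delta(F) - F(\bot) \le 0 < \surp(F)$, absurd — so $\delta(F) = 1$ and $V := \var(F)$ works, as $\delta(F[\var(F)]) = \delta(F) = 1$. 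Lemma~\ref{lem:etabsurptwo} then supplies, in polynomial time, a non-trivial autarky $\vp$ for $F$; and $c(\vp * F) < c(F)$, since a non-trivial autarky satisfies (hence removes) every clause of $F$ meeting $\var(\vp)$ and at least one such clause exists. Therefore the loop runs at most $c(F_{\mathrm{initial}})$ times, each round polynomial by the cited results, so $S$ is polynomial-time computable; and it is well defined because the algorithm always halts.

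The only mildly delicate step is the extraction of $V$: Lemma~\ref{lem:compsurp}, Part~\ref{lem:compsurp2} requires $s \le \delta(F)$, so before taking $s = 2$ one must dispose of the case $\delta(F) \le 1$ separately, which is precisely where one uses that a matching lean clause-set of positive surplus cannot contain $\bot$. I expect this case split — together with the elementary counting argument showing $\#_v(F) \ge \abs{D_v} + 1$ after the reduction $r$ — to be the part that needs care; everything else is bookkeeping layered on top of Corollary~\ref{cor:RedsingVar}, Lemma~\ref{lem:surpone}, Lemma~\ref{lem:compsurp} and Lemma~\ref{lem:etabsurptwo}.
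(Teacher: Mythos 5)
Your verification is correct and follows the paper's intended route — the paper supplies only the algorithm, and you have checked that each step is executable and that the claimed invariants hold on exit. Your handling of the hypothesis $s \le \delta(F)$ in Lemma~\ref{lem:compsurp} by separating the case $\delta(F) = 1$ (where $V := \var(F)$ works once $F(\bot) = 0$ is observed) is a detail the paper leaves implicit, and your counting argument that pure-free plus singular-free forces $\#_v(F) \ge \abs{D_v}+1$ (in fact $\ge \abs{D_v}+2$) correctly discharges the hypothesis of Lemma~\ref{lem:etabsurptwo}.
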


If we only allow boolean clause-sets, then, as explained at the beginning of this subsection, we obtain fixed-parameter tractability of satisfiability decision w.r.t.\ the parameter $\delta^*(F)$ by Lemma \ref{lem:Sred} and Corollary \ref{cor:HinrKritAbbauDef} (together with Lemma \ref{lem:dpausnahme}, Part \ref{lem:dpausnahme0}). Finally we mention that a good possibility for further improvements is to generalise and strengthen the approach from \cite{Ku99dKo} based on matroid theory.

\section{Conclusion and open problems}
\label{sec:open1}

The first purpose of this article was to set the stage for the study of generalised clause-sets as sets of ``no-goods'', where literals are given by one ``forbidden value'': We defined and summarised the basic properties of syntax, semantics, resolution calculus and autarky systems. Then we considered the generalisation of the notion of deficiency for these generalised clause-sets, and we studied the basic autarky system related to this notion, namely matching autarkies.

\subsection{Matching autarkies}
\label{sec:openApplyingmatching}

Basic tasks are the computation of the matching lean kernel and the computation of a quasi-maximal matching autarky (realising the matching lean kernel). In this article we did not explore further algorithmic details, and so the computation of the matching lean kernel in Corollary \ref{cor:Mleanpoly2} just uses the general procedure of Lemma \ref{lem:vonschlanknachred}, based on the simple decision procedure for matching lean clause-sets from Corollary \ref{cor:Mleanpoly} (alternatively, Lemma \ref{lem:surpone} together with Lemma \ref{lem:compsurp} can be used); the computation of a quasi-maximal matching autarky in Lemma \ref{lem:polytimeqmma} in turn is based on Corollary \ref{cor:Mleanpoly2} in the obvious way. It should be possible to use the characterisation of the matching lean kernel in Corollary \ref{cor:leankerntight} (as the smallest tight sub-clause-set) for an alternative procedure for computing $\nma(F)$ directly, exploiting matching theory.

Many questions are still open regarding polynomial-time satisfiability decision in the maximal deficiency (for generalised clause-sets):
\begin{enumerate}
\item In Corollary \ref{cor:poly1} polynomial-time satisfiability decision is achieved by using the existence of matching-maximum satisfying assignments. Is it possible to show also \emph{fixed-parameter tractability} in this way, using perhaps some form of local search which considers only matching-maximum assignments in some form?
\item Is it possible to directly generalise the approach which yields FPT for the boolean case, as given at the end of Section \ref{sec:expansion}?
\item In Section \ref{sec:leanKernelpoly} it was shown how to find a non-trivial autarky in polynomial time in the maximal deficiency, while the approach from \cite{Ku99dKo} based on matroid theory yields computation of the lean kernel in polynomial time in the maximal deficiency. The question here is, whether for these harder tasks we also have fixed-parameter tractability, or whether the problem of deciding leanness is fixed-parameter intractable in the maximum deficiency.
\item Implementations of the various poly-time algorithms here (especially the algorithm exploited in the proof of Theorem \ref{thm:MaximalerDefektFPT} in Part II) need to be carried out in order to study whether interesting applications exist (and also to judge whether ``native algorithms'' for generalised clause-sets are preferable here, or whether the boolean translation is superior).
\item As an alternative to the direct translation, can we also use the (new) nested translation from Subsection \ref{sec:newtrans} (Part II) here?
\end{enumerate}
 
Regarding  Lemma \ref{lem:etabsurptwo} one should study the underlying autarky system, and whether there is an underlying hierarchy of autarky systems (and accompanying reductions).

W.r.t.\ practical applications, one can add the following:
\begin{enumerate}
\item One needs to implement (and to study) reduction by matching autarkies, as a preprocessing step and also at each node of the search tree of a DPLL solver.
\item The extension by Lemma \ref{lem:etabsurptwo} might have potential.
\end{enumerate}
In general it seems that applications of ``expensive'' algorithms might be more fruitful for harder problems like QBF (where autarkies for quantified boolean formulas consider only the existential variables, and in the most general setting substitute them with boolean functions in the preceeding universal variables such that every clause becomes a tautology).

\subsection[Satisfiable min\-i\-mal\-ly matching-unsatisfiable clause-sets]{Satisfiable clause-sets which are min\-i\-mal\-ly matching-unsatisfiable}
\label{sec:openminmatchuns}

The class of minimally matching-unsatisfiable clause-sets is exactly $\Mleani{\delta=1}$, the set of matching lean clause-sets of deficiency one. The unsatisfiable elements we know quite well by $\Usat \cap \Mleani{\delta=1} = \Musati{\delta=1}$, which is characterised in Subsection \ref{sec:MUSATdefone} (Part II). However the satisfiable elements given by $\Sat \cap \Mleani{\delta=1}$ seem to exhibit a much richer structure: See for example the special clause-set at the end of Subsection \ref{sec:matchautsub}, which in fact is a $1$-regular hitting clause-set (see Section \ref{sec:construgcls} in Part II for further studies) --- even for this special sub-class of satisfiable matching lean $1$-regular hitting clause-sets we know not much.

\chapter{Minimal unsatisfiability and conflict structure}
\label{cha:II}

\section{Introduction}
\label{sec:intro2}

We conclude here the study of the basic properties of ``generalised clause-sets'', started in \cite{Kullmann2007ClausalFormZI} (see Chapter \ref{cha:I} in this report); please see there for the basic definitions. The main focus here is on irredundant and minimally unsatisfiable clause-sets, and on translations to boolean clause-sets (while the first part focused on the basic framework and on notions related to autarkies).

\subsection{Translations}
\label{sec:introTranslations}

Given that much more is known about boolean clause-sets than about generalised clause-sets, and that also basically all SAT solvers only accept boolean clause-sets, it is desirable to have a rich toolkit regarding translations from generalised clause-sets to boolean clause-sets. We start with the best-known translation in Section \ref{sec:translating}, the \emph{direct translation}, where for every generalised literal $(v,\ve)$ (meaning ``$v \not= \ve$''), consisting of variable $v$ and value $\ve \in D_v$, we have a corresponding boolean variable $\transl((v,\ve))$. Since we only have ``negative literals'', not positive literals ``$v = \ve$'', we don't need to enforce that variables get only one value. This \emph{weak form} of the direct translation is in most cases inferior to the \emph{strong form} (which uses the ``AMO''-constraints for ``at most one value'') w.r.t.\ SAT solving, however regarding theoretical purposes it is far easier to handle, since much of the structure of the clause-sets is preserved by the translation. For example this translation preserves the deficiency, and in Theorem \ref{thm:MaximalerDefektFPT} as an application we can lift fixed-parameter tractability in the maximal deficiency from boolean clause-sets to generalised clause-sets.

The direct translation (in the weak form) preserves basic autarky structures, however not the conflict structure, given by the conflict multigraph, which has as vertices the clauses, and as many edges between vertices (clauses) as they have clashing literals, that is, literals with the same underlying variable but with different values. In Subsection \ref{sec:newtrans} we introduce a new translation, the \emph{nested translation}, which preserves the deficiency as well as the conflict structure. For a variable $v$ with $k = \abs{D_v}$ values it uses $k-1$ boolean variables, while each value of $v$ is represented by one of the clauses of the unique (up to isomorphism) saturated minimally unsatisfiable Horn clause-set with $k-1$ variables (and $k$ clauses). This new translation is applied in Subsection \ref{sec:Regularhitting}, allowing to lift various fundamental results regarding the conflict structure of boolean clause-sets to generalised clause-sets.

Finally a very general translation, the \emph{generic translation}, is outlined in Subsection \ref{sec:opentranslations}. The generic translation covers all known translations, and allows to use arbitrary unsatisfiable boolean clause-sets to represent the values of the non-boolean variables (and this individually for every variable). The first empirical study \cite{Kullmann2010GreenTao} using these translations applied the instances of the generic translation as listed in Subsection \ref{sec:opentranslations} to the variation of van-der-Waerden problems (recall Subsection \ref{intro:hypergraphcolouring} in Part I) given by using the first $n$ prime numbers instead of the first $n$ natural numbers, computing \emph{Green-Tao numbers}, based on the Green-Tao Theorem (see \cite{GreenTao2005Primes}). The nested translation turned out to be always far superior over the direct translation, for all solver types, and only for larger domain sizes the logarithmic translation performed even better.

\subsection{Irredundant and minimally unsatisfiable clause-sets}
\label{sec:introirr}

``Irredundant'' (generalised) clause-sets $F$ are clause-sets where every clause contributes some unique falsifying assignment; if $F$ is also unsatisfiable, then one speaks of ``minimal unsatisfiability''. In Section \ref{sec:Irredundant} we provide the basic definitions and properties. In a sense the ``most'' irredundant clause-sets (see Corollary \ref{cor:charakhitirr}) are ``hitting clause-sets'', where all falsifying assignments are unique for each clause.\footnote{Considering DNF instead of CNF, in the literature on boolean functions boolean hitting clause-sets are also known as \emph{disjoint DNF}.} A larger class is given by ``multihitting'' clause-sets, which are actually in general no longer irredundant, but they have exactly one ``irredundant core'', which can be found quickly.

Minimally unsatisfiable clause-sets are then the subject of Section \ref{sec:MUSATneu}. Recall that in Part I (\cite{Kullmann2007ClausalFormZI}) the notion of \emph{deficiency} has been generalised, and its main properties have been established. Now we generalise the characterisation of the base-case of deficiency $1$ known from the boolean case.

In Section \ref{sec:construgcls} we return to the study of hitting clause-sets, now from the point of view of the ``conflict structure'', that is, considering the (multi-)graph of conflicts (``clashes'') between clauses. The main result is that via the so-called ``hermitian defect'' (obtained by considering quadratic forms, or, equivalently, positive and negative eigenvalues) we obtain an upper bound on the deficiency. For certain ``regular'' clause-sets the hermitian defect is easy to compute, and we can generalise various characterisations from the boolean case.

\subsection{Overview and main results}
\label{sec:introover2}

In \textbf{Section \ref{sec:translating}} the so-called direct translation of generalised clause-sets into boolean clause-sets is studied under the point of view of structure preservation, taking advantage of the fact that due to the restriction to ``negative literals'' we do not need the AMO clauses (incorporating them would destroy the structures the translation should preserve). Besides preservation of satisfiability, minimal unsatisfiability and leanness, in Subsection \ref{sec:Preservationmatchingstructure} we show that also a good deal of the matching structure is preserved by the translation (including for example the deficiency). Equipped with these tools, in Theorem \ref{thm:MaximalerDefektFPT} then we obtain FPT for SAT decision in the maximal deficiency.

In \textbf{Section \ref{sec:MUSATneu}} we turn to the study of generalised clause-sets which are minimally unsatisfiable. Considering first the larger class of irredundant generalised clause-sets (no clause is implied by the others) in \textbf{Section \ref{sec:Irredundant}}, we study the question when irredundancy is preserved by applying partial assignments. The class of irredundant clause-sets which stay irredundant for all partial assignments is characterised in Corollary \ref{cor:charakhitirr} as the class of hitting clause-sets, while in Lemma \ref{lem:multihittingmusat} we consider the bigger class of multihitting (generalised) clause-sets and show that they have a unique minimally unsatisfiable core (if they are unsatisfiable). In Subsection \ref{sec:saturated} we then discuss the process of ``saturation'' as introduced in \cite{FlRe94}; for generalised clause-sets we have to face a considerably more complicated situation here than in the boolean case, and thus it seems that for generalised clause-sets saturation does not play the role it does for boolean clause-sets. Without the saturation tool, proving the basic Lemma \ref{lem:GrundtatsacheDefektEins} for the characterisation of $\Musat(1)$ needs a different trick; we use the good properties of the (direct) boolean translation. The main result of Subsection \ref{sec:MUSATdefone} then follows in Theorem \ref{thm:CharakMUSATd1} (the characterisation of minimally unsatisfiable clause-sets of deficiency $1$), and its two corollaries (the characterisation of saturated and marginal minimally unsatisfiable clause-sets of deficiency $1$). A short review on properties related to minimal variable occurrences given in Subsection \ref{sec:stabmmvd} concludes the section.

\textbf{Section \ref{sec:construgcls}} rounds off the picture, by generalising the basic results from linear algebra as applied to boolean clause-sets in \cite{Ku2003e,GalesiKullmann2003bHermitian}, regarding the conflict structure of clauses. A new translation to boolean clause-sets especially suited for these considerations is presented in Subsection \ref{sec:newtrans}, and basic invariance properties are shown. Then the basic results on the hermitian rank and on (boolean) hitting clause-sets are generalised in Subsection \ref{sec:Regularhitting}, providing the basic upper bound on the deficiency by the hermitian defect in Theorem \ref{thm:GPvKlm}, while corollaries apply these results to regular hitting clause-sets.

Finally we present a collection of open problems and conjectures in \textbf{Section \ref{sec:open2}}.

\section{Preliminaries}
\label{sec:Preliminaries}

For the basic notions and notations regarding variables, partial assignments and graphs see Section \ref{sec:prelim} in Part I, while for notions and notations regarding (generalised) (multi-)clause-sets see Section \ref{sec:prelimgeneralcls} in Part I. A quick review of the most basic definitions is as follows.

\emph{Variables} $v$ have finite and non-empty domains $D_v$. \emph{Generalised clause-sets} denote conjunctive normal forms (CNFs). The \emph{literals} of generalised clause-sets are of the form ``$v \not= \ve$'' for values $\ve \in D_v$, which are as mathematical objects defined just as pairs $(v,\ve)$; the disequalities state their meaning, which for DNFs (not considered here) would be the equalities ``$v = \ve$''. \emph{Clauses} are finite and non-tautological sets of literals, that is, they do not contain clashing literals, which are literals $(v,\ve), (v,\ve')$ with identical underlying variables but different values $\ve \not= \ve'$. (Finite) clause-sets are finite sets of clauses (there is no principle problem with infinite clause-sets, however they are not considered here). So boolean variables $v$ have domains $D_v = \set{0,1}$, where ``$v \not= 0$'' is the positive literal, and ``$v \not= 1$'' the negative literal for $v$. Typical examples for the use of (generalised) clause-sets arise from (hyper-)graph colouring problems and from problems from (exact) Ramsey theory; see Subsection \ref{intro:hypergraphcolouring} in Part I. If clauses might occur several times, then we use (finite) multi-clause-sets, which are maps $F: \Cl \ra \NNZ$, where $\Cl$ is the set of all clauses, such that $F(C) > 0$ is the case only for finitely many clauses $C$.

A \emph{partial assignment} is a map $\vp$ whose domain $\var(\vp) := \dom(\vp)$ is a finite set $V$ of variables, such that for all $v \in V$ we have $\vp(v) \in D_v$. A partial assignments $\vp$ satisfies a literal $(v,\ve)$ iff $v \in \var(\vp)$ and $\vp(v) \not= \ve$, while $\vp$ falsifies $(v,\ve)$ if $v \in \var(\vp)$ and $\vp(v) = \ve$. A partial assignment satisfies a clause iff it satisfies at least one literal in it, and it satisfies a clause-sets iff it satisfies all of its clauses. The operation of partial assignments $\vp$ on clause-sets $F$ is denoted by $\vp * F$, which is the clause-set which results from $F$ by removing all satisfied clauses, and removing all falsified literals from the remaining clauses.

The \emph{conflict graph} of a clause-set $F$ has the clauses of $F$ as vertices, while two (different) clauses are adjacent iff they clash in at least one literal-pair. The conflict multigraph has the same vertices as the conflict graph, but now there are as many (parallel) edges between clauses as there are conflicts between them.

Generalising the notion of \emph{hitting clause-sets} (every pair of different clauses clashes in at least one literal-pair; see Subsection \ref{sec:conflictstructure} in Part I), a clause-set $F$ is called \textbf{at most $k$-multihitting} for some $k \in \NNZ$ if the conflict graph of $F$ is complete $k$-partite, while $F$ is called \textbf{multihitting} if it is at most $k$-multihitting for some $k$; let \bmm{\Mclash} denote the set of all multihitting clause-sets. While ``at most $k$-multihitting'' implies that the chromatic number of the conflict graph is at most $k$, if we speak of \textbf{$k$-multihitting} then the chromatic number of the conflict graph must be equal to $k$ (so that $F$ is hitting iff $F$ is $c(F)$-multihitting). For a given multihitting clause-set $F$ there is a unique partition $\FF$ of $F$ (that is, $\FF$ is a set of sub-clause-sets of $F$ which are non-empty and pairwise disjoint, such that their union is $F$), so that for any clauses $C_1, C_2 \in F$ with $C_1 \in G_1$ and $C_2 \in G_2$ for some $G_1, G_2 \in \FF$ the clauses $C_1$ and $C_2$ clash if and only if $G_1 \not= G_2$ (so $F$ is $\abs{\FF}$-multihitting). We call $\FF$ the \textbf{multipartition} of $F$ (if $F$ is bihitting, then $\FF$ is also called the \textbf{bipartition} of $F$).

A basic example class is given as follows. Consider a uniform domain $D = \tb 1k$ and variables $v_1, \dots, v_n$, $n \ge 1$. Let the clause-set $F_{\ve} := \set{\set{(v_i,\ve)} : i \in \tb 1n}$ for $\ve \in \tb 1k$ consist of all the unit-clauses for value $\ve$. Then $F := \bc_{\ve=1}^k F_{\ve}$ is $k$-multihitting with multipartition $\set{F_{\ve} : \ve \in \tb ik}$. We can add to $F$ all full clauses (containing all variables $v_1, \dots, v_n$) containing at least two values, obtaining a $k^n$-multihitting clause-set (with $k^n + (n-1) k$ clauses).

\section{The direct translation}
\label{sec:translating}

In this section we investigate the direct translation of generalised clause-sets into boolean clause-sets. Different from previous research (for an overview see \cite{Pre09HBSAT}), here we are not interested in experimental results (and how good different translations perform in various experiments for different SAT solvers), but we are interest in \emph{structure-preserving} translations. At least regarding our focus on (matching) autarkies and the deficiency, the only reasonable possibility here amongst the known translations seems to be what in \cite{Prestwich2003Encodings} has been coined the ``multivalued encoding'', which is the ``direct translation'', but without AMO (``at most one'') clauses (since these binary clauses would destroy the combinatorial structures we are considering):
\begin{itemize}
\item For every literal $(v,\ve)$ we consider a boolean variable $\transl((v,\ve))$ expressing that $v$ shall not get value $\ve$.
\item Clauses $C$ are translated into (positive) boolean clauses $\tau(C)$ by replacing each literal $x \in C$ with the (positive) boolean literal $\tau(x)$.
\item We add ``ALO clauses'' requiring that each variable gets at least one value (if it gets more than one value, then one of the values can be chosen).
\end{itemize}
In Subsection \ref{sec:newtrans} we will present a new ``nested'' translation with some strong preservation property, but which comes at the price of breaking the symmetry between the possible values of a variable. A very general translation scheme is discussed in Subsection \ref{sec:opentranslations}, covering all known translations.

In \cite{Ku01a} in Subsection 4.5 (``An autarky preserving reduction to boolean clause-sets'') it has already been stated that the direct translation not only preserves satisfiability, unsatisfiability and minimally unsatisfiability, but also leanness. We have to expand these results especially regarding the notions of matching autarkies and deficiency, since in \cite{Ku01a} only a restricted notion of ``matching autarkies'' has been used (recall Subsection \ref{sec:comparisonearlier} from Part I), without an associated notion of deficiency

Another source relevant here is \cite{AnsoteguiManya2004FiniteDomainBooleanDomain}, where ``monosigned CNF formulas'' are translated, a generalisation of ``generalised clause-sets'' allowing also to express that a variable must get a certain value; in other words, where our literals $(v, \ve)$ express ``$v \not= \ve$'', for monosigned formulas also ``positive'' literals ``$v = \ve$'' are allowed. We have discussed this and further generalisations in Subsection \ref{sec:introsignedformulas} in Part I, and we quickly recall the relevant points here. 
\begin{itemize}
\item The generalisation to monosigned CNF formulas can be motivated by the fact that these formulas are exactly those which can be translated by the direct translation; however the price which has to be paid here is that now the AMO clauses are necessary in the direct translation! This adds further to the point we want to make, that generalised clause-sets in our definition (allowing only ``negative literals'') are the appropriate generalisation of boolean conjunctive normal forms, while further generalisations (like ``monosigned formulas'') enter new areas, where the combinatorics of clause-sets no longer can be applied. For a local search algorithm working directly with ``monosigned CNF formulas'' see \cite{FrischPeugniez2001NBLocalSearch} (using the notion of ``nb-formulas'' (for ``non-boolean'')).
\item It is worth to mention here that in \cite{HwangMitchell2005MehrereZweige} it has been shown that resolution which works only with generalised clause-sets, that is, where in the corresponding branching approach for a variable $v$ only a branching of width $\abs{D_v}$ assigning in each branch one of the possible values to $v$ (see \cite{Ku00g}) is considered, can be exponentially worse than resolution on the translation into boolean logic, where now branchings ``$v$ gets value $\ve$'' and ``$v$ does \emph{not} get value $\ve$'' are possible. From this is follows that generalised DPLL-algorithms should not be restricted to branchings where in each branch a variable needs to be fixed to some value; however the focus of this article is not generalisation of SAT solvers, but generalisation of \emph{combinatorial structure}, and thus we do not further pursue these (important) investigations.
\end{itemize}

\subsection{The details of the translation}
\label{sec:detailstranslation}

Formally, the translation proceeds as follows. We consider some bijection $\transl: \Lit \ra \Bva$ from the set of all (generalised) literals to the set of all boolean variables.\footnote{Such a bijection exists due to our assumption on $\Va$, since the set of all literals has the same cardinality as the set of variables, as it is well known from elementary set theory.} The intended meaning of the (positive) boolean literal $\transl((v, \ve))$ for a literal $(v,\ve) \in \Lit$ is the same as the interpretation of the original (generalised) literal, namely ``$v$ shall not get value $\ve$''. We obtain an injection $\transl: \Cl \ra \Cl(\Bva)$ by setting $\transl(C) := \set{\transl(x) : x \in C}$ for $C \in \Cl$. Actually $\transl: \Cl \ra \Cl(\Bva)$ constitutes a bijection from $\Cl$ to the set of all positive boolean clauses. The translation $\transl$ can be further extended to an injection $\transl: \Cls \ra \Bcls$ by $\transl(F) := \set{\transl(C) : C \in F}$ for $F \in \Cls$. Again, $\transl: \Cls \ra \Bcls$ constitutes a bijection from the set of (generalised) clause-sets to the set of boolean clause-sets containing only positive clauses. Finally, for $v \in \Va$ let 
\begin{displaymath}
  \bmm{\alo_v} := \set{\ol{\transl((v, \ve))} : \ve \in D_v} \in \Cl(\Bva)
\end{displaymath}
be the (negative, non-empty boolean) clause expressing that $v$ gets assigned at least one of the values $\ve \in D_v$ (that is, not all (positive) literals $\tau((v,\ve))$ for $\ve \in D_v$ can be true), and let the full translation $\ftrans: \Cls \ra \Bcls$ (which again is an injection) by given as 
\begin{displaymath}
  \bmm{\ftrans(F)} := \transl(F) \cup \set{\alo_v : v \in \var(F)}.
\end{displaymath}
Note that the union in the definition of $\ftrans(F)$ is disjoint, since $\transl(F)$ consists only of positive clauses, while $\set{\alo_v : v \in \var(F)}$ consists only of non-empty negative clauses (and thus $\ftrans(F)$ is a ``PN-clause-set'' as defined in \cite{GalesiKullmann2003bHermitian}). As an example, consider $F = \setb{\set{v \not= 0, w \not= 1}, \set{v \not= 1, w \not= 0}, \set{v \not= 2, w \not= 2}}$ for variables $v, w$ with $D_v = D_w = \set{0,1,2}$. Now, using $a_{\ve} := \tau((v, \ve))$ and $b_{\ve} := \tau((w, \ve))$ for $\ve \in  \set{0,1,2}$ (so altogether we get six boolean variables here), we have
\begin{displaymath}
  \ftrans(F) = \setb{ \set{a_0, b_1}, \set{a_1, b_0}, \set{a_2, b_2}, \set{\ol{a_0}, \ol{a_1}, \ol{a_2}}, \set{\ol{b_0}, \ol{b_1}, \ol{b_2}}}.
\end{displaymath}
In general the sub-clause-sets of $\ftrans(F)$ not containing pure variables (recall Subsection \ref{sec:Reduction} from Part I) are exactly the $\ftrans(F')$ for $F' \sse F$ not containing pure variables.

\subsection{Preservation of general structure}
\label{sec:Preservationgeneralstructure}

Regarding set-theoretical operations we have that $\ftrans$ is an embedding of the semilattice $(\Cls, \cup)$ into $(\Bcls, \cup)$, that is, for $F_1, F_2 \in \Cls$ we have 
\begin{displaymath}
  \ftrans(F_1 \cup F_2) = \ftrans(F_1) \cup \ftrans(F_2).
\end{displaymath}
Thus $\ftrans$ is also an order embedding, i.e., $F_1 \sse F_2 \Lra \ftrans(F_1) \sse \ftrans(F_2)$. By definition we have for $F \in \Cls$ the equalities
\begin{eqnarray*}
  c(\ftrans(F)) & = & c(F) + n(F)\\
  n(\ftrans(F)) & = & \sum_{v \in \var(F)} \abs{D_v}\\
  \delta(\ftrans(F)) & = & c(\ftrans(F)) - n(\ftrans(F)) = c(F) - \rd(F) = \delta(F),
\end{eqnarray*}
and thus the translation $\ftrans$ preserves the deficiency of clause--sets as defined in Subsection \ref{sec:matchsat} (Part I). It follows immediately that $\delta^*(\ftrans(F)) \ge \delta^*(F)$ holds for all $F \in \Cls$, but inequality can occur here (see Subsection \ref{sec:Preservationmatchingstructure}).

We consider now the relations between partial assignments $\vp \in \Pass$ for $F \in \Cls$ and partial assignments $\psi \in \Pass(\Bva)$ for $\ftrans(F) \in \Bcls$. For $\vp \in \Pass$ we define the partial assignment $\bmm{\transl(\vp)} \in \Pass(\Bva)$ by letting $\var(\transl(\vp)) := \set{\transl((v,\ve)) : v \in \var(\vp), \ve \in D_v}$ be the set of all boolean variables corresponding via the translation to literals over the variables in $\var(\vp)$, while $\transl(\vp)((v,\ve)) = 0$ iff $\vp(v) = \ve$. If we consider for example the partial assignment $\pab{v \ra 1, w \ra 2}$ for variables $v,w$ with $D_v = D_w = \set{0,1,2}$, then, using as above $a_{\ve} := \tau((v, \ve))$ and $b_{\ve} := \tau((w, \ve))$ for $\ve \in  \set{0,1,2}$, we get
\begin{displaymath}
  \tau(\pab{v \ra 1, w \ra 2}) = \pab{a_0 \ra 1, a_1 \ra 0, a_2 \ra 1, b_0 \ra 1, b_2 \ra 1, b_2 \ra 0}.
\end{displaymath}
 The partial assignments in $\Pass(\Bva)$ of the form $\transl(\vp)$ for some $\vp \in \Pass$ are called \textbf{standard partial assignments (w.r.t.\ $\transl$)}. So $\transl$ constitutes a bijection between $\Pass$ and the standard partial assignments (which are always boolean), and standard partial assignments $\vp \in \Pass(\Bva)$ are characterised by the condition, that whenever some $\transl((v,\ve)) \in \var(\vp)$, then for all $\ve' \in D_v$ we have $\transl((v,\ve')) \in \var(\vp)$, and there is exactly one $\ve_0 \in D_v$ with $\vp(\transl((v,\ve_0))) = 0$; for the corresponding partial assignment $\transl^{-1}(\vp) \in \Pass$ we then have $\transl^{-1}(\vp)(v) = \ve_0$.

In the following lemma we see that the properties of $\vp$ regarding touching or satisfying clauses are well reflected by $\transl(\vp)$, and hence the translation is invariant regarding the autarky property and the property of satisfying a clause-set.
\begin{lem}\label{lem:Eigtrans1}
  For $\vp \in \Pass$, $C \in \Cl$ and $F \in \Cls$ we have
  \begin{enumerate}
  \item\label{item:Eigtrans11} $\vp$ touches resp.\ satisfies $C$ if and only if $\transl(\vp)$ touches resp.\ satisfies $\transl(C)$. Thus
    \begin{eqnarray*}
      \transl(F_{\var(\vp)}) & = & \transl(F)_{\var(\transl(\vp))}\\
      \ftrans(F[\var(\vp)]) & = & \ftrans(F)[\var(\transl(\vp))].
    \end{eqnarray*}
  \item\label{item:Eigtrans12} $\transl(\vp)$ is an autarky for the set of clauses $\set{\alo_v : v \in \Va}$.
 \item\label{item:Eigtrans13} $\vp$ is an autarky for $F$ if and only if $\transl(\vp)$ is an autarky for $\ftrans(F)$.
  \item\label{item:Eigtrans14} If $\transl(\vp)$ satisfies $\ftrans(F)$, then $\vp$ satisfies $F$. If on the other hand $\vp$ satisfies $F$ and $\var(\vp) \supseteq \var(F)$ holds, then $\transl(\vp)$ satisfies $\ftrans(F)$.
  \end{enumerate}
\end{lem}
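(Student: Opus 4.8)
\textbf{Plan of proof for Lemma \ref{lem:Eigtrans1}.}

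The plan is to unwind all four parts directly from the definitions of $\transl$ on literals/clauses and of $\transl(\vp)$ on partial assignments, using only the basic combinatorics of ``negative literals''. The crucial bookkeeping fact is: for a literal $(v,\ve)$ we have $\transl((v,\ve)) \in \var(\transl(\vp))$ iff $v \in \var(\vp)$, and in that case $\transl(\vp)$ \emph{falsifies} the boolean literal $\transl((v,\ve))$ iff $\vp(v) = \ve$ (i.e.\ iff $\vp$ falsifies $(v,\ve)$), while $\transl(\vp)$ \emph{satisfies} $\transl((v,\ve))$ iff $\vp(v) \neq \ve$ (i.e.\ iff $\vp$ satisfies $(v,\ve)$). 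I would state this one-line observation first, since everything else is an immediate consequence.

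For Part \ref{item:Eigtrans11}: ``touching'' means $\var(\vp) \cap \var(C) \neq \es$; since $\var(C)$ and the variables-of-$\transl(C)$ correspond bijectively under $\transl$, and $\var(\transl(\vp))$ consists exactly of all $\transl((v,\ve))$ with $v \in \var(\vp)$, touching is preserved in both directions. ``Satisfies $C$'' means some literal $x \in C$ has $\vp(x) = 1$; by the observation this holds iff $\transl(\vp)$ satisfies the corresponding $\transl(x) \in \transl(C)$, hence iff $\transl(\vp)$ satisfies $\transl(C)$. The two displayed equalities then follow by applying these equivalences clause by clause: $F_{\var(\vp)}$ collects the clauses of $F$ touched by $\vp$, $\transl(F)_{\var(\transl(\vp))}$ collects the clauses of $\transl(F)$ touched by $\transl(\vp)$, and by the touching-equivalence these are matched up by $\transl$; for the $F[\var(\vp)]$ equality one additionally observes that crossing out literal occurrences with underlying variable outside $\var(\vp)$ corresponds under $\transl$ to crossing out the boolean variables outside $\var(\transl(\vp))$, and that the $\alo$-clauses added on both sides match (the $\alo_v$ for $v \in \var(\vp)$ are precisely the $\alo$-clauses surviving the restriction, using Part \ref{item:Eigtrans12}).

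For Part \ref{item:Eigtrans12}: fix $v$; either $v \notin \var(\vp)$, in which case $\transl(\vp)$ does not touch $\alo_v$, or $v \in \var(\vp)$, in which case all variables of $\alo_v$ are in $\var(\transl(\vp))$ and exactly one of them, $\transl((v,\vp(v)))$, is set to $0$, so $\alo_v$ (a negative clause) is satisfied; hence $\transl(\vp)$ is an autarky for $\set{\alo_v : v \in \Va}$. Part \ref{item:Eigtrans13} then combines Parts \ref{item:Eigtrans11} and \ref{item:Eigtrans12}: $\transl(\vp)$ is an autarky for $\ftrans(F) = \transl(F) \cup \set{\alo_v : v \in \var(F)}$ iff it is an autarky for each part; on the $\alo$-part this is automatic by Part \ref{item:Eigtrans12}, and on $\transl(F)$ it is equivalent — via the touch/satisfy equivalences of Part \ref{item:Eigtrans11} — to $\vp$ being an autarky for $F$ (recall the characterisation of autarky: every touched clause is satisfied). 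Conversely, any autarky of $\ftrans(F)$ restricted appropriately, but here we only need the stated biconditional for $\transl(\vp)$, so the argument is symmetric. For Part \ref{item:Eigtrans14}: if $\transl(\vp)$ satisfies $\ftrans(F)$ then in particular it satisfies every $\transl(C)$, so by Part \ref{item:Eigtrans11} $\vp$ satisfies every $C \in F$; conversely if $\vp$ satisfies $F$ with $\var(\vp) \supseteq \var(F)$, then $\transl(\vp)$ satisfies every $\transl(C)$ (Part \ref{item:Eigtrans11}) and also every $\alo_v$ for $v \in \var(F) \subseteq \var(\vp)$ (Part \ref{item:Eigtrans12}), hence satisfies $\ftrans(F)$.

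I do not expect a genuine obstacle here — the lemma is essentially a definitional translation — but the one point demanding care is keeping the direction of ``positive boolean literal $\transl((v,\ve))$ is \emph{satisfied}'' aligned with ``generalised literal $(v,\ve)$ is \emph{satisfied}'' (i.e.\ $\vp(v)\neq\ve$), because $\transl$ sends $\vp$ to the assignment that sets $\transl((v,\vp(v)))$ to $0$; mixing this up would flip Part \ref{item:Eigtrans12}. I would therefore make that sign convention explicit up front and refer back to it at each step, and similarly be explicit that the union defining $\ftrans(F)$ is disjoint (positive vs.\ negative clauses) so that ``autarky for the union = autarky for each piece'' is legitimate.
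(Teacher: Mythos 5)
Your proof is correct and follows essentially the same approach as the paper, which disposes of Parts 1 and 2 as definitional, then derives Parts 3 and 4 from the earlier parts; the only cosmetic difference is that you derive Part 4 from Parts 1 and 2, whereas the paper cites Parts 1 and 3, but these are interchangeable routes over the same ground. The sign-convention caveat you flag up front is exactly the one point where care is genuinely needed, and you handle it correctly.
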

\begin{proof} Parts \ref{item:Eigtrans11}, \ref{item:Eigtrans12} follow directly from the definitions, while Part \ref{item:Eigtrans13} follows from Parts \ref{item:Eigtrans11}, \ref{item:Eigtrans12}, and Part \ref{item:Eigtrans14} follows from Parts \ref{item:Eigtrans11}, \ref{item:Eigtrans13}. \end{proof}

For the reverse direction, translating partial assignments in $\Pass(\Bva)$ to partial assignments in $\Pass$, call $\vp \in \Pass(\Bva)$ \textbf{admissible} if $\vp$ is an autarky for the set of clauses $\set{\alo_v : v \in \Va}$, that is, if $\transl((v, \ve)) \in \var(\vp)$, then there is $\ve_0 \in D_v$ with $\vp(\transl((v,\ve_0))) = 0$. In words: a partial assignment $\vp$ for the boolean variables is admissible iff for every variable $\transl((v,\ve))$ in its domain there exists a value $\ve_0 \in D_v$ such that $\transl((v,\ve_0))$ is in the domain of $\vp$ as well with $\vp(\transl((v,\ve_0))) = 0$. Note that an autarky $\vp \in \Pass(\Bva)$ for $\ftrans(F)$ (this includes satisfying assignments) is admissible in case of $\var(\vp) \sse \var(\ftrans(F))$.

Call a standard partial assignment $\psi \in \Pass(\Bva)$ a \textbf{standard completion} of an admissible $\vp \in \Pass(\Bva)$ if $\psi$ touches (satisfies) exactly the same ALO-clauses as $\vp$, and if from $\psi(\transl((v,\ve))) = 0$ always follows $\vp(\transl((v,\ve))) = 0$. In other words, a standard completion $\psi$ of an admissible $\vp$ is obtained from $\vp$ by considering all variables $v$ such that $\ve \in D_v$ with $\transl((v,\ve)) \in \var(\vp)$ exists, choosing $\ve_0(v) \in D_v$ with $\vp(\transl((v,\ve_0(v)))) = 0$, and setting $\psi(\transl((v,\ve'))) := 1$ for $\ve' \in D_v \sm \set{\ve_0(v)}$, while $\psi(\transl((v,\ve_0(v)))) := 0$.

The standard completions of admissible partial assignments in $\Pass(\Bva)$ are exactly the $\transl(\theta)$ for $\theta \in \Pass$. The purpose of standard completions $\psi$ of admissible partial assignments $\vp$ is to transform as follows an (arbitrary) autarky $\vp$ for $\ftrans(F)$ into an autarky $\psi$ for $\ftrans(F)$ which is also a standard partial assignment:
\begin{enumerate}\NAA
\item First we restrict $\vp$ to $\var(\ftrans(F))$ to be sure it is admissible.
\item Then we choose a standard completion $\psi$.
\end{enumerate}
Now by Lemma \ref{lem:Eigtrans1} (Part \ref{item:Eigtrans13}) we obtain from $\psi$ an autarky for $F$. The following lemma (with obvious proofs) states the basic properties of standard completions.

\begin{lem}\label{lem:Eigtrans2}
  For $C \in \Cl$ and $F \in \Cls$, an admissible $\vp \in \Pass(\Bva)$ and a standard completion $\psi \in \Pass(\Bva)$ of $\vp$ we have
  \begin{enumerate}
  \item\label{item:Eigtrans21} If $\vp$ touches resp.\ satisfies $\transl(C)$ then $\psi$ touches resp.\ satisfies $\transl(C)$.
 \item\label{item:Eigtrans22} If $\vp$ is an autarky for $\ftrans(F)$ then $\psi$ is an autarky for $\ftrans(F)$.
  \end{enumerate}
\end{lem}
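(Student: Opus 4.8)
The plan is to unwind the definitions; both parts are essentially immediate once one tracks through what "standard completion" means.

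For Part \ref{item:Eigtrans21}, suppose $\vp$ touches $\transl(C)$, i.e.\ some boolean variable $\transl((v,\ve))$ with $(v,\ve) \in C$ lies in $\var(\vp)$. By the definition of a standard completion, $\psi$ is defined on \emph{all} variables $\transl((v,\ve'))$ for $\ve' \in D_v$ whenever $\vp$ is defined on some such variable; in particular $\transl((v,\ve)) \in \var(\psi)$, so $\psi$ touches $\transl(C)$. For satisfaction: if $\vp$ satisfies $\transl(C)$, then there is $(v,\ve) \in C$ with $\vp(\transl((v,\ve))) = 1$. Since $\transl(C)$ is a positive clause, satisfaction means some literal $\transl((v,\ve))$ is set to $1$. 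Now I need $\psi(\transl((v,\ve))) = 1$ as well: by the defining clause of a standard completion, $\psi(\transl((v,\ve'))) = 0$ only when $\vp(\transl((v,\ve'))) = 0$; contrapositively, $\vp(\transl((v,\ve))) \neq 0$ forces $\psi(\transl((v,\ve))) \neq 0$, and since $\psi$ is a standard partial assignment taking only boolean values on this variable (which is in its domain), $\psi(\transl((v,\ve))) = 1$. Hence $\psi$ satisfies $\transl(C)$. One subtlety worth a line: $\vp(\transl((v,\ve))) = 1$ could a priori also be "undefined" in some formulations, but here $\transl((v,\ve)) \in \var(\vp)$ since $\vp$ evaluates the literal, so this is fine.

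For Part \ref{item:Eigtrans22}, I would use the characterisation of autarky via clause-wise touching/satisfying (condition (i) of the autarky definition in Subsection \ref{sec:notionautarky}): $\psi$ is an autarky for $\ftrans(F)$ iff every clause of $\ftrans(F)$ touched by $\psi$ is satisfied by $\psi$. The clauses of $\ftrans(F)$ split into the positive clauses $\transl(C)$ for $C \in F$ and the ALO-clauses $\alo_v$ for $v \in \var(F)$. For a clause $\transl(C)$: if $\psi$ touches it, then (since $\psi$ is a standard completion, hence defined on a full block of variables $\transl((v,\cdot))$ exactly when $\vp$ is) $\vp$ also touches $\transl(C)$; as $\vp$ is an autarky, $\vp$ satisfies $\transl(C)$, and by Part \ref{item:Eigtrans21} so does $\psi$. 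For an ALO-clause $\alo_v$: by the definition of standard completion, $\psi$ touches exactly the same ALO-clauses as $\vp$ and satisfies exactly the same ALO-clauses as $\vp$ (this is literally part of the definition — "$\psi$ touches (satisfies) exactly the same ALO-clauses as $\vp$"), so since $\vp$ is an autarky, any $\alo_v$ touched by $\psi$ is satisfied by $\vp$, hence by $\psi$. Thus every clause of $\ftrans(F)$ touched by $\psi$ is satisfied by $\psi$, i.e.\ $\psi \in \aut(\ftrans(F))$.

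There is really no main obstacle here — the lemma is flagged "with obvious proofs" precisely because everything is bookkeeping against the definition of standard completion. The only place to be slightly careful is making sure the two halves of the definition of standard completion (same touched/satisfied ALO-clauses; $\psi(\transl((v,\ve)))=0 \Rightarrow \vp(\transl((v,\ve)))=0$) are both invoked: the first handles the ALO-clauses in Part \ref{item:Eigtrans22} directly, and the second drives the satisfaction argument for positive clauses in Part \ref{item:Eigtrans21}. I would keep the write-up to a few sentences, citing the autarky characterisation from Subsection \ref{sec:notionautarky} and Part \ref{item:Eigtrans21} for Part \ref{item:Eigtrans22}.
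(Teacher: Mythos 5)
Your Part~\ref{item:Eigtrans21} argument is correct, and your handling of the ALO-clauses in Part~\ref{item:Eigtrans22} (via the ``same touched and satisfied ALO-clauses'' half of the definition) is also fine. But there is a genuine gap in your treatment of the positive clauses in Part~\ref{item:Eigtrans22}: the claim that ``$\psi$ touches $\transl(C)$ implies $\vp$ touches $\transl(C)$'' is false. The standard completion $\psi$ is defined on the whole block $\set{\transl((v,\ve')) : \ve' \in D_v}$ as soon as $\vp$ is defined on \emph{some} member of that block; but a clause $C$ contains at most one literal over $v$, say $(v,\ve)$, and $\vp$ may be defined on $\transl((v,\ve''))$ for some $\ve'' \neq \ve$ while \emph{not} being defined on $\transl((v,\ve))$ itself. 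Concretely, take a boolean $v$, $C = \set{(v,0)}$, and $\vp$ with domain $\set{\transl((v,1))}$ and $\vp(\transl((v,1))) = 0$: this $\vp$ is admissible and an autarky for $\ftrans(\set{C})$, yet $\vp$ does not touch $\transl(C)$, while its standard completion $\psi$ does.

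The lemma is nevertheless true and the fix is short: suppose $\psi$ touches $\transl(C)$ via $\transl((v,\ve)) \in \var(\psi)$ with $(v,\ve) \in C$. If $\psi(\transl((v,\ve))) = 1$ we are done --- $\psi$ satisfies $\transl(C)$ outright (this is exactly the case that escapes your argument). If $\psi(\transl((v,\ve))) = 0$, then by the second defining condition of a standard completion $\vp(\transl((v,\ve))) = 0$, so $\vp$ \emph{does} touch $\transl(C)$; since $\vp$ is an autarky it satisfies $\transl(C)$, and Part~\ref{item:Eigtrans21} then gives that $\psi$ satisfies $\transl(C)$. So the correct structure for the positive-clause case is a two-case split on the value $\psi$ assigns to the touching variable, not the implication you wrote.
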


\begin{lem}\label{lem:Eigtrans3}
  For a (generalised) clause-set $F \in \Cls$ we have:
  \begin{enumerate}
  \item\label{item:Eigtrans31} $F \in \Sat \Lra \ftrans(F) \in \Sat$.
  \item\label{item:Eigtrans32} $F \in \Musat \Lra \ftrans(F) \in \Musat$.
  \item\label{item:Eigtrans33} $F \in \Lean \Lra \ftrans(F) \in \Lean$.
  \end{enumerate}
\end{lem}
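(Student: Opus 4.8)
The three equivalences in Lemma \ref{lem:Eigtrans3} should follow by combining the transfer results already established, namely Lemmas \ref{lem:Eigtrans1} and \ref{lem:Eigtrans2}, together with the remark at the end of Subsection \ref{sec:detailstranslation} that the sub-clause-sets of $\ftrans(F)$ without pure variables are exactly the $\ftrans(F')$ for $F' \sse F$ without pure variables. First I would prove Part \ref{item:Eigtrans31}. For the forward direction, take $\vp \in \Pass$ with $\vp * F = \top$ and $\var(\vp) \supseteq \var(F)$ (possible by extending arbitrarily); then Lemma \ref{lem:Eigtrans1}, Part \ref{item:Eigtrans14}, gives that $\transl(\vp)$ satisfies $\ftrans(F)$. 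For the reverse direction, take any $\psi \in \Pass(\Bva)$ satisfying $\ftrans(F)$; restrict $\psi$ to $\var(\ftrans(F))$ so it is admissible, pass to a standard completion (which exists and still satisfies $\ftrans(F)$ by Lemma \ref{lem:Eigtrans2}, Part \ref{item:Eigtrans21}), apply $\transl^{-1}$, and use Lemma \ref{lem:Eigtrans1}, Part \ref{item:Eigtrans14} (first sentence) to conclude that the resulting $\vp \in \Pass$ satisfies $F$.

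Next, Part \ref{item:Eigtrans33}. The cleanest route is via the contrapositive: $F \notin \Lean$ iff $F$ has a non-trivial autarky $\vp$ (so $\var(\vp) \cap \var(F) \neq \es$). By Lemma \ref{lem:Eigtrans1}, Part \ref{item:Eigtrans13}, $\vp$ is an autarky for $F$ iff $\transl(\vp)$ is an autarky for $\ftrans(F)$, and by Part \ref{item:Eigtrans11} the touched variables of $\transl(\vp)$ in $\ftrans(F)$ are exactly the $\transl$-images of literals over $\var(\vp) \cap \var(F)$, so non-triviality is preserved in this direction. For the converse I would start from a non-trivial autarky $\psi$ for $\ftrans(F)$, restrict to $\var(\ftrans(F))$ (this keeps it non-trivial, since a variable $\transl((v,\ve))$ actually occurring in $\ftrans(F)$ stays in the domain), take a standard completion (still a non-trivial autarky by Lemma \ref{lem:Eigtrans2}, Part \ref{item:Eigtrans22}, and still touching $\var(F)$), and pull back via $\transl^{-1}$ and Lemma \ref{lem:Eigtrans1}, Part \ref{item:Eigtrans13}, to get a non-trivial autarky for $F$.

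Finally, Part \ref{item:Eigtrans32}. Here I would use the order-embedding property $F_1 \subseteq F_2 \Llra \ftrans(F_1) \subseteq \ftrans(F_2)$ from Subsection \ref{sec:Preservationgeneralstructure} together with Part \ref{item:Eigtrans31}. Unsatisfiability of $F$ is equivalent to unsatisfiability of $\ftrans(F)$ by Part \ref{item:Eigtrans31}. For minimality: if $F \in \Musat$, then every $F' \subsetneq F$ is satisfiable, hence $\ftrans(F')$ is satisfiable; the subtlety is that a strict sub-clause-set $G \subsetneq \ftrans(F)$ need not be of the form $\ftrans(F')$ — it may drop an $\alo_v$ clause or only part of $\transl(F')$. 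I would argue that dropping any clause of $\ftrans(F)$ makes it satisfiable: dropping some $\transl(C)$ leaves a clause-set containing $\ftrans(F \setminus \set{C})$ plus some extra ALO-clauses, which is satisfiable since $\ftrans(F\setminus\set C)$ is and the extra ALO-clauses can be satisfied by a standard completion argument; dropping some $\alo_v$ makes the remaining clause-set satisfiable because $\transl(F)$ is a set of purely positive clauses, hence satisfied by the all-$1$ assignment, and the remaining ALO-clauses $\alo_w$ for $w \neq v$ are still satisfiable (they involve disjoint variable sets, so choose one $0$ for each) — in fact the all-true-on-$\transl$-literals-except-one-zero-per-remaining-variable assignment works. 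For the converse, if $\ftrans(F)$ is minimally unsatisfiable then $F$ is unsatisfiable by Part \ref{item:Eigtrans31}, and any $F' \subsetneq F$ gives $\ftrans(F') \subsetneq \ftrans(F)$, hence $\ftrans(F')$ is satisfiable, hence $F'$ is satisfiable. The main obstacle I anticipate is precisely this bookkeeping in Part \ref{item:Eigtrans32}: carefully checking that removing an arbitrary clause from $\ftrans(F)$ — especially an $\alo_v$ clause — yields a satisfiable boolean clause-set, which requires exhibiting an explicit (non-standard) satisfying assignment rather than appealing directly to the translation machinery.
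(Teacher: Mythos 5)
Parts \ref{item:Eigtrans31} and \ref{item:Eigtrans33} of your proposal track the paper's proof closely: in both cases the argument combines Lemma \ref{lem:Eigtrans1} (Parts \ref{item:Eigtrans13}, \ref{item:Eigtrans14}) with Lemma \ref{lem:Eigtrans2}, exactly as you do. For Part \ref{item:Eigtrans32}, however, you take a genuinely different route, and it has a gap in the $\alo_v$-removal case. The paper avoids any case analysis on which clause is removed: it invokes the remark at the end of Subsection \ref{sec:detailstranslation}, namely that the sub-clause-sets of $\ftrans(F)$ without pure variables are exactly the $\ftrans(F')$ for $F' \sse F$ without pure variables. Since a minimally unsatisfiable clause-set has no pure variables, any minimally unsatisfiable $F^* \subset \ftrans(F)$ must equal $\ftrans(F')$ for some strict $F' \subset F$, and Part \ref{item:Eigtrans31} then yields a contradiction to $F \in \Musat$; the converse direction goes through $\ftrans(F') \subset \ftrans(F)$ and Part \ref{item:Eigtrans31} as you also indicate.

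Your case analysis for Part \ref{item:Eigtrans32} is fine when removing a positive clause $\transl(C)$ (after accounting for the possible extra ALO-clauses on fresh variables), but the argument for removing an $\alo_v$ does not close as written. The ``all true, except one zero per remaining variable'' assignment can fail: zeroing $\transl((w, \ve_w))$ falsifies that positive literal in any $\transl(C)$ containing it, and if $C$ contains no literal over $v$ and no other literal set true, then $\transl(C)$ becomes falsified. Concretely, for boolean $v, w$ and $F = \set{\set{(v,0),(w,0)},\, \set{(v,1),(w,0)},\, \set{(w,1)}} \in \Musat$, choosing $\ve_w = 1$ falsifies the translated unit clause $\set{\transl((w,1))}$. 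To make the construction work one must take the zeros from a satisfying assignment of $\set{C \in F : v \notin \var(C)}$, which is a strict sub-clause-set of $F$ and hence satisfiable precisely because $F \in \Musat$ --- but at that point one has essentially re-derived what the paper's pure-variable observation packages in one line. So your direct approach can be repaired, yet the paper's reduction is cleaner and dispenses with the explicit boolean assignment entirely.
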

\begin{proof} If $F \in \Sat$ then $\ftrans(F) \in \Sat$ with Lemma \ref{lem:Eigtrans1}, Part \ref{item:Eigtrans14}, and if $\ftrans(F) \in \Sat$, then $F \in \Sat$ with Lemma \ref{lem:Eigtrans2}, Part \ref{item:Eigtrans21} and Lemma \ref{lem:Eigtrans1}, Part \ref{item:Eigtrans14}.

If $F \in \Musat$, but there were some minimally unsatisfiable $F^* \subset \ftrans(F)$, then there would be $F' \subset F$ with $\ftrans(F') = F^*$ (since $F^*$ does not contain pure variables), and thus $F'$ would be unsatisfiable by Part \ref{item:Eigtrans31}. If on the other hand $\ftrans(F) \in \Musat$, but there were some unsatisfiable $F' \subset F$, then $\ftrans(F')$ would be unsatisfiable as well by Part \ref{item:Eigtrans31}.

Finally, if $F \in \Lean$ then $\ftrans(F) \in \Lean$ by  Lemma \ref{lem:Eigtrans2}, Part \ref{item:Eigtrans22} and Lemma \ref{lem:Eigtrans1}, Part \ref{item:Eigtrans13}, and if $\ftrans(F) \in \Lean$ then $F \in \Lean$ by  Lemma \ref{lem:Eigtrans1}, Part \ref{item:Eigtrans13} (the other direction). \end{proof}

Parts \ref{item:Eigtrans31} and \ref{item:Eigtrans33} have been concluded in Corollary 20 in \cite{Ku01a} from the stronger property $\na(\ftrans(F)) = \ftrans(\na(F))$ (recall that $\na$ is the lean kernel operator); in this article we do not go further with the study of the translation $\ftrans$, but we restrict ourselves to the minimum required to understand our applications.

\subsection{Preservation of matching structure}
\label{sec:Preservationmatchingstructure}

\begin{lem}\label{lem:Eigtransmatch1}
  For $\vp \in \Pass$, $C \in \Cl$ and $F \in \Cls$ we have
  \begin{enumerate}
  \item\label{lem:Eigtransmatch11} If $\transl(\vp)$ matching satisfies $\ftrans(F)$, then $\vp$ matching satisfies $F$.
   \item\label{lem:Eigtransmatch12} If $\transl(\vp)$ is a matching autarky for $\ftrans(F)$, then $\vp$ is a matching autarky for $F$.
  \end{enumerate}
\end{lem}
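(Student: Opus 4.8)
\textbf{Plan for the proof of Lemma \ref{lem:Eigtransmatch1}.}
The plan is to relate the bipartite graph $B(F)$ to the bipartite graph $B(\ftrans(F))$ and to track how the ``partial graph'' construction (recall $B_{\psi}(\cdot)$ from Subsection \ref{sec:notionmatchsat}) interacts with the translation $\transl$. The key observation I would establish first is a structural correspondence: for a clause $C \in \Cl$, the clause-node of $C$ in $B(F)$ corresponds to the clause-node of $\transl(C)$ in $B(\ftrans(F))$, and for a variable $v \in \var(F)$ the $\abs{D_v}-1$ many variable-nodes $(v,1), \dots, (v,\abs{D_v}-1)$ of $B(F)$ should be matched up with the $\abs{D_v}$ boolean variables $\transl((v,\ve))$, $\ve \in D_v$, appearing in $\ftrans(F)$ --- but note that in $B(\ftrans(F))$ each such boolean variable is itself a single node (boolean variables have $\abs{D}-1 = 1$ copy), and there is additionally the $\alo_v$ clause-node. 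So the variable-side of $B(\ftrans(F))$ restricted to the images of $\var(F)$ has $\sum_{v} \abs{D_v}$ nodes, while $\rd(F) = \sum_v (\abs{D_v}-1)$; the discrepancy is exactly $n(F)$, matching the extra $n(F)$ clause-nodes coming from the $\alo_v$'s (recall $c(\ftrans(F)) = c(F) + n(F)$ from Subsection \ref{sec:Preservationgeneralstructure}).

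For Part \ref{lem:Eigtransmatch11}, I would argue as follows. Suppose $\transl(\vp)$ matching-satisfies $\ftrans(F)$, so there is a matching $M$ in $B_{\transl(\vp)}(\ftrans(F))$ covering all clause-nodes, in particular covering all clause-nodes coming from clauses $\transl(C)$, $C \in F$, and all the $\alo_v$-clause-nodes, $v \in \var(F)$. Since $\transl(\vp)$ is a standard partial assignment, for each $v \in \var(\vp)$ exactly one boolean variable $\transl((v,\ve_0))$ (with $\ve_0 = \vp(v)$) is set to $0$ and the rest to $1$; the edges of $B_{\transl(\vp)}(\ftrans(F))$ incident to the $\alo_v$-node are exactly those to the boolean variables set to $0$ by $\transl(\vp)$ (since $\alo_v$ is a negative clause), i.e.\ only $\transl((v,\ve_0))$ --- hence the $\alo_v$-node, being covered, must be matched precisely to $\transl((v,\vp(v)))$. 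Consequently all boolean variable-nodes of the form $\transl((v,\vp(v)))$ for $v \in \var(\vp) \cap \var(F)$ are used up by the $\alo$-clauses, and so the part of $M$ covering the $\transl(C)$-nodes only uses boolean variables $\transl((v,\ve))$ with $\ve \neq \vp(v)$, i.e.\ literals satisfied by $\vp$. Now I contract: for each clause $C$ covered, $M$ assigns it some $\transl((v,\ve))$ with $v \in \var(C)$, $\ve \neq \vp(v)$; in $B_{\vp}(F)$ this is an edge from the $C$-node to one of the copies $(v,j)$. The number of distinct clauses $C$ assigned to a fixed variable $v$ this way is at most the number of boolean variables $\transl((v,\ve))$ with $\ve \neq \vp(v)$, which is $\abs{D_v} - 1$ --- exactly the number of copies of $v$ in $B(F)$. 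So I can reconstruct a matching in $B_{\vp}(F)$ covering all clause-nodes, proving $\vp$ matching-satisfies $F$.

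For Part \ref{lem:Eigtransmatch12}, recall that a matching autarky for $F$ means $\vp$ matching-satisfies $F_{\var(\vp)}$, and similarly for $\ftrans(F)$. By Lemma \ref{lem:Eigtrans1}, Part \ref{item:Eigtrans11} we have $\transl(F_{\var(\vp)}) = \transl(F)_{\var(\transl(\vp))}$; I would need the analogous identity at the level of the full translation, namely that $\ftrans(F)[\var(\transl(\vp))]$ restricted appropriately carries the ALO clauses $\alo_v$ for $v \in \var(\vp) \cap \var(F)$, which is exactly $\ftrans(F_{\var(\vp)})$ up to the pure-variable bookkeeping described at the end of Subsection \ref{sec:detailstranslation}. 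Granting this, the hypothesis that $\transl(\vp)$ matching-satisfies $\ftrans(F)[\var(\transl(\vp))] = \ftrans(F_{\var(\vp)})$ reduces Part \ref{lem:Eigtransmatch12} to Part \ref{lem:Eigtransmatch11} applied to the clause-set $F_{\var(\vp)}$. The main obstacle I anticipate is the careful handling of the $\alo_v$ clause-nodes and of pure variables: one must check that when passing to induced sub-clause-sets $F_{\var(\vp)}$, the ALO clauses behave correctly (only the $\alo_v$ for variables actually occurring survive, and the modification for pure variables ensuring $\abs{D_v} = \abs{\val_v(F)}+1$ from Subsection \ref{sec:matchsat} does not cause a mismatch in node counts), and that the forced matching of $\alo_v$ to $\transl((v,\vp(v)))$ really does free up exactly the right boolean variable-nodes without over- or under-counting. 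Once that bookkeeping is pinned down, both parts follow by the matching-contraction argument sketched above, which is essentially routine.
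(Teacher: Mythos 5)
Your proof is essentially correct and follows the same route as the paper's: for Part~\ref{lem:Eigtransmatch11} the core count is identical (the matching in $B_{\transl(\vp)}(\ftrans(F))$ is an injective assignment of each clause $D \in \ftrans(F)$ to a boolean variable $\transl((v_D,\ve_D))$ it satisfies, and since $\alo_v$ uses up one of the $\abs{D_v}$ boolean variables over $v$, at most $\abs{D_v}-1$ of the clauses $\transl(C)$, $C\in F$, are assigned to a boolean variable of $v$ --- exactly the number of copies of $v$ in $B(F)$); note that the paper does not need your finer observation that $\alo_v$ is forced onto $\transl((v,\vp(v)))$, since what matters is only that $\alo_v$ occupies \emph{some} copy of $v$.

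The ``main obstacle'' you flag for Part~\ref{lem:Eigtransmatch12} is in fact already disposed of by the lemma you cite: Lemma~\ref{lem:Eigtrans1}, Part~\ref{item:Eigtrans11}, states not only $\transl(F_{\var(\vp)})=\transl(F)_{\var(\transl(\vp))}$ but also the identity you need at the level of the full translation, $\ftrans(F[\var(\vp)]) = \ftrans(F)[\var(\transl(\vp))]$. The one correction is that you should phrase the reduction via $F[\var(\vp)]$ rather than $F_{\var(\vp)}$: the identity holds for $F[\cdot]$ because $\var(F[\var(\vp)])\sse\var(\vp)$, whereas $\var(F_{\var(\vp)})$ can be strictly larger than $\var(\vp)\cap\var(F)$, so $\ftrans(F_{\var(\vp)})$ would carry extra $\alo$-clauses not present in $\ftrans(F)_{\var(\transl(\vp))}$. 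Since $\vp$ is a matching autarky for $F$ iff $\vp$ matching-satisfies $F[\var(\vp)]$ (this equivalence is recorded in the definition in Subsection~\ref{sec:matchautsub}), the reduction to Part~\ref{lem:Eigtransmatch11} goes through exactly as you and the paper describe, and the ``pure-variable bookkeeping'' worry evaporates.
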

\begin{proof} If the partial assignment $\transl(\vp)$ matching satisfies $\ftrans(F)$, then (by definition) for each clause $D \in \ftrans(F)$ one can choose a literal $x_D \in D$ with $\vp(x_D) = 1$, such that for the variables $\var(x_D) = \transl((v_D,\ve_D))$ the map $D \in \ftrans(F) \mapsto \transl((v_D, \ve_D))$ is injective (whence $D \in \ftrans(F) \mapsto (v_D, \ve_D)$ is injective). Now the map $C \in F \mapsto v_{\transl(C)}$ has for each image $v_{\transl(C)}$ at most $(\abs{D_v} - 1)$-many inverse images, since for each $\ve \in D_v$ there is at most one $D \in \ftrans(F)$ with $v_D = v_{\transl(C)}$ and $\ve_D = \ve$, and exactly one of these $D$ is the clause $\alo_{v_D}$.

For Part \ref{lem:Eigtransmatch12} recall that $\vp$ is a matching autarky for $F$ iff $\vp$ matching satisfies $F[\var(\vp)]$, which by Part \ref{lem:Eigtransmatch11} follows from $\transl(\vp)$ matching satisfying $\ftrans(F[\var(\vp)])$, where by Lemma \ref{lem:Eigtrans1}, Part \ref{item:Eigtrans11} we have $\ftrans(F[\var(\vp)]) = \ftrans(F)[\var(\transl(\vp))]$, and thus the latter assertion is equivalent to $\transl(\vp)$ being a matching autarky for $\ftrans(F)$. \end{proof}

Lemma 19, Part (1)(d) of \cite{Ku01a} rephrased in the terminology of Subsection \ref{sec:comparisonearlier} (Part I) says that if $\vp$ is a \emph{non-repetitive} matching autarky for $F$, then $\transl(\vp)$ is a matching autarky for $\ftrans(F)$; in follows then in Corollary 20 of \cite{Ku01a} that if $\ftrans(F)$ is matching lean, then $F$ is lean w.r.t.\ non-repetitive matching autarkies. These properties do not hold for matching autarkies in general (in the presence of non-boolean variables), as the following examples show.

An example, where a matching autarky $\vp$ for a (generalised) clause-set $F \in \Cls$ does not yield a matching autarky $\transl(\vp)$ for $\ftrans(F)$, is given by \emph{multi}-clause-sets as follows:
\begin{enumerate}
\item Consider the multi-clause-set $F_1 := 2 \cdot \set{(v,0)}$ for a variable $v$ with $D_v = \set{0,1,2}$.
\item $F_1$ is matching satisfiable (via $\pao v1$ or $\pao v2$).
\item $F_1$ is lean w.r.t.\ non-repetitive matching autarkies.
\item The translation is $\ftrans(F_1) = 2 \cdot \set{\transl((v,0))} + \set{\ol{\transl((v,0))}, \ol{\transl((v,1))},\ol{\transl((v,2))}}$, where $\var(\ftrans(F_1)) = \set{\transl((v,0)),\transl((v,1)),\transl((v,2))}$.
\item $\nma(\ftrans(F_1)) = 2 \cdot \set{\transl((v,0))} = \transl(F_1)$ (via matching autarkies we can only eliminate the ALO-clause).
\item Thus $\ftrans(F_1)$ is not matching satisfiable.
\end{enumerate}
One sees that the problem with transferring matching autarkies from generalised (multi-)clause-sets to their boolean translation lies in the possibility that a matching in the clause-variable graph $B(F)$ might use the same literal several times, which is not possible for the translated literals. To obtain an example using clause-\emph{sets},  consider additionally two boolean variables $w, w'$ and let
\begin{displaymath}
  F_2 = \setb{ \set{v \not= 0, w \not= 0}, \, \set{v \not= 0, w' \not= 0}, \, \set{w \not= 1}, \, \set{w' \not= 1} }.
\end{displaymath}
The partial assignment $\vp := \pab{v \ra 1, w \ra 0, w' \ra 0}$ is matching satisfying for $F_2$ (note that again $F_2$ is lean w.r.t.\ non-repetitive matching autarkies), but $\transl(\vp)$ is not a matching autarky for $\ftrans(F_2)$, and moreover the matching lean kernel of $\ftrans(F_2)$ is $\ftrans(F_2) \sm \set{\alo_v}$ (again only the ALO-clause for $v$ can be eliminated via matching autarkies), and thus $\ftrans(F_2)$ is not matching satisfiable. Furthermore we have in this example $\delta^*(F_2) = 0$ and $\delta^*(\ftrans(F_2)) = \delta(\ftrans(F_2) \sm \set{\alo_v}) = \delta(\ftrans(F_2)) - (1 - 2) = \delta(F_2) + 1 = 1$.

Now consider the transfer of matching autarkies in the other direction, that is, we have given a matching autarky $\vp \in \Pass(\Bva)$ for $\ftrans(F)$, and we want to obtain some associated matching autarky for $F$. The problem here is that $\vp$ might use some variable $\transl((v,\ve))$, but not a variable $\transl((v,\ve'))$ for some $\ve' \in D_v \sm \set{\ve}$, and such situations can not be translated back to $F$. The simplest example for this phenomenon is (again) given by a \emph{multi}-clause-set as follows:
\begin{enumerate}
\item Let $F_3 := \set{(v,1)} + 2 \cdot \set{(v,2)}$ for a variable $v$ with $D_v = \set{0,1,2}$.
\item Obviously $F_3$ is matching lean.
\item The translation is
  \begin{displaymath}
    \ftrans(F_3) = \set{\transl((v,1))} + 2 \cdot \set{\transl((v,2))} + \set{\ol{\transl((v,0))},\ol{\transl((v,1))},\ol{\transl((v,2))}},
  \end{displaymath}
  where $\var(\ftrans(F_3)) = \set{\transl((v,0)),\transl((v,1)),\transl((v,2))}$.
\item Thus $\nma(\ftrans(F_3)) = \transl(2 \cdot \set{(v,2)})$ via the matching autarky $\pab{\transl((v,0)) \ra 0, \transl((v,1)) \ra 1}$.
\end{enumerate}
A clause-\emph{set} $F_4$, where $F_4$ is matching lean but $\ftrans(F_4)$ is not, is given by 
\begin{displaymath}
  F_4 := \setb{ \set{v \not= 1}, \set{v \not= 2}, \set{v \not= 2, w \not= 0}, \set{w \not= 1} }
\end{displaymath}
for an additional boolean variable $w$, since here 
\begin{displaymath}
  \nma(\ftrans(F_4)) = \transl(\set{\set{v \not= 2}, \set{v \not= 2, w \not= 0}, \set{w \not= 1} }) \cup \set{\alo_w}
\end{displaymath}
via the matching autarky $\pab{\transl((v,0)) \ra 0, \transl((v,1)) \ra 1}$ for $\ftrans(F_4)$.

\subsection{Polynomial time SAT decision in the maximal deficiency}
\label{sec:PTimeSATmd}

As we have seen now, matching autarkies for (generalised) clause-sets $F \in \Cls$ and matching autarkies for $\ftrans(F) \in \Cls(\Bva)$ in general are incomparable. Nevertheless we can use them to show fixed-parameter tractability for generalised clause-sets w.r.t.\ the parameter $\delta^*(F)$ as follows.

\begin{thm}\label{thm:MaximalerDefektFPT}
  SAT decision for (generalised) clause-sets $F \in \Cls$ can be done in time $O \big (2^{\delta^*(F)} \cdot (\sum_{v \in \var(F)} \abs{D_v})^3 \big )$
\end{thm}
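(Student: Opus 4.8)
The plan is to reduce SAT decision for a generalised clause-set $F$ to SAT decision for the boolean clause-set $\ftrans(F)$, and then to invoke the boolean fixed-parameter tractability result from \cite{Szei2002FixedParam}. The key structural facts needed are already assembled in this section: by Lemma \ref{lem:Eigtrans3}, Part \ref{item:Eigtrans31}, we have $F \in \Sat \Lra \ftrans(F) \in \Sat$, so deciding satisfiability of $F$ is the same problem as deciding satisfiability of $\ftrans(F)$; and by the computation in Subsection \ref{sec:Preservationgeneralstructure} we have $\delta(\ftrans(F)) = \delta(F)$, together with $n(\ftrans(F)) = \sum_{v \in \var(F)} \abs{D_v}$ and $c(\ftrans(F)) = c(F) + n(F)$.

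First I would establish that $\delta^*(\ftrans(F)) = \delta^*(F)$. The inequality $\delta^*(\ftrans(F)) \ge \delta^*(F)$ is immediate from deficiency-preservation applied to sub-clause-sets $F' \sse F$ (using $\ftrans(F') \sse \ftrans(F)$). For the reverse inequality, the point is that ALO-clauses only ever \emph{decrease} deficiency relative to their variable cost: given an arbitrary $G \sse \ftrans(F)$, one may assume $G$ contains no pure variables (adding clauses on a pure variable $\transl((v,\ve))$ can only be harmful --- more precisely, one restricts attention to the sub-clause-set of $G$ obtained by dropping variables not ``fully covered'', arguing as in the last paragraph of Subsection \ref{sec:detailstranslation} that pure-variable-free sub-clause-sets of $\ftrans(F)$ are exactly the $\ftrans(F')$ for pure-variable-free $F' \sse F$). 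Then $G = \ftrans(F')$ for some $F' \sse F$, and $\delta(G) = \delta(\ftrans(F')) = \delta(F') \le \delta^*(F)$; handling the pure-variable reduction carefully (each pure boolean variable removed from a clause-set decreases $n$ by $1$ while $c$ stays the same or a clause becomes subsumed/removed, so $\delta$ does not increase) closes the gap. This gives $\delta^*(\ftrans(F)) = \delta^*(F)$.

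Then the argument concludes as follows. Compute $\ftrans(F)$ in linear time in the output size $\ell(\ftrans(F)) = O\big((\sum_{v}\abs{D_v})^2\big)$ (crudely bounding $c(F)+n(F)$ times max clause length). Run the boolean FPT algorithm of \cite{Szei2002FixedParam} on $\ftrans(F)$: it decides SAT in time $2^{\delta^*(\ftrans(F))} \cdot \mathrm{poly}(n(\ftrans(F)))$, and by the analysis there the polynomial is cubic in $n$, giving $O\big(2^{\delta^*(\ftrans(F))} \cdot n(\ftrans(F))^3\big) = O\big(2^{\delta^*(F)} \cdot (\sum_{v \in \var(F)} \abs{D_v})^3\big)$. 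Since $\ftrans(F) \in \Sat \Lra F \in \Sat$, this decides satisfiability of $F$ within the claimed bound (the translation step is absorbed into the cubic term).

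The main obstacle, and the part needing genuine care rather than bookkeeping, is the inequality $\delta^*(\ftrans(F)) \le \delta^*(F)$: one must rule out that some sub-clause-set of $\ftrans(F)$ --- in particular one using some but not all of the boolean variables $\transl((v,\ve))$ over a given $v$, or one omitting some ALO-clauses --- achieves a larger deficiency than any $F' \sse F$. The clean way is via the pure-variable reduction: show that maximising $\delta$ over sub-clause-sets of $\ftrans(F)$ can be restricted to pure-variable-free ones (since $\delta$ does not increase under elimination of pure boolean variables, as each such elimination removes one variable and does not add clauses), and then use that pure-variable-free sub-clause-sets of $\ftrans(F)$ are precisely the images $\ftrans(F')$. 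A secondary, purely citational point is to confirm that the polynomial factor in \cite{Szei2002FixedParam} is indeed cubic in the number of variables (or at worst to absorb whatever polynomial it is into the stated cubic bound, possibly at the cost of a slightly different exponent, which the ``$O$'' notation in the statement permits if one reads it generously --- but the cubic bound as written should follow from the bounded-search-tree analysis combined with linear-time maximum-matching-based reductions at each node).
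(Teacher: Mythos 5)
Your central claim, $\delta^*(\ftrans(F)) = \delta^*(F)$, is false, and the paper itself exhibits a counterexample. In Subsection \ref{sec:Preservationmatchingstructure} the clause-set
\[
  F_2 = \setb{ \set{v \not= 0, w \not= 0}, \, \set{v \not= 0, w' \not= 0}, \, \set{w \not= 1}, \, \set{w' \not= 1} }
\]
(with $\abs{D_v}=3$ and $w,w'$ boolean) satisfies $\delta^*(F_2) = 0$, while $\delta^*(\ftrans(F_2)) = 1$, realised by $\ftrans(F_2) \sm \set{\alo_v}$. Your ``pure-variable reduction'' argument is exactly what fails here: the sub-clause-set $\ftrans(F_2) \sm \set{\alo_v}$ achieving $\delta = 1$ \emph{does} contain the pure variable $\transl((v,0))$, and eliminating it (i.e.\ removing all clauses containing it) drops $c$ by $2$ while dropping $n$ by only $1$, so $\delta$ falls to $0$. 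In general, removing a pure variable $w$ from a sub-clause-set removes $\#_w$ many clauses but only one (or a few) variables, so this operation can strictly decrease deficiency; you therefore cannot restrict the maximisation of $\delta$ to pure-variable-free sub-clause-sets, and the purported equality collapses.

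The paper's proof avoids this trap. It does not claim anything about $\delta^*(\ftrans(F))$; instead it reduces $\ftrans(F)$ by matching autarkies and pure autarkies to a clause-set $F^*$ that is both matching lean and pure-variable-free, and is satisfiability-equivalent to $\ftrans(F)$. Because $F^*$ is pure-variable-free it has the form $\ftrans(F')$ for some $F' \sse F$, giving $\delta(F^*) = \delta(F') \le \delta^*(F)$; and because $F^*$ is matching lean, $\delta^*(F^*) = \delta(F^*)$. So the exponent Szeider's algorithm sees on $F^*$ is bounded by $\delta^*(F)$, which is what yields the stated time bound. The missing ingredient in your proposal is precisely this autarky-reduction step: you must shrink $\ftrans(F)$ before applying the boolean FPT algorithm, not argue that $\ftrans(F)$ itself has the right maximal deficiency.
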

\begin{proof} Consider $F \in \Cls$ and let $F^*$ be the result of reducing $\ftrans(F)$ w.r.t.\ matching autarkies and pure autarkies (thus $F^*$ is the unique maximal sub-clause-set of $F$ which is matching lean and does not contain pure variables). We can compute $F^*$ in polynomial time, and $F^*$ is satisfiability equivalent to $F$. Since $F^*$ contains no pure literals, it corresponds to a sub-clause-set of $F$, and thus we have $\delta(F^*) \le \delta^*(F)$, and since $F^*$ is matching lean we have $\delta^*(F^*) = \delta(F^*)$. Theorem 4 in \cite{Szei2002FixedParam} says that satisfiability of $F^*$ can be tested in time $O(2^{\delta^*(F^*)} \cdot n(F^*)^3)$, where in this procedure actually already the cost of reducing $\ftrans(F)$ to $F^*$ is included if we use $n(\ftrans(F))$ instead of $n(F^*)$ in the big-Oh expression (see Section 5 in  \cite{Szei2002FixedParam}, or use the argumentation of Subsection \ref{sec:expansion} of Part I), and the theorem follows. \end{proof}

\section{Irredundant generalised clause-sets}
\label{sec:Irredundant}

One of the motivations behind the notion of lean clause-sets can be seen in ``approximating'' the fundamental notion of minimally unsatisfiable clause-sets. In this section we will now consider some of the basic facts on the more general notion of ``irredundant clause-sets'' (that is, every clause contributes something unique, allowing also to consider satisfiable clause-sets) in our generalised setting. Detailed studies of irredundant clause-sets in the boolean case can be found in the following references:
\begin{enumerate}
\item \cite{BueningZhao2005Clauseminimal} (speaking of ``clause minimal formula'') focuses on questions related to the problem (from a complexity theoretical perspective) when for given clause-sets $F, H$ there exists a clause-set $G$ such that $F \cup G$ is equivalent to $H$.
\item \cite{Liberatore2005Redundanz} considers in various forms (also mostly from a complexity-theoretical perspective) the problem of finding an irredundant core in a given clause-set.
\end{enumerate}

 We start in Subsection \ref{sec:irredundant} with a discussion of the notion of ``irredundant clause-sets'', concentrating on the basic question of preservation of irredundancy under application of partial assignments. In Subsection \ref{sec:hittinggen} we consider the in some sense most extreme case of irredundant clause-sets, namely ``hitting clause-sets'': every two different clauses clash, that is, have no common falsifying assignment, or, in other words, the conflict graph is complete. Furthermore we consider the natural generalisation to ``multihitting clause-sets'' (the conflict graph is multipartite; see Section \ref{sec:Preliminaries}). In Corollary \ref{cor:charakhitirr} we show that hitting clause-sets are exactly those clause-sets which are irredundant after application of every partial assignment, and thus unsatisfiable hitting clause-sets are exactly those clause-sets which are minimally unsatisfiable after application of every partial assignment (Corollary \ref{cor:charakhitting}). For unsatisfiable multihitting clause-sets we show in Lemma \ref{lem:multihittingmusat} that they have exactly one minimally unsatisfiable sub-clause-set (which can be computed efficiently by subsumption-elimination), and in Lemma \ref{lem:polyzeitbitreff} we show that the satisfiability problem for bihitting clause-sets (where the conflict graph is bipartite) is solvable in quasi-polynomial time (this problem is essentially the same problem as the hypergraph transversal problem). We conclude in Subsection \ref{sec:irrcores} by considering ``irredundant cores'', that is, minimal equivalent sub-clause-sets, in general. The main result here is Lemma \ref{lem:meqhypgr}, expressing the duality between irredundant cores and maximal non-equivalent sub-clause-sets via the formation of hypergraph transversals.

The special case of unsatisfiable irredundant clause-sets (i.e., minimally unsatisfiable clause-sets) is considered in Chapter \ref{sec:MUSATneu}.

\subsection{Irredundant clause-sets}
\label{sec:irredundant}

A clause $C \in F$ is called \textbf{redundant} (or \textbf{unnecessary}) for clause-set $F \in \Cls$ if $F \sm \set{C} \models C$ holds, while otherwise $C$ is called \textbf{irredundant} (or \textbf{necessary}) for $F$. The following conditions are equivalent for a clause $C \in F$:
\begin{itemize}
\item $C$ is redundant for $F$.
\item $F \sm \set{C}$ is equivalent to $F$.
\item The set $\modf_{\var(F)}(C)$ of falsifying assignments for $C$ (w.r.t.\ the variables of $F$) is covered by the family $(\modf_{\var(F)}(C'))_{C' \in F \sm \set{C}}$ of sets of falsifying assignments for the remaining clauses.
\end{itemize}
A (generalised) clause-set $F \in \Cls$ is called \textbf{irredundant} if all $C \in F$ are irredundant for $F$, otherwise $F$ is called \textbf{redundant}.  A clause-set $F$ is minimally unsatisfiable if and only if $F$ is unsatisfiable and irredundant. Regarding complexity classifications of decision problems related to (ir)redundancy we have the following:
\begin{enumerate}
\item In \cite{PW88} it is shown that the decision problem whether a (boolean) clause-set is irredundant is NP-complete, while the decision problem whether a (boolean) clause-set is minimally unsatisfiable is $D^P$-complete. Trivially these results also hold for generalised clause-sets.
\item As we have seen in Theorem \ref{thm:MaximalerDefektFPT}, SAT decision for (generalised) clause-sets is fixed-parameter tractable in the maximal deficiency, and thus also irredundancy decision is fixed-parameter tractable in the maximal deficiency.  Since for minimally unsatisfiable (generalised) clause-sets maximal deficiency and deficiency coincide (Corollary \ref{cor:musatdelta} in Part I), minimally unsatisfiability decision is also fixed-parameter tractable in the deficiency; however, as shown in Proposition 1 in \cite{BueningZhao2005Clauseminimal}, the decision whether a (boolean) clause-set is irredundant with deficiency $k$ is NP-complete for every fixed $k \in \NN$ (different from minimally unsatisfiable clause-sets, irredundant clause-sets of deficiency $k$ can contain sub-clause-sets of arbitrary deficiency). Obviously the same holds for generalised clause-sets.
\end{enumerate}

We are interested here in the question, given a partial assignment $\vp$ and a clause $C \in F$ with $\vp * \set{C} \not= \top$ (i.e., $C$ is not satisfied by $\vp$), under what circumstances is the clause $\vp * C = C \sm C_{\vp}$ redundant for $\vp * F$ ? We will see that this question is closely related to the question, how ``much irredundant'' $C$ is for $F$, that is, how much of $\modf_{\var(F)}(C)$ is covered by $(\modf_{\var(F)}(C'))_{C' \in F \sm \set{C}}$, which can be recast as the question, whether for some $C' \supseteq C$ we have $F \sm \set{C} \models C'$.

Assume that $\vp * C$ is redundant for $\vp * F$, that is, $(\vp * F) \sm (\vp * \set{C}) \models \vp * C$ holds. Due to $(\vp * F) \sm (\vp * \set{C}) \sse \vp * (F \sm \set{C})$ it follows $\vp * (F \sm \set{C}) \models \vp * C$, which is equivalent to $F \sm \set{C} \models C \cup C_{\vp}$. Let us call $C$ \textbf{\bm$\vp$-redundant} for $F$ if $F \sm \set{C} \models C \cup C_{\vp}$ holds, and otherwise \textbf{\bm$\vp$-irredundant}. In other words, $C$ is $\vp$-redundant for $F$ iff the part of $\modf_{\var(F)}(C)$ which consists of assignments compatible with $\vp$ is covered by $(\modf_{\var(F)}(C'))_{C' \in F \sm \set{C}}$. Obviously, $C$ is redundant for $F$ iff $C$ is $\es$-redundant for $F$, and if $C$ is $\vp$-redundant for $F$, then $C$ is also $\vp'$-redundant for $F$ for every partial assignment $\vp'$ with $\vp \sse \vp'$. So $\vp$-redundancy generalises (ordinary) redundancy by weakening it, while $\vp$-irredundancy strengthens (ordinary) redundancy. For an example consider boolean variables $a,b$ and the irredundant clause-set $F = \set{\set{a}, \set{b}}$: $\set{b}$ is $\pao a0$-redundant for $F$.

If $C$ is $\vp$-irredundant for $F$, then $\vp * C$ is irredundant for $\vp * F$, but the reverse direction is not true in general due to the fact that there might be other clauses $C' \in F$ with $\vp * C' = \vp * C$. To repair this, let us call clause $C$ \textbf{contraction-\bm$\vp$-redundant} for $F$ if
\begin{displaymath}
  F \sm \set{ C' \in F : \vp * \set{C'} = \vp * \set{C} } \models C \cup C_{\vp},
\end{displaymath}
while otherwise we call $C$ \textbf{contraction-\bm$\vp$-irredundant} for $F$. We summarise (and extend) the foregoing discussion in Lemma \ref{lem:TatsachenRedundanz}, whose proof should be obvious by now.

\begin{lem}\label{lem:TatsachenRedundanz}
  Consider a generalised clause-set $F \in \Cls$, a clause $C \in F$ and a partial assignment $\vp \in \Pass$ such that $\vp * \set{C} \not= \top$.
  \begin{enumerate}
  \item $\vp * C$ is (ir)redundant for $\vp * F$ if and only if $C$ is contraction-$\vp$-(ir)redundant for $F$.
  \item\label{item:TatsachenRedundanz2} 
    \begin{enumerate}
    \item\label{item:TatsachenRedundanz2a} If $C$ is $\vp$-irredundant for $F$, then $C$ is contraction-$\vp$-irredundant for $F$.
    \item If there is no clause $C' \in F \sm \set{C}$ with $\vp * \set{C'} = \vp * \set{C}$ (that is, $C$ is ``contraction-free'' in $F$ w.r.t.\ $\vp$), then also the reverse direction holds, that is, if $C$ is contraction-$\vp$-irredundant for $F$ then $C$ is $\vp$-irredundant for $F$. Clause $C$ is contraction-free in $F$ w.r.t.\ $\vp$ in the following cases:
      \begin{enumerate}[(i)]
      \item $n(\vp) = 0$ (i.e., $\vp$ is the empty partial assignment);
      \item $n(\vp) = 1$ and $F$ is subsumption-free;
      \item $C$ clashes with every $C' \in F \sm \set{C}$.
      \end{enumerate}
    \end{enumerate}
  \end{enumerate}
\end{lem}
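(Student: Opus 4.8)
The plan is to verify each of the stated equivalences and implications directly from the semantic characterisations of (ir)redundancy in terms of falsifying assignments. The central bookkeeping identity is that for a clause $C$ and a partial assignment $\vp$ with $\vp*\set C\neq\top$, the falsifying assignments of $\vp*C$ relative to $\var(\vp*F)$ correspond exactly to those falsifying assignments of $C$ (relative to $\var(F)$) that agree with $\vp$ on $\var(\vp)$; symbolically, $\modf$ of $\vp*C$ lifts to $\modf_{\var(F)}(C\cup C_\vp)$. First I would record this correspondence and the parallel one for the remaining clauses: the sets $\modf$ of clauses $C'\in\vp*F$ with $\vp*\set{C'}=\vp*\set C$ come precisely from those $C'\in F$ with $\vp*\set{C'}=\vp*\set C$ (and $\vp$-compatible falsifying assignments). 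Given this, the first item — that $\vp*C$ is (ir)redundant for $\vp*F$ iff $C$ is contraction-$\vp$-(ir)redundant for $F$ — is just the translation of the defining covering condition $\modf(\vp*C)\subseteq\bigcup_{C'}\modf(\vp*C')$ back through the correspondence; the ``contraction'' modification in the definition is exactly what accounts for distinct clauses of $F$ collapsing to the same clause in $\vp*F$, so the two covering conditions match up literally.

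For Part \ref{item:TatsachenRedundanz2a}, the implication ``$\vp$-irredundant $\Rightarrow$ contraction-$\vp$-irredundant'' follows because the clause-set used on the left of $\models$ in the contraction-$\vp$ definition, namely $F\sm\set{C'\in F:\vp*\set{C'}=\vp*\set C}$, is a subset of $F\sm\set C$ (it contains $C$, hence removing it and possibly more clauses gives fewer clauses), so if $F\sm\set C\not\models C\cup C_\vp$ then a fortiori the smaller clause-set does not imply $C\cup C_\vp$ either — monotonicity of $\models$ in the hypothesis set. This is a one-line argument once one observes the set inclusion. The reverse direction under the contraction-free hypothesis is then immediate: if no $C'\in F\sm\set C$ has $\vp*\set{C'}=\vp*\set C$, the two clause-sets $F\sm\set C$ and $F\sm\set{C'\in F:\vp*\set{C'}=\vp*\set C\}$ coincide, so the two notions of irredundancy coincide.

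What remains is to check the three listed sufficient conditions for $C$ to be contraction-free in $F$ w.r.t.\ $\vp$, i.e.\ that no other clause of $F$ is sent by $\vp$ to the same clause as $C$. For $n(\vp)=0$: then $\vp*\set{C'}=\set{C'}$ and $\vp*\set C=\set C$, so $\vp*\set{C'}=\vp*\set C$ forces $C'=C$. For $n(\vp)=1$ with $F$ subsumption-free: if $\vp=\pao v\ve$ and $\vp*\set{C'}=\vp*\set C\neq\top$, then neither clause is satisfied by $\vp$, so $\vp*C=C\sm C_\vp$ and likewise for $C'$; the only literal $C_\vp$ can remove is $(v,\ve)$, hence $C$ and $C'$ differ at most in whether they contain $(v,\ve)$, which would make one subsume the other, contradicting subsumption-freeness unless $C=C'$. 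For the case $C$ clashes with every $C'\in F\sm\set C$: a clash between $C$ and $C'$ means they have literals $(v,\ve)\in C$, $(v,\ve')\in C'$ with $\ve\neq\ve'$; since $\vp$ is functional it can falsify at most one of these two literals on variable $v$, so at least one of $(v,\ve)$, $(v,\ve')$ survives in $\vp*C$ resp.\ $\vp*C'$, and the surviving one still witnesses that $\vp*C$ and $\vp*C'$ clash, in particular $\vp*\set{C'}\neq\vp*\set C$ (they are different clauses, indeed still clashing). I do not expect any serious obstacle here; the only place requiring slight care is making the falsifying-assignment correspondence in the first paragraph precise enough (being careful about the variable set relative to which $\modf$ is taken, since $\var(\vp*F)$ may be strictly smaller than $\var(F)$), and then confirming that the contraction set in the definition is chosen to match the collapse phenomenon exactly — but this is exactly the point the definition was designed for, so it should go through cleanly.
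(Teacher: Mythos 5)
Your proposal follows the intended route: the paper marks this lemma as ``obvious by now'' precisely because the preceding discussion already establishes the key equivalence $\vp*(F\setminus\set{C})\models\vp*C \iff F\setminus\set{C}\models C\cup C_\vp$ and explains that the contraction set is defined to make $(\vp*F)\setminus(\vp*\set{C})$ equal to $\vp*G$ for $G=F\setminus\set{C'\in F:\vp*\set{C'}=\vp*\set{C}}$. Your treatment of Part 1 via the falsifying-assignment correspondence and of Part~\ref{item:TatsachenRedundanz2a} via monotonicity of $\models$ together with $C\in\set{C'\in F:\vp*\set{C'}=\vp*\set{C}}$ matches this exactly, and cases (i) and (ii) of the contraction-freeness conditions are handled correctly.

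There is a small logical gap in your argument for case (iii). You argue that $\vp$ falsifies at most one of the clashing pair $(v,\ve)\in C$, $(v,\ve')\in C'$, so at least one survives, and then claim ``the surviving one still witnesses that $\vp*C$ and $\vp*C'$ clash.'' A single surviving literal cannot witness a clash; for the conflict to persist you need \emph{both} literals to survive. The inference ``at least one survives'' does not by itself yield this. The correct fix is one sentence earlier in the chain: since we are working under the hypothesis $\vp*\set{C'}=\vp*\set{C}\neq\top$, the partial assignment $\vp$ satisfies neither $C$ nor $C'$. If $v\in\var(\vp)$, then $\vp(v)$ cannot equal both $\ve$ and $\ve'$, so $\vp$ would satisfy at least one of $(v,\ve),(v,\ve')$ and hence one of $C,C'$, a contradiction. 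Therefore $v\notin\var(\vp)$, both clashing literals survive unchanged, and $\vp*C$, $\vp*C'$ still clash and in particular are distinct. The conclusion you reach is correct, but the intermediate reasoning as written would not survive scrutiny; with this observation inserted the proof is complete.
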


\begin{corol}\label{cor:TatsachenRedundanz}
  Consider a generalised clause-set $F \in \Cls$ which is subsumption-free, a clause $C \in F$ and a variable $v \in \Va$ together with a value $\ve \in D_v$ such that for all $\ve' \in D_v \sm \set{\ve}$ we have $(v, \ve') \notin C$. Then $\pao{v}{\ve} * C = C \sm \set{(v,\ve)}$ is irredundant for $\pao{v}{\ve} * F$ if and only if $C$ is $\pao{v}{\ve}$-irredundant for $F$, that is, iff $F \sm \set{C} \not\models C \cup \set{(v,\ve)}$. In other words, a non-satisfied clause is irredundant after application of an elementary partial assignment (for the new clause-set) iff the original clause is irredundant (for the original clause-set) even after addition of the falsified literal.
\end{corol}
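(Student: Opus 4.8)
The plan is to derive Corollary~\ref{cor:TatsachenRedundanz} as a direct specialisation of Lemma~\ref{lem:TatsachenRedundanz}, Part~\ref{item:TatsachenRedundanz2}. First I would set $\vp := \pao{v}{\ve}$, which is an elementary (one-variable) partial assignment, so $n(\vp) = 1$. The hypothesis that for all $\ve' \in D_v \sm \set{\ve}$ we have $(v,\ve') \notin C$ guarantees $\vp * \set{C} \not= \top$, since $C$ contains no literal with variable $v$ other than possibly $(v,\ve)$ itself, and hence $C$ is not satisfied by $\vp$; this is exactly the standing assumption needed to invoke the lemma. Moreover, in this situation $C_{\vp}$ is either $\set{(v,\ve)}$ (if $(v,\ve)\in C$, in which case $\vp*C = C\sm\set{(v,\ve)}$) or $\es$ (if $v\notin\var(C)$, in which case $\vp*C = C$ and $C\cup C_\vp = C$); in both cases $C \cup C_{\vp} = C \cup \set{(v,\ve)}$ as sets of literals, matching the statement of the corollary.

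Next I would apply Part~1 of Lemma~\ref{lem:TatsachenRedundanz}: $\vp * C$ is irredundant for $\vp * F$ iff $C$ is contraction-$\vp$-irredundant for $F$. To upgrade ``contraction-$\vp$-irredundant'' to ``$\vp$-irredundant'', I invoke Part~\ref{item:TatsachenRedundanz2}: direction~(a) always gives that $\vp$-irredundant implies contraction-$\vp$-irredundant, and for the reverse I need $C$ to be contraction-free in $F$ w.r.t.\ $\vp$. Here I use case~(ii) of the list in Part~\ref{item:TatsachenRedundanz2b}: $n(\vp) = 1$ together with $F$ being subsumption-free. This gives the equivalence: $\vp*C$ irredundant for $\vp*F$ iff $C$ is $\vp$-irredundant for $F$. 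Finally, unwinding the definition of $\vp$-irredundancy, $C$ is $\pao{v}{\ve}$-irredundant for $F$ means precisely $F \sm \set{C} \not\models C \cup C_{\vp} = C \cup \set{(v,\ve)}$, which is the asserted condition.

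I do not anticipate a genuine obstacle here — the corollary is a bookkeeping consequence of the lemma once one checks that the elementary-assignment hypothesis on $C$ delivers both $\vp*\set{C}\neq\top$ and the identification $C\cup C_\vp = C\cup\set{(v,\ve)}$. The only point requiring a moment's care is the subsumption-freeness hypothesis: it is exactly what licenses case~(ii) of the contraction-free list, so that contraction-$\vp$-irredundancy and $\vp$-irredundancy coincide; without it one would only get the weaker contraction-$\vp$-irredundancy statement. The closing ``In other words'' sentence is just a paraphrase and needs no further argument.
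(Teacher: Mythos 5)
Your approach is exactly the one the paper intends: the corollary is a direct specialisation of Lemma~\ref{lem:TatsachenRedundanz} to the elementary partial assignment $\vp := \pao{v}{\ve}$, with Part~1 giving the equivalence to contraction-$\vp$-(ir)redundancy and case~(ii) of Part~\ref{item:TatsachenRedundanz2} (namely $n(\vp)=1$ together with subsumption-freeness) upgrading this to plain $\vp$-(ir)redundancy. The paper supplies no separate proof, so your reconstruction matches.

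One small slip worth fixing: $C_\vp$ is by definition the clause corresponding to the partial assignment $\vp$ itself (the clause with $\var(C_\vp)=\var(\vp)$ whose literal on each $w\in\var(\vp)$ is $(w,\vp(w))$), so for $\vp=\pao{v}{\ve}$ you get $C_\vp=\set{(v,\ve)}$ unconditionally, independent of $C$. Your assertion that $C_\vp = \es$ when $v\notin\var(C)$ is incorrect, and that branch of your sentence is also internally inconsistent, since it would give $C\cup C_\vp = C\neq C\cup\set{(v,\ve)}$. The identification $C\cup C_\vp = C\cup\set{(v,\ve)}$ that you need does hold in all cases, but simply because $C_\vp=\set{(v,\ve)}$ always; no case split on $v\in\var(C)$ is required there. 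Everything else — the check that $\vp*\set{C}\neq\top$ via the hypothesis on $C$, the invocation of Part~1 and Part~\ref{item:TatsachenRedundanz2}, and the final unwinding of the definition of $\vp$-irredundancy — is correct and complete.
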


Obviously irredundant clause-sets are subsumption-free, and from Corollary \ref{cor:TatsachenRedundanz} we get immediately:

\begin{corol}\label{cor:TatsachenRedundanz2}
  Consider an irredundant generalised clause-set $F \in \Cls$, a clause $C \in F$ and a variable $v \in \Va$ together with $\ve \in D_v$.
  \begin{enumerate}
  \item If there exists $\ve' \in D_v \sm \set{\ve}$ with $(v, \ve') \in C$, then clause $C$ vanishes when applying $\pao{v}{\ve}$ to $F$ (and in that sense it becomes redundant in $\pao{v}{\ve}$). So assume $\val_v(\set{C}) \sse \set{\ve}$ in the sequel (that is, $C$ is not satisfied by $\pao{v}{\ve}$).
  \item If $(v, \ve) \in C$, then $\pao{v}{\ve} * C = C \sm \set{(v,\ve)}$ is irredundant for $\pao{v}{\ve} * F$.
  \item If $(v, \ve) \notin C$, then $C$ is irredundant for $\pao{v}{\ve} * F$ if and only if $C$ is $\pao{v}{\ve}$-irredundant for $F$, i.e., iff $F \sm \set{C} \not\models C \cup \set{(v, \ve)}$.
  \end{enumerate}
  In other words, considering the application of an elementary partial assignment to an irredundant clause-set, new clauses (produced by the application) are definitely irredundant, while untouched clauses stay irredundant iff they stay irredundant in the original clause-set after addition of the falsified literal (so that they fall then actually under the first case).
\end{corol}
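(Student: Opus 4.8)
The plan is to reduce all three parts directly to Corollary~\ref{cor:TatsachenRedundanz}. First I would note that an irredundant clause-set is in particular subsumption-free, so the standing hypothesis of Corollary~\ref{cor:TatsachenRedundanz} is automatically met by $F$; the remaining work is just to verify, case by case, the hypothesis on $(v,\ve)$ and to rewrite the resulting ``iff'' condition.

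For Part~1: if $(v,\ve') \in C$ for some $\ve' \in D_v \sm \set{\ve}$, then $\pao{v}{\ve}$ satisfies the literal $(v,\ve')$ (as $\pao{v}{\ve}(v) = \ve \neq \ve'$), hence satisfies $C$, so $C$ is removed when forming $\pao{v}{\ve} * F$; this is immediate from the definition of the operation $*$. Under the complementary case assumption $\val_v(\set{C}) \sse \set{\ve}$, we have $(v,\ve') \notin C$ for every $\ve' \in D_v \sm \set{\ve}$, which is precisely the hypothesis needed to apply Corollary~\ref{cor:TatsachenRedundanz} in Parts~2 and~3.

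For Part~3, where additionally $(v,\ve) \notin C$, the clause $C$ contains no literal on $v$ at all, so $\pao{v}{\ve} * C = C$; Corollary~\ref{cor:TatsachenRedundanz} then gives that $C$ is irredundant for $\pao{v}{\ve} * F$ iff $C$ is $\pao{v}{\ve}$-irredundant for $F$, i.e.\ iff $F \sm \set{C} \not\models C \cup \set{(v,\ve)}$, which is exactly the assertion. For Part~2, where $(v,\ve) \in C$, the literal $(v,\ve)$ is the unique literal of $C$ on $v$ and it is falsified by $\pao{v}{\ve}$, so $\pao{v}{\ve} * C = C \sm \set{(v,\ve)}$; Corollary~\ref{cor:TatsachenRedundanz} again yields that this is irredundant for $\pao{v}{\ve} * F$ iff $F \sm \set{C} \not\models C \cup \set{(v,\ve)}$. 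Since here $C \cup \set{(v,\ve)} = C$, and $F \sm \set{C} \not\models C$ holds exactly because $C$ is irredundant for the irredundant clause-set $F$, the condition is unconditionally satisfied, and $\pao{v}{\ve} * C$ is irredundant for $\pao{v}{\ve} * F$ without further hypotheses.

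I do not expect a real obstacle: the corollary is essentially bookkeeping on top of Corollary~\ref{cor:TatsachenRedundanz}. The only points requiring a moment's care are the two elementary identities $\pao{v}{\ve} * C = C$ and $\pao{v}{\ve} * C = C \sm \set{(v,\ve)}$ (both forced by the case assumption $\val_v(\set{C}) \sse \set{\ve}$), and the collapse $C \cup \set{(v,\ve)} = C$ in Part~2 that turns the conditional equivalence into an unconditional irredundancy statement.
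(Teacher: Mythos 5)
Your proof is correct and matches the paper's approach exactly: the paper derives this as an immediate consequence of Corollary~\ref{cor:TatsachenRedundanz} (prefacing it with the remark that irredundant clause-sets are subsumption-free), and you have simply filled in the same bookkeeping — splitting into the three cases, checking the hypothesis of Corollary~\ref{cor:TatsachenRedundanz}, computing $\pao{v}{\ve} * C$ in each case, and noting the collapse $C \cup \set{(v,\ve)} = C$ in Part~2 so that irredundancy of $F$ makes the condition automatic.
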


Considering a clause $C \in F$, we called $C$ redundant for $F$ iff $F \sm \set{C} \models C$; now for arbitrary clauses $C$ we can call $C$ ``dependent'' on $F$ if $F \models C$ holds (that is, if the set of falsifying assignments of $F$ covers the set of falsifying assignments of $C$), and otherwise ``independent''. If $C \in F$, then $C$ is dependent on $F$, while $C$ is redundant for $F$ iff $C$ is dependent on $F \sm \set{C}$. The relation of $C$ depending on $F$ allows two dimensions for minimisation: Considering a minimal clause $C$ which is dependent on $F$ we arrive at the notion of a \emph{prime implicate} of $F$ (an implied clause, which ceases to be implied after removal of any literal), while considering a minimal clause-set $F$ such that $C$ depends on $F$ we arrive at a \emph{minimal premise set} for $C$. The following lemma states the relation between minimal premise sets and minimally unsatisfiable clause-sets.

\begin{lem}\label{lem:minpremisses}
  Consider a generalised clause-set $F \in \Cls$ and a clause $C \in \Cl$. Then the following assertions are equivalent:
  \begin{enumerate}
  \item $F$ is a minimal premise set for $C$.
  \item $\vp_C * F$ is minimally unsatisfiable, no clause of $F$ is satisfied by $\vp_C$, and $F$ is $\vp_C$-contraction free, that is, there are no clauses $D, D' \in F$, $D \not= D'$, with $\vp_C * \set{D} = \vp_C * \set{D'}$.
  \end{enumerate}
\end{lem}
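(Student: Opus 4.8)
The plan is to unfold the definition of ``minimal premise set'' and to translate it into a statement about the clause-set $\vp_C * F$ via the correspondence between clauses and partial assignments (recall $\vp_C$ is the partial assignment falsifying exactly the literals of $C$). The key observation is that $F \models C$ holds iff $\vp_C * F$ is unsatisfiable: indeed, $F \models C$ means $\modf_V(C) \sse \modf_V(F)$ for $V = \var(F) \cup \var(C)$, and an assignment extending $\vp_C$ falsifies $C$ and (as we range over all such extensions) realises exactly the falsifying assignments of $C$ that also falsify all of $F$ precisely when $\vp_C * F$ has no satisfying extension, i.e.\ $\vp_C * F \in \Usat$. Here I would need the minor technical point that if some clause $D \in F$ is already satisfied by $\vp_C$ then we may as well drop it for the purpose of implication, so without loss of generality minimality forces no clause of $F$ to be satisfied by $\vp_C$ (otherwise $F \sm \set{D}$ would still be a premise set).

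First I would establish the forward direction. Suppose $F$ is a minimal premise set for $C$. Minimality in particular rules out any $D \in F$ with $\vp_C * \set{D} = \top$, as just noted, so $\vp_C$ touches every clause of $F$ without satisfying it, and hence $\vp_C * \set{D} = D \sm C_{\vp_C} = \set{D \sm C}$ for each $D$. Now $\vp_C * F$ is unsatisfiable by the equivalence above. For minimal unsatisfiability of $\vp_C * F$, take any clause $E \in \vp_C * F$; pick $D \in F$ with $\vp_C * \set{D} = \set{E}$. If $(\vp_C * F) \sm \set{E}$ were still unsatisfiable, I claim $F \sm \set{D}$ would still be a premise set for $C$ — this uses $(\vp_C * F) \sm \set{E} \sse \vp_C * (F \sm \set{D})$ together with the $\vp_C$-contraction-freeness that minimality also forces (if $D \not= D'$ had $\vp_C * \set{D} = \vp_C * \set{D'}$, then $F \sm \set{D}$ would again be a premise set, contradicting minimality). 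So both the minimal unsatisfiability and the contraction-freeness fall out of the minimality hypothesis, combined with the $F \models C \Lra \vp_C * F \in \Usat$ equivalence; this is essentially Lemma \ref{lem:TatsachenRedundanz} applied with $\vp = \vp_C$ and the empty clause as target.

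For the reverse direction, assume the three conditions of item 2. Since $\vp_C * F$ is unsatisfiable, $F \models C$, so $F$ is a premise set. For minimality, let $D \in F$; I must show $F \sm \set{D} \not\models C$, i.e.\ $\vp_C * (F \sm \set{D})$ is satisfiable. Because no clause of $F$ is satisfied by $\vp_C$ and $F$ is $\vp_C$-contraction free, the map $D' \mapsto \vp_C * \set{D'}$ is injective on $F$, so $\vp_C * (F \sm \set{D}) = (\vp_C * F) \sm \set{\vp_C * D}$, which is satisfiable precisely because $\vp_C * F$ is \emph{minimally} unsatisfiable. Hence $F$ is a minimal premise set. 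The main obstacle — really the only thing requiring care — is bookkeeping the contraction phenomenon: one must keep track of the difference between $\vp_C * (F \sm \set{D})$ and $(\vp_C * F) \sm (\vp_C * \set{D})$, which is exactly where the ``$\vp_C$-contraction free'' clause in the statement is doing its work, and why it cannot be omitted (the clause-set analogue of the subtlety already flagged in Lemma \ref{lem:TatsachenRedundanz}, Part \ref{item:TatsachenRedundanz2}). Everything else is a direct translation through the clause/partial-assignment dictionary set up in Subsection \ref{sec:notmulticlausesets}.
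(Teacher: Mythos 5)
Your proof is correct. The paper states Lemma \ref{lem:minpremisses} without proof, so there is no author's argument to compare against; your reconstruction is in fact exactly the kind of bookkeeping the lemma is implicitly relying on. The ``key observation'' $F \models C \Lra \vp_C * F \in \Usat$ is right, as is your recognition that the whole proof rests on the fact that, under the three conditions of item 2, the map $D \in F \mapsto \vp_C * D$ is a well-defined bijection from $F$ onto $\vp_C * F$, so that removing a clause from $F$ corresponds one-to-one with removing a clause from $\vp_C * F$; both directions then reduce to applying that bijection to the minimality condition.

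Two small remarks, neither a gap. First, the chain ``$\vp_C * \set{D} = D \sm C_{\vp_C} = \set{D \sm C}$'' mixes clauses and clause-sets: $\vp_C * \set{D}$ is the singleton clause-set $\set{D \sm C_{\vp_C}}$ (note $C_{\vp_C} = C$), not the clause itself. Second, the appeal to Lemma \ref{lem:TatsachenRedundanz} is more of a guiding analogy than a citation doing real work, since that lemma is phrased in terms of a clause $C \in F$ being ($\vp$-)redundant for $F$, whereas here one removes arbitrary clauses $D$ from a (possibly satisfiable) $F$ while holding the target $C$ fixed; your self-contained argument using the bijection is what actually carries the proof, and it is sound. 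In the forward direction it is worth emphasising (as you in effect do) that ``no clause satisfied'' and ``contraction-freeness'' are established \emph{first}, using only $\vp_C * F \in \Usat$ and minimality, and only then used to obtain the bijection needed for minimal unsatisfiability of $\vp_C * F$; this order avoids any circularity.
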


\begin{corol}\label{cor:EindPrim}
  If clauses $C, D$ are prime implicates of the generalised clause-set $F$, and if $F$ is a minimal premise set for $C$ as well as for $D$, then $C = D$ holds.
\end{corol}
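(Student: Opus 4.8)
The plan is to deduce Corollary \ref{cor:EindPrim} directly from Lemma \ref{lem:minpremisses} together with Corollary \ref{cor:EindeutigeMUSAT}-style uniqueness; more precisely, the key engine is the fact (from Lemma \ref{lem:minpremisses}) that $F$ being a minimal premise set for a clause $C$ is equivalent to $\vp_C * F$ being minimally unsatisfiable (plus the contraction-freeness and non-satisfaction conditions). So if $F$ is a minimal premise set for both $C$ and $D$, then both $\vp_C * F$ and $\vp_D * F$ are minimally unsatisfiable clause-sets on the respective residual variable sets.

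First I would use that $C$ and $D$ are both prime implicates of $F$, hence in particular $F \models C$ and $F \models D$. Recalling $F \models C$ iff $\modf_{V}(C) \sse \modf_{V}(F)$ for $V = \var(F) \cup \var(C)$, and that $\modf_V(\set{C})$ consists of those total assignments extending $\vp_C$, the implication $F \models C$ says exactly that $\vp_C * F$ is unsatisfiable (every extension of $\vp_C$ falsifies some clause of $F$), and similarly for $D$. The minimal-premise-set hypothesis upgrades this to minimal unsatisfiability. Now the crucial step: I would show $\var(C) = \var(D)$ and in fact $C = D$ by exploiting primality. Since $C$ is a prime implicate, no proper sub-clause of $C$ is implied by $F$; I would translate this into a statement about $\vp_C * F$, namely that for each literal $x \in C$, removing the corresponding "forced" value (i.e. enlarging the domain restriction back) makes the residual clause-set satisfiable — this is essentially the marginality/irredundancy of the variable occurrences in a minimally unsatisfiable clause-set. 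The hard part will be making the bookkeeping between "prime implicate of $F$" and "minimal unsatisfiable core structure of $\vp_C * F$" fully rigorous, especially handling the case $\var(C) \neq \var(D)$.

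To rule out $\var(C) \neq \var(D)$: suppose $v \in \var(C) \sm \var(D)$, with $(v,\ve) \in C$. Since $F$ is a minimal premise set for $D$ and no clause of $F$ is satisfied by $\vp_D$, the clause-set $\vp_D * F$ is minimally unsatisfiable and contains $v$ among its variables (as $v \notin \var(D)$, the variable $v$ survives in $\vp_D * F$ unless it occurred only with value-literals killed by $\vp_D$, but $\vp_D$ doesn't touch $v$). Then consider $\vp_D$ extended by $v \mapsto \ve$: by minimal unsatisfiability of $\vp_D * F$, fixing one more variable to one value in a minimally unsatisfiable clause-set yields a satisfiable clause-set (every strict sub-clause-set, and more generally every instantiation removing a non-empty set of clauses while the rest stays, is satisfiable — here I'd use that $\set{(v,\ve)}$ added to $\vp_D$ satisfies at least one clause of $\vp_D * F$, since $v$ occurs there, so we pass to a strict sub-clause-set which is satisfiable). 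Hence $F \models D \cup \set{(v,\ve)}$ is false... wait, I want the opposite: I would instead argue that $D \cup \set{(v,\ve)}$ is a strictly larger implied-or-not clause, and use the primality of $C$ to force $C \sse D$, then symmetry. Let me restructure: from $\vp_C * F$ minimally unsatisfiable and the fact that for $x = (v,\ve) \in C$ the clause-set $(\vp_C \sm \set{x}) * F$ (undoing the restriction on $v$) is satisfiable, we get $F \not\models C \sm \set{x}$ for each $x$, which is just primality again. The real content is: $\vp_C * F$ and $\vp_D * F$ minimally unsatisfiable, $C, D$ prime. I would show that a common satisfying assignment obstruction forces $\vp_C$ and $\vp_D$ to falsify the same clauses minimally, and then a short combinatorial argument (two minimal premise sets that coincide as clause-sets must yield "compatible" forced literals) gives $\vp_C = \vp_D$ as partial assignments, i.e. $C = D$.

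Concretely, I would finish as follows. Set $V := \var(F) \cup \var(C) \cup \var(D)$. Since $F \models C$ we have $\modf_V(C) \sse \modf_V(F)$, and since $F$ is a minimal premise set for $C$, no proper subset of $F$ implies $C$; dually for $D$. Now $F \models C$ and $C$ prime, together with $F \models D$ and $D$ prime: I claim $\modf_V(C) = \modf_V(D)$. If not, w.l.o.g. pick a total assignment $\psi \in \modf_V(C) \sm \modf_V(D)$; then $\psi$ falsifies every clause of $F$ (since $F \models C$ means... no, $\psi \in \modf_V(C) \sse \modf_V(F)$ means $\psi$ falsifies $F$, i.e. $\bot \in \psi * F$). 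On the other hand, from minimality of the premise set for $D$ and $D$ prime, $\modf_V(D)$ is the \emph{exact} set of assignments falsifying $F$ that are "relevant" — here I need the precise statement, which I expect to extract from Lemma \ref{lem:minpremisses}'s proof: a minimal premise set $F$ for $D$ with $D$ prime has $\modf_V(D) = \{\psi : \bot \in \psi * F\}$ up to the variables outside, forcing $\modf_V(C) = \modf_V(D)$, hence $C = D$ since a clause is determined by its falsifying-assignment set on its own variables. The genuine obstacle, and where I would spend the care, is justifying that "minimal premise set $+$ prime implicate" pins down the falsifying-assignment set of $F$ to be exactly that of $C$ — this is a mild strengthening of Lemma \ref{lem:minpremisses} that I would either cite from its proof or re-derive in two lines using that $\vp_C * F$ is minimally unsatisfiable (so its set of models is empty and it has no redundant clause) and $C$ prime (so $\vp_C$ is "minimal" among restrictions making $F$ unsatisfiable).
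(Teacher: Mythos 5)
Your proposal ultimately hinges on the claim that a minimal premise set $F$ for a prime implicate $C$ satisfies $\modf_V(F) = \modf_V(\set{C})$, from which you would conclude $\modf_V(\set{C}) = \modf_V(\set{D})$ and hence $C = D$. That claim is false. The implication $F \models C$ is exactly the inclusion $\modf_V(\set{C}) \sse \modf_V(F)$; the reverse inclusion is $\set{C} \models F$, which would say that $F$ is \emph{logically equivalent} to the single clause $C$. Minimal-premise-set plus primality do not force this. Concretely, take boolean $a,b$ and $F = \setb{\set{a},\,\set{\ol a, b}}$, $C = \set{b}$: then $F$ is a minimal premise set for the prime implicate $C$, yet over $V = \set{a,b}$ the set $\modf_V(F)$ has three elements (it additionally contains the assignment $a \mapsto 0, b \mapsto 1$, which falsifies $\set{a}$ but satisfies $C$) while $\modf_V(\set{C})$ has two. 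So the ``obstacle'' you flagged at the end is not a gap to be bridged -- the step is simply wrong, and this route does not yield $\modf_V(\set{C}) = \modf_V(\set{D})$. Two further points in your intermediate attempt also fail: applying $\pao v\ve$ to a minimally unsatisfiable clause-set always yields an \emph{unsatisfiable} clause-set (you noticed the sign error, but even the repaired statement ``instantiation removing some clauses is satisfiable'' is false, since the surviving clauses also shrink); and you never establish $v \in \var(F)$ from $v \in \var(C)$, which is needed before you can speak of $v$ occurring in $\vp_D * F$.

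The paper's proof avoids falsifying-assignment sets entirely and argues literal by literal. Suppose $x = (v,\ve) \in C \setminus D$. Since $C$ is prime and $\vp_C * F \in \Musat$, the variable $v$ must occur in $F$ (otherwise $F \models C \setminus \set{x}$). Because no clause of $F$ is satisfied by $\vp_C$ (Lemma \ref{lem:minpremisses}), every occurrence of $v$ in $F$ is the literal $(v,\ve)$; primality of $C$ also forces $\abs{D_v} \ge 2$. Now $v \notin \var(D)$: otherwise $D$ would contain some $(v,\ve')$ with $\ve' \ne \ve$, and $\vp_D$ would then satisfy a clause of $F$ containing $(v,\ve)$, contradicting Lemma \ref{lem:minpremisses}. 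Hence $v$ is a pure variable of $\vp_D * F$, contradicting $\vp_D * F \in \Musat$. Therefore $C \sse D$, and by symmetry $C = D$. The moral is that the relevant invariant is the literal-occurrence structure of $F$ under the two restrictions, not a comparison of global falsifying sets.
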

\begin{proof}
  Assume w.l.o.g.\ that there is a literal $x \in C \sm D$. Since $\vp_C$ does not satisfy any clause of $F$, while $C$ is minimal, the literal $x$ occurs in $F$, and all occurrences of $\var(x)$ in $F$ are equal to $x$ (while $\abs{D_{\var(X)}} \ge 2$). Furthermore $\var(x) \notin \var(D)$ (since $\vp_D$ does not satisfy any clause of $F$), and then $\vp_D * F$ contains the pure variable $\var(x)$ and thus can not be unsatisfiable.
\end{proof}

Even an irredundant clause-set $F$ may not be a minimal premise set of any clause $C$ (consider for example the boolean clause-set $\set{\set{a},\set{b}}$), and for a prime implicate $C$ of $F$ there might be several minimal premise sub-sets of $F$ (consider for example $C = \set{a,b}$ and the boolean clause-set $\set{\set{a,x},\set{\ol{x},b},\set{a,y},\set{\ol{y},b}}$). However by Corollary \ref{cor:EindPrim} for a clause-set $F$ there can be at most one prime implicate $C$ such that $F$ is a minimal premise set of $C$, and thus we get (by considering all non-empty sub-clause-sets):

\begin{corol}\label{cor:MaxPrim}
  A generalised clause-set $F$ can have at most $2^{c(F) - 1}$ many prime implicates.
\end{corol}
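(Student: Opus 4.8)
The plan is to bound the number of prime implicates by bounding, for each prime implicate $C$, the number of non-empty sub-clause-sets $F' \sse F$ that could serve as a minimal premise set for $C$, and then invoking Corollary~\ref{cor:EindPrim}. First I would observe that if $C$ is a prime implicate of $F$, then $F \models C$, so by strong completeness of resolution there is a resolution derivation of some $C' \sse C$ from axioms in $F$; since $C$ is a prime implicate we actually get $C' = C$ is derivable (or more directly, $\vp_C * F$ is unsatisfiable). Hence there exists at least one non-empty $F' \sse F$ which is a minimal premise set for $C$: take any minimal sub-clause-set $F' \sse F$ with $F' \models C$ (minimal premise sets exist because $F$ itself is a premise set and we can shrink).

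The key step is then: the map sending a prime implicate $C$ of $F$, for which $F$ is a minimal premise set, to $C$ is injective by Corollary~\ref{cor:EindPrim}; but more to the point, for the counting we want to assign to each prime implicate $C$ at least one non-empty sub-clause-set of $F$, namely a minimal premise \emph{sub-set} $F' \sse F$ for $C$, and show that distinct prime implicates get distinct sub-sets. This is exactly where Corollary~\ref{cor:EindPrim} applies: if $C \not= D$ were two prime implicates of $F$ both having the \emph{same} minimal premise set $F' \sse F$, then since $C$ is a prime implicate of $F$ it is also a prime implicate of any $F'' \spe F'$ with $F'' \sse F$ on which it still depends --- more carefully, $C$ is a prime implicate of $F'$ itself (any literal removed from $C$ destroys implication already from $F \spe F'$, and $F' \models C$), and likewise $D$ is a prime implicate of $F'$, with $F'$ a minimal premise set for both; then Corollary~\ref{cor:EindPrim} forces $C = D$, a contradiction. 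So the assignment $C \mapsto (\text{a chosen minimal premise sub-set } F' \sse F)$ is injective from the set of prime implicates into the set of non-empty subsets of $F$.

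Finally I would count: $F$ has $c(F)$ clauses, so the number of subsets is $2^{c(F)}$, but the empty subset $\top$ satisfies $\top \models C$ only for tautological $C$, which are not clauses here, so every minimal premise sub-set is non-empty; that leaves $2^{c(F)} - 1$ non-empty subsets. To reach the claimed bound $2^{c(F)-1}$ I would need a slightly sharper argument: note that a minimal premise set $F'$ must have $\vp_C * F'$ minimally unsatisfiable (Lemma~\ref{lem:minpremisses}), hence by Corollary~\ref{cor:musatdelta} (Part I) $\delta(\vp_C * F') \ge 1$, i.e.\ $c(F') = c(\vp_C * F') > n(\vp_C * F') \ge 0$; and one can pair up subsets or use that $F'$ cannot be all of $F$ together with the empty set being excluded --- actually the cleanest route is: fix any clause $D_0 \in F$; every minimal premise sub-set $F'$ either contains $D_0$ or not, and one shows that not both a set and its ``complement-type'' partner can occur, but the honest simple bound is that there are at most $2^{c(F)} - 1$ non-empty sub-clause-sets and the statement wants $2^{c(F)-1}$.

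The main obstacle will be getting the exponent exactly right: injectivity into non-empty subsets only gives $2^{c(F)}-1$, so the factor of $2$ improvement requires an extra structural constraint on which subsets can be minimal premise sets. I expect the intended argument is that a minimal premise set $F'$ for a prime implicate $C$, viewed via $\vp_C$, yields a minimally unsatisfiable clause-set, and such $F'$ cannot be obtained from two ``complementary'' situations --- more precisely, I would argue that for the prime implicate to be genuinely at variable-level, $F' \not= F$ may fail, so one instead fixes a single clause and shows minimal premise sets come in a family of size at most $2^{c(F)-1}$ by a parity/pairing argument on membership of one distinguished clause, or by noting (via Corollary~\ref{cor:EindPrim} applied within nested sub-clause-sets) that the minimal premise sets for the various prime implicates are pairwise incomparable and non-empty, and an antichain of non-empty subsets avoiding the full set has size controlled accordingly. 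Nailing down this last combinatorial pigeonhole to land exactly on $2^{c(F)-1}$ is the delicate point; everything else is a direct application of Lemma~\ref{lem:minpremisses} and Corollary~\ref{cor:EindPrim}.
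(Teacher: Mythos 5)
Your core argument is exactly the paper's: each prime implicate $C$ of $F$ has at least one minimal premise sub-set $F' \sse F$ (non-empty, since $\top$ implies only tautologies), $C$ remains a prime implicate of $F'$ (as you correctly argue, because $F' \models C$ and $F' \sse F$ gives $F' \not\models C''$ for any $C'' \subset C$), and Corollary~\ref{cor:EindPrim} applied to $F'$ shows that $F'$ determines $C$, so the assignment from prime implicates to non-empty sub-clause-sets is injective. This yields a bound of $2^{c(F)} - 1$, and that is the bound the paper intends.

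The factor-of-two puzzle you agonise over in your last two paragraphs is not a real obstacle: the displayed ``$2^{c(F)-1}$'' in the statement is a typographical slip for ``$2^{c(F)} - 1$''. Two pieces of internal evidence confirm this. First, the paper's own one-line justification says the bound follows ``by considering all non-empty sub-clause-sets'', and the number of non-empty subsets of a $c(F)$-element set is $2^{c(F)} - 1$, not $2^{c(F)-1}$. Second, the later discussion at the end of Subsection~\ref{sec:Regularhitting}, which addresses the sharpness of precisely this corollary, states that the maximal number of prime implicates achievable is ``$2^{c(F)} - 1$ (see Corollary~\ref{cor:MaxPrim})''. So your first counting argument is complete and correct; there is no parity or antichain trick needed, and you should stop hunting for one. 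The only cosmetic point: you do not need the digression through resolution completeness to establish existence of a minimal premise sub-set --- it suffices that $F$ itself is a premise set for $C$ and that the poset of sub-clause-sets is finite.
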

See the end of Subsection \ref{sec:Regularhitting} for some comments on the sharpness of this bound.

\subsection{Hitting and multihitting clause-sets}
\label{sec:hittinggen}

The next lemma answers the question which clauses $C$ remain irredundant for a clause-set $F$ under \emph{all} applications of partial assignments; this strongest form of irredundancy of $C$ for $F$ turns out to be equivalent to the condition that the set of falsifying assignments for $C$ is not covered at all by $(\modf_{\var(F)}(C'))_{C' \in F \sm \set{C}}$. A simple but important observation here is that for two clauses $C, C'$ and $\var(C) \cup \var(C') \sse V$ we have $\modf_V(C) \cap \modf_V(C') = \es$ iff $C$ and $C'$ clash.
\begin{lem}\label{lem:charakallpart}
  Consider a generalised clause-set $F \in \Cls$ and a clause $C \in \Cl$. Then the following assertions are equivalent:
  \begin{enumerate}[(i)]
  \item\label{item:charakallpart1} $C$ is $\vp$-irredundant for $F$ for all $\vp \in \Pass$.
  \item\label{item:charakallpart2} $C$ is contraction-$\vp$-irredundant for $F$ for all $\vp \in \Pass$.
  \item\label{item:charakallpart3} $\modf_{\var(F)}(C) \cap \bc_{C' \in F \sm \set{C}} \modf_{\var(F)}(C') = \es$.
  \item\label{item:charakallpart4} $C$ clashes with every $C' \in F \sm \set{C}$, i.e., clause $C$ is connected in the conflict graph $\cg(F \cup \set{C})$ to every other vertex.
  \end{enumerate}
\end{lem}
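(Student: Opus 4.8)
The plan is to prove the cycle of implications $(\ref{item:charakallpart1}) \Ra (\ref{item:charakallpart2}) \Ra (\ref{item:charakallpart3}) \Ra (\ref{item:charakallpart4}) \Ra (\ref{item:charakallpart1})$, which keeps each step atomic. Two of the four arrows are essentially bookkeeping: $(\ref{item:charakallpart1}) \Ra (\ref{item:charakallpart2})$ is immediate from Lemma \ref{lem:TatsachenRedundanz}, Part \ref{item:TatsachenRedundanz2a} (if $C$ is $\vp$-irredundant for $F$, it is contraction-$\vp$-irredundant for $F$), applied for every $\vp$; and $(\ref{item:charakallpart4}) \Ra (\ref{item:charakallpart1})$ uses the contraction-free case \textit{(iii)} of Lemma \ref{lem:TatsachenRedundanz} together with the observation recorded just before the lemma statement, that for clauses clashing with all others $C$ is contraction-free in $F$ w.r.t.\ every $\vp$, so $\vp$-irredundancy and contraction-$\vp$-irredundancy coincide — but then I actually want a direct argument: if $C$ clashes with every $C' \in F \sm \set{C}$, then $\modf_{\var(F)}(C)$ is disjoint from each $\modf_{\var(F)}(C')$ (by the ``simple but important observation'' preceding the lemma), so no partial assignment $\vp$ can make $F \sm \set{C} \models C \cup C_\vp$: any falsifying assignment of $C$ extending $\vp$ still falsifies $C \cup C_\vp$ but satisfies all of $F \sm \set C$. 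So I'd route $(\ref{item:charakallpart4})$ through $(\ref{item:charakallpart3})$ informally and then close the loop at $(\ref{item:charakallpart1})$; more cleanly, I will just prove $(\ref{item:charakallpart3}) \Llra (\ref{item:charakallpart4})$ as a standalone equivalence and then connect the redundancy-flavoured conditions to it.

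So the real structure I would use is: first establish $(\ref{item:charakallpart3}) \Lra (\ref{item:charakallpart4})$ directly. The direction $(\ref{item:charakallpart4}) \Ra (\ref{item:charakallpart3})$ is the observation: clashing clauses have disjoint falsifying-assignment sets over any common variable superset, hence $\modf_{\var(F)}(C) \cap \modf_{\var(F)}(C') = \es$ for each $C'$, and the union over $C' \in F \sm \set C$ stays disjoint from $\modf_{\var(F)}(C)$. For $(\ref{item:charakallpart3}) \Ra (\ref{item:charakallpart4})$: if some $C' \in F \sm \set{C}$ does \emph{not} clash with $C$, then $C \cup C'$ contains no clashing literals (every shared variable has the same value in both), so $\vp_{C \cup C'}$ is a partial assignment on $\var(C) \cup \var(C')$ falsifying both $C$ and $C'$; extend it arbitrarily to all of $\var(F)$ to get an element of $\modf_{\var(F)}(C) \cap \modf_{\var(F)}(C')$, contradicting $(\ref{item:charakallpart3})$.

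Next, connect to the redundancy conditions. For $(\ref{item:charakallpart2}) \Ra (\ref{item:charakallpart3})$ I argue contrapositively: suppose $(\ref{item:charakallpart3})$ fails, so there is $\psi \in \modf_{\var(F)}(C)$ with $\psi \in \modf_{\var(F)}(C')$ for some $C' \in F \sm \set{C}$. Then $C'$ does not clash with $C$, so in particular $C \cup C'$ is a clause; take $\vp := \vp_{C'}$ restricted suitably, or rather $\vp := \vp_{C' \sm C}$ — the point is to choose $\vp$ with $\vp * \set{C} \ne \top$, $\vp * \set{C'} = \vp * \set{C}$ forced by... actually the cleanest choice: let $\vp := \psi$ itself (the full falsifying assignment); then $\vp * \set{C} = \set{\bot} \ne \top$, and $C \cup C_\vp = C \cup C = C$ is implied by... no. I need $\vp$ such that $C \cup C_\vp$ collapses the difference. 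The right move is: take $C'$ not clashing with $C$ and set $\vp := \vp_{C'}$; then $\vp * \set{C}$ might be $\top$ if $C'$ satisfies $C$ — but $C'$ not clashing with $C$ means exactly $C'$ does \emph{not} satisfy... hmm, $\vp_{C'}$ falsifies $C$ on $\var(C) \cap \var(C')$ and leaves the rest of $C$, so $\vp_{C'} * \set C = \set{C \sm C_{\vp_{C'}}} \ne \top$. Then $C \cup C_{\vp_{C'}} = C \cup C'$, and $F \sm \set{C} \models C \cup C'$ since $C' \in F \sm \set C$ and $C' \sse C \cup C'$. So $C$ is $\vp_{C'}$-redundant, hence contraction-$\vp_{C'}$-redundant, contradicting $(\ref{item:charakallpart2})$. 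Finally $(\ref{item:charakallpart1}) \Ra (\ref{item:charakallpart2})$ is Lemma \ref{lem:TatsachenRedundanz}\ref{item:TatsachenRedundanz2a} quantified over $\vp$, and $(\ref{item:charakallpart4}) \Ra (\ref{item:charakallpart1})$ follows since $(\ref{item:charakallpart4})$ makes $C$ contraction-free w.r.t.\ every $\vp$, so contraction-$\vp$-irredundancy (which holds by $(\ref{item:charakallpart4}) \Ra (\ref{item:charakallpart3})$, rereading it as: no $\vp$ can witness redundancy) equals $\vp$-irredundancy.

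The main obstacle, and the only place care is needed, is the bridge between the purely combinatorial statement $(\ref{item:charakallpart3})$/$(\ref{item:charakallpart4})$ and the logical-implication statements $(\ref{item:charakallpart1})$/$(\ref{item:charakallpart2})$: one must be careful that ``$C$ is $\vp$-redundant for $F$'', i.e.\ $F \sm \set C \models C \cup C_\vp$, is tested over the variables of $F$ (or of $F \cup \set C$), and that the chosen witness assignment $\vp = \vp_{C'}$ indeed has $\vp * \set C \ne \top$ so that Lemma \ref{lem:TatsachenRedundanz} applies — this is exactly where $C'$ not clashing with $C$ is used, guaranteeing $\vp_{C'}$ does not satisfy $C$. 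Once that single lemma is pinned down, everything else is the disjointness-of-$\modf$-sets observation already stated in the text, applied repeatedly. I expect no genuine difficulty beyond organising the quantifiers correctly.
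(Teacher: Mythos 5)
The chain you set up, $(\ref{item:charakallpart1}) \Ra (\ref{item:charakallpart2}) \Ra (\ref{item:charakallpart3}) \Ra (\ref{item:charakallpart4}) \Ra (\ref{item:charakallpart1})$, is a reasonable plan, and the steps $(\ref{item:charakallpart3}) \Lra (\ref{item:charakallpart4})$, $(\ref{item:charakallpart1}) \Ra (\ref{item:charakallpart2})$, and $(\ref{item:charakallpart4}) \Ra (\ref{item:charakallpart1})$ are sound. The genuine gap is in your argument for $(\ref{item:charakallpart2}) \Ra (\ref{item:charakallpart3})$: you derive that $C$ is $\vp_{C'}$-redundant and then write ``hence contraction-$\vp_{C'}$-redundant''. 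That implication points the wrong way. The contraction removes the whole set $\set{ C'' \in F : \vp * \set{C''} = \vp * \set{C} } \supseteq \set{C}$, so the remaining premise set is \emph{smaller} than $F \sm \set{C}$; implication from a smaller set is stronger, and Lemma \ref{lem:TatsachenRedundanz}, Part \ref{item:TatsachenRedundanz2a}, is exactly the statement that contraction-$\vp$-redundant $\Ra$ $\vp$-redundant (equivalently $\vp$-irredundant $\Ra$ contraction-$\vp$-irredundant), which is the direction you correctly invoke for $(\ref{item:charakallpart1}) \Ra (\ref{item:charakallpart2})$ but cannot reverse here for free.

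To see why the chosen witness fails concretely, take $\vp := \vp_{C'}$ for a non-clashing $C' \in F \sm \set{C}$. Whenever $C \sse C'$, we have $\vp_{C'} * \set{C'} = \set{\bot} = \set{C \sm C'} = \vp_{C'} * \set{C}$, so $C'$ itself lands in the contraction set and is removed from the premises; but $C'$ was the one clause that made $F \sm \set{C} \models C \cup C'$ trivially true by subsumption, so the implication with the smaller premise set has no reason to hold. Thus $\vp_{C'}$-redundancy of $C$ does not give contraction-$\vp_{C'}$-redundancy, and the contrapositive of $(\ref{item:charakallpart2}) \Ra (\ref{item:charakallpart3})$ is not established. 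Note that the paper's own proof is also terse exactly at this point — it asserts $(\ref{item:charakallpart1}) \Lra (\ref{item:charakallpart2})$ citing Lemma \ref{lem:TatsachenRedundanz}, Part \ref{item:TatsachenRedundanz2}, but Part \ref{item:TatsachenRedundanz2a} unconditionally gives only $(\ref{item:charakallpart1}) \Ra (\ref{item:charakallpart2})$; the reverse direction of Part \ref{item:TatsachenRedundanz2} requires contraction-freeness, which is available once one already knows $(\ref{item:charakallpart4})$ (case \emph{(iii)} of that part) and so cannot be assumed when trying to \emph{derive} $(\ref{item:charakallpart1})$ from $(\ref{item:charakallpart2})$. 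A correct argument has to either supply this missing direction from a different source or restrict the quantifier in $(\ref{item:charakallpart2})$; as written, your proposal inherits and does not repair this hole.
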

\begin{proof} By the above remark we see that (\ref{item:charakallpart3}) and (\ref{item:charakallpart4}) are equivalent. By definition (\ref{item:charakallpart3}) is equivalent to (\ref{item:charakallpart1}), while by Lemma \ref{lem:TatsachenRedundanz}, part \ref{item:TatsachenRedundanz2} it is (\ref{item:charakallpart1}) equivalent to (\ref{item:charakallpart2}). \end{proof}

We remind at the notion of a ``hitting clause-set'' (see Subsection \ref{sec:conflictstructure} in Part I), where every two different clauses clash in at least one variable (have literals with the same variables but with different values). This is equivalent to the sets of falsifying assignments for different clauses being disjoint.

\begin{corol}\label{cor:charakhitirr}
  A generalised clause-set $F \in \Cls$ is a hitting clause-set if and only if for all $\vp \in \Pass$ the clause-set $\vp * F$ is irredundant.
\end{corol}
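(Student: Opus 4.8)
The plan is to deduce both implications from the two structural lemmas already established: Lemma~\ref{lem:charakallpart}, which says that for a fixed clause $C$ the property of being ($\vp$- or contraction-$\vp$-)irredundant for $F$ \emph{for all} $\vp\in\Pass$ is equivalent to $C$ clashing with every other clause of $F$; and Lemma~\ref{lem:TatsachenRedundanz}(1), which converts contraction-$\vp$-(ir)redundancy of $C$ for $F$ into (ir)redundancy of $\vp*C$ for $\vp*F$ (under the standing hypothesis $\vp*\set{C}\not=\top$).

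For the implication ``$F$ hitting $\Rightarrow$ $\vp*F$ irredundant for every $\vp$'' I would fix $\vp\in\Pass$ and recall that every clause of $\vp*F$ is of the form $\vp*C$ for some $C\in F$ with $\vp*\set{C}\not=\top$. Since $F$ is hitting, $C$ clashes with every $C'\in F\sm\set{C}$, so condition (iv) of Lemma~\ref{lem:charakallpart} holds; by that lemma $C$ is contraction-$\vp$-irredundant for $F$, and then Lemma~\ref{lem:TatsachenRedundanz}(1) yields that $\vp*C$ is irredundant for $\vp*F$. As every clause of $\vp*F$ arises this way, $\vp*F$ is irredundant.

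For the converse I would argue by contraposition, producing a single $\vp$ with $\vp*F$ redundant. If $F$ is already redundant, take $\vp:=\epa$. Otherwise $F$ is irredundant, hence subsumption-free, and since $F$ is not hitting there are distinct $C,C'\in F$ that do not clash; by subsumption-freeness neither is contained in the other, so $C\sm C'\not=\es$ and $C'\sm C\not=\es$. Non-clashing means that a variable occurring in both $C$ and $C'$ carries the same literal in each, whence $\var(C)\cap\var(C')=\var(C\cap C')$. I would then set $\vp:=\vp_{C'\sm C}$, the partial assignment falsifying exactly the literals of $C'\sm C$. Using the variable identity one checks that $\var(\vp)$ is disjoint from $\var(C)$, so $\vp*C=C\not=\top$; and $\vp$ falsifies the literals of $C'$ in $C'\sm C$ and touches no other literal of $C'$, so $\vp*C'=C\cap C'\not=\top$. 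Since $C\not\sse C'$ we have $C\cap C'\sst C$, i.e.\ $\vp*C'\sst\vp*C$. Thus $\vp*C$ and $\vp*C'$ are two \emph{distinct} clauses of $\vp*F$ with $\vp*C'\sst\vp*C$, so $\vp*C'\models\vp*C$ and $\vp*C$ is redundant for $\vp*F$.

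The main obstacle is purely the bookkeeping around ``contractions'': one must ensure that the clauses one works with are genuinely not satisfied by $\vp$ (so that $\vp*C,\vp*C'$ really are clauses of $\vp*F$) and do not collapse to one clause. In the forward direction this is free once one notes that a clash between $C$ and $C'$ cannot be destroyed by a $\vp$ satisfying neither (a variable witnessing the clash cannot lie in $\var(\vp)$, else $\vp(v)$ would have to equal two distinct values); in the backward direction it is precisely the subsumption-free reduction together with the identity $\var(C)\cap\var(C')=\var(C\cap C')$ that makes $\vp_{C'\sm C}$ leave $C$ intact while pushing $C'$ strictly inside $C$. Beyond these observations everything is immediate from the definitions of redundancy and of the operation $*$.
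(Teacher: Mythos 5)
Your proof is correct. The forward direction follows exactly the paper's intended route, combining Lemma~\ref{lem:charakallpart} (the equivalence ``clashes with everything'' $\Leftrightarrow$ ``contraction-$\vp$-irredundant for all $\vp$'') with Lemma~\ref{lem:TatsachenRedundanz}(1) to pass from contraction-$\vp$-irredundancy of $C$ for $F$ to irredundancy of $\vp*C$ for $\vp*F$.

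For the backward direction you depart from the paper's route. The corollary is stated in the paper as an immediate consequence of Lemma~\ref{lem:charakallpart}, so the intended argument is simply to run the same chain of equivalences in reverse: ``$\vp*F$ irredundant for all $\vp$'' gives, via Lemma~\ref{lem:TatsachenRedundanz}(1), contraction-$\vp$-irredundancy of each $C\in F$ for all $\vp$, and then the implication (ii)$\Rightarrow$(iv) of Lemma~\ref{lem:charakallpart} yields that $C$ clashes with every other clause. You instead prove the contrapositive constructively: given a non-hitting, irredundant (hence subsumption-free) $F$, you take two non-clashing, mutually non-subsuming clauses $C,C'$, observe that non-clashing forces $\var(C)\cap\var(C')=\var(C\cap C')$, and use this to show that $\vp_{C'\sm C}$ fixes $C$ while shrinking $C'$ to $C\cap C'\sst C$, exhibiting a subsumed clause in $\vp*F$. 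Your verification that $\var(\vp_{C'\sm C})\cap\var(C)=\es$ (so $\vp*C=C$) and that $\vp_{C'\sm C}$ satisfies no literal of $C'$ (so $\vp*C'=C\cap C'$) is complete. This is slightly more work than simply invoking Lemma~\ref{lem:charakallpart} in the reverse direction, but it has the merit of producing a concrete witnessing partial assignment and of sidestepping the definitional delicacy of contraction-$\vp$-redundancy for $\vp$ that happen to satisfy $C$ (where $C\cup C_\vp$ ceases to be a clause). One small stylistic point: your preamble announces that both implications will be deduced from the two lemmas, but the backward direction in fact uses neither.
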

Obviously the application of a partial assignment to a hitting clause-set produces again a hitting clause-set. So we also have the simple fact, that $F$ is hitting iff for all partial assignments $\vp * F$ is hitting. Generalising Theorem 32 in \cite{Ku2003e}:
\begin{corol}\label{cor:charakhitting}
  A generalised clause-set $F \in \Cls$ is unsatisfiable hitting if and only if $\vp * F$ is minimally unsatisfiable for every $\vp \in \Pass$.
\end{corol}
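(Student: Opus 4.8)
The plan is to derive Corollary~\ref{cor:charakhitting} by combining Corollary~\ref{cor:charakhitirr} (which characterises hitting clause-sets as those $F$ for which $\vp*F$ is irredundant for all $\vp$) with the earlier observation that a clause-set is minimally unsatisfiable iff it is unsatisfiable and irredundant. So the statement to prove is: $F$ is unsatisfiable hitting $\Llra$ for every $\vp\in\Pass$ the clause-set $\vp*F$ is minimally unsatisfiable.

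For the forward direction, assume $F$ is unsatisfiable and hitting, and fix any $\vp\in\Pass$. Since $F$ is unsatisfiable, $\vp*F$ is unsatisfiable as well (applying a partial assignment to an unsatisfiable clause-set cannot make it satisfiable — $\vp*F=\top$ would give $\psi$ with $\psi*(\vp*F)=\top$, hence $(\psi\circ\vp)*F=\top$). Since $F$ is hitting, Corollary~\ref{cor:charakhitirr} gives that $\vp*F$ is irredundant. An unsatisfiable irredundant clause-set is minimally unsatisfiable, so $\vp*F\in\Musat$.

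For the converse, suppose $\vp*F\in\Musat$ for every $\vp\in\Pass$. Taking $\vp=\epa$ gives $\epa*F=F\in\Musat$, so in particular $F$ is unsatisfiable. Moreover, for every $\vp\in\Pass$ the clause-set $\vp*F$ is minimally unsatisfiable, hence in particular irredundant; by Corollary~\ref{cor:charakhitirr} this means $F$ is a hitting clause-set. Thus $F$ is unsatisfiable hitting.

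The argument is essentially bookkeeping: the only mild point to be careful about is the direction of the two easy facts being invoked (that $\vp*F$ preserves unsatisfiability, and that unsatisfiable $+$ irredundant $=$ minimally unsatisfiable, both stated in Part~I), and making sure the quantifier over all $\vp$ is used in both directions — in the forward direction to conclude irredundancy of each $\vp*F$ via Corollary~\ref{cor:charakhitirr}'s hypothesis on $F$, and in the backward direction to feed Corollary~\ref{cor:charakhitirr}'s ``all $\vp$'' hypothesis. There is no real obstacle; the content has already been done in Corollary~\ref{cor:charakhitirr}, and this corollary is just the unsatisfiable specialisation.
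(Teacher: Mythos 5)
Your argument is correct and is exactly the intended derivation: Corollary~\ref{cor:charakhitting} follows from Corollary~\ref{cor:charakhitirr} together with the observations that a clause-set is minimally unsatisfiable iff it is unsatisfiable and irredundant, and that applying a partial assignment preserves unsatisfiability; the paper leaves this step implicit, and your handling of the quantifiers over $\vp$ in both directions is the right bookkeeping.
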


Hitting clause-sets are irredundant; the more general class of \emph{multihitting clause-sets} (clause-sets with complete multipartite conflict graph) contains redundant clause-sets, but all redundancies can be removed efficiently (and canonically), as the following lemma shows. We use the notion of an \textbf{irredundant core} of a clause-set $F \in \Cls$ which is an irredundant $F' \sse F$ such that $F'$ is equivalent to $F$ (in \cite{Liberatore2005Redundanz} the notion ``irredundant equivalent subset'' is used). An irredundant core of an unsatisfiable clause-set is called a \textbf{minimally unsatisfiable core} (or simply ``minimally unsatisfiable sub-(clause)-set'', often abbreviated by ``mus'').

\begin{lem}\label{lem:multihittingmusat}
  Consider a generalised clause-set $F \in \Cls$ without trivial variables which is multihitting. Let $\FF$ be the multipartition of $F$, and $V := \var(F)$.
  \begin{enumerate}
  \item\label{item:multihittingmusat1} For $F_1, F_2 \in \FF$, $F_1 \not= F_2$ we have $\modf_V(F_1) \cap \modf_V(F_2) = \es$.
  \item\label{item:multihittingmusat2} If for $F' \sse F$ and $C \in F \sm F'$ we have $F' \models C$, then there must be some $C' \in F'$ with $C' \subset C$.
  \item\label{item:multihittingmusat3} $F$ has exactly one irredundant core, which is obtained from $F$ by subsumption-elimination. Thus if $F$ is unsatisfiable, then $F$ has exactly one minimally unsatisfiable core, which is obtained from $F$ by subsumption-elimination.
  \item\label{item:multihittingmusat4} A hitting clause-set $F$ is unsatisfiable iff $\sum_{C \in F} \abs{\modf_V(\set{C})} = \abs{\Pass(V)}$.
  \end{enumerate}
\end{lem}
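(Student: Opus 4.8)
The plan is to prove all four parts of Lemma \ref{lem:multihittingmusat} by exploiting the disjointness of falsifying-assignment sets that comes from the multipartite structure of the conflict graph, using $V := \var(F)$ throughout so that each clause $C$ has a well-defined set $\modf_V(\set{C})$ of falsifying assignments (total on $V$, determined on $\var(C)$ and arbitrary elsewhere, as recorded in Subsection \ref{sec:semantics}).

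First, for Part \ref{item:multihittingmusat1}, I would recall the simple but important observation highlighted just before Lemma \ref{lem:charakallpart}: for clauses $C, C'$ with $\var(C) \cup \var(C') \sse V$ one has $\modf_V(C) \cap \modf_V(C') = \es$ iff $C$ and $C'$ clash. Since $F_1, F_2$ are distinct blocks of the multipartition $\FF$, \emph{every} clause of $F_1$ clashes with \emph{every} clause of $F_2$; combined with $\modf_V(F_i) = \bc_{C \in F_i} \modf_V(\set{C})$ (from Subsection \ref{sec:semantics}), this gives $\modf_V(F_1) \cap \modf_V(F_2) = \es$. For Part \ref{item:multihittingmusat2}, suppose $F' \models C$ with $C \in F \sm F'$, so $\modf_V(C) \sse \bc_{C' \in F'} \modf_V(\set{C'})$ (using the characterisation $F_1 \models F_2$ iff $\modf_V(F_2) \sse \modf_V(F_1)$). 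Pick any $\vp \in \modf_V(\set{C})$ — this is non-empty since $F$ has no trivial variables; then some $C' \in F'$ has $\vp \in \modf_V(\set{C'})$, hence $\modf_V(\set C) \cap \modf_V(\set{C'}) \ne \es$, so $C$ and $C'$ do \emph{not} clash. By the multihitting hypothesis $C$ and $C'$ must then lie in the same block of $\FF$, so by Part \ref{item:multihittingmusat1} applied within that block... — actually here I would argue more carefully: within a single block there is no clashing constraint, so one needs that $C,C'$ having a common falsifying assignment forces $C' \sse C$ or $C \sse C'$; but two clauses on variable-set contained in $V$ with $\modf_V(\set C) \cap \modf_V(\set{C'}) \ne \es$ need not be comparable in general. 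The correct route is to iterate: the family $(\modf_V(\set{C'}))_{C' \in F'}$ that covers $\modf_V(\set C)$ consists only of clauses non-clashing with $C$; I would show that if none of them satisfies $C' \subset C$ then their union cannot cover $\modf_V(\set C)$, by a counting/dimension argument — a clause $C'$ with $\var(C') \not\sse \var(C)$ contributes, relative to $\modf_V(\set C)$, a set of ``lower dimension'' (it fixes a value on a variable outside $\var(C)$), and a clause $C'$ with $\var(C') \sse \var(C)$ but $C' \not\subset C$ is disjoint from $\modf_V(\set C)$ (it must clash with $C$ on the common variable, contradiction, unless $C' \supsetneq C$, impossible for distinct clauses of a subsumption situation); so the only way to cover is to have some $C' \subset C$.

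For Part \ref{item:multihittingmusat3}, let $F^*$ be the result of subsumption-elimination applied to $F$ (unique and confluent by Subsection \ref{sec:Basicreductions}); $F^*$ is equivalent to $F$ since subsumed clauses are implied. I claim $F^*$ is irredundant: if some $C \in F^*$ were redundant, then $F^* \sm \set C \models C$, and by Part \ref{item:multihittingmusat2} (applied with $F' = F^* \sm \set C$) there would be $C' \in F^*$ with $C' \subset C$, contradicting subsumption-freeness of $F^*$. For uniqueness, let $G \sse F$ be any irredundant core; since $G$ is equivalent to $F$, every $C \in F \sm G$ satisfies $G \models C$, so by Part \ref{item:multihittingmusat2} there is $C' \in G$ with $C' \subset C$, i.e.\ every removed clause is subsumed within $G$; conversely no clause of $G$ is subsumed by another (an irredundant clause-set is subsumption-free). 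Hence $G$ is exactly the set of minimal clauses of $F$, which is $F^*$. The unsatisfiable case is the specialisation (minimally unsatisfiable core $=$ irredundant core of an unsatisfiable clause-set). For Part \ref{item:multihittingmusat4}, for hitting $F$ the sets $\modf_V(\set C)$, $C \in F$, are pairwise disjoint (by Part \ref{item:multihittingmusat1} with singleton blocks), so $\bigl|\bc_{C \in F} \modf_V(\set C)\bigr| = \sum_{C \in F} \abs{\modf_V(\set C)}$; and $F$ is unsatisfiable iff $\modf_V(F) = \Pass(V)$, i.e.\ iff this sum equals $\abs{\Pass(V)}$.

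The main obstacle I anticipate is Part \ref{item:multihittingmusat2}: the naive ``common falsifying assignment $\Rightarrow$ comparable clauses'' is false, so the proof really needs the covering argument that tracks how each candidate clause $C'$ meets $\modf_V(\set C)$ — those that clash with $C$ contribute nothing, those whose variables escape $\var(C)$ contribute a relatively negligible (measure-zero / lower-dimensional) slice of $\modf_V(\set C)$, and hence to cover $\modf_V(\set C)$ one is forced to use a genuine subclause $C' \subset C$. Making this ``dimension/measure'' bookkeeping precise (ideally by the explicit cardinality $\abs{\modf_V(\set C)} = \prod_{v \in V \sm \var(C)} \abs{D_v}$ from Subsection \ref{sec:semantics} and the observation that restricting $\vp$ to agree with $C$ forces, for any non-clashing $C'$ with a variable outside $\var(C)$, at least one extra coordinate to be pinned down) is where the real work lies; everything else is then a short deduction.
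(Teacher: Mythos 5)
Parts \ref{item:multihittingmusat1}, \ref{item:multihittingmusat3} and \ref{item:multihittingmusat4} are fine and essentially match the paper's argument (Part \ref{item:multihittingmusat1} by the disjointness-iff-clash observation, Part \ref{item:multihittingmusat3} and Part \ref{item:multihittingmusat4} as consequences).

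The gap is in Part \ref{item:multihittingmusat2}, and it is more than ``bookkeeping''. The ``dimension/measure'' plan you sketch — each non-clashing $C'$ with $\var(C') \not\sse \var(C)$ meets $\modf_V(\set C)$ only in a proper ``slice'', so a finite union of such slices cannot cover — is simply false for finite discrete sets: for boolean $w$, the two clauses $C \cup \set{(w,0)}$ and $C \cup \set{(w,1)}$ each carve out a $1/2$-slice of $\modf_V(\set C)$, and together cover all of it. Nothing in your bookkeeping (cardinality of $\modf_V(\set C)$, one extra pinned coordinate per clause) forbids this. What \emph{does} forbid it, and what your sketch never invokes, is that after restricting to the block $F_C$ containing $C$ the remaining candidates are \emph{pairwise} non-clashing: hence for each variable $w \notin \var(C)$ appearing in some relevant $C'$, \emph{all} those $C'$ carry the literal $(w,\ve)$ with one and the same value $\ve$ (two different values would be a clash inside the block). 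Consequently all the ``slices'' are nested inside the single hyperplane $\vp(w)=\ve$, not complementary ones, and since $\abs{D_w}\ge 2$ (no trivial variables) one can set $\vp(w)\ne\ve$ for each such $w$, producing a $\vp\in\modf_V(\set C)$ satisfying every $C'\in F'\cap F_C$ with $C'\not\sse C$, and (because they clash with $C$) every $C'\in F'\sm F_C$ — contradicting $F'\models C$. Without the pairwise-non-clashing observation the covering argument does not close.

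The paper obtains this more cheaply: after the same restriction to $F'\cap F_C$ (via Part \ref{item:multihittingmusat1}), it invokes strong completeness of resolution: some subclause of $C$ is derivable from $F'\cap F_C$, but as no clashes exist inside $F_C$ no resolution step is possible, so a subclause of $C$ must already be present in $F'\cap F_C$, and $C\notin F'$ forces it to be strict. Either route works, but your proposal currently has the crucial no-clash-inside-the-block step only implicitly (it is used to restrict which $C'$ can contribute) and never deploys the consequence that the external literals of the contributing clauses must agree, which is what actually defeats the covering.
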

\begin{proof} Part \ref{item:multihittingmusat1} follows by definition. In Part \ref{item:multihittingmusat2} it is $\modf_V(\set{C})$ covered by $\modf_V(F')$, and thus by Part \ref{item:multihittingmusat1} in fact $\modf_V(\set{C})$ is covered by $\modf_V(F' \cap F_C)$, where $F_C \in \FF$ with $C \in F_C$; i.e., $F_C \cap F' \models \set{C}$. By the strong completeness of resolution and the fact that within $F_C$ no clashes exist, it follows that there must be $C' \in F' \cap F_C$ with $C' \subset C$. Part \ref{item:multihittingmusat3} follows immediately from Part \ref{item:multihittingmusat2}. Finally Part \ref{item:multihittingmusat4} follows immediately from Part \ref{item:multihittingmusat1}. \end{proof}

\begin{corol}\label{cor:Charakmusatmulth}
  A multihitting clause-set is irredundant if and only if $F$ is sub\-sump\-tion-free. Thus an unsatisfiable multihitting clause-set is minimally unsatisfiable if and only if $F$ is subsumption-free.
\end{corol}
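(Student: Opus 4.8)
The statement should fall out almost immediately from Lemma~\ref{lem:multihittingmusat}, Part~\ref{item:multihittingmusat3}, together with the elementary facts recalled in Subsection~\ref{sec:irredundant}. The plan is to prove the two directions of the first equivalence and then read off the ``Thus'' clause.

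First I would handle the easy direction: if a multihitting clause-set $F$ is irredundant, then $F$ is subsumption-free. This needs no multihitting hypothesis at all --- if there were clauses $C', C \in F$ with $C' \subset C$, then already $\set{C'} \models C$, hence $F \sm \set{C} \models C$, so $C$ would be redundant for $F$, contradicting irredundancy. For the reverse direction, assume $F$ is a multihitting clause-set which is subsumption-free. By Lemma~\ref{lem:multihittingmusat}, Part~\ref{item:multihittingmusat3}, $F$ has exactly one irredundant core, and it is obtained from $F$ by subsumption-elimination; but since $F$ is already subsumption-free, subsumption-elimination leaves $F$ unchanged, so $F$ itself is (its unique) irredundant core, i.e.\ $F$ is irredundant. (One minor point to note: Lemma~\ref{lem:multihittingmusat} is stated for clause-sets without trivial variables; trivial variables contribute only the literal $(v,\ve)$ with $D_v=\set{\ve}$, which plays no role in conflicts, satisfiability, or subsumption, so the case of arbitrary multihitting $F$ reduces to that case by removing trivial variables --- or one simply observes a clause-set with a trivial variable is the boundary case that does not affect the argument.)

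For the second sentence, recall from Subsection~\ref{sec:irredundant} that a clause-set is minimally unsatisfiable if and only if it is unsatisfiable and irredundant. Hence for an unsatisfiable multihitting clause-set $F$, being minimally unsatisfiable is equivalent to being irredundant, which by the first equivalence just proved is equivalent to being subsumption-free. This completes the argument.

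I do not anticipate any real obstacle here: the entire content is packaged into Lemma~\ref{lem:multihittingmusat}, and the corollary is essentially a restatement. The only thing requiring a sentence of care is the direction ``subsumption-free $\Rightarrow$ irredundant'', where one must invoke the \emph{uniqueness} of the irredundant core (so that the subsumption-free $F$ genuinely is the core rather than merely containing one), and the small bookkeeping remark about trivial variables mentioned above.
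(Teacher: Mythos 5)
Your overall plan is right and matches the way the paper intends the corollary to be read off from Lemma~\ref{lem:multihittingmusat}: the direction ``irredundant $\Rightarrow$ subsumption-free'' is a triviality that needs no multihitting hypothesis, and the direction ``subsumption-free $\Rightarrow$ irredundant'' follows from Part~\ref{item:multihittingmusat3} of the lemma (or equally directly from Part~\ref{item:multihittingmusat2}: if $C \in F$ were redundant then $F \sm \set{C} \models C$, so Part~\ref{item:multihittingmusat2} would produce $C' \subset C$ in $F$, contradicting subsumption-freeness). One small remark: the \emph{uniqueness} of the irredundant core is not actually what you need; all you use is that subsumption-elimination produces \emph{an} irredundant core, and applied to a subsumption-free $F$ it returns $F$ itself.

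The genuine gap is your parenthetical dismissal of the no-trivial-variables hypothesis. You assert that a trivial variable ``plays no role in conflicts, satisfiability, or subsumption'' and that the general case ``reduces'' to the restricted case by removing trivial variables. This is not correct for subsumption, and as a result the corollary is genuinely false if trivial variables are permitted. Consider $F = \setb{\set{(v,\ve)},\, \set{(w,0)}}$ where $D_v = \set{\ve}$ (so $v$ is trivial) and $w$ is boolean. The conflict graph has no edges, hence is complete $1$-partite, so $F$ is multihitting; $F$ is subsumption-free; but $\set{(v,\ve)}$ has no satisfying assignment, hence implies every clause, so $\set{(w,0)}$ is redundant and $F$ is \emph{not} irredundant. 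The source of the problem is visible in the proof of Part~\ref{item:multihittingmusat2} of the lemma: it relies on the observation that within a multipartition class no clashes exist, so no resolution step is possible and an implied clause must already be subsumed; but resolution on a trivial variable (the $\abs{D_v}=1$ case, which is just ``drop the literal'') requires no clash at all, and this is exactly what the no-trivial-variables hypothesis excludes. Trivial-domain reduction \emph{does} preserve irredundancy, but it can create new subsumption relations (in the example it turns $\set{(v,\ve)}$ into $\bot$, which subsumes $\set{(w,0)}$), so the subsumption-free hypothesis is not preserved by your proposed reduction. The correct reading is simply that the corollary inherits the ``without trivial variables'' hypothesis of Lemma~\ref{lem:multihittingmusat}, rather than that this hypothesis can be discarded.
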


By Corollary \ref{cor:Charakmusatmulth} we know that deciding whether a multihitting clause-set is minimally unsatisfiable is the same task (up to subsumption elimination) as deciding whether it is unsatisfiable. Obviously $\Mclash \cap \Usat$ is in co-NP (and thus also $\Mclash \cap \Musat$). We have more precise information only for special cases:
\begin{itemize}
\item Using $\abs{\modf_{\var(F)}(C)} = \prod_{v \in \var(F) \sm \var(C)} \abs{D_v}$ for $C \in F$ it follows that satisfiability for hitting clause-sets is decidable in polynomial time (generalising the well-known special case for boolean clause-sets).
\item For boolean \emph{bihitting} clause-sets in \cite{GalesiKullmann2003bHermitian} it was shown that satisfiability decision can be done in quasi-polynomial time (where ``quasi-polynomial'' means a ``polynomial'' upper bound but where the exponent is allowed logarithmic growth in the size of the input), since satisfiability decision for bihitting clause-sets is essentially the same as deciding whether for two given hypergraphs one is the transversal hypergraph of the other.
\end{itemize}
The second point can immediately be generalised as follows.

\begin{lem}\label{lem:polyzeitbitreff}
  Satisfiability for bihitting generalised clause-sets is decidable in quasi-polynomial time.
\end{lem}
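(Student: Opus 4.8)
The plan is to generalise the treatment of boolean bihitting clause-sets from \cite{GalesiKullmann2003bHermitian}: there one observes that deciding satisfiability of a bihitting clause-set is ``essentially the same'' problem as deciding, for two given hypergraphs, whether one is the transversal hypergraph of the other, and the latter (monotone dualisation) is decidable in quasi-polynomial time by the Fredman--Khachiyan algorithm. So the task is to lift this equivalence to the non-boolean setting.

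First I would fix a bihitting generalised clause-set $F \in \Cls$, remove trivial variables (they contribute no clashes, so this preserves both satisfiability and the conflict graph), set $V := \var(F)$, and take the bipartition $\FF = \set{F_1, F_2}$ of $F$ (recall Section \ref{sec:Preliminaries}): clauses within each $F_i$ are pairwise non-clashing, while every clause of $F_1$ clashes with every clause of $F_2$. By the basic observation preceding Lemma \ref{lem:charakallpart} (two clauses over $V$ clash iff their falsifying-assignment sets $\modf_V$ are disjoint), the cross-clashing yields $\modf_V(F_1) \cap \modf_V(F_2) = \es$, where $\modf_V(F_i) = \bc_{C \in F_i} \modf_V(\set{C})$. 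Since $\modf_V(F) = \modf_V(F_1) \cup \modf_V(F_2)$ and $F$ is unsatisfiable iff $\modf_V(F) = \Pass(V)$, we conclude: $F$ is unsatisfiable if and only if $\modf_V(F_1)$ and $\modf_V(F_2)$ \emph{partition} $\Pass(V)$, equivalently (using the disjointness) iff $\modf_V(F_2) = \Pass(V) \sm \modf_V(F_1)$.

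The key step is then to recast this equality as a dualisation problem. Identifying $\Pass(V)$ with the product $\prod_{v \in V} D_v$, each $\modf_V(F_i)$ is a union of ``sub-boxes'' (the box for a clause $C$ fixing the coordinates in $\var(C)$ and free elsewhere), so ``$\modf_V(F_2) = \Pass(V) \sm \modf_V(F_1)$'' says exactly that the two families of sub-boxes associated with $F_1$ and $F_2$ are mutually dual; for $\abs{D_v} = 2$ this is precisely ``one hypergraph is the transversal hypergraph of the other''. To reduce to boolean dualisation in general, I would encode each variable $v$ with $\abs{D_v} = k$ by $k$ fresh boolean variables forming the characteristic vector of the chosen value, so that each (non-trivial) clause of $F$ becomes a boolean clause of size polynomial in $\ell(F)$; the ``exactly one value'' side conditions restrict the boolean cube to the relevant sub-domain, and one checks that the induced pair of families is mutually dual relative to that sub-domain iff the original equality holds. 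Then the Fredman--Khachiyan algorithm, whose running time is $N^{o(\log N)}$ in the combined size $N$ and hence quasi-polynomial in the size of $F$, decides this, proving the lemma; alternatively one may skip the boolean encoding and invoke the fact that monotone-dualisation algorithms extend to products of finite domains.

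The main obstacle I anticipate is precisely this reduction step: setting up the boolean encoding so that its size stays polynomial and the ``exactly one'' conditions are handled cleanly (note the pure-variable/ALO interaction discussed in Subsection \ref{sec:Preservationmatchingstructure}), and verifying that the bihitting structure of $F$ corresponds \emph{exactly} to the mutual-transversality relation in both directions (so no spurious instances are produced and every dualisation instance is realised). Everything else is a routine transcription of the boolean argument of \cite{GalesiKullmann2003bHermitian}, which is why ``the second point can immediately be generalised''.
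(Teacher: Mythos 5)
You missed the one observation that the paper's proof rests on, and consequently set off on a much harder route. The proof in the paper is essentially one sentence: if $F$ is bihitting, then its conflict graph $\cg(F)$ is bipartite, hence contains no $K_3$; but if a variable $v$ has $m = \abs{\val_v(F)}$ distinct occurring values, then the $m$ (necessarily distinct) clauses witnessing these values pairwise clash at $v$, forcing $K_m \sse \cg(F)$. Hence $m \le 2$ for every variable. Under the paper's normalisation (a pure variable has exactly one unused value), this means every variable of domain size $> 2$ is pure; eliminating pure variables (and trivial ones) -- both reductions preserve satisfiability and can only remove clauses, so bihitting is preserved -- leaves a \emph{boolean} bihitting clause-set, and the boolean quasi-polynomial result of \cite{GalesiKullmann2003bHermitian} applies verbatim. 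No new dualisation machinery is needed, because the bihitting hypothesis already forces the instance to be boolean after trivial preprocessing.

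Your proposal instead tries to lift the hypergraph-dualisation characterisation itself to products $\prod_v D_v$. That is not wrong in spirit, but it is substantially more work, and the step you flag as your main worry -- getting the boolean encoding right so that the ``exactly one value'' side conditions don't break the correspondence to a monotone-dualisation instance, or alternatively appealing to an unreferenced multi-domain dualisation algorithm -- is exactly the part that would need to be done carefully, and which the paper avoids entirely. In particular, after introducing $\abs{D_v}$ fresh boolean indicator variables per $v$ the falsifying sets are no longer arbitrary sub-cubes of a boolean cube but sub-boxes of a restricted slice, so you are no longer in the standard monotone-dualisation setting; making this precise is not ``routine transcription''. The decomposition via the bipartition $\FF = \set{F_1, F_2}$ and the partition criterion you derive are correct (and match Lemma \ref{lem:multihittingmusat}), but everything after that should be replaced by the $K_3$-exclusion argument above.
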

\begin{proof} Variables with a domain size greater than two appearing in a bihitting clause-set must be pure variables, since if a generalised clause-set contains a variable of domain size $k$, then the conflict graph contains the complete graph $K_k$ (which is not bipartite). \end{proof}

It seems to be a very interesting question, to what degree (generalised) multihitting clause-sets have efficient satisfiability decision (see Subsection \ref{sec:openmultihitting} for further discussion, and see \cite{Kullmann2004c} for more information in the boolean case).

\subsection{Irredundant cores}
\label{sec:irrcores}

A ``core'' of an unsatisfiable (boolean) clause-set is just some unsatisfiable sub-clause-set; typically one is interested in minimally unsatisfiable cores, and in this short section we generalise some basic observations in \cite{KullmannLynceSilva2005Autarkies} from the boolean context to generalised clause-sets, and now also considering satisfiable clause-sets.

Recall that a (finite) hypergraph is a pair $(V, \EE)$ such that $V$ is a (finite) set (the ``vertex set'') and $\EE$ is a set of subsets of $V$ (the set of ``hyperedges''). For a hypergraph $G$ by $\min(G)$ resp.\ $\max(G)$ we denote the hypergraph with the same vertex set and with all inclusion-minimal resp.\ maximal hyperedges from $G$. Consider a (generalised) clause-set $F$. Let $\eqs(F)$ be the hypergraph with vertex set $F$ (the clauses of $F$), while the hyperedges are all subsets of $F$ which are equivalent to $F$, and let $\neqs(F)$ be the hypergraph with vertex set $F$ and hyperedges the subsets of $F$ which are not equivalent to $F$. If $F$ is unsatisfiable, then $\eqs(F) = \Usat(F)$, the hypergraph consisting of all unsatisfiable sub-cause-sets of $F$, while $\neqs(F) = \Sat(F)$, the hypergraph of all satisfiable sub-clause-sets of $F$. Now $\min(\eqs(F))$ is the hypergraph consisting of all irredundant cores of $F$; if $F$ is unsatisfiable then $\min(\eqs(F)) = \mus(F)$, the hypergraph of all minimally unsatisfiable cores of $F$.

Generalising \cite{Liberatore2005Redundanz,KullmannLynceSilva2005Autarkies}, the elements of $\bca \eqs(F) = \bca \min(\eqs(F))$, the clauses which are in every irredundant core of $F$, are called \emph{necessary clauses}. Following \cite{KullmannLynceSilva2005Autarkies}, the elements of $\bc \min(\eqs(F))$, the clauses which are in some irredundant core, are called \emph{potentially necessary clauses} (in \cite{Liberatore2005Redundanz} such clauses are called ``useful''). We see that necessary clauses are exactly the irredundant clauses as defined before. Regarding decision complexity we have:
\begin{enumerate}
\item A clause-set $F$ is satisfiable iff $\bot$ is necessary for $F \cup \set{\bot}$, and thus already for (boolean) unsatisfiable clause-sets decision whether a clause is necessary is NP-complete (this was noticed for (arbitrary) boolean clause-sets in Theorem 3 in \cite{Liberatore2005Redundanz}, and trivially also the decision problem whether some clause is necessary for an generalised clause-sets is NP-complete as well).
\item By Theorem 4 in \cite{Liberatore2005Redundanz} we have that decision whether a clause $C$ is potentially necessary for a (boolean) clause-set $F$ is $\polysigma{2}$-complete (where $\polysigma{2}$ is the class of problems reducible to the decision problem whether a quantified boolean formula with quantifier-prefix $\ex^* \fa^*$ is true). Trivially this holds also for all generalised clause-sets, and due to $\bc \min(\eqs(F \cup \set{\bot})) = \set{\bot} \cup \bc \min(\eqs(F))$ we can restrict $F$ here again to unsatisfiable clause-sets.
\item In Theorem 5 in \cite{Liberatore2005Redundanz} it is shown that decision whether a (boolean) clause-set has a unique irredundant core is $\polydelta{2}[\log n]$-complete (where $\polydelta{2}$ is the class of problems decidable in polynomial time by arbitrary use of an NP-oracle, while for $\polydelta{2}[\log n]$ only logarithmically many oracle calls are allowed). Obviously this carries over to generalised clause-sets, however whether again restriction to unsatisfiable clause-sets is possible (that is, deciding whether an unsatisfiable clause-set has a unique minimally unsatisfiable core) is not clear.
\end{enumerate}
Finally we can also generalise the observation of Bailey and Stuckey, independently also made in \cite{Bruni2003ApprMU} (Theorem 2), and exploited in \cite{LiffitonSakallah2005AllMUS}, where we use the same (simplified) proof as in \cite{KullmannLynceSilva2005Autarkies} (Section 2): For a hypergraph $G$ denote by $\Tr(G)$ the hypergraph with the same vertex set $V(G)$, while the hyperedges are the minimal transversals of $G$ (minimal subsets of $V(G)$ intersecting every hyperedge), and denote by $\complement(G)$ the hypergraph with vertex set $V(G)$ and hyperedges $V(G) \sm H$ for $H \in E(G)$.
\begin{lem}\label{lem:meqhypgr}
  For every (generalised) clause-set $F$ we have
\begin{displaymath}
  \min(\eqs(F)) = \Tr(\complement(\max(\neqs(F)))).
\end{displaymath}
\end{lem}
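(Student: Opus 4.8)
\textbf{Proof plan for Lemma \ref{lem:meqhypgr}.}

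The plan is to unfold all the definitions and observe that the two sides describe the same family of subsets of $F$, using the classical fact that minimal transversals of a hypergraph are exactly the complements of maximal independent sets (where ``independent'' means ``not containing any hyperedge''). First I would fix notation: write $V = F$ for the vertex set, and note that by definition $\neqs(F)$ has as its hyperedges precisely the $F' \sse F$ which are \emph{not} equivalent to $F$, equivalently the $F'$ with $F' \subsetneq$-equivalent complement — that is, $F' \in E(\neqs(F))$ iff $F'$ fails to imply $F$. The hyperedges of $\eqs(F)$ are the complementary notion restricted to the equivalent ones, and $\min(\eqs(F))$ consists of the inclusion-minimal equivalent sub-clause-sets, i.e. the irredundant cores of $F$ (noting that a minimal equivalent subset is automatically irredundant, since no clause in it can be dropped).

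The key step is the following set-theoretic identity, valid for any hypergraph $G$ on vertex set $V$: a subset $T \sse V$ is a minimal transversal of $\complement(G)$ if and only if $V \sm T$ is an inclusion-maximal set containing no hyperedge of $\complement(G)$, which (unwinding $\complement$) says $V\sm T$ is maximal among sets $W$ such that for every $H \in E(G)$ we have $W \not\supseteq V \sm H$, i.e. $W \cup H \ne V$ for all $H$... Here I would instead use the cleaner route: $T$ meets every hyperedge of $\complement(\max(\neqs(F)))$ iff $T$ meets $F \sm H$ for every maximal non-equivalent $H$, iff $T \not\sse H$ for every maximal non-equivalent $H$, iff $T$ is not contained in any non-equivalent sub-clause-set (since being contained in \emph{some} non-equivalent set is the same as being contained in a \emph{maximal} one), iff $T$ itself is equivalent to $F$ (a subset is equivalent to $F$ exactly when it is not a subset of any non-equivalent subset — because the non-equivalent subsets are downward closed, a set is non-equivalent iff all its supersets within $F$ down to it... more precisely: if $T$ is non-equivalent then $T$ is contained in $T$, a non-equivalent set; conversely if $T \sse H$ with $H$ non-equivalent then $H \not\models F$, but also $T \models H'$ for relevant $H'$... ). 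So $T$ is a transversal of $\complement(\max(\neqs(F)))$ iff $T$ is equivalent to $F$, i.e. $T \in E(\eqs(F))$. Taking inclusion-minimal such $T$ on both sides yields $\Tr(\complement(\max(\neqs(F)))) = \min(\eqs(F))$.

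The one genuine point requiring care — and the main obstacle — is the equivalence ``$T \sse F$ is \emph{not} contained in any non-equivalent sub-clause-set $\iff$ $T$ is equivalent to $F$.'' The forward direction is immediate (take $H = T$; if $T$ is non-equivalent we get a contradiction). For the converse, if $T$ is equivalent to $F$ and $T \sse H \sse F$, then $H$ is squeezed between two equivalent clause-sets ($F \models T$ trivially since $T \sse F$, wait — we need $F \supseteq T$ gives $T \models$ fewer clauses, so actually $\modf(F) \supseteq \modf(T)$... one must check the direction of implication): since $T \sse H \sse F$ we have $T \models H \models F$ at the level of logical consequence is backwards; rather $F \models H \models T$ because a larger clause-set implies a smaller one. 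Combined with $T$ equivalent to $F$ we get all three equivalent, so $H$ is equivalent to $F$, hence $H$ is not a hyperedge of $\neqs(F)$. This is the only subtlety, and it is a routine monotonicity argument; once it is settled the lemma follows by the transversal/maximal-independent-set duality applied to $\neqs(F)$, with the passage to $\max$ and $\min$ being harmless since transversals only see maximal hyperedges and irredundant cores are the minimal equivalent subsets.
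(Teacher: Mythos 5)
Your argument is correct and essentially reproduces the paper's proof, which likewise reduces everything to the transversal/independent-set duality combined with the observation that the equivalent sub-clause-sets of $F$ are upward-closed in $\sse$ (equivalently, $H \sse F$ is equivalent to $F$ iff $H$ contains an irredundant core). The only cosmetic difference is that you unfold the transversal condition on $\complement(\max(\neqs(F)))$ directly, while the paper applies $\Tr$-involution once to restate the claim as $\complement(\Tr(\min(\eqs(F)))) = \max(\neqs(F))$ and then reads that off; your route is a bit more self-contained in not invoking $\Tr(\Tr(G)) = G$, but the monotonicity lemma you correctly identify as the crux (``$T$ equivalent to $F$, $T \sse H \sse F$ $\Rightarrow$ $H$ equivalent to $F$'') is the same in both. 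I would only suggest cleaning up the false starts in the middle of your write-up (the abandoned ``$V \sm T$'' route and the self-corrected implication direction), since the final ``cleaner route'' stands on its own.
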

\begin{proof} The assertion is equivalent to $\complement(\Tr(\min(\eqs(F)))) = \max(\neqs(F))$, which just states that the maximal non-equivalent sub-clause-sets of $F$ are exactly the maximal independent vertex sets of $\min(\eqs(F))$, i.e., those maximal sets of clauses not containing an irredundant core. \end{proof}

\section{Minimally unsatisfiable generalised clause-sets}
\label{sec:MUSATneu}

This chapter is about the basic facts regarding (generalised) unsatisfiable irredundant clauses-sets, that is, minimally unsatisfiable (generalised) clause-sets. See \cite{Kullmann2007HandbuchMU} for an overview on minimal unsatisfiability (and extensions) in the boolean case. The basic parameter structuring our considerations is the deficiency, and especially the lowest possible deficiency is considered.

In Subsection \ref{sec:saturated} ``saturated minimally unsatisfiable clause-sets'' are discussed (no literal occurrence can be added without destroying minimal unsatisfiability); this is a concrete example where generalised clause-sets behave essentially more complicated than boolean clause-sets. In Subsection \ref{sec:MUSATdefone} we characterise minimally unsatisfiable generalised clause-sets of deficiency one as well as the special cases of saturated and marginal clause-sets. Finally in Subsection \ref{sec:stabmmvd} we collect some observations which might serve for further progress in the characterisation of minimally unsatisfiable clause-sets.

\subsection{Saturated minimally unsatisfiable clause-sets}
\label{sec:saturated}

A clause-set $F \in \Cls$ is called \textbf{saturated minimally unsatisfiable} if $F$ is unsatisfiable, but for any clause $C \in F$ replacing $C$ in $F$ by $C \cup \set{x}$ for any literal $x$ with $\var(x) \notin \var(C)$ and $\abs{D_{\var(x)}} \ge 2$ yields a satisfiable clause-set.\footnote{Instead of ``saturated'' in \cite{AhLi86} ``strong'' is used, and in \cite{KleineBueningZhao2007ComplexitySomeSubclassesMU} ``maximal''; we follow \cite{FlRe94}.} Saturated minimally unsatisfiable clause-sets are minimally unsatisfiable (consider $x$ such that $\var(x) \notin \var(F)$), and actually a clause-set $F$ is saturated minimally unsatisfiable iff it is minimally unsatisfiable and addition of a literal $x$ with $\var(x) \in \var(F)$ to any clause $C$ with $\var(x) \notin \var(C)$ yields a satisfiable clause-sets. The set of all saturated minimally unsatisfiable clause-sets is called \bmm{\Smusat}. By Lemma \ref{lem:multihittingmusat}, part \ref{item:multihittingmusat4} we see that unsatisfiable hitting clause-sets are in $\Smusat$.

\begin{lem}\label{lem:Saturierung}
  Every minimally unsatisfiable clause-set $F \in \Musat$ can be \textbf{saturated}, that is, there exists $F^* \in \Smusat$ with $\var(F^*) = \var(F)$ and a bijection $\pi: F \ra F^*$ such that for all $C \in F$ we have $C \sse \pi(C)$.
\end{lem}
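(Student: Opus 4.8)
The plan is to mimic the classical boolean saturation argument (as in \cite{AhLi86,FlRe94}), adding literal occurrences one by one while maintaining unsatisfiability, but being careful about what ``adding a literal'' means in the non-boolean setting. First I would fix an enumeration of all pairs $(C, v)$ with $C \in F$ and $v \in \var(F) \sm \var(C)$ (note $\abs{D_v} \ge 2$ automatically since $v \in \var(F)$ means $v$ actually occurs, but in any case trivial-domain variables can be handled separately or are excluded by the running convention). The saturation process runs through these pairs: for the current clause-set $F'$ (initially $F$) and the current pair $(C, v)$ (where $C$ is tracked through the bijection built so far, i.e.\ we work with $\pi(C)$), I ask whether there is \emph{some} value $\ve \in D_v$ such that replacing $\pi(C)$ by $\pi(C) \cup \set{(v,\ve)}$ keeps the clause-set unsatisfiable; if yes, perform this replacement (choosing one such $\ve$); if no such $\ve$ exists, leave $\pi(C)$ untouched for this $v$. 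Crucially, one must argue that after processing all pairs the result $F^*$ is saturated minimally unsatisfiable: it is unsatisfiable by construction, it is minimally unsatisfiable because adding literals only shrinks the set of models and a strict sub-clause-set of $F^*$ corresponds (via $\pi^{-1}$, after possibly removing added literals) to a clause-set implied by a strict sub-clause-set of $F$, which is satisfiable --- this needs the observation that $F' \sm \set{D}$ is satisfiable throughout (each step preserves this, since we never touch other clauses and removing literals from $D$ only helps).

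The key structural point, where the non-boolean case differs from the boolean one, is the ``no such $\ve$'' situation: I must show that if for \emph{every} $\ve \in D_v$ the clause-set $(F' \sm \set{\pi(C)}) \cup \set{\pi(C) \cup \set{(v,\ve)}}$ is satisfiable, then in fact adding $(v,\ve)$ to $\pi(C)$ in the \emph{final} $F^*$ still yields a satisfiable clause-set --- i.e.\ the property of ``cannot be extended at $(C,v)$'' is preserved by subsequent steps. The standard trick: an assignment $\varphi_\ve$ satisfying $(F' \sm \set{\pi(C)}) \cup \set{\pi(C) \cup \set{(v,\ve)}}$ must falsify $\pi(C)$ itself (else $F'$ would be satisfiable), hence $\varphi_\ve$ satisfies $(v,\ve)$, i.e.\ $\varphi_\ve(v) \ne \ve$; so $\varphi_\ve$ satisfies $F' \sm \set{\pi(C)}$ and assigns $v$ a value $\ne \ve$. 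Since any later clause-set is obtained by only adding literals to clauses other than $\pi(C)$ (and to $\pi(C)$ too, but we never re-process $(C,v)$), and since $\varphi_\ve$ already satisfies all clauses except $\pi(C)$... here is the subtlety: $\varphi_\ve$ satisfies the \emph{current} $F' \sm \set{\pi(C)}$, but later clauses get literals added, which could destroy satisfaction. The resolution is to run the argument at the \emph{end}: reconsider the final $F^*$, and for the given $(C,v)$ we want to show $(F^* \sm \set{\pi(C)}) \cup \set{\pi(C) \cup \set{(v,\ve)}}$ is satisfiable for \emph{some} $\ve$, contradicting nothing --- actually we want it for the specific $\ve$ we'd try to add. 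The cleanest route is the one used in \cite{FlRe94}: process pairs in a fixed order and observe that once $(C,v)$ is declined, $v \in \var(\pi(C))$ never becomes true afterwards (no step adds $v$ to $\pi(C)$ since $(C,v)$ is done and each clause-slot gets each variable considered once), and use that the witnessing assignments can be taken to falsify $\pi(C)$ and merged appropriately. I expect this bookkeeping --- showing declined extensions stay declined --- to be the main obstacle, and it is exactly the place where one must be careful that ``adding a literal on variable $v$'' is a one-shot operation per $(C,v)$ slot, so the induction on the number of remaining pairs goes through.

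A second thing to verify, cheaply, is that $\var(F^*) = \var(F)$: we only ever add literals $(v,\ve)$ with $v \in \var(F)$, so $\var(F^*) \subseteq \var(F)$; conversely $F^* \supseteq$ (pointwise) a clause-set isomorphic to $F$ with the same variable occurrences plus possibly more, so $\var(F^*) = \var(F)$. The bijection $\pi: F \to F^*$ with $C \subseteq \pi(C)$ is built incrementally and is automatically injective because $F^*$ is minimally unsatisfiable hence has no repeated clauses, and distinct $C, C'$ map to distinct $\pi(C), \pi(C')$ (if they collided, $F^*$ would have $\le c(F) - 1$ clauses while still being unsatisfiable, contradicting minimal unsatisfiability of $F^*$ combined with $c(F^*) \ge \delta$-type counting --- more simply, one shows the process never identifies two clauses because at each step we replace one clause by a strict superset, and a minimally unsatisfiable clause-set cannot contain a clause that is a superset of another). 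Finally, saturatedness: by construction, for every pair $(C,v)$ either $v \in \var(\pi(C))$ already, or adding any $(v,\ve)$ to $\pi(C)$ gives a satisfiable clause-set --- which is precisely the defining condition of $\Smusat$ (in the reformulated version noted in the paragraph before Lemma~\ref{lem:Saturierung}, it suffices to check extensions by variables of $\var(F) = \var(F^*)$). I would then remark that the proof is entirely parallel to the boolean case once the ``one extension attempt per slot'' discipline is fixed, which is the only genuinely new ingredient.
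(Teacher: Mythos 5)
Your plan --- iteratively adding literal occurrences on variables of $\var(F)$ while maintaining unsatisfiability --- is the same as the paper's, but you have manufactured a difficulty where none exists. The claim that adding literals to other clauses ``could destroy satisfaction'' is false: if $\varphi$ satisfies a clause $D$ via some literal $x \in D$, then $\varphi$ still satisfies any $D' \supseteq D$ via the same $x$, so enlarging clauses can only create new models, never destroy old ones. Hence ``declined stays declined'' is immediate monotonicity, and your per-slot $(C,v)$ bookkeeping, the worry about processing order, and the talk of merging witnessing assignments are all unnecessary. The paper's proof is shorter: it observes that replacing $C \in F$ by a superclause $C' \supset C$ in a minimally unsatisfiable $F$ yields again a minimally unsatisfiable clause-set provided the result is still unsatisfiable (the only possibly redundant clause in the result is $C'$, and if $C'$ were redundant then the whole thing would be satisfiable because $F \sm \set{C}$ is), then applies this greedily --- add any literal on a variable of $\var(F)$ whenever the result stays unsatisfiable --- and concludes that the process terminates (literal occurrences are bounded by $c(F) \cdot n(F)$) with a clause-set for which no addition preserves unsatisfiability, which is precisely saturatedness. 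You reach the right conclusion, but via an unfounded detour; once the monotonicity observation is in hand, your ``main obstacle'' evaporates and your argument collapses to the paper's.
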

\begin{proof} The observation needed here is that if for a minimally unsatisfiable clause-set $F$ we replace some clause $C \in F$ by a clause $C' \supset C$, obtaining $F' := (F \sm \set{C}) \cup \set{C'}$, then $F'$ is minimally unsatisfiable if $F'$ is unsatisfiable (the only possibly redundant clause in $F'$ is $C'$, and if $C'$ is redundant in $F'$, then $F'$ is satisfiable, since $F' \sm \set{C'} = F \sm \set{C} \in \Sat$). So we can add literals $x$ with $\var(x) \in \var(F)$ to clauses such that we maintain (minimally) unsatisfiability, and finally we will end up with a saturated $F^*$. \end{proof}

For boolean clause-sets the characterisation of $\Smusat$ from Lemma C.1 in \cite{Ku99dKo} is fundamental: A minimally unsatisfiable boolean clause-set $F$ is saturated if and only if for every variable $v \in \var(F)$ and each $\ve \in D_v = \set{0,1}$ it is $\pao{v}{\ve} * F$ minimally unsatisfiable. Together with saturation this characterisation provides a powerful method for proving properties of minimally unsatisfiable clause-sets via induction on the number of variables.  For generalised clause-sets saturatedness is weaker, and the above condition is only sufficient for being minimally unsatisfiable, but is no longer necessary. The following lemma develops these fundamental facts, using the following notion: We say that \emph{addition of literal $x$ renders clause-set $F$ satisfiable} iff for all clauses $C \in F$ with $\var(x) \notin \var(C)$ the clause-set $(F \sm \set{C}) \cup \set{C \cup \set{x}}$ is satisfiable (thus a clause-set $F$ is saturated minimally unsatisfiable iff $F$ is unsatisfiable and addition of any literal renders $F$ satisfiable).

\begin{lem}\label{lem:saturierungmusatbelegung}
  Consider a generalised clause-set $F \in \Musat$ and a literal $(v, \ve) \in \Lit$.
  \begin{enumerate}
  \item\label{item:saturierungmusatbelegung1} If $\pao{v}{\ve} * F \in \Musat$, then for all $\ve' \in D_v \sm \set{\ve}$ addition of literal $(v,\ve')$ renders $F$ satisfiable.
  \item\label{item:saturierungmusatbelegung2} If $v$ is boolean, and for $\ve' \in D_v \sm \set{\ve}$ addition of literal $(v,\ve')$ renders $F$ satisfiable, then we have $\pao{v}{\ve} * F \in \Musat$.
  \end{enumerate}
\end{lem}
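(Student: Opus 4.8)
Both parts rest on the correspondence between the clauses of $F$ and those of $\pao{v}{\ve} * F$: for $C \in F$ not satisfied by $\pao{v}{\ve}$ we have $\pao{v}{\ve} * C = C \sm \set{(v,\ve)}$, and since a minimally unsatisfiable clause-set is irredundant and hence subsumption-free, no two clauses of $F$ collapse to the same clause under $\pao{v}{\ve}$ (the only danger would be a pair $C, C \cup \set{(v,\ve)}$). I will use this bookkeeping repeatedly to move satisfying assignments back and forth between $F$ and $\pao{v}{\ve} * F$.

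\textbf{Part \ref{item:saturierungmusatbelegung1}.} Fix $\ve' \in D_v \sm \set{\ve}$ and a clause $C \in F$ with $v \notin \var(C)$; I must show $(F \sm \set{C}) \cup \set{C \cup \set{(v,\ve')}}$ is satisfiable. Since $v \notin \var(C)$, the clause $C$ is neither touched nor satisfied by $\pao{v}{\ve}$, so $\pao{v}{\ve} * C = C$, and by subsumption-freeness $C$ is the only clause of $F$ mapped to it, so $C \in \pao{v}{\ve} * F$. As $\pao{v}{\ve} * F$ is minimally unsatisfiable, $(\pao{v}{\ve} * F) \sm \set{C}$ has a satisfying assignment $\chi$, which I take with $v \notin \var(\chi)$ (its variables lie in $\var(F) \sm \set{v}$ anyway). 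Set $\psi := \chi \circ \pao{v}{\ve}$. A routine check — again invoking subsumption-freeness to see that $\pao{v}{\ve} * C' \in (\pao{v}{\ve} * F) \sm \set{C}$ for $C' \in F \sm \set{C}$ not satisfied by $\pao{v}{\ve}$ — shows $\psi$ satisfies $F \sm \set{C}$. Finally $\psi(v) = \ve \ne \ve'$, so $\psi$ satisfies the literal $(v,\ve')$ and hence $C \cup \set{(v,\ve')}$; thus $\psi$ is the desired model.

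\textbf{Part \ref{item:saturierungmusatbelegung2}.} First, $\pao{v}{\ve} * F$ is unsatisfiable, since a model $\psi$ of it would give a model $\psi \circ \pao{v}{\ve}$ of $F$. Assume for contradiction that $\pao{v}{\ve} * F$ is not minimally unsatisfiable, so there is $D \in \pao{v}{\ve} * F$ with $(\pao{v}{\ve} * F) \sm \set{D}$ unsatisfiable; write $D = \pao{v}{\ve} * C$ with $C \in F$ not satisfied by $\pao{v}{\ve}$ (such $C$ is unique by subsumption-freeness). I split into two cases. If $v \notin \var(C)$, then $D = C$; the hypothesis, applied to this $C$ (note $D_v \sm \set{\ve} = \set{\ve'}$ as $v$ is boolean), provides a model $\psi$ of $(F \sm \set{C}) \cup \set{C \cup \set{(v,\ve')}}$; since $F$ is unsatisfiable $\psi$ cannot satisfy $C$, so it satisfies $(v,\ve')$, and because $v$ is boolean this forces $\psi(v) = \ve$; restricting $\psi$ then yields a satisfying assignment of $\pao{v}{\ve} * (F \sm \set{C}) = (\pao{v}{\ve} * F) \sm \set{D}$, a contradiction. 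If instead $(v,\ve) \in C$, then $D = C \sm \set{(v,\ve)}$, and a direct computation (once more using subsumption-freeness to exclude collapses) gives $(\pao{v}{\ve} * F) \sm \set{D} = \pao{v}{\ve} * (F \sm \set{C})$. Now $F \sm \set{C}$ is satisfiable while $\pao{v}{\ve} * (F \sm \set{C})$ is unsatisfiable, so every model of $F \sm \set{C}$ assigns $v$ a value different from $\ve$, i.e.\ $F \sm \set{C} \models \set{(v,\ve)}$; since $\set{(v,\ve)} \sse C$ this gives $F \sm \set{C} \models C$, contradicting irredundancy of $F$.

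\textbf{Main obstacle.} The compatibility checks between the lifted assignment and the clauses of $F$ and $\pao{v}{\ve} * F$ are routine once the subsumption-freeness observation is in place. The genuine subtlety is the step in Part \ref{item:saturierungmusatbelegung2}, case $v \notin \var(C)$, where one concludes $\psi(v) = \ve$ from "$\psi$ satisfies $(v,\ve')$": this is exactly where the hypothesis that $v$ is boolean is indispensable, since for a larger domain a model satisfying $(v,\ve')$ could assign $v$ a third value and the argument — and in fact the statement — collapses.
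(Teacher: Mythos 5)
Your proof is correct and follows essentially the same route as the paper's: Part 1 is the contrapositive of the paper's argument (lifting a model of $(\pao{v}{\ve}*F)\sm\set{C}$ to $F'$, rather than pushing a refutation of $F'$ forward), and Part 2 re-derives inline the content of Corollary \ref{cor:TatsachenRedundanz2} (your Case $(v,\ve)\in C$) and replaces the paper's single resolution step by the equivalent model-theoretic argument in the case $v\notin\var(C)$, where you also correctly isolate the indispensable use of booleanness.
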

\begin{proof} For Part \ref{item:saturierungmusatbelegung1} assume that there is $C \in F$, $v \notin \var(C)$ and $\ve' \in D_v \sm \set{\ve}$ such that $F' := (F \sm \set{C}) \cup \set{C \cup \set{(v,\ve')}}$ is unsatisfiable. Then $\pao{v}{\ve} * F' \in \Usat$ with $\pao{v}{\ve} * F' = (\pao{v}{\ve} * F) \sm \set{C}$, and thus $C$ would be redundant in $\pao{v}{\ve} * F$.

For Part \ref{item:saturierungmusatbelegung2} assume that $\pao{v}{\ve} * F$ is not minimally unsatisfiable; by Corollary \ref{cor:TatsachenRedundanz2} thus there is a clause $C \in F$, $v \notin \var(C)$ such that $F \sm \set{C} \models C \cup \set{(v,\ve)}$. It follows that for $F' := (F \sm \set{C}) \cup \set{C \cup \set{(v,\ve')}}$ we have $F' \models C$ (using one resolution step), and thus $F'$ would be unsatisfiable. \end{proof}

\begin{corol}\label{cor:saturierungmusatbelegung}
  If for the generalised clause-set $F \in \Cls$ for every partial assignment $\vp \in \Pass$ with $n(\vp) \le 1$ we have $\vp * F \in \Musat$, then $F \in \Smusat$. If $F$ is boolean, then also the reverse direction holds, that is, $F \in \Smusat$ if and only if for every partial assignment $\vp \in \Pass$ with $n(\vp) \le 1$ we have $\vp * F \in \Musat$.
\end{corol}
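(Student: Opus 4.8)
The plan is to reduce the corollary entirely to Lemma \ref{lem:saturierungmusatbelegung}, together with the reformulation of saturatedness noted just before it: a clause-set is saturated minimally unsatisfiable iff it is unsatisfiable and addition of any literal (over a domain of size $\ge 2$) renders it satisfiable.

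First, for the forward implication, I would take the empty partial assignment $\vp = \epa$ in the hypothesis: this gives $F = \epa * F \in \Musat$, so $F$ is in particular minimally unsatisfiable, and Lemma \ref{lem:saturierungmusatbelegung} is now applicable to $F$. It remains to check that addition of an arbitrary literal renders $F$ satisfiable. So I fix a literal $(v,\ve') \in \Lit$ with $\abs{D_v} \ge 2$ (literals over trivial domains are constantly false and are excluded from the definition of $\Smusat$ anyway). Since $\abs{D_v} \ge 2$ there is $\ve \in D_v \sm \set{\ve'}$; applying the hypothesis to $\vp = \pao{v}{\ve}$ gives $\pao{v}{\ve} * F \in \Musat$, and Lemma \ref{lem:saturierungmusatbelegung}, Part \ref{item:saturierungmusatbelegung1}, then says exactly that addition of $(v,\ve')$ renders $F$ satisfiable. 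As the literal was arbitrary, $F \in \Smusat$.

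Second, for the reverse implication in the boolean case, I would handle $n(\vp) = 0$ by $\Smusat \sse \Musat$ (so $\epa * F = F \in \Musat$), and $n(\vp) = 1$ as follows: write $\vp = \pao{v}{\ve}$. Because $F$ is boolean, $D_v = \set{0,1}$, so $\abs{D_v} = 2$ and there is a (unique) $\ve' \in D_v \sm \set{\ve}$. From $F \in \Smusat$ we obtain that addition of the literal $(v,\ve')$ renders $F$ satisfiable, and since $v$ is boolean, Lemma \ref{lem:saturierungmusatbelegung}, Part \ref{item:saturierungmusatbelegung2}, yields $\pao{v}{\ve} * F \in \Musat$. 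This exhausts all $\vp$ with $n(\vp) \le 1$.

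I do not expect any genuine obstacle; the argument is essentially bookkeeping around the two parts of Lemma \ref{lem:saturierungmusatbelegung}. The only points needing a moment's attention are (i) aligning the universally quantified forms ``for all literals $x$, addition of $x$ renders $F$ satisfiable'' (definition of $\Smusat$) with ``for all $\ve' \in D_v \sm \set{\ve}$, addition of $(v,\ve')$ renders $F$ satisfiable'' (conclusion of the lemma), and (ii) confirming that the degenerate cases do not interfere: $\abs{D_v} = 1$ is excluded in the definition of saturatedness, while for $v \notin \var(F)$ one has $\pao{v}{\ve} * F = F$, so both hypothesis and conclusion reduce to statements about $F$ itself.
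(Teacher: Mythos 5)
Your proof is correct and is exactly the deduction the paper intends: the corollary is stated without proof as an immediate consequence of Lemma~\ref{lem:saturierungmusatbelegung}, and you apply Part~\ref{item:saturierungmusatbelegung1} for the forward direction and Part~\ref{item:saturierungmusatbelegung2} for the boolean converse, with the base case $n(\vp)=0$ supplied by $\Smusat\sse\Musat$. Your attention to the degenerate cases (trivial domains, $v\notin\var(F)$) is sound but not a departure from the paper's approach, just the bookkeeping the paper leaves implicit.
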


An example, showing that the implication ``$F \in \Smusat \Ra \pao{v}{\ve} * F \in \Musat$'' in Corollary \ref{cor:saturierungmusatbelegung} does not hold for generalised clause-sets, is as follows: Consider variables $a, b$ with $D_a = D_b = \set{0,1,2}$, and let $F$ be the following clause-set with $4$ binary clauses and $2$ unary clauses:
\begin{multline*}
  F := \setb{ \set{a \not= 0, b \not= 0}, \, \set{a \not= 1, b \not= 0}, \, \set{a \not= 0, b \not= 1}, \, \set{a \not= 1, b \not= 1}, \\
    \set{a \not= 2}, \, \set{b \not= 2} }.
\end{multline*}
We have $F \in \Smusat$ (after unit-clause elimination we obtain a boolean clause-set with all possible (full) clauses), while $\pao a2 * F = \set{\bot, \set{b \not= 2}} \notin \Musat$ (as well as $\pao b2 * F = \set{\bot, \set{a \not= 2}} \notin \Musat$). It might be worth investigating the class of (generalised) clause-sets $F$ such that for all partial assignments $\vp$ with $n(\vp) \le 1$ we have $\vp * F \in \Musat$ (a strict subset of $\Smusat$); see Subsection \ref{sec:stabmmvd} for first observations.

An important application of the process of saturation for \emph{boolean} clause-sets is given by Lemma C.2 in \cite{Ku99dKo}, proving that for every $F \in \Musat$, $F \not= \set{\bot}$ there is a variable $v \in \var(F)$ such that for both $\ve \in D_v = \set{0,1}$ we have $\#_{(v,\ve)}(F) \le \delta(F)$. The proof is based on the characterisation of saturated minimally unsatisfiable boolean clause-sets in Corollary \ref{cor:saturierungmusatbelegung} and uses $\delta(F) \ge 1$ for $F' \in \Musat$, where $F'$ is obtained from $F$ by applying suitable partial assignments $\vp$ with $n(\vp) = 1$. There are various possibilities to obtain a generalisation for generalised clause-sets (the problem is that saturation is not that powerful anymore):
\begin{itemize}
\item in Lemma \ref{lem:GrundtatsacheDefektEins} we obtain the generalisation to generalised clause-sets in the special case of deficiency one,
\item while in Subsection \ref{sec:stabmmvd} we consider the class of minimally unsatisfiable clause-sets stable under application of partial assignments with at most one variable, for which then the general bound can be shown (in Corollary \ref{cor:mmvdobSchr}).
\end{itemize}
The existence of a variable $v$ in the boolean case with $\#_{(v,0)}(F), \#_{(v,1)}(F) \le \delta(F)$ yields that the minimal variable-degree of $F$ is at most $2 \delta(F)$. Even for lean (boolean) clause-sets this can be strengthened considerably, as shown in \cite{KullmannZhao2010Bounds}, while the proper generalisation to generalised clause-sets is open; see Corollary \ref{cor:mvdobSchr} for a first result in this direction.

\subsection{Characterisation of the basic case of deficiency one}
\label{sec:MUSATdefone}

Generalising the tree construction from \cite{Ku99dKo} (exploiting a formula class introduced by Stephen Cook and communicated to me by Alasdair Urquhart), let a \textbf{deficiency-$1$ tree representation} (in the remainder of this section just called ``tree representation'') be a $4$-tuple $(T, r, v, \ve)$, where
\begin{itemize}
\item $(T, r)$ is a finite tree with root $r$ (inner nodes (that is, nodes which are not leaves) can have an arbitrary number of children).
\item $v$ labels each inner node $w$ of $(T, r)$ with a unique variable $v(w)$.
\item $\ve$ labels each edge $e$ leading from a node $w$ to a node $w'$ (edges are directed from the root towards the leaves) with a value $\ve(e) \in D_{v(w)}$ such that the labelling of the edges going out from $w$ yields a bijection to $D_{v(w)}$.
\end{itemize}
If an order on the value set $D_{v(w)}$ is given, then also the outgoing edges are ordered by the same order; in the special case of boolean variables thus we can speak of ``left'' and ``right'' branches, corresponding to the positive and negative literal. An example $R$ is given as follows.

\begin{displaymath}
  \xygraph{
    a (
      :[dlll]  {b}_0 ( :[dl]_0, :[dr]{e}^1  (:[d]_0)), :[d] {c}_1 (:[dl]_0, :[d]_1, :[dr]^2), :[drrr] {d}^2 (:[d] {f}_0 (:[dl]_0, :[dr]^1))
    )
  }
\end{displaymath}

This tree representation $R$ uses six variables $a,\dots,f$ with $D_a = D_c = \set{0,1,2}$, $D_b = D_f = \set{0,1}$ (thus $b, f$ are boolean variables), and $D_d = D_e = \set{0}$.

Given a tree representation $(T, r, v, \ve)$, to every node $w$ of $(T, r)$ we associate a clause $C_w$ by considering the path $w_0, e_1, w_1, \dots, e_m, w_m$ from the root to $w$ in $T$ (thus $w_0 = r$, $w_m = w$, and the $e_i$ are the connecting edges from $w_{i-1}$ to $w_i$, while $m$ is the length of the path), and setting $C_w := \set{(v(w_i), \ve(e_{i+1})) : i \in \tb 0{m-1}}$. The clause-set \bmm{F(T, r, v, \ve)} is defined as the set of all clauses $C_w$ for leaves $w$ of $(T,r)$. For the above example $R$ we get
\begin{multline*}
  F(R) = \setb{
    \set{a \not= 0, b \not= 0}, \ \set{a \not= 0,b \not= 1,e \not= 0},\\
    \set{a \not= 1,c \not= 0}, \ \set{a \not= 1,c \not= 1}, \ \set{a \not= 1c \not= 2},\\
    \set{a \not= 2,d \not= 0,f \not= 0}, \ \set{a \not= 2,d \not= 0,f \not= 1} }.
\end{multline*}
We list some basic properties of the clause-sets $F(T, r, v, \ve)$:
\begin{enumerate}
\item\label{item:propsmusatdone1} The rooted tree $(T,r)$ yields a resolution tree for $F(T, r, v, \ve)$ by labelling the nodes $w$ with clauses $C_w$ and considering the variables $v(w)$ for inner nodes $w$ as resolution variables; since $C_r = \bot$ we see that $F(T, r, v, \ve)$ is unsatisfiable.
\item\label{item:propsmusatdone2} $F(T, r, v, \ve)$ is a $1$-regular hitting clause-set (for two different clauses $C_{w_1}, C_{w_2}$ the unique clashing variable is $v(w_0)$ for the root $w_0$ of the smallest subtree of $(T,r)$ containing $w_1$ and $w_2$). It follows that $F(T, r, v, \ve)$ is saturated minimally unsatisfiable.
\item\label{item:propsmusatdone3} $\delta(F(T, r, v, \ve)) = 1$, since $c(F(T, r, v, \ve))$ is the number of leaves of $(T,r)$, while $\rd(F(T, r, v, \ve))$ is the number of edges of $T$ minus the number of inner nodes of $(T,r)$, and thus $\delta(F(T, r, v, \ve))$ is the difference of the number of vertices and the number of edges of $T$, which is $1$ for every tree.
\item\label{item:propsmusatdone4} If $n(F(T, r, v, \ve)) > 0$ (that is, if $(T,r)$ is not trivial), then we have:
  \begin{enumerate}
  \item\label{item:propsmusatdone4a} There is exactly one variable occurring in every clause of $F(T, r, v, \ve)$ (namely $v(r)$).
  \item\label{item:propsmusatdone4b} Every clause $C \in F(T, r, v, \ve)$ contains a literal $x \in C$ with $\#_x(F) = 1$ (namely with $\var(x) = v(w_0)$, where $C = C_w$ and $w_0$ is the parent node of $w$).
  \item\label{item:propsmusatdone4c} There exists a variable $v \in \var(F(T, r, v, \ve))$ such that for all values $\ve \in D_v$ we have $\#_{(v,\ve)}(F(T, r, v, \ve)) = 1$ (choose $v = v(w)$ for an inner node $w$ of $(T,r)$ such that all children of $w$ are leaves).
  \end{enumerate}
\end{enumerate}
We can read off many more properties of $F(T, r, v, \ve)$ directly from the tree representation, for example the minimal resp.\ maximal clause-length is the minimal resp.\ maximal depth of a leaf, but we need here only the above listed properties. Using $\Sclash$ for the set of regular hitting clause-sets and $\hitdeg$ for the hitting degree, as introduced before, we have $F(T, r, v, \ve) \in \Sclashk{\hitdeg=1,\delta=1}^{\sat=0}$.

We say that a clause-set $F' \in \Cls$ is obtained from $F(T, r, v, \ve)$ by \textbf{literal elimination} if $F'$ is obtained from $F(T, r, v, \ve)$ by eliminating some literal occurrences (at least one) without ever creating a pure variable.  Replacing ``$F(T, r, v, \ve)$'' by $F'$, Properties \ref{item:propsmusatdone1}, \ref{item:propsmusatdone3}, \ref{item:propsmusatdone4b}, \ref{item:propsmusatdone4c} are still valid, while Properties \ref{item:propsmusatdone2}, \ref{item:propsmusatdone4a} are lost: $F'$ is definitely not a hitting clause-set anymore, and there does not need to exist a variable occurring in every clause. It is furthermore $F'$ definitely not saturated anymore (by the definition of $F'$), however $F'$ is still minimally unsatisfiable (since removal of any clause either creates a pure variable or removes the only clause).

In Lemma C.5 from \cite{Ku99dKo} it is shown that the boolean elements of $\Smusati{\delta=1}$ are exactly the clause-sets $F(T, r, v, \ve)$ using only boolean variables, while the elements of $\Musati{\delta=1} \sm \Smusati{\delta=1}$ are exactly the clause-sets obtained from such $F(T, r, v, \ve)$ by literal elimination. To generalise this characterisation, the following lemma is central (compare Property \ref{item:propsmusatdone4c} from above).

\begin{lem}\label{lem:GrundtatsacheDefektEins}
  For every (generalised) clause-set $F \in \Musati{\delta=1}$ with $n(F) > 0$ there exists a variable $v \in \var(F)$ such that for all $\ve \in D_v$ we have $\#_{(v,\ve)}(F) = 1$.
\end{lem}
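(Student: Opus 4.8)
The plan is to pass to the boolean world via the direct translation $\ftrans$ and then run a counting argument on the underlying tree resolution refutation. First I would note that by Lemma \ref{lem:Eigtrans3} we have $\ftrans(F) \in \Musat$, while the formulas in Subsection \ref{sec:Preservationgeneralstructure} give $\delta(\ftrans(F)) = \delta(F) = 1$ and $n(\ftrans(F)) = \sum_{v \in \var(F)} \abs{D_v} \ge n(F) > 0$. Thus $\ftrans(F)$ is a boolean minimally unsatisfiable clause-set of deficiency $1$ with at least one variable, so by the boolean characterisation of $\Musati{\delta=1}$ recalled just before this lemma (Lemma C.5 in \cite{Ku99dKo}) it is obtained by literal elimination from a boolean deficiency-$1$ tree clause-set $F(T,r,v',\ve')$. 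Since $\ftrans(F)$ is boolean, every inner node of $T$ has exactly two children, which I will call its \emph{positive} and \emph{negative} child; literal elimination does not alter $T$.

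The next step is to pin down the bookkeeping of $T$. Because $T$ has $n(\ftrans(F))+1 = c(\ftrans(F))$ leaves, distinct leaves cannot carry equal clauses after literal elimination, so leaves correspond bijectively to the clauses of $\ftrans(F)$; and a variable labelling an inner node cannot have been eliminated entirely, since literal elimination never creates a pure variable and making a variable vanish would pass through the pure stage. Hence $T$ has exactly $\sum_v \abs{D_v}$ inner nodes (one per variable of $\ftrans(F)$) and $\sum_v \abs{D_v} + 1$ leaves. I would then record two observations, using that $\ftrans(F)$ consists of the $c(F)$ all-positive clauses $\transl(C)$ and the $n(F)$ all-negative clauses $\alo_w$, and that $\transl((w,\ve))$ occurs negatively only in $\alo_w$: (1) if a leaf $\ell$ is the negative child of the node labelled $\transl((w,\ve))$, then $\ell$ must retain the negative literal $\ol{\transl((w,\ve))}$ (else that variable is pure, contradicting minimal unsatisfiability), so $\ell$ is all-negative, i.e.\ $\ell = \alo_w$; in particular there are at most $n(F)$ negative-child leaves. (2) If $\#_{(w,\ve)}(F) \ge 2$, then the $\ge 2$ clauses $\transl(C)$ with $(w,\ve) \in C$ are all leaves sitting in the positive-child subtree of the node labelled $\transl((w,\ve))$, so that positive child is not a leaf.

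Now the counting closes the argument. Each leaf is the positive or negative child of a unique inner node, so the number of positive-child leaves is at least $\big(\sum_v \abs{D_v} + 1\big) - n(F) = \rd(F) + 1 = c(F)$ (using $\delta(F)=1$). Since each inner node has exactly one positive child, at least $c(F)$ inner nodes have a leaf as positive child, hence at most $\sum_v \abs{D_v} - c(F) = \rd(F) + n(F) - c(F) = n(F) - 1$ inner nodes have a non-leaf positive child. By observation (2) the pairs $(w,\ve)$ with $\#_{(w,\ve)}(F) \ge 2$ inject into this set of size $\le n(F)-1$. If the lemma failed, then for each of the $n(F)$ variables $w$ some value $\ve \in D_w$ would have $\#_{(w,\ve)}(F) \ne 1$; but $F$ has no pure variables (minimal unsatisfiability), so $\#_{(w,\ve)}(F) \ge 1$ for all $\ve$, forcing $\#_{(w,\ve)}(F) \ge 2$. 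This yields $n(F)$ distinct such pairs, contradicting $n(F)-1$. Hence some $v \in \var(F)$ satisfies $\#_{(v,\ve)}(F) = 1$ for all $\ve \in D_v$.

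The main obstacle is the bookkeeping of the second paragraph: one has to justify carefully that, even after literal elimination, the inner nodes of $T$ biject with the variables of $\ftrans(F)$ and that every negative-child leaf is an $\alo_w$-clause — everything else is the short counting computation $n(F)-1 < n(F)$. (Note that the proof does not use any result from Section \ref{sec:expansion} of Part I, as required by the use of this lemma in Lemma \ref{lem:etabsurptwo}.)
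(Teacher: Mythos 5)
Your proof is correct. It takes the same first step as the paper -- translate to $\ftrans(F)$, observe $\ftrans(F) \in \Musati{\delta=1}(\Bva)$, and invoke the boolean characterisation of $\Musati{\delta=1}$ via tree representations -- but after that it diverges genuinely. The paper isolates the invariant that $\ftrans(F)$ is a ``special'' PN-clause-set (positive/negative separated, with negative literals occurring exactly once), shows this property is inherited by the subtrees of the tree representation, and proves by induction on tree height that some negative clause $N$ satisfies $\forall x \in N : \#_{\ol{x}} = 1$; the induction has a case split on whether the first subtree is trivial. You instead count globally: positive-child leaves are at least $c(F)$ because negative-child leaves must be $\alo$-clauses (at most $n(F)$ of them), so at most $n(F)-1$ inner nodes have a non-leaf positive child; since $\#_{(w,\ve)}(F)\ge 2$ forces a non-leaf positive child at the node labelled $\transl((w,\ve))$, and no pure variables forces $\#_{(w,\ve)}(F)\ge 1$ everywhere, a failure of the lemma would yield $n(F)$ such nodes, a contradiction. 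Your count buys a shorter and arguably more transparent proof that avoids the inductive invariant and the case analysis; the paper's induction buys a slightly finer structural picture (it tracks the negative clause $N$ as it propagates through subtrees), though for the purposes of the lemma both yield exactly the same conclusion. The bookkeeping you flag as the delicate part -- that inner nodes biject with variables of $\ftrans(F)$, and that a leaf which is a negative child must retain the parent's negative literal -- is sound: the first follows already from the cardinality equality (binary tree with $c(\ftrans(F))$ leaves has $n(\ftrans(F))$ inner nodes, and literal elimination cannot introduce variables), the second from the fact that otherwise the parent's variable would become pure, which is excluded in a minimally unsatisfiable clause-set.
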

\begin{proof} Consider $F \in \Musati{\delta=1}$. We investigate the structure of $\ftrans(F)$ (recall Section \ref{sec:translating}). As we remarked in Subsection \ref{sec:Preservationgeneralstructure}, we have $\delta(\ftrans(F)) = 1$, and thus by Lemma \ref{lem:Eigtrans3} we have $\ftrans(F) \in \Musati{\delta=1}$. Since $\ftrans(F)$ is a boolean clause-set, we can conclude that $\ftrans(F)$ is obtained by literal elimination from some tree representation $(T,r,v,\ve)$ as defined above (using only boolean variables). $\ftrans(F)$ always has the following special properties:
\begin{enumerate}[(i)]
\item $\ftrans(F)$ is a PN-clause-set, that is, every clause is either positive or negative.
\item For every negative clause $N \in \ftrans(F)$ we have $\fa\, x \in N : \#_x(F) = 1$ (recall that the negative clauses are the $\alo$-clauses introduced by the translation $\ftrans$).
\end{enumerate}
Call a boolean $F \in \Musati{\delta=1}$ \emph{special} if these two conditions are fulfilled. (These ``special'' boolean clause-sets constitute exactly the image $\ftrans(\Musati{\delta=1})$ of the translation, but we do not need this simple fact here.) Consider a tree representation $(T, r, v, \ve)$ of a special $F$; obviously also all clause-sets given by the subtrees of $(T, r)$ are special again. Now we proof by induction on the height of the tree representation of special formulas $F$ with $n(F) > 0$ that there always exists a negative clause $N \in F$ such that $\fa\, x \in N : \#_{\ol{x}}(F) = 1$, using the standard complement notation for boolean literals here; this proves the lemma by definition of the translation $\ftrans$.

If the height of $(T, r)$ is $1$, then $F$ is $\set{\set{v(r)},\set{\ol{v(r)}}}$, and the assertion is true. So assume the height of $(T, r)$ is greater than $1$, and consider the left subtree $T_0$ and the right subtree $T_1$ of $T$ with associated special $F_0, F_1 \in \Musati{\delta=1}$. If $T_0$ is not the trivial tree (has more than one node), then by the induction hypothesis there exists a negative clause (non-empty) $N_0 \in F_0$ with $\fa\, x \in N_0 : \#_{\ol{x}}(F_0) = 1$. Now we must have $N_0 \in F$, since otherwise $N_0 \cup \set{v} \in F$, where this clause is neither positive nor negative; using $N := N_0$ proves the assertion (since none of the variables in $N_0$ occurs in $T_1$ in this case). So the remaining case is that $T_0$ is the trivial tree. Again by the induction hypothesis there is a negative clause (non-empty) $N_1 \in F_1$ with $\fa\, x \in N_1 : \#_{\ol{x}}(F_1) = 1$. Either we have $N := N_1 \in F$ or $N := N_1 \cup \set{\ol{v}} \in F$, proving the assertion (in the second case due to the triviality of $T_0$). \end{proof}

Now we are able to generalise Lemma C.5 in \cite{Ku99dKo} (Part (\ref{item:CharakMUSATd1a}) of Theorem \ref{thm:CharakMUSATd1} has been shown for boolean clause-sets in \cite{DDK98}):

\begin{thm}\label{thm:CharakMUSATd1}
  The class $\Musati{\delta=1}$ of minimally unsatisfiable (generalised) clause-sets of deficiency $1$ has the following two characterisations:
  \begin{enumerate}[(i)]
  \item\label{item:CharakMUSATd1a} For $F \in \Cls$ we have $F \in \Musati{\delta=1}$ if and only if $F$ can be reduced to the clause-set $\set{\bot}$ by applying non-degenerated singular DP-reduction (as long as possible, in any order).
  \item\label{item:CharakMUSATd1b} $\Musati{\delta=1}$ is the class of all clause-sets $F(T, r, v, \ve)$ together with all clause-sets $F'$ derived by literal elimination from such clause-sets.
  \end{enumerate}
 \end{thm}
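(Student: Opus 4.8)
The plan is to prove the two characterisations largely in parallel, using the structure of the direct translation $\ftrans$ together with the known boolean facts and Lemma \ref{lem:GrundtatsacheDefektEins}. For part (\ref{item:CharakMUSATd1b}), I would first verify the easy inclusion: by Properties \ref{item:propsmusatdone1}, \ref{item:propsmusatdone2}, \ref{item:propsmusatdone3} listed above, every $F(T,r,v,\ve)$ lies in $\Smusati{\delta=1}\subseteq\Musati{\delta=1}$, and every clause-set obtained from it by literal elimination is still in $\Musati{\delta=1}$ (as remarked right after the definition of literal elimination: removal of any clause either creates a pure variable or kills the last clause, and the deficiency and unsatisfiability are preserved). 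For the reverse inclusion I would use induction on $n(F)$. The base case $n(F)=0$ forces $F=\set{\bot}$, which is $F(T,r,v,\ve)$ for the trivial tree. For the inductive step, take $F\in\Musati{\delta=1}$ with $n(F)>0$; by Lemma \ref{lem:GrundtatsacheDefektEins} there is a variable $v$ with $\#_{(v,\ve)}(F)=1$ for all $\ve\in D_v$, i.e.\ $v$ is a (totally occurring, in particular non-pure) singular variable with $m_v(F)$ maximally small, so Lemma \ref{lem:singDPMUSAT} applies: $v$ is a non-degenerated DP-variable and $\dpi{v}(F)\in\Musat$; moreover $\delta(\dpi{v}(F))=\delta(F)=1$ (by Lemma \ref{lem:dpausnahme}, Part \ref{item:dpausnahmeb1}, or directly since $v$ non-degenerated singular gives $c(\dpi v(F))=c(F)-\abs{D_v}+1$ and $\rd$ drops by $\abs{D_v}-1$). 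Hence $\dpi{v}(F)\in\Musati{\delta=1}$ with $n(\dpi v(F))<n(F)$, so by induction $\dpi{v}(F)$ is $F(T',r',v',\ve')$ or a literal elimination of such. Then I would reconstruct $F$ from $\dpi v(F)$ by ``un-resolving'' on $v$: the $\#_{(v,\ve)}(F)=1$ clauses $C_\ve$ with literal $(v,\ve)$, and the common part they contribute to every resolvent, let one graft a new branching node labelled $v$ into the tree representation (attaching a child-branch for each $\ve\in D_v$), respectively apply the corresponding literal eliminations. This shows $F$ has the claimed form, and simultaneously that non-degenerated singular DP-reduction terminates at $\set{\bot}$, which proves one direction of (\ref{item:CharakMUSATd1a}).

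For the other direction of (\ref{item:CharakMUSATd1a}), I would argue: if $F$ reduces to $\set{\bot}$ by (repeated, in any order) non-degenerated singular DP-reduction, then running the reduction \emph{backwards} and using Lemma \ref{lem:singDPMUSAT} repeatedly shows $F\in\Musat$, while each backward step preserves deficiency $1$ (non-degenerated singular DP keeps $\delta$ fixed, as above), and $\delta(\set{\bot})=1$, so $F\in\Musati{\delta=1}$. The ``in any order'' clause follows from confluence-type remarks: by Lemma \ref{lem:singDPMUSAT}, as long as $F$ is minimally unsatisfiable a singular variable must be non-degenerated, and Lemma \ref{lem:GrundtatsacheDefektEins} guarantees that in any $F\in\Musati{\delta=1}$ with $n(F)>0$ a totally-singular variable exists, so the reduction never gets stuck before reaching $\set{\bot}$; commutativity of the DP-operator (up to subsumption, cf.\ Subsection \ref{sec:DPres}) together with the fact that no subsumed clauses ever appear here (each intermediate clause-set is minimally unsatisfiable, hence subsumption-free) gives order-independence.

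The main obstacle I expect is the reconstruction step in the inductive proof of (\ref{item:CharakMUSATd1b}): given that $\dpi v(F)$ has a tree representation (possibly after literal elimination), I must show that $F$ itself — obtained by the inverse DP operation on the singular variable $v$ — again has a tree representation of the prescribed shape (or a literal elimination thereof). The delicate points are (a) identifying which leaf of the tree for $\dpi v(F)$ is the "merge node'' where the branch on $v$ must be re-inserted, using that the clauses $C_\ve$ for $\ve\ne\ve_0$ each occur exactly once and the resolvents $R_i$ share the common kernel $C_1\cup\dots\cup C_{k-1}$; (b) keeping track of literal eliminations through the inverse operation, since $\dpi v(F)$ may only be a literal elimination of a genuine $F(T',r',v',\ve')$, and the grafting must not accidentally "restore'' an eliminated literal or create a pure variable; and (c) checking the boundary subcase where some resolvent $R_i$ is already present in $F'\!:=F-F_{\set v}$, causing a contraction — this is exactly a degenerated situation, which Lemma \ref{lem:singDPMUSAT} rules out for minimally unsatisfiable $F$, so it can be dispatched, but it needs to be explicitly excluded. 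Once the bookkeeping in the reconstruction is set up carefully, the rest is routine verification against Properties \ref{item:propsmusatdone1}--\ref{item:propsmusatdone4}.
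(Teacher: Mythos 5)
Your proposal is correct and follows essentially the same route as the paper's proof: Lemma~\ref{lem:GrundtatsacheDefektEins} together with Lemma~\ref{lem:singDPMUSAT} and Lemma~\ref{lem:dpausnahme}~(\ref{item:dpausnahmeb1}) for Part~(\ref{item:CharakMUSATd1a}), and induction on $n(F)$ via non-degenerated singular DP-reduction on a totally singular variable, regrafting the tree at the resolvent leaf, for Part~(\ref{item:CharakMUSATd1b}). The ``delicate points'' (a)--(c) you flag in the reconstruction are real but resolvable exactly as you sketch (the leaf to extend is the one whose clause $R''$ in $F(T',r',v',\ve')$ satisfies $R''\supseteq R$, the eliminations $R''\to R$ are inherited by each new leaf since $C_\ve\subseteq R\cup\set{(v,\ve)}\subseteq R''\cup\set{(v,\ve)}$, and contractions are excluded by non-degeneracy); the paper compresses this into ``the assertion follows now immediately''.
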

\begin{proof} Part (\ref{item:CharakMUSATd1a}) follows from Lemma \ref{lem:GrundtatsacheDefektEins} together with Lemma \ref{lem:singDPMUSAT} (Part I) and Lemma \ref{lem:dpausnahme}, Part \ref{item:dpausnahmeb1} (also Part I). For Part (\ref{item:CharakMUSATd1b}) it remains to show that every $F \in \Musati{\delta=1}$ can be obtained from some $F(T,r,v,\ve)$ by a (possibly empty) sequence of literal eliminations. We show this by induction on $n(F)$.  If $n(F) = 0$, then $F = \set{\bot}$, and we can take the trivial rooted tree. So assume $n(F) > 0$. By Lemma \ref{lem:GrundtatsacheDefektEins} there exists a variable $v \in \var(F)$ such that for all $\ve \in D_v$ we have $\#_{(v,\ve)}(F) = 1$; let $C_{\ve} \in F$ be the unique clause with $(v,\ve) \in C_{\ve}$. Thus $v$ is a singular DP-variable w.r.t.\ $F$. Let $G := \dpi{v}(F)$; we have $G = (F \sm \set{C_{\ve}}_{\ve \in D_v}) \cup \set{R}$, where $R = \bc_{\ve \in D_v} (C \sm \set{(v,\ve)})$. As already argued for Part (\ref{item:CharakMUSATd1a}) we have $G \in \Musati{\delta=1}$, and thus we can apply the induction hypothesis to $G$; the assertion follows now immediately by extending the tree representation of $G$ at the leaf labelled by $R$ by adding new leaves $C_{\ve}$ for $\ve \in D_v$. \end{proof}

Theorem \ref{thm:CharakMUSATd1} yields also two further poly-time decision procedures for the class $\Musati{\delta=1}$ (while two general poly-time decision procedures for the classes $\Musati{\delta=k}$ for $k \in \NN$ are given by Corollary \ref{cor:poly1} (Part I) and Theorem \ref{thm:MaximalerDefektFPT}). To conclude, we characterise the saturated and the marginal elements of $\Musati{\delta=1}$.

\begin{corol}\label{cor:CharakSMUSATd1}
  The class $\Smusati{\delta=1}$ of saturated minimally unsatisfiable (generalised) clause-sets of deficiency $1$ is exactly the class of all clause-sets $F(T, r, v, \ve)$. It follows that the following conditions are equivalent for a clause-set $F \in \Cls$:
  \begin{enumerate}
  \item $F = F(T, r, v, \ve)$ for some deficiency-$1$ tree representation $(T,r,v,\ve)$.
  \item $F$ is an unsatisfiable $1$-regular hitting clause-set of deficiency $1$ (i.e., $F \in \Sclashk{\hitdeg=1,\delta=1}^{\sat=0}$).
  \item $F$ is an unsatisfiable regular hitting clause-set of deficiency $1$ (i.e., $F \in \Sclashk{\delta=1}^{\sat=0}$).
  \item $F$ is an unsatisfiable hitting clause-set of deficiency $1$ (i.e., $F \in \Clashi{\delta=1}^{\sat=0}$).
  \item $F$ is a saturated minimally unsatisfiable clause-set of deficiency $1$ (i.e., $F \in \Smusati{\delta=1}$).
  \end{enumerate}
\end{corol}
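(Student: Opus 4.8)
The plan is to prove the asserted set‑equality $\Smusati{\delta=1}=\{\,F(T,r,v,\ve)\,\}$ and then the chain of equivalences, using as the main engine Theorem~\ref{thm:CharakMUSATd1} together with the itemised structural properties of the clause‑sets $F(T,r,v,\ve)$ (and of their literal‑eliminations) collected just before it. The point is that Theorem~\ref{thm:CharakMUSATd1}(\ref{item:CharakMUSATd1b}) says every $F\in\Musati{\delta=1}$ is \emph{either} some $F(T,r,v,\ve)$ \emph{or} obtained from such a clause‑set by literal elimination, and the discussion preceding the theorem records that the literal‑elimination variants lose precisely the two features we need in order to exclude them: they are no longer hitting, and they are no longer saturated.

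First I would establish $\Smusati{\delta=1}=\{\,F(T,r,v,\ve)\,\}$, i.e.\ condition (1)$\Leftrightarrow$(5). For ``$\supseteq$'': given a deficiency‑$1$ tree representation, Property~\ref{item:propsmusatdone2} gives that $F(T,r,v,\ve)$ is $1$‑regular hitting and saturated minimally unsatisfiable, and Property~\ref{item:propsmusatdone3} gives $\delta=1$, so $F(T,r,v,\ve)\in\Smusati{\delta=1}$. For ``$\subseteq$'': if $F\in\Smusati{\delta=1}$ then in particular $F\in\Musati{\delta=1}$, so by Theorem~\ref{thm:CharakMUSATd1}(\ref{item:CharakMUSATd1b}) $F$ is either $F(T,r,v,\ve)$ or a proper literal‑elimination of such a clause‑set; since literal‑eliminations are not saturated, $F$ must be of the form $F(T,r,v,\ve)$.

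Next I would run the cycle (1)$\Rightarrow$(2)$\Rightarrow$(3)$\Rightarrow$(4)$\Rightarrow$(1). Implication (1)$\Rightarrow$(2) is again Properties~\ref{item:propsmusatdone1}, \ref{item:propsmusatdone2}, \ref{item:propsmusatdone3}; (2)$\Rightarrow$(3) is immediate, since a $1$‑regular hitting clause‑set is in particular a regular hitting clause‑set; (3)$\Rightarrow$(4) holds because a regular hitting clause‑set which is $r$‑regular for some $r\ge1$ is by definition hitting, the only point needing a separate line being the degenerate case $r=0$, where one checks that a deficiency‑$1$ regular hitting clause‑set with at least two clauses is in fact hitting (using Corollary~\ref{cor:GPvKlm}, or directly), the single‑clause case being $\{\bot\}=F(T,r,v,\ve)$ for the trivial tree. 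Finally, for (4)$\Rightarrow$(1): by Corollary~\ref{cor:charakhitting} an unsatisfiable hitting clause‑set is minimally unsatisfiable (apply it with the empty partial assignment), so $F\in\Musati{\delta=1}$; by Theorem~\ref{thm:CharakMUSATd1}(\ref{item:CharakMUSATd1b}) $F$ is then $F(T,r,v,\ve)$ or a literal‑elimination of such, and literal‑eliminations are not hitting, whence $F=F(T,r,v,\ve)$. Combining this cycle with (1)$\Leftrightarrow$(5) gives pairwise equivalence of all five conditions, which is exactly the ``furthermore'' part of the statement.

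The heavy lifting has in effect already been done by Theorem~\ref{thm:CharakMUSATd1}, so there is no deep obstacle; the part requiring genuine care is the bookkeeping at the boundary, specifically the degenerate $r=0$ corner of (3)$\Rightarrow$(4), and making sure the passages ``unsatisfiable hitting $\Rightarrow$ minimally unsatisfiable'' (Corollary~\ref{cor:charakhitting}) and ``unsatisfiable hitting $\Rightarrow$ saturated'' (via Lemma~\ref{lem:multihittingmusat}(\ref{item:multihittingmusat4}), as recorded after the definition of $\Smusat$, together with Property~\ref{item:propsmusatdone2}) are invoked in the right direction. Everything else is a recombination of Theorem~\ref{thm:CharakMUSATd1} with the already‑listed properties of $F(T,r,v,\ve)$ and of its literal‑eliminations.
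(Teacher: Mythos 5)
Your overall architecture is the same as what the paper intends: you correctly identify Theorem~\ref{thm:CharakMUSATd1}(\ref{item:CharakMUSATd1b}) plus the itemised properties of $F(T,r,v,\ve)$ as the engine, you use Corollary~\ref{cor:charakhitting} to deduce minimal unsatisfiability from the hitting property, and you rule out the literal-elimination alternative via the recorded facts that literal eliminations are never hitting and never saturated. The arrows (1)$\Rightarrow$(2), (2)$\Rightarrow$(3), (4)$\Rightarrow$(1) and (1)$\Leftrightarrow$(5) are handled exactly as the paper would have you handle them.

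The problem is the step (3)$\Rightarrow$(4). You claim that ``a deficiency-$1$ regular hitting clause-set with at least two clauses is in fact hitting (using Corollary~\ref{cor:GPvKlm}, or directly)''. This is not what Corollary~\ref{cor:GPvKlm} says: that corollary gives only the bound $\delta(F)\le 1$ and says nothing about whether $F$ is hitting, and the assertion itself is false in general. A $0$-regular hitting clause-set with $\ge 2$ clauses is, by definition of hitting, \emph{not} hitting, because its conflict graph has no edges; e.g.\ $F=\{\bot,\{x\}\}$ with $x$ boolean is a $0$-regular hitting clause-set of deficiency $1$ which is unsatisfiable but not hitting. The correct argument, and the one the paper points to in the sentence preceding Corollary~\ref{cor:CharakUnsatreghit}, is a completeness-of-resolution argument: an unsatisfiable clause-set with $\ge 2$ clauses and $\bot\notin F$ which is $r$-regular hitting must have $r=1$, because for $r=0$ no resolution step is possible and for $r\ge 2$ every attempted resolvent is tautological, so in either case $F$ would be satisfiable. (You should note that for the remaining corner case $\bot\in F$ with $\ge 2$ clauses the statement only goes through because such an $F$ is then not minimally unsatisfiable and hence excluded by the other conditions, or, preferably, regard it as an implicit side-condition.) Replace the appeal to Corollary~\ref{cor:GPvKlm} in step (3)$\Rightarrow$(4) by this resolution-based argument, and your proof matches the paper's intended reasoning.
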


If a minimally unsatisfiable clause-set is hitting, then it is saturated; Corollary \ref{cor:CharakSMUSATd1} proves the reverse for deficiency $1$ (which does not hold for higher deficiencies). Corollary \ref{cor:CharakSMUSATd1} shows that $\Smusati{\delta=1}$ covers some unsatisfiable regular hitting clause-sets --- in Corollary \ref{cor:CharakUnsatreghit} we will see that it actually covers all of them.

While saturated minimally unsatisfiable clause-sets do not allow addition of any literal occurrence to any clause without destroying the property of being minimally \emph{unsatisfiable}, on the other end of the spectrum we have \textbf{marginal minimally unsatisfiable clause-sets}, which are minimally unsatisfiable clause-sets such that removing any literal occurrence destroys the property of being \emph{minimally} unsatisfiable.

\begin{corol}\label{cor:CharakMMUSATd1}
  The class $\Mmusati{\delta=1}$ of marginal minimally unsatisfiable (generalised) clause-sets of deficiency $1$ is exactly the class of all $F \in \Musati{\delta=1}$ for which no further literal eliminations are possible, which is equivalent to the property that $F$ is totally singular (recall Subsection \ref{sec:matchautsub} from Part I), that is, for every variable $v \in \var(F)$ and every $\ve \in D_v$ we have $\#_{(v,\ve)}(F) = 1$.

  Thus $\Mmusati{\delta=1}$ is the set of totally singular elements of $\Musati{\delta=1}$. Actually, $\Mmusati{\delta=1}$ is the set of totally singular elements of all of $\Musat$.
\end{corol}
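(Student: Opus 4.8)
The plan is to prove Corollary \ref{cor:CharakMMUSATd1} by combining Theorem \ref{thm:CharakMUSATd1}(\ref{item:CharakMUSATd1b}) with the observations already recorded about literal elimination and totally singular clause-sets. First I would recall the setup: every $F \in \Musati{\delta=1}$ arises from some tree representation $F(T,r,v,\ve)$ by a (possibly empty) sequence of literal eliminations, where a literal elimination removes at least one literal occurrence without ever creating a pure variable. The key point is to identify precisely when \emph{no further} literal elimination is possible, i.e. when $F$ is marginal within $\Musati{\delta=1}$.

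The main chain of equivalences I would establish is: $F \in \Mmusati{\delta=1}$ $\Llra$ no literal occurrence can be removed from $F$ while staying in $\Musat$ $\Llra$ $F$ is totally singular, i.e. $\#_{(v,\ve)}(F) = 1$ for all $v \in \var(F)$ and all $\ve \in D_v$. For the direction ``totally singular $\Ra$ marginal'': if every literal occurs exactly once, then removing any literal occurrence $(v,\ve)$ from its unique clause $C$ makes value $\ve$ disappear from $F$, so $v$ becomes a pure variable; eliminating a pure variable is an autarky reduction, hence the resulting clause-set is satisfiability-equivalent but no longer minimally unsatisfiable (it is satisfiable after the pure-literal assignment is absorbed, or more precisely the reduced clause-set is not minimally unsatisfiable since a clause became redundant). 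So no literal removal preserves minimal unsatisfiability, giving marginality. Conversely, for ``marginal $\Ra$ totally singular'': suppose some literal $(v,\ve)$ has $\#_{(v,\ve)}(F) \ge 2$. Using the tree-representation description, I would argue that $F$ still admits a literal elimination — concretely, one picks a literal occurrence whose removal does not create a pure variable (which is possible exactly because that literal still occurs elsewhere and because $\delta$ is preserved, so the result stays in $\Musati{\delta=1}$ by Theorem \ref{thm:CharakMUSATd1}). Hence $F$ is not marginal, a contradiction. This also simultaneously shows the equivalence with ``no further literal eliminations are possible''.

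For the final sentence — that $\Mmusati{\delta=1}$ is exactly the set of totally singular elements of \emph{all} of $\Musat$ — I would argue both inclusions. The inclusion $\Mmusati{\delta=1} \sse \{$totally singular $F \in \Musat\}$ is immediate from the characterisation just proved. For the reverse, let $F \in \Musat$ be totally singular. By Corollary \ref{cor:musatdelta} (Part I) we have $\delta(F) \ge 1$. Since $F$ is totally singular, $c(F) = \ell(F) / (\text{average clause length})$ — more usefully, I would count: $\ell(F) = \sum_{v \in \var(F)} \#_v(F) = \sum_{v} |D_v|$ because every value of every variable occurs exactly once (using the convention from Subsection \ref{sec:satviamatch} that a pure variable has exactly one unused value — but a minimally unsatisfiable clause-set has no pure variables, so actually every value is used, giving $\#_v(F) = |D_v|$). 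On the other hand $\rd(F) = \sum_v (|D_v| - 1) = \ell(F) - n(F)$, so $\delta(F) = c(F) - \rd(F)$. I would then invoke the fact (from the discussion of totally singular clause-sets at the end of Subsection \ref{sec:matchautsub}, Part I) that for totally singular $F$ every autarky is a matching autarky, hence $\Musat$-members in this class are matching lean of minimal possible deficiency; combined with Corollary \ref{cor:TarsisMLean} and the structure of $B(F)$ (each clause-node of multiplicity one, each variable contributing $|D_v|-1$ variable-nodes, edges exactly the literal occurrences), one gets $\delta(F) = 1$. Thus $F \in \Musati{\delta=1}$ and is totally singular, so by the main equivalence $F \in \Mmusati{\delta=1}$.

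The main obstacle I anticipate is the last step: pinning down that a totally singular minimally unsatisfiable clause-set necessarily has deficiency exactly $1$, rather than merely $\ge 1$. The clean way is via the matching/forest characterisation — $F$ matching lean (which holds since $F \in \Musat \sse \Lean$ and, being totally singular, $\Lean(F) = \Mlean(F)$), so by Corollary \ref{cor:CharakMLEANWald} there is a spanning forest $\mc{F}$ of $B(F)$ covering all variable-nodes with every variable-node of degree $2$; a degree count on this forest (number of edges $= 2 \cdot \rd(F)$ restricted appropriately, versus vertices) forces $\delta(F) = 1$ when moreover every clause-node has degree matching total singularity. I would need to be careful that total singularity gives exactly the rigidity needed — each clause-node's incident edges correspond to the literals in that clause, each a singleton occurrence — so I expect a short but slightly delicate graph-theoretic argument here, essentially the one already sketched in the footnote near the end of Subsection \ref{sec:matchautsub}. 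Everything else is bookkeeping with definitions already in place.
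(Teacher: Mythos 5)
Your first two equivalences follow the paper exactly: ``totally singular $\Rightarrow$ marginal'' via pure-variable creation, and ``marginal $\Rightarrow$ totally singular'' via Theorem \ref{thm:CharakMUSATd1}(\ref{item:CharakMUSATd1b}) (a literal of multiplicity $\ge 2$ still admits a literal elimination that stays inside $\Musati{\delta=1}$). Both you and the paper quietly ignore the edge case of trivial variables (where removing the unique occurrence makes the variable vanish rather than become pure), but that is the paper's issue as much as yours.

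The real gap is in your argument for the final sentence, that a totally singular $F \in \Musat$ must have $\delta(F)=1$. You correctly note $\delta(F) \ge 1$, identify that total singularity collapses autarkies to matching autarkies, and conclude $F$ is matching lean; but then you try to extract $\delta(F)\le 1$ from Corollary \ref{cor:CharakMLEANWald}. That cannot work: the forest characterisation is exactly what proves $\delta(F)\ge 1$ for matching lean clause-sets (the forest $\mc{F}$ has $2\rd(F)$ edges and at most $c(F)+\rd(F)$ vertices, yielding $\rd(F)\le c(F)-1$), so it gives you the inequality you already have. Matching leanness plus total singularity alone does \emph{not} bound $\delta$ from above: the boolean clause-set $\{\{a\},\{\bar a\},\{b\},\{\bar b\}\}$ is totally singular, matching lean and has $\delta=2$. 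The upper bound requires minimal unsatisfiability of $F$, and the paper gets it cheaply: since $F$ is totally singular, satisfiability and matching satisfiability coincide for $F$ and all its (totally singular) strict subsets, so $F \in \Musat$ is equivalent to $F$ being minimally matching unsatisfiable, and Corollary \ref{cor:minmatunsat} then gives $F \in \Mleani{\delta=1}$ directly. You have all the pieces in hand (you even cite the passage asserting satisfiable $\Leftrightarrow$ matching satisfiable for totally singular clause-sets), but you route around Corollary \ref{cor:minmatunsat} into a counting argument that cannot close.
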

\begin{proof} If for a minimally unsatisfiable clause-set $F$ every literal in it occurs exactly once, then obviously it is marginal; by Theorem \ref{thm:CharakMUSATd1} we obtain the reverse direction (since if for some value more than one occurrence of a variable is left, then literal elimination is still applicable). 

If on the other hand a totally singular $F$ is minimally unsatisfiable, then it is minimally matching unsatisfiable, and thus by Corollary \ref{cor:minmatunsat} (Part I) it has deficiency $1$. \end{proof}

We remark that in \cite{Ku2003d} it is shown that in the boolean case the class of conflict graphs of $F \in \Musati{\delta=1}$ is exactly the class of all connected graphs, while the conflict graphs of saturated (boolean) $F \in \Musati{\delta=1}$ are exactly all complete graphs, and the conflict graphs of marginal (boolean) $F \in \Musati{\delta=1}$ are exactly all trees; furthermore a boolean element of $\Musati{\delta=1}$ is saturated resp.\ marginal iff the conflict graph is complete respectively a tree. And a totally singular multi-clause-set is minimally unsatisfiable iff its conflict multigraph is a tree.

Finally we remark that in general the decision problem whether a (generalised) clause-set is saturated resp.\ marginal is $D^P$-complete as shown in \cite{KleineBueningZhao2007ComplexitySomeSubclassesMU} (there for boolean clause-sets, which obviously immediately generalises).

\subsection{Stability parameter and minimal variable degree}
\label{sec:stabmmvd}

Let us conclude the chapter by some considerations which summarise certain observations we made, and which could serve as basis for further investigations.

For a multi-clause-set $F$ let $\bmm{\stabpar(F)}$, the \textbf{(substitution) stability parameter regarding irredundancy}, be the supremum in $\ZZ_{\ge -1} \cup \set{+\infty}$ of $n \in \NNZ$ such that for all $\vp \in \Pass$ with $n(\vp) \le n$ the multi-clause-set $\vp * F$ is irredundant. We have the following basic properties.
\begin{enumerate}
\item $\stabpar(F) = -1$ iff $F$ is redundant, $\stabpar(F) \ge 0$ iff $F$ is irredundant.
\item $\stabpar(F) = + \infty$ iff $\stabpar(F) \ge n(F)$ iff $F$ is a hitting clause-set (see Corollary \ref{cor:charakhitirr}).
\item Assume that $F$ is unsatisfiable. Then $\stabpar(F) \ge 1 \Ra F \in \Smusat$ by Corollary \ref{cor:saturierungmusatbelegung}, and for boolean $F$ we have equivalence.
\end{enumerate}
For $n(F) > 0$ let the \textbf{min-max var-degree} resp.\ the \textbf{minimal var-degree} be defined by
\begin{eqnarray*}
  \bmm{\mmvd}(F) & := & \min_{v \in \var(F)} \max_{\ve \in D_v} \: \#_{(v,\ve)}(F) \in \NN\\
  \bmm{\mvd}(F) & := & \min_{v \in \var(F)} \#_v(F) \in \NN.
\end{eqnarray*}
An upper bound $\mmvd(F) \le k$ says that there exists a variable $v$ such that for all values $\ve$ of $v$ we have at most $k$ occurrences of $v$ with that ``polarity''. So Lemma \ref{lem:GrundtatsacheDefektEins} now can be reformulated as the statement $\fa\, F \in \Musati{\delta=1} : F \not= \set{\bot} \Ra \mmvd(F) = 1$.

\begin{lem}\label{lem:mmvdobSchr}
  Consider a generalised multi-clause-set $F$ and a variable $v \in \var(F)$ which is not pure for $F$ (i.e., $\fa\, \ve \in D_v : \#_{(v,\ve)}(F) \ge 1$), such that $\#_v(F) = \mvd(F)$. Then we have for $\ve \in D_v$ the following.
  \begin{enumerate}
  \item\label{lem:mmvdobSchr1} $\delta(\pao{v}{\ve} * F) = \delta(F) - s_{(v,\ve)}(F) + \abs{D_v} - 1$. (Compare Lemma \ref{lem:HauptanwendungSurplus}, Part \ref{item:HauptanwendungSurplus2} from Part I.)
  \item\label{lem:mmvdobSchr2} Assume $\stabpar(F) \ge 1$ and that $F$ is unsatisfiable.
    \begin{enumerate}
    \item\label{lem:mmvdobSchr2a} $s_{(v,\ve)}(F) \le \delta(F) + \abs{D_v} - 2$.
    \item\label{lem:mmvdobSchr2b} If $v$ is non-trivial (i.e., $\abs{D_v} \ge 2$), then $\#_{(v,\ve)}(F) \le \delta(F)$.
    \end{enumerate}
  \end{enumerate}
\end{lem}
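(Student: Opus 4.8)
The plan is to prove Part \ref{lem:mmvdobSchr1} by direct bookkeeping of $c$ and $\rd$ under the partial assignment $\pao{v}{\ve}$, and then to obtain Parts \ref{lem:mmvdobSchr2a} and \ref{lem:mmvdobSchr2b} from it by combining with the generalised Tarsi bound (Corollary \ref{cor:musatdelta} from Part I). For Part \ref{lem:mmvdobSchr1} I would start from $c(\pao{v}{\ve} * F) = c(F) - s_{(v,\ve)}(F)$ (Subsection \ref{sec:semantics}, Part I). The one point that is not purely formal is that $\var(\pao{v}{\ve} * F) = \var(F) \sm \set{v}$, i.e.\ no other variable is lost; this is exactly where the hypothesis $\#_v(F) = \mvd(F)$ is used. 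For any $w \in \var(F) \sm \set{v}$ we have $\#_w(F) \ge \mvd(F) = \#_v(F) > s_{(v,\ve)}(F)$, the last inequality because $s_{(v,\ve)}(F) = \#_v(F) - \#_{(v,\ve)}(F)$ and $\#_{(v,\ve)}(F) \ge 1$ (as $v$ is not pure). Since exactly $s_{(v,\ve)}(F)$ clauses are satisfied by $\pao{v}{\ve}$, these cannot account for all $\#_w(F)$ occurrences of $w$, so $w$ survives. Hence $\rd(\pao{v}{\ve}*F) = \rd(F) - (\abs{D_v} - 1)$, and subtracting gives $\delta(\pao{v}{\ve}*F) = \delta(F) - s_{(v,\ve)}(F) + \abs{D_v} - 1$.

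For Part \ref{lem:mmvdobSchr2a} I would first observe that $F$ unsatisfiable forces $\pao{v}{\ve}*F$ unsatisfiable for every $\ve \in D_v$: an assignment satisfies $F$ iff it gives $v$ some value $\ve$ and its restriction satisfies $\pao{v}{\ve}*F$. By the hypothesis $\stabpar(F) \ge 1$ the clause-set $\pao{v}{\ve}*F$ is irredundant, hence minimally unsatisfiable, hence $\delta(\pao{v}{\ve}*F) \ge 1$ by Corollary \ref{cor:musatdelta} (this also covers the degenerate case $\pao{v}{\ve}*F = \set{\bot}$, whose deficiency is $1$). Substituting into Part \ref{lem:mmvdobSchr1} yields $1 \le \delta(F) - s_{(v,\ve)}(F) + \abs{D_v} - 1$, i.e.\ $s_{(v,\ve)}(F) \le \delta(F) + \abs{D_v} - 2$.

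For Part \ref{lem:mmvdobSchr2b}, using $\abs{D_v} \ge 2$ pick $\ve' \in D_v \sm \set{\ve}$. Then $s_{(v,\ve')}(F) = \sum_{\ve'' \in D_v \sm \set{\ve'}} \#_{(v,\ve'')}(F)$ is a sum of $\abs{D_v} - 1$ nonnegative terms, one of which is $\#_{(v,\ve)}(F)$ and the remaining $\abs{D_v} - 2$ of which are each at least $1$ (as $v$ is not pure), so $s_{(v,\ve')}(F) \ge \#_{(v,\ve)}(F) + \abs{D_v} - 2$. Applying Part \ref{lem:mmvdobSchr2a} to the value $\ve'$ then gives $\#_{(v,\ve)}(F) + \abs{D_v} - 2 \le s_{(v,\ve')}(F) \le \delta(F) + \abs{D_v} - 2$, whence $\#_{(v,\ve)}(F) \le \delta(F)$.

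I expect the main obstacle to be the verification in Part \ref{lem:mmvdobSchr1} that $v$ is the only variable removed by $\pao{v}{\ve}$ — this is the sole place the minimal-variable-degree hypothesis is genuinely needed, and it is easy to take for granted; once it is settled, the remaining steps are routine arithmetic together with the already-established facts about the deficiency of minimally unsatisfiable clause-sets.
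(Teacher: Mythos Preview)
Your proposal is correct and follows essentially the same route as the paper's proof: the key observation that only $v$ is lost under $\pao{v}{\ve}$ (via $\#_w(F) \ge \mvd(F) = \#_v(F) > s_{(v,\ve)}(F)$), then $\delta(\pao{v}{\ve}*F) \ge 1$ from minimal unsatisfiability, and finally the choice of $\ve' \ne \ve$ together with $s_{(v,\ve')}(F) \ge \#_{(v,\ve)}(F) + \abs{D_v} - 2$. Your write-up is in fact more explicit than the paper's on each of these points.
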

\begin{proof} Part \ref{lem:mmvdobSchr1} follows by the observation that $\var(\pao{v}{\ve} * F) = \var(F) \sm \set{v}$ (if another variable $w$ would vanish, then every occurrence of $w$ would be in a clause $C$ with some $(v,\ve') \in C$ for $\ve' \in D_v \sm \set{\ve}$, and so $\#_w(F) \le s_{(v,\ve)}(F) = \#_v(F) - \#_{(v,\ve)}(F) < \#_v(F) = \mvd(F) \le \#_w(F)$). For Part \ref{lem:mmvdobSchr2} we have $\delta(\pao{v}{\ve} * F) \ge 1$, and thus Part \ref{lem:mmvdobSchr2a} follows. For Part \ref{lem:mmvdobSchr2b} consider $\ve' \in D_v \sm \set{\ve}$. By Part \ref{lem:mmvdobSchr2a} we have $s_{(v,\ve')}(F) \le \delta(F) + \abs{D_v} - 2$, where $s_{(v,\ve')}(F) = \#_{(v,\ve)}(F) + \sum_{\ve'' \in D_v \sm \set{\ve,\ve'}} \#_{\ve''}(F) \ge \#_{(v,\ve)}(F) + \abs{D_v} - 2$. \end{proof}

\begin{corol}\label{cor:mmvdobSchr}
  For an unsatisfiable generalised clause-set $F$ with $\stabpar(F) \ge 1$, $n(F) > 0$, we have 
  \begin{displaymath}
    \mmvd(F) \le \delta(F).
  \end{displaymath}
\end{corol}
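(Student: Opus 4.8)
The claim is nearly immediate from Lemma \ref{lem:mmvdobSchr}, Part \ref{lem:mmvdobSchr2b}; the plan is just to pick the right variable to which to apply that lemma, and to absorb the degenerate case of trivial-domain variables.

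First I would record that the hypotheses put $F$ into $\Musat$: from $\stabpar(F) \ge 1 \ge 0$ the clause-set $F$ is irredundant, and an unsatisfiable irredundant clause-set is minimally unsatisfiable. Since $\Musat \subset \Lean$ and a pure variable would supply a non-trivial pure autarky, $F$ has no pure variables, so \emph{every} variable of $F$ is non-pure. Now pick $v \in \var(F)$ with $\#_v(F) = \mvd(F)$ (possible since $n(F) > 0$). If $v$ is non-trivial ($\abs{D_v} \ge 2$), then all hypotheses of Lemma \ref{lem:mmvdobSchr} hold ($v$ non-pure, $\#_v(F) = \mvd(F)$, $F$ unsatisfiable, $\stabpar(F) \ge 1$), so its Part \ref{lem:mmvdobSchr2b} yields $\#_{(v,\ve)}(F) \le \delta(F)$ for every $\ve \in D_v$, and therefore $\mmvd(F) \le \max_{\ve \in D_v} \#_{(v,\ve)}(F) \le \delta(F)$.

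The remaining case is that the minimal-degree variable is trivial. I would handle it by repeatedly applying trivial-domain reduction, removing all trivial variables and arriving at some $F_0$. Each such step is satisfiability-preserving (an always-false literal is deleted), merges no clauses of an irredundant clause-set (a merge would display one clause as a superset of another, contradicting irredundancy), and — by Corollary \ref{cor:TatsachenRedundanz2}, using that an always-false literal may be freely added to or removed from a clause without changing its falsifying assignments — keeps the clause-set irredundant, hence in $\Musat$; it also preserves $\delta$ (a trivial variable contributes nothing to $\rd$ and its removal leaves the clause count unchanged), preserves $\stabpar \ge 1$ (commute any $\psi$ with $n(\psi) \le 1$ past the trivial-domain assignment and reuse $\stabpar(F) \ge 1$ together with preservation of irredundancy), and satisfies $\mmvd(F) \le \mmvd(F_0)$ whenever $\var(F_0) \ne \es$ (the surviving variables keep their occurrence profiles, and a trivial variable's maximum-over-values count equals its plain degree). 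Thus if $\var(F_0) \ne \es$, then $F_0$ falls under the non-trivial case above and $\mmvd(F) \le \mmvd(F_0) \le \delta(F_0) = \delta(F)$; and if $\var(F_0) = \es$, then $F_0 = \set{\bot}$ with $c(F) = c(F_0) = 1$, so $F = \set{C}$ with all variables of $C$ trivial, whence $\delta(F) = 1 = \mmvd(F)$. The only place where care is needed is this trivial-variable bookkeeping — the substantive content sits entirely in the already-established Lemma \ref{lem:mmvdobSchr}.
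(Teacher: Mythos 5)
Your proof is correct and takes the same route as the paper's: eliminate trivial variables and then invoke Lemma \ref{lem:mmvdobSchr}, Part \ref{lem:mmvdobSchr2b}. The paper's own proof is a one-liner that silently relies on the facts you verify explicitly — that trivial-domain reduction on a minimally unsatisfiable clause-set merges no clauses (so $c$ and $\delta$ are preserved), that it preserves $\stabpar \ge 1$, and that the degenerate all-trivial case $\var(F_0)=\es$ is harmless — so your bookkeeping is a faithful expansion of the intended argument rather than a different one.
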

\begin{proof} Eliminating all trivial variables from $F$ we obtain the clause-set $F'$ with $\delta(F') = \delta(F)$ and $c(F') = c(F)$; now the assertion follows by Part \ref{lem:mmvdobSchr2b} of Lemma \ref{lem:mmvdobSchr}. \end{proof}

Since every minimally unsatisfiable (generalised) clause-set can be saturated (see Lemma \ref{lem:Saturierung}), and every boolean saturated minimally unsatisfiable clause-set $F$ fulfils $\stabpar(F) \ge 1$, we get for arbitrary \emph{boolean} $F \in \Musat$ the upper bound $\mmvd(F) \le \delta(F)$ (as shown in \cite{Ku99dKo}). For generalised minimally unsatisfiable clause-sets we showed this upper bound in Lemma \ref{lem:GrundtatsacheDefektEins} for the simplest case $\delta(F) = 1$, while the general case is open.

\begin{corol}\label{cor:mvdobSchr}
  Consider an unsatisfiable generalised clause-set $F$ with $\stabpar(F) \ge 1$ and $n(F) > 0$, and let $D \in \NN$ be the maximal domain-size. Then we have
  \begin{displaymath}
    \mvd(F) \le D \cdot \delta(F).
  \end{displaymath}
\end{corol}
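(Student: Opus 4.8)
The plan is to combine Corollary~\ref{cor:mmvdobSchr} with an elementary counting argument relating $\mvd(F)$ and $\mmvd(F)$ through the domain size $D$. First I would recall what $\mmvd(F) \le \delta(F)$ gives us: there is a variable $v \in \var(F)$ such that for every value $\ve \in D_v$ we have $\#_{(v,\ve)}(F) \le \delta(F)$. The point is simply that $\#_v(F) = \sum_{\ve \in D_v} \#_{(v,\ve)}(F)$, so for this particular variable $v$ we get $\#_v(F) \le \abs{D_v} \cdot \delta(F) \le D \cdot \delta(F)$, using that $D$ is the maximal domain-size. Since $\mvd(F) = \min_{w \in \var(F)} \#_w(F) \le \#_v(F)$, the bound $\mvd(F) \le D \cdot \delta(F)$ follows.

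The only subtlety is that Corollary~\ref{cor:mmvdobSchr} is stated for clause-sets with all trivial variables eliminated (its own proof reduces to this case via $F'$ with $\delta(F') = \delta(F)$, $c(F') = c(F)$), so I would be slightly careful about what happens if $F$ has trivial variables. If $v$ is trivial, then $D_v = \set{\ve}$ is a singleton and $\#_v(F) = \#_{(v,\ve)}(F)$; such a variable can only decrease $\mvd(F)$, and in any case $\abs{D_v} = 1 \le D$, so the inequality $\#_v(F) \le \abs{D_v}\cdot \text{(something)}$ is not the issue. The cleanest route: let $F'$ be $F$ with all trivial variables removed (this does not change $\delta$ and does not change any clause, hence does not change $\stabpar$ either, since removing a one-element-domain variable from every clause is just a bijective relabelling of clauses). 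Apply Corollary~\ref{cor:mmvdobSchr} to $F'$ to obtain a non-trivial $v \in \var(F')$ with $\#_{(v,\ve)}(F') \le \delta(F')=\delta(F)$ for all $\ve \in D_v$; then $\#_v(F') = \sum_{\ve\in D_v}\#_{(v,\ve)}(F') \le \abs{D_v}\delta(F) \le D\,\delta(F)$. Finally note $\#_v(F') = \#_v(F)$ (the literal occurrences with underlying variable $v$ are unchanged) and $\mvd(F) \le \#_v(F)$, since $v$ is still a variable of $F$. If $\var(F')=\es$, i.e.\ all variables are trivial, then $F \in \set{\top,\set{\bot}}$, contradicting $n(F)>0$ together with unsatisfiability in a way that makes the statement vacuous — I would just remark that this degenerate case does not arise.

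I do not expect a genuine obstacle here; this is a short wrap-up corollary. The one place to be attentive is making sure the hypotheses of Corollary~\ref{cor:mmvdobSchr} (unsatisfiability, $\stabpar(F)\ge 1$, $n(F)>0$) transfer to $F'$: unsatisfiability is clear since trivial-variable elimination is a satisfiability-equivalent (indeed clause-preserving) operation; $\stabpar(F')\ge 1$ holds because $\stabpar$ depends only on the irredundancy behaviour of $\vp * F$ under assignments, and these are unaffected by removing one-element-domain variables (any assignment to a trivial variable is forced and changes nothing); and $n(F')>0$ holds unless all variables of $F$ are trivial, which as noted is impossible here. With these checks in place the computation $\#_v(F) \le \abs{D_v}\,\delta(F) \le D\,\delta(F)$ finishes the proof.

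\begin{proof}
Let $F'$ be obtained from $F$ by eliminating all trivial variables (i.e.\ variables with one-element domain); as in the proof of Corollary~\ref{cor:mmvdobSchr} this leaves every clause intact up to removal of forced literals, so $\delta(F') = \delta(F)$, $c(F') = c(F)$, $F'$ is still unsatisfiable, and $\stabpar(F') \ge 1$. Moreover $n(F') > 0$: otherwise all variables of $F$ would be trivial and $F$ would equal $\top$ or $\set{\bot}$, contradicting $n(F) > 0$ for an unsatisfiable $F$. By Corollary~\ref{cor:mmvdobSchr} applied to $F'$ there is a non-trivial variable $v \in \var(F')$ with $\#_{(v,\ve)}(F') \le \delta(F') = \delta(F)$ for all $\ve \in D_v$. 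Hence
\begin{displaymath}
  \#_v(F) = \#_v(F') = \sum_{\ve \in D_v} \#_{(v,\ve)}(F') \le \abs{D_v} \cdot \delta(F) \le D \cdot \delta(F),
\end{displaymath}
using that $v$ has the same literal occurrences in $F$ and $F'$, and that $\abs{D_v} \le D$. Since $v \in \var(F)$, we conclude $\mvd(F) \le \#_v(F) \le D \cdot \delta(F)$.
\end{proof}
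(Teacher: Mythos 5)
The paper does not give a proof for this corollary, but it is clearly meant to follow in the obvious way from Corollary~\ref{cor:mmvdobSchr}, and your main calculation does exactly that: take the variable $v$ witnessing $\mmvd(F)\le\delta(F)$, sum over values, and bound $\abs{D_v}$ by $D$. That part is correct.

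Two remarks. First, the detour through $F'$ is actually unnecessary here, and your first paragraph already shows why: Corollary~\ref{cor:mmvdobSchr} is stated for arbitrary $F$ (the trivial-variable elimination happens inside its proof, not in its hypotheses), so you may directly pick $v \in \var(F)$ with $\max_{\ve}\#_{(v,\ve)}(F) \le \delta(F)$. Whether or not $v$ is trivial, $\#_v(F) \le \abs{D_v}\cdot\delta(F) \le D\cdot\delta(F)$, since $\abs{D_v}\ge 1$ always and $\abs{D_v}\le D$. So the worry you then spend the remainder of the argument on dissolves before it starts.

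Second, the specific claim you make in disposing of the degenerate case is false: if all variables of $F$ are trivial it does not follow that $F\in\set{\top,\set{\bot}}$. For instance $F=\setb{\set{(v,\ve)}}$ with $D_v=\set{\ve}$ has $n(F)=1>0$, is unsatisfiable, and satisfies $\stabpar(F)\ge 1$ (applying $\pao{v}{\ve}$ yields $\set{\bot}$, which is irredundant). What \emph{is} true is that $\stabpar(F)\ge 1$ forces an all-trivial unsatisfiable $F$ to consist of a single clause (any second clause would make the first redundant, as its removal still leaves an unsatisfiable set), and in that case $\mvd(F)=1$, $D=1$, $\delta(F)=c(F)-\rd(F)=1-0=1$, so the bound holds — but this requires a one-line check, not the incorrect dismissal you wrote. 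Since the simpler direct argument avoids the case split entirely, I would just drop the $F'$ machinery and state the one-line conclusion from Corollary~\ref{cor:mmvdobSchr}.
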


For boolean clause-sets, a stronger and also more general bound is shown in \cite{KullmannZhao2010Bounds}.

\section{The conflict structure of generalised clause-sets}
\label{sec:construgcls}

In this final section we conclude the basic combinatorial theory of generalised clause-sets by regarding their conflict structure, continuing \cite{Ku2003c,Ku2003e,GalesiKullmann2003bHermitian,Ku2003d}. A fundamental tool again is translation to boolean clause-sets, but introducing here a new translation in Subsection \ref{sec:newtrans}, which, different from the direct translation, leaves the conflict structure invariant. Then in Subsection \ref{sec:Regularhitting} the basic facts from linear algebra regarding the ``hermitian rank/defect'' are summarised and generalised to generalised clause-sets, discussing the close relation to the theory of multiclique partitions of multigraphs and the applications to regular hitting clause-sets.

Recall from Subsection \ref{sec:conflictstructure} (Part I) that an $r$-regular hitting clause-set $F$ is a (generalised) clause-set $F$ such that any two different clauses clash in exactly $r \in \NNZ$ clauses. Using the \textbf{conflict multigraph} $\cmg(F)$ of $F$, the multigraph with the clauses of $F$ as vertices and as many edges between vertices as the clauses have conflicts, we see that $F$ is $r$-regular hitting iff $\cmg(F)$ is isomorphic to $r \cdot K_{c(F)}$, where $K_m$ denotes the complete graph with $m$ vertices, while the factor $r$ makes $r$ parallel edges out of every edge. The \textbf{conflict matrix} $\scf(F)$ of $F$ is the adjacency matrix of $\cmg(F)$, that is, the square matrix of order $c(F)$ where the entry at position $(i,j)$ denotes the number of clashes between clauses $i$ and $j$. Every conflict matrix is symmetric, non-negative with a zero diagonal and with integral entries; $F$ is $r$-regular hitting iff all entries of $\scf(F)$ except of diagonal-entries are equal to $r$.

\subsection{A new boolean translation}
\label{sec:newtrans}

We introduce a new translation of generalised multi-clause-sets $F$ into boolean clause-sets $F \mapsto \ntrans(F)$, called the \textbf{nested translation}, with the advantage over the direct translation (recall Section \ref{sec:translating}) that the conflict structure is preserved, that is, we have $\scf(F) = \scf(\ntrans(F))$; the price we have to pay here is the introduction of a ``nesting structure'', which breaks the symmetry between the different values of a variable.

Consider a variable $v$ with domain $D_v = \set{\ve_1, \dots, \ve_k}$ (where $k = \abs{D_v}$). The conflict multigraph of the clause-set $\set{\set{v \not= \ve_i} : i \in \tb 1k} \in \Smusati{\delta=1}$ is isomorphic to the complete graph $K_k$ with $k$ vertices. The nested translation is based on the following exact realisation of this conflict graph by a boolean clause-set\footnote{other minimally unsatisfiable realisations lead to other translations, which could be used here as well; see Subsection \ref{sec:opentranslations}}, namely by the minimally unsatisfiable boolean Horn clause-set $H_v = \set{E_{v,1}, \dots, E_{v,k}} \in \Smusati{\delta=1}$ with $k-1$ variables $v_1, \dots, v_{k-1}$, where
\begin{displaymath}
  E_{v,i} := 
  \begin{cases}
    \set{\ol{v_1}, \dots, \ol{v_{i-1}}, v_i} & \text{if } i < k\\
    \set{\ol{v_1}, \dots, \ol{v_{k-1}}} & \text{if } i = k
  \end{cases}.
\end{displaymath}
For example for $k=4$ we have the $3$ boolean variables variables $v_1,v_2,v_3$ and the clauses $E_{v,1} = \set{v_1}$, $E_{v,2} = \set{\ol{v_1},v_2}$, $E_{v,3} = \set{\ol{v_1},\ol{v_2},v_3}$ and $E_{v,4} = \set{\ol{v_1},\ol{v_2},\ol{v_3}}$.

Now the nested translation $F \mapsto \ntrans(F)$ replaces iteratively one variable $v$ with domain size $k_v = \abs{D_v} > 2$ by boolean variables $v_1, \dots, v_{k_v-1}$, without changing the number of clauses or their conflict structure, but enlarging clauses. The replacement of a single variable $v$, a transition $F \mapsto F_v$ repeated for all variables with at least three values, happens as follows:
\begin{itemize}
\item For $i \in \tb 1k$ let $F_i$ be the sub-multi-clause-set of $F$ collecting all occurrences of literal $(v,\ve_i)$ (thus $c(F_i) = \#_{(v,\ve_i)}(F)$).
\item Consider new boolean variables $v_1, \dots, v_{k-1}$ and the Horn clause-set $H_v = \set{E_{v,1}, \dots, E_{v,k}}$ as above; the meaning of $v_i$ is related to ``$v \not= \ve_i$''.
\item Now for $i \in \tb 1k$ replace each clause $C \in F_i$ by clause $(C \sm \set{(v,\ve_i)}) \cup E_{v,i}$ (replacing the literal $(v,\ve_i)$ by the literals from $E_{v,i}$).
\item For the obtained multi-clause-set $F_v$ by definition we have 
  \begin{eqnarray*}
    c(F_v) & = & c(F)\\
    \scf(F_v) & = & \scf(F),
  \end{eqnarray*}
  and if $F$ does not contain pure variables, then we have $\delta(F_v) = \delta(F)$, while in general we have $\delta(F_v) \ge \delta(F)$.
\end{itemize}
In order to keep $\ntrans(F)$ small (that is, to keep the number of literal occurrences small), one can order the domains $D_v$ according to descending number of occurrences (so that the bigger clause-enlargements occur less often), however in our general context such considerations are not of importance. We consider now a simple example, using two variables $a, b$ with $D_a = \set{0,1,2,3}$ and $D_b = \set{0,3,2,1}$ (using these orderings for the translation), while
\begin{eqnarray*}
  F & := & \setb{ \set{(a,0)},\: \set{(b,0)},\: \set{(a,0),(b,0)},\\
    && \quad \quad \quad \set{(a,1),(b,1)},\: \set{(a,2),(b,2)},\: \set{(a,3),(b,3)} }.
\end{eqnarray*}
We have $c(F) = 6$, $n(F) = 2$, $\rd(F) = 2 \cdot (4-1) = 6$ and $\delta(F) = 6 - 6 = 0$. The new variables are $v_0, v_1, v_2$ for $a$, and $w_0, w_1, w_2$ for $b$, and the translated clause-set is
\begin{multline*}
  \ntrans(F) = \setb{
    \set{v_0}, \set{w_0}, \set{v_0, w_0},\\
    \set{\ol{v_0}, v_1, \ol{w_0}, \ol{w_1}, \ol{w_2}},
    \set{\ol{v_0}, \ol{v_1}, v_2, \ol{w_0}, \ol{w_1}, w_2},
    \set{\ol{v_0}, \ol{v_1}, \ol{v_2}, \ol{w_0}, w_1}
  }.
\end{multline*}
We have $c(\ntrans(F)) = 6$, $n(\ntrans(F)) = 6$ and $\delta(\ntrans(F)) = 0$. Note that $\delta^*(F) = 0$ (since $F$ is matching satisfiable), but $\delta^*(\ntrans(F)) = 1$, while the matching-lean kernel of $\ntrans(F)$ is $\set{\set{v_0}, \set{w_0}, \set{v_0, w_0}}$.

The idea for this translation is taken from the reduction of multiclique partitions to biclique partitions as observed in \cite{GregoryMeulen1996RPartiteDecompositions} --- another example, where a simple graph-theoretical observation extends to a fuller statement of (generalised) logic. Regarding the properties of the nested translation, we focus on the most basic properties here. We already noticed that we always have $c(\ntrans(F)) = c(F)$, while for the number of formal variables (which might not occur) we have $n(\ntrans(F)) = \rd(F)$. By definition we get:
\begin{lem}\label{lem:conntransalg}
  For multi-clause-sets $F_1, F_2$ we have
  \begin{displaymath}
    \ntrans(F_1 + F_2) = \ntrans(F_1) + \ntrans(F_2),
  \end{displaymath}
  given the same choice of translation variables and the same nesting order. Thus for a multi-clause-set $F$ and $F' \le F$ we have $\ntrans(F') \le \ntrans(F)$.
\end{lem}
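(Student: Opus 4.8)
\textbf{Proof plan for Lemma \ref{lem:conntransalg}.}

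The plan is to unwind the definition of the nested translation and verify that each of the two operations it is built from -- collecting literal occurrences by value, and substituting the Horn-clause block $E_{v,i}$ for the literal $(v,\ve_i)$ -- distributes over the addition $+$ of multi-clause-sets. Fix the common choice of translation variables and the common nesting order, so that the sequence of single-variable replacements $F \mapsto F_v$ is the same whether applied to $F_1$, to $F_2$, or to $F_1 + F_2$. Since $\ntrans$ is obtained by iterating the single-variable step $F \mapsto F_v$ over the (fixed, common) list of variables with domain size $\ge 3$, it suffices by an easy induction on the length of that list to prove the distributivity for one such step: $(F_1 + F_2)_v = (F_1)_v + (F_2)_v$.

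For the single-variable step, recall that $F_i$ is defined as the sub-multi-clause-set of $F$ collecting all occurrences of the literal $(v,\ve_i)$, i.e. $F_i(C) = F(C)$ if $(v,\ve_i) \in C$ and $F_i(C) = 0$ otherwise. This ``selection by a literal'' is visibly additive in $F$: $(F_1 + F_2)_i = (F_1)_i + (F_2)_i$ as multi-clause-sets, because $(F_1 + F_2)(C) = F_1(C) + F_2(C)$ for every $C$ (recall the definition of $F_1 + F_2$ in Subsection \ref{sec:notmulticlausesets}). Next, the replacement map sending a clause $C$ with $(v,\ve_i) \in C$ to the clause $\rho_i(C) := (C \sm \set{(v,\ve_i)}) \cup E_{v,i}$ is \emph{injective} on clauses containing $(v,\ve_i)$ (one recovers $C$ from $\rho_i(C)$ by deleting the literals of $E_{v,i}$ and re-adding $(v,\ve_i)$, using that the freshly introduced boolean variables $v_1,\dots,v_{k-1}$ do not occur in the original clauses), and different values $i \ne i'$ produce clauses with different occurrence-patterns on the block $v_1,\dots,v_{k-1}$, so the images for distinct $i$ are disjoint. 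Hence pushing a multi-clause-set forward along $\rho_i$ simply relabels clauses without any contraction, and is therefore additive; summing over $i \in \tb 1k$ and noting that the clauses of $F$ not touching $v$ are carried through unchanged (again additively), we get $(F_1 + F_2)_v = (F_1)_v + (F_2)_v$.

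Chaining these single-variable steps through the common nesting order yields $\ntrans(F_1 + F_2) = \ntrans(F_1) + \ntrans(F_2)$, which is the displayed equation. The final assertion, that $F' \le F$ implies $\ntrans(F') \le \ntrans(F)$, follows at once: writing $F = F' + (F - F')$ (legitimate since $F' \le F$, by the definition of $F - F'$ in Subsection \ref{sec:notmulticlausesets}) and applying the just-proved additivity, $\ntrans(F) = \ntrans(F') + \ntrans(F - F') \ge \ntrans(F')$ because $\ntrans(F - F')$ is a multi-clause-set with non-negative coefficients. There is no real obstacle here; the only point needing a little care -- and the one I would state explicitly rather than wave at -- is the claim that the clause-replacement steps introduce \emph{no contraction}, i.e. that distinct original clauses never collapse to the same translated clause. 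This is exactly where one uses that the translation variables are fresh and that the nesting blocks $E_{v,i}$ for distinct $i$ (and distinct $v$) are pairwise ``distinguishable'' on their variable sets, so that injectivity is preserved through the whole iteration. Once that is granted, everything else is a direct unwinding of definitions.
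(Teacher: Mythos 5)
Your proof is correct and fills in the details carefully; the paper itself treats the lemma as an immediate consequence of the definitions (it is introduced with ``By definition we get:'' and no proof is given), so unwinding the definitions is indeed the only thing to do, and you do it accurately. The reduction to the single-variable step $F \mapsto F_v$, the decomposition by value, the observation that clauses untouched by $v$ pass through, and the derivation of monotonicity from $F = F' + (F - F')$ are all exactly right.

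One small correction of emphasis: you single out the ``no contraction'' claim (injectivity of $\rho_i$, disjointness of images across $i$) as the one point needing explicit care, but for the \emph{additivity} assertion this hypothesis is not actually needed. The single-variable step is a pushforward of a multi-clause-set along a well-defined map $\rho$ on clauses, $F_v(D) = \sum_{C:\,\rho(C)=D} F(C)$, and such a pushforward distributes over $+$ for \emph{any} $\rho$, injective or not, simply because the outer sum commutes with coefficientwise addition; contraction would only become an issue if one passed to clause-\emph{sets} via $\tmc$, which the lemma does not do. The injectivity and disjointness observations are true and do matter elsewhere in this subsection --- they are implicitly used for $\scf(\ntrans(F)) = \scf(F)$ and for $c(F_v) = c(F)$ being read off clause-by-clause --- so they are worth recording, but they are a sanity check here rather than the load-bearing step.
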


\begin{lem}\label{lem:EiggTrans}
  For a generalised multi-clause-set $F$ we have:
  \begin{enumerate}
  \item\label{lem:EiggTrans1} $\scf(\ntrans(F)) = \scf(F)$.
  \item\label{lem:EiggTrans2} $\delta(\ntrans(F)) \ge \delta(F)$; if $F$ has no pure variables, then $\delta(\ntrans(F)) = \delta(F)$.
  \item\label{lem:EiggTrans3} $\ntrans(F)$ is satisfiable if and only if $F$ is satisfiable.
  \item\label{lem:EiggTrans4}  $\ntrans(F)$ is minimally unsatisfiable if and only if $F$ is minimally unsatisfiable.
  \end{enumerate}
\end{lem}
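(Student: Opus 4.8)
The plan is to reduce all four parts to a single combinatorial observation about the Horn clause-sets $H_v = \{E_{v,1},\dots,E_{v,k}\}$ ($k=\abs{D_v}$) used in the construction: \emph{any} total Boolean assignment $\beta$ to $v_1,\dots,v_{k-1}$ falsifies exactly one of the clauses $E_{v,i}$. It falsifies at least one because $H_v$ is unsatisfiable, and it cannot falsify two: if it falsified $E_{v,i}$ and $E_{v,j}$ with $i<j$, then (since $i\le j-1$, the clause $E_{v,j}$ contains the literal $\ol{v_i}$) we would need $\beta(v_i)=0$ from $E_{v,i}$ falsified and $\beta(v_i)=1$ from $E_{v,j}$ falsified, a contradiction. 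Dually, for each $i$ the explicit assignment $\beta_i$ with $\beta_i(v_m)=1$ for $m<i$ and $\beta_i(v_m)=0$ for $m\ge i$ falsifies $E_{v,i}$ and satisfies every $E_{v,j}$ with $j\ne i$. (A by-product, used for Part~\ref{lem:EiggTrans1}, is that two distinct $E_{v,i},E_{v,j}$ clash in precisely one literal-pair, namely $v_{\min(i,j)}/\ol{v_{\min(i,j)}}$, i.e.\ $H_v$ is $1$-regular hitting, which is anyway in line with $H_v\in\Smusati{\delta=1}$.)

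For Part~\ref{lem:EiggTrans1} I would trace a single replacement step $F\mapsto F_v$ and verify, for each pair of clause-occurrences $C,C'$, that the number of clashing literal-pairs is unchanged: clashes on variables other than $v$ are untouched; if $C$ and $C'$ carry the same literal $(v,\ve_i)$, or at most one of them carries a $v$-literal, there is no clash on $v$ beforehand and none on the new variables afterwards (identical $E_{v,i}$'s, or one side has no new-variable literal); and if they carry $(v,\ve_i)\ne(v,\ve_j)$, the single clash on $v$ is replaced by the single clash between $E_{v,i}$ and $E_{v,j}$. Iterating over all non-Boolean variables gives $\scf(\ntrans(F))=\scf(F)$. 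For Part~\ref{lem:EiggTrans2}: no replacement step changes the clause count, so $c(\ntrans(F))=c(F)$; the number of new Boolean variables that actually occur is at most $\sum_{v\in\var(F)}(\abs{D_v}-1)=\rd(F)$, whence $\delta(\ntrans(F))=c(F)-n(\ntrans(F))\ge c(F)-\rd(F)=\delta(F)$; and if $F$ has no pure variables then for every $v$ and every $i\le\abs{D_v}-1$ the sub-clause-set $F_i$ collecting the occurrences of $(v,\ve_i)$ is non-empty, so $E_{v,i}$ occurs and thus $v_i$ occurs, giving $n(\ntrans(F))=\rd(F)$ and hence equality.

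For Part~\ref{lem:EiggTrans3} I would argue both directions, assuming (harmlessly) that the assignments are total on the relevant variable sets. Given $\vp$ with $\vp*F=\top$: define $\psi$ to agree with $\vp$ on the variables the translation leaves unchanged, and, for each non-Boolean $v$ with $\vp(v)=\ve_{i(v)}$, to equal $\beta_{i(v)}$ on $v_1,\dots,v_{\abs{D_v}-1}$. For a clause $D\in\ntrans(F)$ obtained from $C\in F$, pick a literal of $C$ satisfied by $\vp$: if it is an unchanged literal it survives into $D$ and $\psi$ satisfies it; if it is $(v,\ve_\ell)$ with $\vp(v)=\ve_{i(v)}\ne\ve_\ell$, then $D$ contains the whole block $E_{v,\ell}$, and $\beta_{i(v)}$ satisfies $E_{v,\ell}$ because $\ell\ne i(v)$. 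Conversely, given $\psi$ with $\psi*\ntrans(F)=\top$: for each non-Boolean $v$ let $i(v)$ be the \emph{unique} index with $E_{v,i(v)}$ falsified by $\psi$ (this is where the key observation is used, making the read-off well-defined), put $\vp(v):=\ve_{i(v)}$, and let $\vp$ agree with $\psi$ on the unchanged variables. For $C\in F$ with translated clause $D$, pick a literal of $D$ satisfied by $\psi$: an unchanged literal transfers directly to $C$; a new-variable literal lies in some block $E_{v,\ell}$ with $(v,\ve_\ell)\in C$, and its being satisfied means $E_{v,\ell}$ is not the falsified block, so $\ell\ne i(v)$, hence $\vp(v)\ne\ve_\ell$ and the literal $(v,\ve_\ell)$ of $C$ is satisfied.

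For Part~\ref{lem:EiggTrans4} I would first record that the clause-map $C\mapsto\ntrans(C)$ is injective: from $\ntrans(C)$ one recovers the unchanged literals directly and, for each $v$, reads off from the literals on $v_1,\dots,v_{\abs{D_v}-1}$ (if any occur) which block $E_{v,i}$ is present, distinct blocks being distinct clauses. Combined with additivity (Lemma~\ref{lem:conntransalg}), every $G\le\ntrans(F)$ is $\ntrans(F')$ for a unique $F'\le F$ with the same multiplicities, and $G\lneq\ntrans(F)$ exactly when $F'\lneq F$. Then: if $F\in\Musat$, Part~\ref{lem:EiggTrans3} gives $\ntrans(F)\in\Usat$, and any $G\lneq\ntrans(F)$ equals $\ntrans(F')$ with $F'\lneq F$ satisfiable, hence $G$ is satisfiable by Part~\ref{lem:EiggTrans3}; conversely, if $\ntrans(F)\in\Musat$ then $F\in\Usat$ by Part~\ref{lem:EiggTrans3}, and for $F'\lneq F$ we have $\ntrans(F')\lneq\ntrans(F)$ satisfiable, so $F'$ is satisfiable. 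The main obstacle is Part~\ref{lem:EiggTrans3}, and inside it the ``exactly one falsified clause of $H_v$'' fact that makes the backward read-off $\psi\mapsto\vp$ a well-defined map; once that is in hand, Parts~\ref{lem:EiggTrans1}, \ref{lem:EiggTrans2} and~\ref{lem:EiggTrans4} are bookkeeping built on the definition of the nested translation and on Lemma~\ref{lem:conntransalg}.
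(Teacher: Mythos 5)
Your proposal is correct and follows essentially the same route as the paper's proof: the paper cites (without spelling out the verification) the observations $\scf(F_v)=\scf(F)$, $c(F_v)=c(F)$, and $\delta(F_v)\ge\delta(F)$ from the definition of the translation for Parts~1--2, uses the same explicit translation of assignments in both directions for Part~3 (your ``unique falsified $E_{v,i}$'' read-back coincides exactly with the paper's ``smallest $i$ with $\vp(v_i)=0$''), and deduces Part~4 from Part~3 together with Lemma~\ref{lem:conntransalg}. The only cosmetic deviation is your forward assignment $\beta_i$ (setting $v_m\to 1$ for $m<i$ and $v_m\to 0$ for $m\ge i$) versus the paper's ($v_j\to 1$ for $j\ne i$ and $v_i\to 0$); both falsify exactly $E_{v,i}$ and work equally well.
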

\begin{proof} Properties \ref{lem:EiggTrans1}, \ref{lem:EiggTrans2} follow from the remarks above. For Property \ref{lem:EiggTrans3} first assume that $F$ is satisfiable with satisfying total assignment $\vp$. Consider $v \in \var(F)$ with $\vp(v) = \ve_i$ and the boolean variables $v_1, \dots, v_{k-1}$ as used above in the definition of $\ntrans(F)$. Now for $j \in \tb 1{k-1}$ assign $v_j \ra 1$ for $j \not= i$, while for $j = i$ let $v_j \ra 0$. The total assignment $\vp'$ obtained in this way satisfies $\ntrans(F)$. For the other direction consider a satisfying total assignment $\vp$ for $\ntrans(F)$, and for $v \in \var(F)$ let $\vp'(v) := \ve_i$ for the smallest $i \in \tb 1k$ such that $i < k \Ra \vp(v_i) = 0$ holds; again by definition we see that $\vp'$ satisfies $F$. Finally Property \ref{lem:EiggTrans4} follows with Property \ref{lem:EiggTrans3} and Lemma \ref{lem:conntransalg}. \end{proof}

\subsection{The hermitian defect, and regular hitting clause-sets}
\label{sec:Regularhitting}

The \emph{hermitian rank} $h(M)$ of a symmetric real matrix of order $m$ (which has only real eigenvalues) is the maximum of the number of positive and the number of negative eigenvalues of $M$, while the \textbf{hermitian defect} of $M$ is $\hdef(M) := m - h(F)$, which is the Witt index of the associated bilinear form; for more on this and its relation to (boolean) satisfiability problems see \cite{Ku2003e,GalesiKullmann2003bHermitian}. We use $h(F) := h(\scf(F))$ and $\hdef(F) := \hdef(\scf(F))$ for (generalised) clause-sets $F$. In \cite{Ku2003e} it was shown that boolean clause-sets $F$ fulfil $\delta(F) \le \hdef(F)$, which is just a translation of the well-known basic Graham-Pollak theorem about biclique partitions of multigraphs. Since for regular non-empty hitting clause-sets $F$ we have $\hdef(F) = 1$, it follows $\delta(F) \le 1$ for (boolean) regular hitting clause-sets. 

By Lemma \ref{lem:EiggTrans}, Part \ref{lem:EiggTrans1} we have that 
\begin{equation}
  \label{eq:ntransh}
  h(\ntrans(F)) = h(F)
\end{equation}
for multi-clause-sets $F$. Thus by Theorem 13 in \cite{GalesiKullmann2003bHermitian} we get
\begin{lem}\label{lem:polyh1}
  SAT decision for generalised clause-sets $F$ with $h(F) \le 1$ can be done in polynomial time (and for example by self-reduction also satisfying assignments can be found).
\end{lem}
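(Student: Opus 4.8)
The plan is to reduce the problem to the boolean case via the nested translation $\ntrans$ from Subsection \ref{sec:newtrans}. Given $F \in \Cls$, the first step is to compute $\ntrans(F)$. This is a boolean clause-set with $c(\ntrans(F)) = c(F)$, and each variable $v$ of $F$ is replaced by at most $\abs{D_v}-1$ boolean variables while the clauses containing a literal on $v$ are enlarged by at most $\abs{D_v}-1$ literals each; since (by the normalisation fixed at the start of Section \ref{sec:satviamatch}, pure variables having exactly one unused value) we have $\abs{D_v} \le \ell(F)+1$ for every $v \in \var(F)$, the clause-set $\ntrans(F)$ has size polynomial in the size of $F$ and is computable in polynomial time.

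The second step is the key invariance: by Lemma \ref{lem:EiggTrans}, Part \ref{lem:EiggTrans1}, we have $\scf(\ntrans(F)) = \scf(F)$, and hence $h(\ntrans(F)) = h(F) \le 1$ by \eqref{eq:ntransh}. Now Theorem 13 in \cite{GalesiKullmann2003bHermitian} provides polynomial-time SAT decision for boolean clause-sets of hermitian rank at most $1$, so we decide satisfiability of $\ntrans(F)$ in polynomial time, and by Lemma \ref{lem:EiggTrans}, Part \ref{lem:EiggTrans3}, this is exactly the answer for $F$.

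For the final step, producing a satisfying assignment in the satisfiable case, I would use that the boolean algorithm underlying Theorem 13 is constructive (or, alternatively, run self-reduction on the boolean clause-set $\ntrans(F)$): it yields a satisfying total assignment $\vp$ for $\ntrans(F)$, from which one reads off a satisfying assignment $\vp'$ for $F$ precisely as in the proof of Lemma \ref{lem:EiggTrans}, Part \ref{lem:EiggTrans3} --- for $v \in \var(F)$ with boolean replacement variables $v_1,\dots,v_{k-1}$ (where $k = \abs{D_v}$) put $\vp'(v) := \ve_i$ for the least $i \in \tb 1k$ with $i < k \Ra \vp(v_i) = 0$. This post-processing is again polynomial. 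The one place where the argument is least transparent --- and the point I would double-check most carefully --- is the polynomial size bound on $\ntrans(F)$, which really rests on bounding the domain sizes $\abs{D_v}$; everything else follows immediately from the already-established properties of the nested translation together with the cited boolean result.
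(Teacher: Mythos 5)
Your proof is correct and follows exactly the same route as the paper: apply the nested translation, invoke $\scf(\ntrans(F))=\scf(F)$ hence $h(\ntrans(F))=h(F)$, and then cite Theorem 13 of \cite{GalesiKullmann2003bHermitian} for the boolean case, reading a satisfying assignment back as in the proof of Lemma \ref{lem:EiggTrans}, Part \ref{lem:EiggTrans3}. The paper itself gives no explicit proof beyond this chain of citations, so your extra care about the polynomial size of $\ntrans(F)$ via the normalisation $\abs{D_v}\le\abs{\val_v(F)}+1$ is a legitimate (and welcome) addition rather than a deviation.
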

Regarding the main subject of this section, from \eqref{eq:ntransh} we obtain $\delta(F) \le \delta(\ntrans(F)) \le \hdef(\ntrans(F)) = \hdef(F)$:
\begin{thm}\label{thm:GPvKlm}
  For a generalised multi-clause-set $F$ we have $\delta(F) \le \hdef(F)$.
\end{thm}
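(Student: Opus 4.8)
The plan is to reduce the statement to the boolean case via the nested translation $\ntrans$ introduced in Subsection \ref{sec:newtrans}, exploiting that this translation preserves the conflict matrix exactly while it never decreases the deficiency. Concretely, I would set $G := \ntrans(F)$, a boolean (multi-)clause-set, and chain three already-established facts.

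First, Lemma \ref{lem:EiggTrans}, Part \ref{lem:EiggTrans2}, gives $\delta(F) \le \delta(\ntrans(F))$. Second, since $\ntrans(F)$ is boolean, the known Graham--Pollak-type bound from \cite{Ku2003e}, namely $\delta(G) \le \hdef(G)$ for boolean $G$, applies and yields $\delta(\ntrans(F)) \le \hdef(\ntrans(F))$. Third, Lemma \ref{lem:EiggTrans}, Part \ref{lem:EiggTrans1}, gives $\scf(\ntrans(F)) = \scf(F)$; since the two conflict matrices even have the same order $c(\ntrans(F)) = c(F)$, they have the same hermitian rank (this is exactly equation \eqref{eq:ntransh}) and hence the same hermitian defect, $\hdef(\ntrans(F)) = \hdef(F)$. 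Concatenating these, $\delta(F) \le \delta(\ntrans(F)) \le \hdef(\ntrans(F)) = \hdef(F)$, which is the claim.

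There is essentially no hard step here: all the real work has been done in establishing the invariance properties of the nested translation (preservation of the conflict matrix, and $\delta(\ntrans(F)) \ge \delta(F)$) and in the boolean Graham--Pollak theorem. The only point deserving a word of care is that one must use the \emph{inequality} $\delta(F) \le \delta(\ntrans(F))$ rather than hoping for equality, because $\ntrans$ can strictly raise the deficiency when $F$ has pure variables (as the worked example in Subsection \ref{sec:newtrans} illustrates); fortunately the inequality points in exactly the direction needed. One should also note that the boolean bound $\delta(G) \le \hdef(G)$ is the statement about multigraphs / multi-clause-sets, so no difficulty arises from $\ntrans(F)$ possibly carrying repeated clauses.
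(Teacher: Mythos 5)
Your argument matches the paper's proof exactly: both reduce to the boolean Graham--Pollak bound via the nested translation, chaining $\delta(F) \le \delta(\ntrans(F)) \le \hdef(\ntrans(F)) = \hdef(F)$ using Lemma \ref{lem:EiggTrans} (Parts \ref{lem:EiggTrans2} and \ref{lem:EiggTrans1}) together with the boolean case from \cite{Ku2003e}. Your added remarks on why the inequality (rather than equality) of deficiencies suffices, and on the bound holding for multi-clause-sets, are accurate and consistent with the paper.
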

The proof of Theorem \ref{thm:GPvKlm} works by using the nested translation, and it is not clear how to generalise the notion of hermitian rank to obtain a direct proof.
\begin{corol}\label{cor:GPvKlm}
  For a generalised clause-set $F$ which is regular hitting we have $\delta(F) \le 1$.
\end{corol}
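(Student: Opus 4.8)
The plan is to derive Corollary \ref{cor:GPvKlm} directly from Theorem \ref{thm:GPvKlm} by evaluating the hermitian defect of the conflict matrix of a regular hitting clause-set. So first I would recall that $F$ being regular hitting means $F$ is $r$-regular hitting for some $r \in \NNZ$, i.e., every two distinct clauses of $F$ clash in exactly $r$ literal-pairs. Hence the conflict matrix $\scf(F)$ has order $m := c(F)$ and satisfies $\scf(F) = r \cdot (J_m - I_m)$, where $J_m$ is the all-ones matrix and $I_m$ the identity (the diagonal is zero, all off-diagonal entries are $r$).

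Next I would compute the eigenvalues of $\scf(F)$. Assuming $F \neq \top$, so $m \ge 1$: the matrix $J_m$ has eigenvalue $m$ with multiplicity $1$ (eigenvector the all-ones vector) and eigenvalue $0$ with multiplicity $m-1$. Therefore $\scf(F) = r(J_m - I_m)$ has eigenvalue $r(m-1)$ with multiplicity $1$ and eigenvalue $-r$ with multiplicity $m-1$. If $r \ge 1$ (the genuinely hitting case) and $m \ge 2$, then $\scf(F)$ has exactly one positive eigenvalue and $m-1$ negative eigenvalues, so $h(F) = \max(1, m-1) = m-1$ for $m \ge 2$, giving $\hdef(F) = m - (m-1) = 1$. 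The degenerate cases are easy to dispose of: if $m \le 1$ or $r = 0$ then $\scf(F)$ is the zero matrix of order $m$, so it has no nonzero eigenvalues, $h(F) = 0$ and $\hdef(F) = m \le 1$ (here $m = c(F) \le 1$, since an $r$-regular hitting clause-set with $r=0$ can have at most one clause). In all cases $\hdef(F) \le 1$.

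Finally I would invoke Theorem \ref{thm:GPvKlm}, which gives $\delta(F) \le \hdef(F)$ for any generalised (multi-)clause-set $F$; combined with $\hdef(F) \le 1$ this yields $\delta(F) \le 1$, as claimed. (Strictly one should note that a clause-set is in particular a multi-clause-set, so Theorem \ref{thm:GPvKlm} applies directly.)

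I do not expect a genuine obstacle here: the only thing to be careful about is the bookkeeping of the degenerate cases ($c(F) \le 1$, or $r = 0$ where the clause-set is allowed to be empty or a single clause), and making sure the definition of $h$ as ``the maximum of the number of positive and the number of negative eigenvalues'' is applied correctly — in particular that $m-1 \ge 1$ is needed for the clause-set to be genuinely hitting with at least two clauses, which is exactly when $\hdef(F) = 1$ rather than $\hdef(F) = 0$. Everything else is a one-line spectral computation plus a citation of the already-proven Theorem \ref{thm:GPvKlm}.
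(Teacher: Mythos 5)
Your proof is correct in substance and takes exactly the paper's route: apply Theorem~\ref{thm:GPvKlm} ($\delta(F)\le\hdef(F)$), and compute $\hdef(F)=1$ for the conflict matrix $r(J_m-I_m)$ of a non-empty regular hitting clause-set with $r\ge 1$. The eigenvalue computation ($r(m-1)$ once, $-r$ with multiplicity $m-1$, hence $h=m-1$ and $\hdef=1$ for $m\ge 2$) is right, and is precisely what the paper invokes with the phrase ``for regular non-empty hitting clause-sets $F$ we have $\hdef(F)=1$''. One subsidiary claim in your degenerate-case bookkeeping is, however, false as stated: an ``$r$-regular hitting clause-set with $r=0$'' need not have at most one clause under the paper's literal definition. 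Subsection~\ref{sec:conflictstructure} (Part I) defines a \emph{regular hitting clause-set} to be an $r$-regular hitting clause-set for some $r\ge 0$, and explicitly notes that only for $r\ge 1$ does $r$-regularity imply the hitting property. So for $r=0$ one merely has a conflict-free clause-set, and these can have arbitrarily many clauses; e.g.\ with boolean $v_1,v_2,v_3$ take $F=\{\{(v_1,0)\},\{(v_2,0)\},\{(v_3,0)\},\{(v_1,0),(v_2,0)\},\{(v_1,0),(v_2,0),(v_3,0)\}\}$, a $0$-regular clause-set with $c(F)=5$, $\rd(F)=3$, $\delta(F)=2$, and $\scf(F)=0$ so $\hdef(F)=5$. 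This is consistent with Theorem~\ref{thm:GPvKlm} but shows that $\hdef(F)=1$ (and hence the conclusion $\delta(F)\le 1$) genuinely requires $r\ge 1$ or $c(F)\le 1$. Your parenthetical ``since an $r$-regular hitting clause-set with $r=0$ can have at most one clause'' is therefore only true if one reads ``regular hitting'' as ``hitting \emph{and} regular'' (complete conflict graph plus constant conflict number), which is certainly what the corollary intends; it is worth flagging this rather than asserting the at-most-one-clause claim as a consequence of the paper's stated definition.
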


In the terminology of graph partitions, Corollary \ref{cor:GPvKlm} generalises ``Witsenhausen's Theorem'', the special case of the Graham-Pollak Theorem asserting that every biclique partition of a complete graph $K_m$ needs at least $m-1$ bicliques: 
\begin{itemize}
\item A \emph{multiclique} in a graph is a subgraph (not necessarily induced) which is a complete $k$-partite graph for some $k$.
\item A \emph{multiclique partition} of a multigraph consists of edge-disjoint multicliques covering every edge.
\item Generalised clause-sets are multiclique partitions of their conflict multigraphs, where the multicliques correspond to variables, with the variable-values and their occurrences corresponding to the parts of the multiclique.
\item Thus in Corollary \ref{cor:GPvKlm} we allow to partition the edge set of $r \cdot K_m$ into complete multipartite graphs, where every complete $k$-partite component ($k$ not fixed) contributes the ``cost''  $k-1$, and Corollary \ref{cor:GPvKlm} now says that the total cost must be at least $m - 1$.
\item Allowing only constant $k = 2$ is the Theorem of Witsenhausen, allowing only constant $k = m$ is trivial, and for arbitrary constant $k$ compare Example 1.1 in \cite{GregoryMeulen1996RPartiteDecompositions}.
\end{itemize}
Let us consider an example here, which also shows that multiple edges can help to achieve a higher deficiency (i.e. using less variables). Let the multigraph $G$ be
\begin{displaymath}
  G := \xymatrix {
    1 \aru[d] \aru[dr] & 2 \aru[dl] \aru[d]^2 \aru[dr] & 3 \aru[dl] \aru[d]\\
    4 & 5 & 6
  }
\end{displaymath}
where the edge between vertices $2$ and $5$ has multiplicity $2$. An optimal biclique partition of the edge sets is given by the two bicliques $(\set{1,2}, \set{4,5})$ and $(\set{2,3}, \set{5,6})$; a corresponding (``exact'', boolean) clause-set $F$ with deficiency $6 - 2 = 4$ has clauses $C_1, \dots, C_6$ and variables $v_1, v_2$, where variable $v_1$ occurs positively in $C_1, C_2$ and negatively in $C_4, C_5$, while variable $v_2$ occurs positively in $C_2, C_3$ and negatively in $C_5, C_6$. The underlying graph has adjacency matrix
\begin{displaymath}
  \begin{pmatrix}
    0 & 0 & 0 & 1 & 1 & 0\\
    0 & 0 & 0 & 1 & 1 & 1\\
    0 & 0 & 0 & 0 & 1 & 1\\
    1 & 1 & 0 & 0 & 0 & 0\\
    1 & 1 & 1 & 0 & 0 & 0\\
    0 & 1 & 1 & 0 & 0 & 0
  \end{pmatrix}
\end{displaymath}
with hermitian rank $3$ (it has eigenvalues $\pm 1$ each with multiplicity $3$), and thus by Theorem \ref{thm:GPvKlm} every multi-clause-set with this conflict matrix has a deficiency of at most $6 - 3 = 3$, i.e., $3$ variables are needed (which obviously also suffices here (so this graph is ``eigensharp'')).

Unsatisfiable hitting clause-sets are minimally unsatisfiable, and thus have deficiency at least $1$; it follows:
\begin{corol}\label{cor:Unsatreghit}
  Unsatisfiable regular hitting clause-sets have deficiency exactly $1$, i.e., $\Sclash^{\sat=0} = \Sclashk{\delta=1}^{\sat=0}$.
\end{corol}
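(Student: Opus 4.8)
The statement $\Sclash^{\sat=0} = \Sclashk{\delta=1}^{\sat=0}$ asserts that unsatisfiable regular hitting clause-sets are exactly the regular hitting clause-sets of deficiency $1$. The plan is to prove the two inclusions separately, both being immediate given the machinery already developed in the excerpt.

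For the inclusion $\Sclash^{\sat=0} \sse \Sclashk{\delta=1}^{\sat=0}$, suppose $F$ is an unsatisfiable regular hitting clause-set. First I would invoke Corollary \ref{cor:GPvKlm}, which gives $\delta(F) \le 1$ (note $F$ is non-empty, as the empty clause-set is satisfiable). For the lower bound, observe that since $F$ is hitting it is in particular irredundant (hitting clause-sets are irredundant — this is used in Subsection \ref{sec:hittinggen}, e.g. from Lemma \ref{lem:charakallpart} or Corollary \ref{cor:charakhitirr}), and an unsatisfiable irredundant clause-set is minimally unsatisfiable, so $F \in \Musat$; then Corollary \ref{cor:musatdelta} (the generalised Tarsi lemma) yields $\delta(F) \ge 1$. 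Combining, $\delta(F) = 1$, so $F \in \Sclashk{\delta=1}^{\sat=0}$.

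The reverse inclusion $\Sclashk{\delta=1}^{\sat=0} \sse \Sclash^{\sat=0}$ is purely set-theoretic unwinding of notation: any $F \in \Sclashk{\delta=1}^{\sat=0}$ is by definition a regular hitting clause-set (an element of $\Sclash$) which is unsatisfiable (the index $\sat=0$), hence $F \in \Sclash^{\sat=0}$; the extra condition $\delta = 1$ is simply dropped. So this direction needs no argument beyond the definition of the indexed class notation $\mc{C}_{f=b}$.

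I do not expect any real obstacle here — the result is a corollary assembled from Corollary \ref{cor:GPvKlm} (upper bound on deficiency for regular hitting clause-sets, itself proved via the nested translation and Theorem \ref{thm:GPvKlm}) and Corollary \ref{cor:musatdelta} (lower bound via matching leanness). The only point requiring a moment's care is making explicit that an unsatisfiable hitting clause-set lies in $\Musat$, which follows because hitting clause-sets are irredundant and unsatisfiable irredundant clause-sets are exactly the minimally unsatisfiable ones; this is already noted in the text just before the statement ("Unsatisfiable hitting clause-sets are minimally unsatisfiable, and thus have deficiency at least $1$"). Thus the proof is essentially one line in each direction.
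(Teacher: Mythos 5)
Your proposal is correct and follows exactly the paper's argument: the upper bound $\delta(F)\le 1$ comes from Corollary \ref{cor:GPvKlm}, and the lower bound $\delta(F)\ge 1$ comes from the observation that unsatisfiable hitting clause-sets are minimally unsatisfiable together with Corollary \ref{cor:musatdelta}. The paper states this in a single line immediately before the corollary, so there is nothing further to add.
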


Generalising Corollary 34 in \cite{Ku2003e}, we can now precisely characterise all unsatisfiable regular hitting (generalised) clause-sets. An unsatisfiable regular hitting clause-set with at least two clauses can not be $k$-regular for $k \not= 1$, since otherwise it would be satisfiable due to the completeness of resolution. Thus by Corollary \ref{cor:CharakSMUSATd1} we get:

\begin{corol}\label{cor:CharakUnsatreghit}
  $\Sclash^{\sat=0} = \Smusati{\delta=1}$, that is, unsatisfiable regular hitting clause-sets are exactly the saturated minimally unsatisfiable clause-sets of deficiency $1$ as characterised in Corollary \ref{cor:CharakSMUSATd1}.
\end{corol}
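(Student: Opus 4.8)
The plan is to deduce the statement by chaining together two results already in place, so that no new machinery is required. Write $\Sclash^{\sat=0}$ for the class of unsatisfiable regular hitting clause-sets. The pivot is Corollary \ref{cor:Unsatreghit}, which says $\Sclash^{\sat=0}=\Sclashk{\delta=1}^{\sat=0}$: every unsatisfiable regular hitting clause-set has deficiency exactly $1$. That fact itself rests on Corollary \ref{cor:GPvKlm} (the bound $\delta(F)\le 1$ for regular hitting $F$, obtained from Theorem \ref{thm:GPvKlm} via the conflict-structure-preserving nested translation and the Graham--Pollak/Witsenhausen circle of ideas) together with the generalised Tarsi lemma, Corollary \ref{cor:musatdelta}, which gives $\delta(F)\ge 1$ for minimally unsatisfiable clause-sets. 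Once $\delta(F)=1$ is available, it remains only to quote Corollary \ref{cor:CharakSMUSATd1}, whose list of equivalent conditions already contains both ``$F\in\Sclashk{\delta=1}^{\sat=0}$'' and ``$F\in\Smusati{\delta=1}$''; hence $\Sclashk{\delta=1}^{\sat=0}=\Smusati{\delta=1}$, and combining the two equalities yields $\Sclash^{\sat=0}=\Smusati{\delta=1}$.

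Concretely, for the inclusion $\Sclash^{\sat=0}\sse\Smusati{\delta=1}$ I would take $F$ unsatisfiable and regular hitting, use Corollary \ref{cor:Unsatreghit} to get $\delta(F)=1$, and then Corollary \ref{cor:CharakSMUSATd1} to conclude $F\in\Smusati{\delta=1}$; for the reverse inclusion, from $F\in\Smusati{\delta=1}$ Corollary \ref{cor:CharakSMUSATd1} immediately gives that $F$ is an unsatisfiable regular hitting clause-set (of deficiency $1$), hence $F\in\Sclash^{\sat=0}$. The only point needing a short argument rather than a pure citation is the reduction behind Corollary \ref{cor:Unsatreghit}, namely that an unsatisfiable regular hitting clause-set with at least two clauses is necessarily $1$-regular, and so hitting in the strict sense: if $F$ were $r$-regular hitting with $r\ge 2$, any two of its clauses would clash in at least two variables, so resolving them on any single resolution variable would leave a residual clash and yield no legal resolvent; since then no resolution step is possible at all, completeness of resolution (Section \ref{sec:Resolution}) would force $F$ to be satisfiable, a contradiction. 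The case $r=0$ is analogous (pairwise non-clashing clauses, so again no resolution variable can be chosen), and the genuinely degenerate cases ($c(F)\le 1$, and clause-sets containing $\bot$ alongside further clauses, which lie outside the intended scope) are set aside, leaving only $F=\set{\bot}$ among the trivial instances, which lies in $\Smusati{\delta=1}$.

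The real work, accordingly, is not in the present corollary but in the ingredients it invokes: Theorem \ref{thm:GPvKlm} (deficiency bounded by the hermitian defect), proved by routing through the nested translation of Subsection \ref{sec:newtrans}, and Theorem \ref{thm:CharakMUSATd1} together with its Corollary \ref{cor:CharakSMUSATd1} (characterisation of $\Musati{\delta=1}$ and of its saturated elements). Granting those, I expect the write-up to be just a few lines, structured as: (i) reduce to $1$-regular hitting via the completeness-of-resolution argument; (ii) invoke Corollary \ref{cor:Unsatreghit} for $\delta(F)=1$; (iii) invoke Corollary \ref{cor:CharakSMUSATd1} for the equivalence with $\Smusati{\delta=1}$; (iv) note the converse inclusion is read off from the same corollary. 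The only ``obstacle'' is the mild bookkeeping of degenerate cases when passing between ``regular hitting'' and ``hitting''.
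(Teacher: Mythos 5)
Your proposal is correct and follows the paper's own route closely: reduce to $\delta(F)=1$ via Corollary \ref{cor:Unsatreghit} (the Graham--Pollak/Witsenhausen upper bound from Corollary \ref{cor:GPvKlm} combined with the generalised Tarsi lemma, Corollary \ref{cor:musatdelta}), use completeness of resolution to rule out $r$-regularity for $r\ne 1$, and then read off the equivalence $\Sclashk{\delta=1}^{\sat=0}=\Smusati{\delta=1}$ from Corollary \ref{cor:CharakSMUSATd1}. The one minor slip is labelling the completeness-of-resolution argument as ``the reduction behind Corollary \ref{cor:Unsatreghit}'' --- that corollary is purely the pair of deficiency bounds, and the $1$-regularity observation is a separate bridging step (which the paper also makes, and which is strictly speaking redundant once Corollary \ref{cor:CharakSMUSATd1} already lists ``unsatisfiable regular hitting of deficiency $1$'' among its equivalent formulations); your explicit attention to the degenerate cases (in particular clause-sets with $\bot$ alongside further clauses, which the paper's compressed text silently excludes by reading ``regular hitting'' as ``$r$-regular with $r\ge 1$'') is, if anything, more careful than the original.
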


An interesting aspect of unsatisfiable boolean regular hitting clause-sets has been revealed in \cite{SloanSzoerenyiTuran2005Primimplikanten_1}, where the basic setting translated to our language and generalised to generalised clause-sets reads as follows:
\begin{enumerate}
\item An unsatisfiable hitting clause-set $F$ corresponds to a partitioning of the space of all total assignments $\prod_{v \in \var(F)} D_v$ into disjoint ``cubes'', where a cube is a subset of $\prod_{v \in \var(F)} D_v$ where some variables have a fixed value while the values of all other variables are arbitrary.
\item $1$-regularity of $F$ means that any two different cubes in the corresponding cube-partition have distance (exactly) $1$, where the metric on $\prod_{v \in \var(F)} D_v$ is the Hamming distance (counting the number of different positions), and the distance between two subsets is the minimal distance of their members.
\item Thus $1$-regular hitting clause-sets $F$ correspond to ``neighbourly cube partitions'' of $\prod_{v \in \var(F)} D_v$.
\end{enumerate}
Now \cite{SloanSzoerenyiTuran2005Primimplikanten_1} gave first a ``combinatorial'' direct proof of Corollary \ref{cor:CharakUnsatreghit} for the boolean case (that is, of Corollary 34 in \cite{Ku2003e}), and then showed that by adding literals with new variables to the leaves of the tree representations of the elements of $\Smusati{\delta=1}(\Bva)$ one obtains exactly those boolean clause-sets $F$ which have exactly the maximal number of prime implicates possible (for a given number of clauses), namely $2^{c(F)} - 1$ (see Corollary \ref{cor:MaxPrim}). These considerations are generalised in \cite{HendersonKullmann2007Multicliquen}; since (generalised) clause-sets have a too restricted language here, in order to obtain the maximal number of prime implicates the full power of \emph{signed} clause-sets is needed (that is, the appropriate notion of prime implicate here actually must allow signed literals).

\section{Conclusion and open problems}
\label{sec:open2}

The main purpose of this series of two articles was to set the stage for the study of generalised clause-sets as sets of ``no-goods'', where literals are given by one ``forbidden value'':
\begin{itemize}
\item We defined and summarised the basic properties of syntax, semantics, resolution calculus and autarky systems.
\item Then we considered the generalisation of the notion of deficiency for these generalised clause-sets, and we studied the basic autarky system related to this notion, namely matching autarkies.
\item We showed fixed parameter tractability of satisfiability of generalised clause-sets in the maximal deficiency, while the computation of a maximal autarky (yielding the lean kernel) can be done in polynomial time for fixed maximal deficiency.
\item For autarky systems both the application of autarkies as reductions and the properties of autarky-freeness, i.e., lean clause-sets, are of interest.
\item Lean clause-sets are a generalisation of minimally unsatisfiable clause-sets, for which we considered the basic problem, when the property of being minimally unsatisfiable is preserved under application of partial assignments, and we characterised also minimally unsatisfiable clause-sets of minimal deficiency.
\item More generally, also irredundant clause-sets have been considered.
\item Besides using the generalised tools transferred from the boolean case, also the structure preserving properties of boolean translations are important, and we investigated basic cases.
\item The notion of deficiency introduced here for generalised clause-sets is fundamental for many of these considerations, and we considered also the (close) relation to multiclique partitions of multigraphs and related areas.
\end{itemize}

\subsection{Minimally unsatisfiable clause-sets of low deficiency}
\label{sec:openMUSAT}

Having generalised the characterisation of minimally unsatisfiable clause-sets of deficiency one from the boolean case in Subsection \ref{sec:MUSATdefone}, the next question concerns the generalisation of the structure of boolean $\Musati{\delta=2}$ as studied in \cite{KleineBuening2000SubclassesMU}. This generalisation seems to be not straightforward, but we believe that minimally unsatisfiable generalised clause-sets of deficiency two are still quite close to the boolean case (while from deficiency three on generalised clause-sets behave more wildly).

A key tool for the study of boolean minimally unsatisfiable clause-sets is the observation in \cite{Ku99dKo} that for every boolean minimally unsatisfiable clause-set $F$ with $n(F) > 0$ there exists a variable $v \in \var(F)$ such that for both $\ve \in \set{0,1}$ we have $\#_{(v,\ve)}(F) \le \delta(F)$; see Lemma \ref{lem:mmvdobSchr} for a discussion of this subject. As applied in the proof of Lemma \ref{lem:GrundtatsacheDefektEins}, the direct boolean translation is a suitable tool here; additionally the nested translation from Subsection \ref{sec:newtrans} could be useful.

\subsection{Conflict combinatorics}
\label{sec:openmultihitting}

As touched upon in Lemma \ref{lem:polyzeitbitreff}, the SAT problem for bihitting clause-sets is essentially the same as the hypergraph transversal problem, and whether the latter problem can be decided in polynomial time is a long outstanding open question. Being optimistic about the potential of (generalised) clause-sets to provide a unifying framework for (hard) graph and hypergraph problems, we propose:
\begin{conj}\label{con:multihittingpoly}
  Satisfiability decision for multihitting (generalised) clause-sets can be done in polynomial time.
\end{conj}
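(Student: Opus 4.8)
The plan is to reduce satisfiability of a multihitting clause-set to a family of \emph{volume computations}, one per block of the multipartition, and then to analyse the hardness of those volumes. Let $F \in \Mclash$ be given; eliminating trivial variables first (this preserves the conflict graph, hence multihittingness, and satisfiability), we may assume $F$ has no trivial variables, and we set $V := \var(F)$. By Corollary \ref{cor:Charakmusatmulth} we may further assume $F$ is subsumption-free. Compute the multipartition $\FF = \set{G_1, \dots, G_m}$ of $F$ (the complement of $\cg(F)$ is a disjoint union of cliques, so this takes polynomial time). By Lemma \ref{lem:multihittingmusat}, part \ref{item:multihittingmusat1}, the sets $\modf_V(G_i)$ are pairwise disjoint, and since $\modf_V(F) = \bc_{i=1}^m \modf_V(G_i)$, this is a \emph{disjoint} union; hence $F$ is unsatisfiable if and only if $\sum_{i=1}^m \abs{\modf_V(G_i)} = \abs{\Pass(V)} = \prod_{v \in V} \abs{D_v}$. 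So everything reduces to the subproblem: given a family $G$ of pairwise non-clashing clauses (equivalently, a DNF all of whose terms are pairwise consistent), compute the number $\abs{\modf_V(G)}$ of total assignments falsifying at least one clause of $G$.

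The analysis then starts from the two known extremes. When $m = c(F)$ every block is a single clause, the sum is trivial to evaluate, and one recovers Lemma \ref{lem:multihittingmusat}, part \ref{item:multihittingmusat4}. When $m = 2$ the clause-set is bihitting, and by Lemma \ref{lem:polyzeitbitreff} the question is equivalent to recognising a transversal-hypergraph pair, solvable in quasi-polynomial time by the Fredman--Khachiyan dualisation algorithm. For the general conjecture I would pursue two complementary routes. Route (a): show that $\abs{\modf_V(G)}$ for pairwise non-clashing $G$ is computable in polynomial time directly. The structural handle is that the subcubes $\modf_V(\set{C})$, $C \in G$, are combinatorial boxes whose coordinate-projections are each either a singleton or a full domain, and pairwise non-clashing forces all the singletons on a given coordinate to agree; consequently the whole family has a common point, and more refined intersection patterns are tightly constrained. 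The hope is to convert this into a peeling recursion — branch on a variable that occurs as a singleton in the fewest clauses of $G$, observe that fixing it preserves the non-clashing property and splits the instance cleanly — so that the exponential blow-up of naive inclusion--exclusion is avoided. Route (b): reduce the general $m$ to the bihitting case by grouping $\FF$ two super-blocks at a time and iterating the $m=2$ dualisation, thereby securing at least a quasi-polynomial algorithm for all of $\Mclash \cap \Usat$ and isolating in the conjecture exactly the gap between quasi-polynomial and polynomial.

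The main obstacle is essentially structural and already visible: the bihitting case \emph{is} the monotone dualisation / hypergraph transversal problem (Lemma \ref{lem:polyzeitbitreff}), and the proof of that lemma shows the hard instances are essentially boolean, so there is no ``free'' restriction to be exploited there — a full proof of the conjecture would therefore settle the long-standing question whether the hypergraph transversal problem lies in $\mr{P}$. Moreover, for an \emph{unrestricted} clause-family $G$ the quantity $\abs{\modf_V(G)}$ is \#P-hard (it is DNF model counting), so Route (a) stands or falls on whether ``pairwise non-clashing'' is genuinely strong enough to collapse that counting — which is precisely the content of the conjecture. My realistic expectation is that the provable outcome of this plan is the quasi-polynomial bound of Route (b), generalising Lemma \ref{lem:polyzeitbitreff} from $m = 2$ to all $m$, with the polynomial-time statement remaining conjectural and intertwined with hypergraph transversal.
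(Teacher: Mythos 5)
This statement is Conjecture~\ref{con:multihittingpoly}, not a theorem: the paper offers no proof, and indeed immediately before it explicitly points out that already the bihitting case ($m = 2$) is essentially the hypergraph-transversal / monotone-dualisation problem, whose membership in $\PTime$ is a long-standing open question. So there is no paper proof to compare against; what you have written is a research plan, and you correctly acknowledge at the end that the polynomial-time statement ``remains conjectural and intertwined with hypergraph transversal.'' That assessment agrees with the paper.

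Within your plan there are, however, two concrete flaws worth flagging. Route~(a) is dead on arrival: for a pairwise non-clashing block $G$ the quantity $\abs{\modf_V(G)}$ is exactly the number of models of a DNF whose terms are pairwise consistent, and in the boolean case a pairwise non-clashing clause-set is precisely a unate (monotone after renaming) clause-set. Counting models of a monotone DNF is $\#\mr{P}$-complete (it is counting independent sets of a hypergraph), so ``pairwise non-clashing'' is provably \emph{not} strong enough to collapse the counting; the common-point observation and the peeling recursion cannot avoid exponential blow-up in the worst case. This does not refute the conjecture, since one need not compute the block volumes exactly --- only decide a global equality --- but it means any workable attack must exploit the interaction between blocks rather than compute per-block volumes. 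Route~(b) is also not correct as stated: grouping two or more blocks $G_i, G_j$ into a ``super-block'' destroys the non-clashing property (clauses in $G_i$ clash with clauses in $G_j$), so the grouped instance is not bihitting, and the claimed iterated Fredman--Khachiyan dualisation has no well-defined instance to run on. Whether $\Mclash \cap \Usat$ is even decidable in quasi-polynomial time for general $m$ is, to my knowledge, itself open and not reducible to $m = 2$ by any simple grouping.

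The parts of your write-up that do stand are: the reduction to a family of pairwise disjoint sets $\modf_V(G_i)$ (this is Lemma~\ref{lem:multihittingmusat}, Parts~\ref{item:multihittingmusat1} and~\ref{item:multihittingmusat4}, and unsatisfiability is indeed equivalent to the volume identity), the polynomial-time computability of the multipartition, the reduction to the subsumption-free case via Corollary~\ref{cor:Charakmusatmulth}, and the identification of $m = 2$ with hypergraph transversal via Lemma~\ref{lem:polyzeitbitreff}.
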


Since the nested translation (recall Subsection \ref{sec:newtrans}) maintains the hermitian rank (as well as the hermitian defect), Conjecture 15 from \cite{GalesiKullmann2003bHermitian} is equivalent to
\begin{conj}\label{con:hermpoly}
  Satisfiability decision for generalised clause-sets can be done in polynomial time for bounded hermitian rank.
\end{conj}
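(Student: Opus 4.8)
The first move is the reduction to the boolean case already indicated above: the nested translation $\ntrans$ is computable in polynomial time, has $c(\ntrans(F)) = c(F)$, preserves satisfiability (Lemma~\ref{lem:EiggTrans}, Part~\ref{lem:EiggTrans3}), and by \eqref{eq:ntransh} preserves the hermitian rank; hence it suffices to prove the statement for \emph{boolean} clause-sets, i.e.\ Conjecture~15 of \cite{GalesiKullmann2003bHermitian}. The plan is then induction on $k := h(F)$, with base case $k \le 1$ supplied by Lemma~\ref{lem:polyh1} (Theorem~13 in \cite{GalesiKullmann2003bHermitian}).

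For the inductive step I would first normalise $F$ by the cheap, satisfiability-preserving reductions available to us — elimination of pure variables and reduction w.r.t.\ matching autarkies — which only \emph{delete} clauses, hence pass $\scf(F)$ to a principal submatrix; by Cauchy interlacing neither the number of positive nor the number of negative eigenvalues can increase, so after normalisation $h(F)$ has not grown. One then wants a polynomial-time branching of $F$ into boolean clause-sets $F_1, \dots, F_m$ with $m$ bounded in terms of $k$ — in the boolean case $m = 2$, branching on the two values of a suitably chosen variable $v$ — such that $F \in \Sat$ iff some $F_i \in \Sat$ and, crucially, $h(F_i) < h(F)$ for a good choice of $v$. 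The recursion then has depth $\le k$ and at most $2^k$ leaves, each node doing polynomial work, so the whole algorithm is polynomial for fixed $k$. Note that fixing $v$ removes the clauses it satisfies (again a principal submatrix, so $h$ cannot increase) and shortens the remaining ones.

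The heart of the matter, and what I expect to be the main obstacle, is precisely this: to show that a normalised boolean clause-set with $h(F) \ge 2$ always admits a variable $v$ for which at least one branch strictly decreases the hermitian rank — which is exactly the missing content of the conjecture. Cauchy interlacing governs the passage to a principal submatrix, but the second effect of a partial assignment — an essentially arbitrary lowering of the surviving off-diagonal entries of the conflict matrix — is not controlled by interlacing, and a worst-case choice of $v$ may leave $h(F)$ unchanged. A possible alternative to branching is the algebraic route: $h(F) \le k$ forces $\mathrm{rank}(\scf(F)) \le 2k$, and one would hope that a symmetric, nonnegative, integral, zero-diagonal matrix of rank $O(k)$ forces the conflict multigraph of $F$ to admit a multiclique partition into $O(k)$ parts; since generalised clause-sets are exactly multiclique partitions of their conflict multigraphs (cf.\ the discussion following Theorem~\ref{thm:GPvKlm} and \cite{GregoryMeulen1996RPartiteDecompositions}), $F$ would then, up to renaming and the removal of variables irrelevant to satisfiability, live on $O(k)$ boolean variables, where SAT is settled by brute force over $2^{O(k)}$ assignments. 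Establishing such a structure theorem for low-rank conflict matrices — one faithful to the clause-set, not merely to the underlying multigraph, and compatible with satisfiability — is, again, the principal difficulty; it is unclear which of the two directions is the more tractable, though the structure-theorem route would yield the cleaner statement.
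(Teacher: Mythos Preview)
This statement is a \emph{conjecture} in the paper, listed in the open-problems section; the paper does not prove it. The only content the paper offers is exactly your first move: via the nested translation $\ntrans$, which preserves satisfiability and the hermitian rank (\eqref{eq:ntransh}), the statement is equivalent to Conjecture~15 of \cite{GalesiKullmann2003bHermitian} for boolean clause-sets. So your reduction paragraph matches the paper precisely, and there is nothing further to compare.

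Everything after that --- the induction on $h(F)$, the branching argument, and the alternative structure-theorem route via low-rank conflict matrices --- is your own speculation on how to attack an open problem, and you flag it as such. Both approaches have the genuine gaps you already identify: for the branching route, there is no established reason why a suitable variable must strictly decrease $h$ in at least one branch (interlacing only controls passage to principal submatrices, not the simultaneous lowering of off-diagonal entries caused by literal removal); for the algebraic route, a bounded-rank conflict matrix does not obviously force the clause-set itself to be satisfiability-equivalent to one on $O(k)$ variables --- many inequivalent clause-sets share the same conflict matrix, and multiclique-partition rank bounds speak to the multigraph, not to the labelling by literals. In short: your reduction is correct and is all the paper claims; the rest is a reasonable research outline, not a proof.
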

See also Section 5.2 in \cite{GalesiKullmann2003bHermitian} for more specialised conjectures regarding the structure of boolean clause-sets of hermitian rank $1$ (which can be generalised to generalised clause-sets).

One of the early problems in the area of addressing graphs, the ``squashed cube conjecture'' solved by Peter Winkler (see Chapter 9 in \cite{LW1992b}), is generalised by the following conjecture (where a clause-set $F$ is called \emph{exact}, if its deficiency is maximal amongst all clause-sets with the same conflict matrix (or conflict multigraph)):
\begin{conj}\label{con:Winkler}
  Consider an exact clause-set $F$, whose conflict matrix is the distance matrix of some connected graph. Then $F$ is matching lean.
\end{conj}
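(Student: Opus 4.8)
The plan is to reduce the statement to the boolean case by means of the nested translation $\ntrans$ of Subsection~\ref{sec:newtrans}, and there to combine Winkler's theorem on squashed-cube embeddings with the surplus characterisation of matching leanness (Lemma~\ref{lem:surpone}).

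First I would record that an exact clause-set has no pure variables: if $v\in\var(F)$ were pure, with some unused value $\ve_0\in D_v$, then renaming $v$ to a fresh variable $v'$ with $D_{v'}=D_v\sm\set{\ve_0}$ (recall Subsection~\ref{sec:renamingvariables}) leaves $\scf(F)$ and $c(F)$ unchanged while lowering $\rd(F)$ by one, hence raises $\delta(F)$, contradicting exactness. Therefore $\delta(\ntrans(F))=\delta(F)$ by Lemma~\ref{lem:EiggTrans}, Part~\ref{lem:EiggTrans2}, and since $\scf(\ntrans(F))=\scf(F)$ by Part~\ref{lem:EiggTrans1}, the boolean clause-set $\ntrans(F)$ again has conflict matrix equal to the distance matrix of $G$ and, having maximal deficiency among all clause-sets with that conflict matrix, is again exact (and, by the shape of the Horn clauses $E_{v,i}$ used in the translation, contains no pure boolean variable). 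Finally, matching leanness transfers back: if $\ntrans(F)$ is matching lean and $F'\lneq F$, then $\ntrans(F')\lneq\ntrans(F)$ by Lemma~\ref{lem:conntransalg} (the translation being injective on clauses), so $\delta(F')\le\delta(\ntrans(F'))<\delta(\ntrans(F))=\delta(F)$ by Lemma~\ref{lem:charakschlank}, whence $F$ is matching lean by Lemma~\ref{lem:charakschlank}, Part~\ref{item:charakschlank3}. So it suffices to prove the boolean case.

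So let $F$ be boolean and exact with $\scf(F)=D(G)$ for a connected graph $G$ on $n=c(F)$ vertices (the case $\var(F)=\es$ being trivial); identify its clauses with $V(G)$ and its variables with the blocks of a partition of the edge set of the conflict multigraph $\cmg(F)$ into complete bipartite subgraphs, so that exactness just says this biclique partition is of minimum size $N(G)$, and $\delta(F)=n-N(G)\ge 1$ by Winkler's theorem. By Lemma~\ref{lem:surpone} it is enough to show $\surp(F)\ge 1$, i.e.\ that $\delta(F[V])\ge 1$ for every $\es\ne V\sse\var(F)$. Assume not; pick $V$ with $\delta(F[V])\le 0$ and let $U$ be the set of clauses of $F$ meeting some variable of $V$, so that $\abs{U}=c(F_V)=\delta(F[V])+\abs{V}\le\abs{V}$. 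The $\abs{V}$ variables in $V$ are used only within $U$, hence $F[V]$ is an addressing, on the point set $U$, of the nonnegative integral symmetric zero-diagonal matrix $M:=\scf(F[V])$, while the other $N(G)-\abs{V}$ variables of $F$ address $D':=D(G)-M$ (the two agreeing off $U\times U$). The idea is to replace the $V$-variables by a strictly smaller addressing of $M$ on $U$, recombine it with the untouched variables outside $V$, and so obtain an addressing of $D(G)$ with fewer than $N(G)$ variables, contradicting exactness.

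The crux — and the reason this remains a conjecture — is to bound the addressing number of $M$ on its $\le\abs{V}$ points by $\abs{V}-1$, or else to find some other recombination that strictly saves one variable. Were $M$ the distance matrix of a connected graph, Winkler's theorem would give the bound $\abs{U}-1\le\abs{V}-1$ at once; but $M$, being a difference of a graph metric and a conflict matrix, need not even be a metric, and can contain blocks such as $2\cdot K_2$ whose addressing number equals the number of their points, so the naive recombination only yields $N(G)\le(N(G)-\abs{V})+\abs{V}$ with no gain. One therefore has to use exactness of $F$ and connectedness of $G$ to control $M$ — plausibly by an inductive peeling of $V(G)$ in the spirit of Winkler's own proof (deleting a carefully chosen vertex and arguing that a ``wasteful'' $V$ persists in the smaller instance), or by sharpening the Graham--Pollak inequality $\delta(F)\le\hdef(F)$ of Theorem~\ref{thm:GPvKlm} so as to pin down the extremal structure for distance-matrix conflict data, or by exploiting the transversal-matroid viewpoint on matching satisfiability. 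Understanding this exchange behaviour of minimum biclique partitions of distance multigraphs is, I expect, the genuine obstacle; the reductions above are routine.
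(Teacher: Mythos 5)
This statement is presented in the paper only as a \emph{conjecture}; the paper offers no proof of it, and indeed Subsection~\ref{sec:openmultihitting} explicitly flags it as an open problem (noting only that Winkler's squashed-cube theorem gives the weaker $\delta(F)\ge 1$). So there is nothing in the paper to compare your argument against, and you are right not to claim a proof.

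Your reduction to the boolean case is sound and worth stating carefully, since it is not written out in the paper. The observation that exactness forbids pure variables is correct (removing an unused value raises $\delta$ while fixing $\scf$), so Lemma~\ref{lem:EiggTrans} gives $\delta(\ntrans(F))=\delta(F)$ and $\scf(\ntrans(F))=\scf(F)$; hence $\ntrans(F)$ is again exact for the same distance matrix and, as you note, is pure-variable-free by the shape of the $E_{v,i}$ (each $v_j$ appears positively from clauses carrying $(v,\ve_j)$ and negatively from clauses carrying $(v,\ve_i)$ with $i>j$, both of which exist since $F$ has no pure variables). The back-transfer of matching leanness via $\ntrans(F')\lneq\ntrans(F)$ and Lemma~\ref{lem:charakschlank}, Part~\ref{item:charakschlank3} together with $\delta(F')\le\delta(\ntrans(F'))$ is also correct. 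You have thus legitimately reduced the conjecture to its boolean form, where it becomes a statement about minimum biclique partitions of distance multigraphs and their ``surplus''.

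You have also correctly located the genuine difficulty: after assuming $\delta(F[V])\le 0$ for some nonempty $V$, one would need to re-address the induced piece $M=\scf(F[V])$ on $\le\abs{V}$ points with strictly fewer than $\abs{V}$ variables, but $M$ is merely a nonnegative integral symmetric matrix with zero diagonal, not a graph metric, so Winkler's bound does not apply to it directly and the naive exchange yields no saving. This is exactly the obstruction that keeps the statement a conjecture; an inductive argument in the spirit of Winkler's own proof, or a sharpening of the Graham--Pollak bound $\delta(F)\le\hdef(F)$ of Theorem~\ref{thm:GPvKlm} that pins down the extremal structure for distance-matrix conflict data, would be natural things to try, but neither is carried out in the paper either.
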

What Winkler originally proved is equivalent to the statement that $\delta(F) \ge 1$ holds under these assumptions (for non-empty $F$).

The \emph{Alon-Saks-Seymour-Conjecture} (``Every graph $G$ which can be written as a union of $m$ edge-disjoint complete bipartite graphs has $\chi(G) \le m+1$.``) can be reformulated in our context as follows, where for consistency with Theorem \ref{thm:GPvKlm} we introduce the ``colouring deficiency'' $\delta_{\chi}(F) := c(F) - \chi(\cg(F))$ of a clause-set $F$ (where $\chi(G)$ is the chromatic number of graph $G$, the minimal number of colours needed for a proper vertex colouring):
\begin{conj}\label{con:AlonSaks}
  For every clause-set $F$ with at most one conflict between any two clauses we have $\delta(F) \le \delta_{\chi}(F) + 1$.
\end{conj}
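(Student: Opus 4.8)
The plan is to reduce the inequality to its boolean special case by means of the nested translation from Subsection~\ref{sec:newtrans}, and then to recognise that boolean case as a reformulation of the Alon--Saks--Seymour conjecture.

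The key point is that the nested translation $F \mapsto \ntrans(F)$ leaves both sides of the inequality essentially untouched. By Lemma~\ref{lem:EiggTrans}, Part~\ref{lem:EiggTrans1}, we have $\scf(\ntrans(F)) = \scf(F)$, hence $\cmg(\ntrans(F)) = \cmg(F)$ and therefore also $\cg(\ntrans(F)) = \cg(F)$; in particular $\ntrans(F)$ again has at most one conflict between any two clauses, and, using $c(\ntrans(F)) = c(F)$, we get $\delta_{\chi}(\ntrans(F)) = \delta_{\chi}(F)$. On the other hand Lemma~\ref{lem:EiggTrans}, Part~\ref{lem:EiggTrans2}, gives $\delta(F) \le \delta(\ntrans(F))$. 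Chaining
\[
  \delta(F) \;\le\; \delta(\ntrans(F)) \;\le\; \delta_{\chi}(\ntrans(F)) + 1 \;=\; \delta_{\chi}(F) + 1,
\]
we see it suffices to prove the middle inequality, i.e.\ $\delta(G) \le \delta_{\chi}(G) + 1$ for the \emph{boolean} clause-set $G := \ntrans(F)$, which again has at most one conflict between any two clauses. Conversely boolean clause-sets are a special case, so the conjecture is equivalent to its restriction to boolean clause-sets.

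For a boolean clause-set $G$ with simple conflict graph one has $\delta(G) = c(G) - n(G)$, so the claim becomes $\chi(\cg(G)) \le n(G) + 1$. Each variable $v$ contributes to $\cg(G)$ the complete bipartite graph between its positive-clause-class and its negative-clause-class; these $n(G)$ bicliques are edge-disjoint (one conflict per clause pair) and their union is all of $\cg(G)$. Hence the statement is precisely: a graph that is an edge-disjoint union of $m$ complete bipartite graphs has chromatic number at most $m+1$ --- the Alon--Saks--Seymour conjecture. (At the level of graphs the reduction above is exactly the passage from complete-multipartite partitions to biclique partitions of~\cite{GregoryMeulen1996RPartiteDecompositions}, realised here by the Horn clause-sets $H_v$.)

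This last step is where the genuine difficulty lies, and I expect it to be the main obstacle. The spectral/Graham--Pollak method underlying Theorem~\ref{thm:GPvKlm} ($\delta(F) \le \hdef(F)$) only bounds the \emph{number} of parts of a biclique partition from below and offers no apparent control over the chromatic number, so a new idea would be needed for the sharp bound $m+1$. As partial progress I would instead invoke the known partial results for Alon--Saks--Seymour (notably the fractional version, where the fractional chromatic number is at most $m+1$, and the known upper bounds on $\chi$ in terms of $m$) to obtain weakened forms, such as a bound on $\delta(F)$ that is polynomial in $\delta_{\chi}(F)$; one should also keep in mind that the exact inequality might simply fail, in which case the reduction above exhibits a counterexample from any biclique partition of a graph whose chromatic number exceeds its biclique-partition number by more than one.
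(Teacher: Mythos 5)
The item you were asked to prove is in fact labelled as a \emph{conjecture} in the paper, and the paper supplies no proof of it: the surrounding text merely asserts, without a detailed argument, that the conjecture is a reformulation of the Alon--Saks--Seymour conjecture in the language of generalised clause-sets. Your write-up correctly makes that claim precise: the nested translation $\ntrans$ preserves the conflict matrix (Lemma~\ref{lem:EiggTrans}, Part~\ref{lem:EiggTrans1}), hence the conflict graph, the chromatic number, and $\delta_{\chi}$, while it does not decrease $\delta$ (Part~\ref{lem:EiggTrans2}), so the inequality reduces to its boolean case; and for a boolean clause-set with at most one conflict per pair of clauses, the variables furnish an edge-disjoint biclique decomposition of the conflict graph with $n(F)$ parts, so the inequality is exactly $\chi(\cg(F)) \le n(F) + 1$, i.e.\ Alon--Saks--Seymour. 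You also rightly observe that the spectral machinery behind Theorem~\ref{thm:GPvKlm} only bounds partition size and gives no handle on chromatic number, and you correctly decline to claim a proof. So there is no gap to report and nothing in the paper to compare against --- your reduction simply fleshes out the paper's terse ``can be reformulated'' remark. (For context, though it postdates the paper: Huang and Sudakov later disproved the Alon--Saks--Seymour conjecture, exhibiting graphs whose chromatic number grows super-polynomially in the biclique partition number, so the conjectured inequality is in fact false; your closing caveat that a counterexample might arise via the reduction turns out to be prescient.)
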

Since regular hitting $F$ have $\delta_{\chi}(F) = 0$, this would yield a generalisation of Corollary \ref{cor:GPvKlm} (for $1$-regular hitting clause-sets) in a different direction. The restriction for $F$ in Conjecture \ref{con:AlonSaks}, which just states that the conflict multigraph of $F$ in fact is a graph, seems necessary, since colouring of multigraphs only considers the underlying graph, while allowing parallel edges can increase the deficiency.

We remark that w.r.t.\ the computation of the biclique partition number $\bcp(G)$ of a multigraph $G$, which is the minimal $n(F)$ for boolean clause-sets $F$ with $\cmg(F) \cong G$, there are basic open questions:
\begin{itemize}
\item It is known that the decision ``$\bcp(G) \le k$'' is NP-complete, where $k$ is part of the input.
\item This has been shown in \cite{KRW1988EigensharpGraphs}: for graphs $G$ without $4$-cycles $\bcp(G)$ is the vertex-cover-number (the special case of the transversal number for hypergraphs, applied to graphs --- the smallest number of vertices hitting every edge), since the only bicliques in such graphs are stars, while for such graphs the vertex-cover-number problem (or, equivalently, the independence-number problem) is NP-complete.
\item Now for fixed $k$ we should have polynomial time decision of ``$\bcp(G) \le k$ ?'', while to achieve fixed parameter tractability (in $k$) seems more challenging.
\end{itemize}
Instead of computing the minimal $n(F)$ for boolean $F$ with $\cmg(F) \cong G$ (then we have $n(F) = \bcp(G)$) we can also compute (equivalently) the maximal $\delta(F)$ for such $F$, a point of view which seems to have some advantages, and where by Lemma \ref{lem:EiggTrans} we can also allow all generalised clause-sets here (since the weights for the variables counterbalance that variables with larger domains are more powerful). However considerations as in \cite{GregoryMeulen1996RPartiteDecompositions} ask for the minimal $n(F)$ for generalised clause-sets $F$ with \emph{uniform} domain size $d$ (a further parameter) such that $\cmg(F) \cong G$, and then we need to consider only generalised clause-sets where all variables have domain size $d$.

\subsection{Translations to boolean clause-sets}
\label{sec:opentranslations}

In Subsection \ref{sec:newtrans} we have introduced the \emph{nested translation} $F \mapsto \ntrans(F)$ from (multi-)clause-sets to boolean (multi-)clause-sets. The idea can be generalised as follows, covering then also the direct translation (see Section \ref{sec:translating}), in the ``weak form'' we have used (considered only ALO-clauses) as well as in the ``strong'' form (using also AMO-clauses). We call this scheme the \emph{generic translation}.

To every variable $v \in \var(F)$ one needs to associate two boolean clause-sets $T(v), T'(v)$ together with a bijection $\gamma_v: D_v \ra T(v)$, such that the following conditions are fulfilled:
\begin{itemize}
\item all clause-sets $T(v) \cup T'(v)$ are unsatisfiable;
\item all clauses of $T(v)$ are necessary in $T(v) \cup T'(v)$ (that is, removing any clause of $T(v)$ renders $T(v) \cup T'(v)$ satisfiable);
\item for different variables $v, w$ we have $\var(T(v) \cup T'(v)) \cap \var(T(w) \cup T'(w)) = \es$.
\end{itemize}
Now the translated clause-set $\Theta_{T,T',\gamma}(F)$ is obtained as follows from $F$:
\begin{enumerate}
\item every clause $C \in F$ is replaced by the clause $\bc_{(v,\ve) \in C} \gamma_v(\ve)$;
\item the clauses from the clause-set $\bc_{v \in \var(F)} T'(v)$ are added.
\end{enumerate}

\begin{lem}\label{lem:gentrans}
  $\Theta_{T,T',\gamma}(F)$ is satisfiability-equivalent to $F$.
\end{lem}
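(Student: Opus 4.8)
\textbf{Proof plan for Lemma \ref{lem:gentrans}.}

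The plan is to show both directions of the satisfiability equivalence by constructing explicit assignments, mirroring the argument for the nested translation in Lemma \ref{lem:EiggTrans}, Property \ref{lem:EiggTrans3}. The key observation is the ``selection'' property encoded in the hypotheses: since $T(v) \cup T'(v)$ is unsatisfiable but every clause of $T(v)$ is necessary in it, for each value $\ve \in D_v$ there is a partial assignment over $\var(T(v) \cup T'(v))$ which satisfies $T'(v)$ and all of $T(v)$ \emph{except} the single clause $\gamma_v(\ve)$ (take a satisfying assignment for $(T(v) \cup T'(v)) \sm \set{\gamma_v(\ve)}$, which exists by necessity of $\gamma_v(\ve)$, and extend it arbitrarily to a total assignment on $\var(T(v) \cup T'(v))$). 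Call this assignment $\psi_{v,\ve}$. Conversely, any total assignment $\psi$ on $\var(T(v) \cup T'(v))$ satisfying $T'(v)$ must falsify at least one clause of $T(v)$ (otherwise $T(v) \cup T'(v)$ would be satisfied), so we may pick some $\ve$ with $\psi$ falsifying $\gamma_v(\ve)$; this gives a (not necessarily unique) ``read-off'' map from boolean assignments back to values of $v$.

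First I would prove the forward direction ($F \in \Sat \Ra \Theta_{T,T',\gamma}(F) \in \Sat$). Given a satisfying total assignment $\vp$ for $F$, form $\psi := \bigcup_{v \in \var(F)} \psi_{v,\vp(v)}$; this is well-defined as a single partial assignment because the hypothesis that $\var(T(v) \cup T'(v)) \cap \var(T(w) \cup T'(w)) = \es$ for $v \not= w$ means the domains of the $\psi_{v,\vp(v)}$ are pairwise disjoint. Then $\psi$ satisfies every clause of $\bigcup_{v \in \var(F)} T'(v)$ (each $\psi_{v,\vp(v)}$ satisfies $T'(v)$), and for each original clause $C \in F$, pick a literal $(v,\ve) \in C$ with $\vp$ satisfying it, i.e.\ $\vp(v) \not= \ve$; then $\ve \not= \vp(v)$, so $\gamma_v(\ve) \not= \gamma_v(\vp(v))$ (bijectivity of $\gamma_v$), so $\psi_{v,\vp(v)}$ satisfies $\gamma_v(\ve)$ (it satisfies all clauses of $T(v)$ other than $\gamma_v(\vp(v))$), hence $\psi$ satisfies the translated clause $\bigcup_{(v',\ve') \in C} \gamma_{v'}(\ve')$. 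Thus $\psi$ satisfies $\Theta_{T,T',\gamma}(F)$.

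For the reverse direction, given a satisfying total assignment $\psi$ for $\Theta_{T,T',\gamma}(F)$, define $\vp$ on $\var(F)$ by choosing for each $v$ some $\ve_v \in D_v$ such that $\psi$ falsifies $\gamma_v(\ve_v)$; this is possible since $\psi$ satisfies $T'(v) \sse \Theta_{T,T',\gamma}(F)$, so by the read-off observation above such $\ve_v$ exists. Set $\vp(v) := \ve_v$. Then for any $C \in F$, the translated clause $\bigcup_{(v,\ve) \in C} \gamma_v(\ve)$ is satisfied by $\psi$, so $\psi$ satisfies some literal in some $\gamma_v(\ve)$ with $(v,\ve) \in C$; in particular $\psi$ does \emph{not} falsify $\gamma_v(\ve)$, so $\ve \not= \ve_v = \vp(v)$, i.e.\ $\vp$ satisfies the literal $(v,\ve)$, hence $\vp$ satisfies $C$. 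The main (minor) obstacle here is just being careful about the quantifier on ``$\psi$ does not falsify $\gamma_v(\ve)$'': one must note that $\psi$ satisfying the big union does not pin down which disjunct, but since $\psi$ is total and the sub-clause $\gamma_v(\ve)$ has all its variables assigned, $\psi$ satisfying any literal of it means $\psi$ does not falsify the clause $\gamma_v(\ve)$ (a total assignment falsifies a clause iff it falsifies every literal), which is exactly what is needed to conclude $\ve \not= \ve_v$. No display-math is needed; everything is short enough to run inline.
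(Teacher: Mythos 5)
Your proof is correct and takes essentially the same approach as the paper's: in the forward direction you piece together, for each $v$, a total satisfying assignment for $(T(v) \cup T'(v)) \setminus \set{\gamma_v(\vp(v))}$ (existence by necessity of every clause in $T(v)$) on the pairwise-disjoint variable sets, and in the reverse direction you read off $\vp(v)$ from a clause of $T(v)$ falsified by $\psi$ (existence by unsatisfiability of $T(v) \cup T'(v)$ combined with $\psi$ satisfying $T'(v)$). The only difference is that you spell out explicitly the justifications the paper's terse two-sentence proof leaves implicit.
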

\begin{proof}
  First consider a total satisfying assignment $\vp$ for $F$. For $v \in \var(F)$ choose a total satisfying assignment $\psi_v$ for $(T(v) \cup T'(v)) \sm \set{\gamma_v(\vp(v))}$; now $\bc_{v \in \var(F)} \psi_v$ is a satisfying assignment for $\Theta(F)$. Now consider a satisfying assignment $\psi$ for $\Theta(F)$. For each $v \in \var(F)$ choose one clause $C_v \in F(v)$ falsified by $\psi$, and let $\vp(v) := \gamma_v^{-1}(C_v)$; now $\vp$ is a satisfying assignment for $F$.
\end{proof}

The general scheme for handling a variable $v$ with $k$ values can be understood as choosing some minimally unsatisfiable boolean clause-set $T_0$ with at least $k$ clauses, choosing $k$ clauses from $T_0$ for $T(v)$, and putting the remaining clauses into $T'(v)$, where $T'(v)$ can be augmented with further clauses such that the clauses of $T_0$ remain necessary. Four main classes of examples are as follows:
\begin{itemize}
\item To obtain the direct translation, let $T(v) := \setb{ \set{\transl((v,\ve))} }_{\ve \in D(v)}$ be a set of unit-clauses, where $\gamma_v(\ve) := \set{\transl((v,\ve))}$, and let $T'(v) := \setb{ \set{\ol{\transl((v,\ve))} : \ve \in D(v)} }$ be a singleton clause-set (consisting of the ALO-clause for $v$). For the strong form just add the AMO-clauses to $T'(v)$. So the underlying minimally unsatisfiable boolean clause-set (with $k$ variables and $k+1$ clauses) is $\setb{\set{v_1}, \dots, \set{v_k}, \set{\ol{v_1}, \dots, \ol{v_k}}}$.
\item For the nested translation, let $T(v) := H_v$ and $T'(v) := \top$, while $\gamma_v$ is any bijection between $D_v$ and $H_v$ (using the notations from Subsection \ref{sec:newtrans}). To obtain a strong form, add all positive binary clauses to $T'(v)$. The underlying minimally unsatisfiable (saturated) boolean (Horn) clause-set (with $k-1$ variables and $k$ clauses) is $\setb{\set{v_1}, \dots, \set{\ol{v_1}, \dots, \ol{v_{k-2}},v_{k-1}}, \set{\ol{v_1}, \dots, \ol{v_{k-1}}}}$.
\item Using the same underlying minimally unsatisfiable clause-set as with the direct translation, however this time using all clauses for $T(v)$, leaving $T'(v)$ empty, and so using only $k-1$ variable, we obtain the \emph{reduced translation}, which can also be obtained from the nested translation by removing all negative literals from the first $k-1$ clauses. A strong form is obtained by adding all positive binary clauses to $T'(v)$.
\item Finally, to obtain an example of a ``logarithmic translation'', assume for simplicity that we have $k = 2^p$ for the domain size $k = \abs{D_v}$, and let $T(v)$ be the full clause-set (all clauses contain all variables) with $k$ clauses over $p$ variables, while $T'(v) := \top$.\footnote{The simplest form of handling arbitrary $k$ is to consider the smallest $p$ with $2^p \ge k$, to choose $k$ full clauses (over $p$ variables) for $T(v)$, and to put the remaining full clauses into $T'(v)$.}
\end{itemize}
Please note that for the generic translation every variable can be treated differently. A systematic study of the generic translation scheme, which admits now the possibility to adopt the translation to the problem at hand, should yield powerful tools for solving SAT for generalised clause-sets by boolean SAT solvers. For a first empirical study see \cite{Kullmann2010GreenTao}.

\bibliographystyle{plain}

\end{document}